\numberwithin{equation}{section}
\newtheorem{theorem}{Theorem}[section]
\newtheorem{lemma}[theorem]{Lemma}
\newtheorem{proposition}[theorem]{Proposition}
\newtheorem{corollary}[theorem]{Corollary}
\newtheorem{remark}[theorem]{Remark}
\newtheorem{definition}[theorem]{Definition}
\theoremstyle{definition}
\definecolor{remi}{rgb}{0,0,0}
\renewcommand{\tilde}{\widetilde}          
\DeclareMathSymbol{\leqslant}{\mathalpha}{AMSa}{"36} 
\DeclareMathSymbol{\geqslant}{\mathalpha}{AMSa}{"3E} 
\DeclareMathSymbol{\eset}{\mathalpha}{AMSb}{"3F}     
\renewcommand{\leq}{\;\leqslant\;}                   
\renewcommand{\geq}{\;\geqslant\;}                   
\newcommand{\dd}{\text{\rm d}}             
\newcommand{\mc}{\mathcal}
\newcommand{\cc}{\mathbb{C}}
\newcommand{\la}{\lambda}
\newcommand{\eps}{\epsilon}
\newcommand{\pl}{\partial}
\newcommand{\x}{\times}
\newcommand{\bbar}{\overline}
\newcommand{\cjd}{\rangle}
\newcommand{\cjg}{\langle}
\newcommand{\demi}{\tfrac{1}{2}}
\renewcommand{\H}{\mathbb{H}}
\newcommand{\D}{\mathbb{D}}
\newcommand{\R}{\mathbb{R}}
\newcommand{\Z}{\mathbb{Z}}
\newcommand{\N}{\mathbb{N}}
\newcommand{\E}{\mathds{E}}
\renewcommand{\P}{\mathds{P}}
\newcommand{\caD}{{\mathcal D}}
\newcommand{\caT}{{\mathcal T}}
\newcommand{\C}{\mathbb{C}}
\renewcommand{\H}{\mathbb{H}}
\renewcommand{\P}{\mathds{P}}
\newcommand{\hf}{\frac{_1}{^2}}
\def\eps{\varepsilon}
\def\bi{\begin{itemize}}
\def\ei{\end{itemize}}
\def\bnum{\begin{enumerate}}
\def\enum{\end{enumerate}}
\def\<#1{\langle #1 \rangle}
\begin{document}

\title[$\mathbb{H}^3$-WZW on Riemann surfaces and Liouville correspondence ]{Probabilistic construction of the $\mathbb{H}^3$-Wess-Zumino-Witten conformal field theory and correspondence with Liouville theory}

\author[C. Guillarmou]{Colin Guillarmou$^{1}$}
\address{Universit\'e Paris-Saclay, CNRS,  Laboratoire de math\'ematiques d'Orsay, 91405, Orsay, France.}
\email{colin.guillarmou@universite-paris-saclay.fr}

 \author[A. Kupiainen]{ Antti Kupiainen$^{2}$}
\address{University of Helsinki, Department of Mathematics and Statistics}
\email{antti.kupiainen@helsinki.fi}

\author[R. Rhodes]{R\'emi Rhodes$^{3}$}
\address{Aix Marseille Univ, CNRS, I2M, Marseille, France}
\email{remi.rhodes@univ-amu.fr }

\footnotetext[1]{Universit\'e Paris-Saclay, CNRS,  Laboratoire de math\'ematiques d'Orsay, 91405, Orsay, France.}

\footnotetext[2]{University of Helsinki, Department of Mathematics and Statistics.}
\footnotetext[3]{Aix Marseille Univ, CNRS, I2M, Marseille, France.
}

\keywords{Liouville Quantum Gravity, quantum field theory, Gaussian multiplicative chaos, Ward identities, BPZ equations, DOZZ formula  }

 \begin{abstract}
 
 Wess-Zumino-Witten (WZW) models are among the most basic and  most studied Conformal Field Theories (CFT). 
They have had a huge influence not only in physics but also in mathematics, in representation theory and geometry. However their rigorous probabilistic construction and analysis starting from the path integral  is still missing and all their properties have been obtained algebraically from their postulated  affine Lie algebra symmetry. Initially considered as taking values in  a compact semi simple Lie Group G, the WZW model  has also a "dual" formulation where the group $G$ is replaced by the homogenous space $G^\C/G$, where $G^\C$ is the complexification of $G$,  and it has been argued that the former can be (re-) constructed from the latter. For $G={\rm SU}(2)$, the space  ${\rm SL}(2,\C)/{\rm SU}(2)$ can be identified with the three dimensional hyperbolic space $\H^3$ and, in physics, the corresponding CFT has been studied as the simplest example of the AdS/CFT correspondence. Furthermore, a surprising correspondence between the $\H^3$-WZW  CFT and the Liouville CFT was found by Ribault and Teschner and later generalised by Hikida and Shomerus. This correspondence has been dubbed by Gaiotto and Teschner as the "quantum analytic Langlands correspondence" since the analytic Langlands correspondence of Etingof, Frenkel and Kazhdan seems to emerge in its formal semi classical limit. 
In this paper we give a rigorous  construction of the  path integral  for the $\H^3$-WZW model  
on a  closed Riemann surface $\Sigma$ and   twisted by 
an arbitrary smooth gauge field on $\Sigma$. Using the probabilistic path integral we prove a correspondence between the correlation functions of the primary fields of the $\H^3$ model and those of Liouville CFT extending the expressions  proposed  by Ribault and Teschner and by Hikida and Schomerus to this general setup. 

 \end{abstract}

\maketitle
\setcounter{tocdepth}{2}





\normalsize

\section{Introduction}


Following the pioneering work \cite{BPZ84}, two dimensional conformal field theories (CFT) have been a source of inspiration in mathematics. 
In  \cite{BPZ84},  CFTs were studied via the axiomatic  conformal bootstrap approach. In this approach, the  basic objects of the CFT are the so-called  correlation functions
\begin{align*}
\langle V_{\alpha_1}(x_1)V_{\alpha_2}(x_2)\dots V_{\alpha_n}(x_n)\rangle
\end{align*}
where $\langle -\rangle$ refers to a probabilistic expectation and $V_\alpha$ are (random) fields defined on the two dimensional space (Riemann surface in general).  A set of axioms was postulated in  \cite{BPZ84} for the correlation functions whose consistency led to stringent conditions allowing  to derive/guess exact solutions.

 A more traditional approach to (Euclidean) quantum field theory (and hence also to CFT) is  the path integral approach. Here one starts with an action functional $S(\varphi)$ defined on field configurations $\varphi: \Sigma\to M$, where $\Sigma$ and $M$ are manifolds, the space-time and target space, respectively. The correlation functions are subsequently expressed in terms of a  path integral
\begin{align}\label{path}
\langle V_{\alpha_1}(x_1)V_{\alpha_2}(x_2)\dots V_{\alpha_n}(x_n)\rangle=\int \prod_{i=1}^n W_{\alpha_i}(\varphi,x_i)e^{-S(\varphi)}D\varphi
\end{align}
where $W_{\alpha}$ are suitable  functionals of the jets of the field $\varphi$ at $x_i$ and the integral is formally over some set of maps $\varphi: \Sigma\to M$. 
Giving a rigorous mathematical definition to the path integral and reconciling it with the bootstrap axioms and explicit solutions  poses a challenge for mathematicians. This challenge was addressed by the present authors and collaborators   \cite{DKRV16,GRVIHES,KRV_DOZZ, GKRV20_bootstrap,GKRV21_Segal}  
in the context of the Liouville CFT, which was given a rigorous probabilistic construction from which the bootstrap solution was derived. For a review see \cite{guillarmou2024}.

 In this paper, the probabilistic formulation of CFT of another major class of CFTs, the Wess-Zumino-Witten (WZW) CFTs, is addressed.  These theories are parametrised by a Lie group $G$ and they are examples of CFTs with an {\it extended symmetry}, i.e. a symmetry algebra extending the Virasoro algebra that expresses the conformal symmetry of a CFT. In the WZW model, this algebra is the affine Lie algebra related to $G$. The subsequent discussion will provide a concise review of this context. 
  
\subsection{Sigma models} The WZW model is a particular instance of a more extensive class of quantum field theories (QFTs), namely the {\it  sigma models}. To provide a concise definition,  let $\Sigma$ be a closed oriented surface equipped with a Riemannian metric $g$ and  $M$ a smooth manifold with Riemannian metric $m$. The action functional for the sigma model is
\begin{align*}
S(\varphi,g)= \int_\Sigma \|d\varphi(x)\|^2_{T_{\varphi(x)}M\otimes T^*_x\Sigma}{\rm dv}_g(x)
\end{align*}
 where ${\rm v}_g$ is the Riemannian volume and $\|d\varphi(x)\|^2_{T_{\varphi(x)}M\otimes T^*_x\Sigma}$ is the squared norm 
 of  $d\varphi(x)$ viewed as element in $T_{\varphi(x)}M\otimes T^*_x\Sigma$ and  using the metric dual to $g$ on $T^*_x\Sigma$ and $m$ on $T_{\varphi(x)}M$. In local coordinates, it reads
 \[  \|d\varphi(x)\|^2_{T_{\varphi(x)}M\otimes T^*_x\Sigma}=\sum_{i,j=1}^2 g^{ij}(x)m(\pl_{x_i}\varphi,\pl_{x_j}\varphi)\]
 if $(g^{ij})_{i,j}$ is the inverse  of the matrix  $g$ in the system of coordinates $(x_i)_{i=1,2}$. The action 
 $S(\varphi,g)$ is conformally invariant with respect to $g$.
The formal path integral for the sigma model is given by the  expression \eqref{path} with $D\varphi$ being the formal volume measure $\prod_x \dd {\rm v}_m(\varphi(x))$, with ${\rm v}_m$ the Riemannian volume of the metric $m$. Note that for $M=\R$ equipped with the Euclidean metric, this action is the Dirichlet energy of the map $\varphi:\Sigma\to\R$ and the path integral can be defined in terms of the Gaussian Free Field.
  
  To make sense out of such an expression, one has to resort to a renormalisation procedure where the path integral \eqref{path} is replaced by an expression in terms of a mollified field $\varphi_\epsilon$ with  mollification in scales smaller than $\epsilon$ and a mollified measure $D\varphi_\epsilon$ so that the resulting expectation $ \langle-\rangle$ is well defined, and then study the limit as $\epsilon\to 0$. In the process of doing this, it has been found that in general one needs to renormalise the action functional as well, i.e. to replace $S(\varphi_\epsilon)$ by $S_\epsilon(\varphi_\epsilon)$. Then, as $\epsilon\to 0$ and if the limit exists, the limit  is not expected in general to be conformally invariant. 
 A prominent illustration of this phenomenon can be observed in the Heisenberg model, wherein   $M$ is  the $n$-dimensional sphere $\mathbb{S}^n$ and $\Sigma=\R^2$. Perturbative arguments \cite{POLYAKOV197579} suggest that $S_\epsilon=\frac{1}{T_\epsilon} S$ where the  temperature parameter $T_\epsilon$ satisfies $\lim_{\epsilon\to 0}T_\epsilon =0$. The correlation functions of the limiting theory are expected to exhibit exponential decay in large spatial scales. A rigorous proof of these statements is a longstanding challenge in mathematical physics.
 
\subsection{WZW model}  The WZW model is a modification of a sigma model. For $M=G$ a Lie group, there is a natural bilinear form on the Lie algebra $\mathfrak{g}$ of $G$ which is invariant under the left and right actions of $G$, namely the Killing form. For a matrix group, i.e. $G\subset {\rm GL}(n,\C)$, the Killing form is given by $B(X,Y)={\rm Tr} ({\rm ad}X\circ {\rm ad}Y)$ for $X,Y\in \mathfrak{g}$, which reduces to $4{\rm Tr}(XY)$ in the case of $G={\rm SU}(2)$ if ${\rm Tr}$ is the trace on ${\rm GL}(n,\C)$. 
For a  compact semi-simple  Lie group $G$, this form is  negative definite and it defines (minus) a Riemannian metric on $G$.  
The Wess-Zumino model is then the corresponding sigma model and its action functional is given, using the decomposition $d=\bar{\pl}+\pl$ of the de Rham differential induced by the complex structure of $\Sigma$, by
\begin{align*}
S_{\textrm{WZ}}(h)=\frac{1}{4\pi i}\int_\Sigma {\rm Tr}(h^{-1}\partial h\wedge h^{-1}\bar\partial h)  
\end{align*}
defined on maps $h:\Sigma\to G$. This integral does only depend on the complex structure of $\Sigma$ and does not need a choice of Riemannian metric.As in the case of the Heisenberg model perturbative arguments suggest that, due to renormalisation, the corresponding path integral does not result in a Conformal Field Theory (CFT).
 
It was observed by Witten \cite{wittennonabelian} that it is possible to add a further term to the action $S_{\textrm{WZ}}$, which should then give rise to a CFT. Such term\footnote{The so called B-term in physics terminology.} can be introduced in the context of the sigma model by picking a two form $B$ on the target space $M$ and setting for $\varphi:\Sigma\to M$ smooth
\begin{align*}
S_B(\varphi)= \int_\Sigma \varphi^*B.
\end{align*}
which is trivially conformally invariant as it depends only on the differentiable structure of $\Sigma$ and $(\varphi,B)$. The WZW model considered in this paper is in fact a sigma model with a $B$ term. Let us describe now  the term considered by Witten. Let $X$ be a compact $3$-dimensional manifold with  boundary $\pl X=\Sigma$ and take $G={\rm SU}(2)$. Given a smooth map $h:\Sigma\to G$, let $\tilde{h}:X\to G$ be a smooth extension of $h$ to $X$ and set
\begin{align*}
S_{\textrm{top}}(\tilde{h})=\frac{1}{12\pi i}\int_{\tilde\Sigma}{\rm Tr}(\tilde{h}^{-1}d\tilde{h}\wedge\tilde{h}^{-1}d\tilde{h}\wedge \tilde{h}^{-1}d\tilde{h}).
\end{align*}
Given two such extensions $\tilde{h},\tilde{h}'$, one can show $S_{\textrm{top}}(\tilde{h})-S_{\textrm{top}}(\tilde{h}')\in 2\pi i\Z$ (see subsection \ref{WZW_coset}). Hence, for $k\in\Z$, the expression $e^{kS_{\textrm{top}}(\tilde{h})}$ is independent of the extension and, by defining  
\begin{align*}
S_{\textrm{WZW}}(\tilde{h})=S_{\textrm{WZ}}(h)+S_{\textrm{top}}(\tilde{h}),
\end{align*}
the integrand  in the formal path integral 
\begin{align}\label{wzwpath}
\cjg F\cjd = \int F(h)e^{ -kS_{\textrm{WZW}}(\tilde{h})}Dh
\end{align}
only depends on $h$ (here $Dh$ is the formal Haar measure on maps $h:\Sigma\to G$). In \cite{wittennonabelian}, Witten argued that, for each $k\in\N$, this path integral should give rise to a CFT. 
While, for ${\rm SU}(2)$ (or more generally a compact group $G$), a rigorous mathematical definition of this path integral is still lacking, it was observed in \cite{GK} that   replacing $G$ by the coset space $G^\C/G$  leads to a probabilistic formulation in terms of Gaussian integrals, where $G^\C={\rm SL}(2,\C)$ is the complexification of $G$. Furthermore, these $G^\C/G$-WZW models are of independent interest.  To mention just a few, they appeared in 
\begin{description}
\item[(i)]  a path integral description of the so called {\it coset construction of CFT}, for each pair $(G,H)$ with $G$ compact Lie group and $H$ a Lie subgroup \cite{GK},
\item[(ii)] in the construction of a scalar product in the quantum Hilbert space of the three dimensional {\it Chern-Simons gauge theory} \cite{Gawedzki:1989rr},
\item[(iii)] in the simplest {\it AdS/CFT correspondence} with $G={\rm SU}(2)$  \cite{Maldacena_Ooguri},
\item[(iv)] in  {\it quantum Langlands correspondence} \cite{Teschner:2017djr,Gaiotto:2024tpl} for  $G={\rm SU}(2)$. 
\end{description}
In this paper we give a rigorous probabilistic construction of the WZW path integral for the coset space ${\rm SL}(2,\C)/{\rm SU}(2)$, which identifies with the $3$-dimensional hyperbolic space $\H^3$.

\subsection{Hyperbolic space sigma models} 

The coset space $G^\C/G$ can be identified with the set of positive definite elements $h$ in $G^\C$ and represented under the form $h=qq^\ast$ with $q\in G^\C$ (for a matrix group $q^*$ is the adjoint matrix of $q$). Taking $G={\rm SU}(2)$
we can parametrise  ${\rm SL}(2,\C)/{\rm SU}(2)$ by the matrices
\begin{align}\label{hmatrix}
q=\left(
  \begin{array}{cc}
   e^{\phi/2} & e^{-\phi/2}\gamma \\
   0 & e^{-\phi/2}
  \end{array} \right)
\end{align}
with $(\phi,\gamma)\in \R\times\C$ since every  $h\in {\rm SL}(2,\C)$ can uniquely be written as $h=qu$ for some $u\in {\rm SU}(2)$.
Using this representation the WZ action functional takes the form
\begin{align}\label{action0}
S_{\textrm{WZ}}(qq^\ast)
=\frac{1}{4\pi i}\int_\Sigma 2\partial\phi\wedge \bar\partial\phi+e^{-2\phi}(\partial\gamma\wedge \bar\partial \bar\gamma+\partial\bar\gamma\wedge \bar\partial \gamma)
\end{align}
which we recognize as a sigma model with target $\R^3$ equipped with the metric $|d\phi|^2+e^{-2\phi}|d\gamma|^2$ i.e. the hyperbolic space $\H^3$.

For general semi simple $G$, the space $G^\C/G$ is contractible and one can then show that the topological term $S_{\textrm{top}}(qq^\ast)$  can be expressed in terms of a local action functional on $\Sigma$ instead of the $3$-dimensional extension $X$: for all $q\in G^\C$
\begin{align}\label{hyper}
S_{\textrm{top}}(qq^\ast)=\frac{i}{4\pi}\int_\Sigma {\rm Tr}( q^{-1}dq\wedge (q^{-1}dq)^\ast)
\end{align}
which is a ``B-term" as discussed above with $B={\rm Tr}(\omega_{{\rm MC}}\wedge (\omega_{{\rm MC}})^*)$ where $\omega_{{\rm MC}}$ is the Maurer-Cartan $1$-form on $G$.  Specializing to $G={\rm SU}(2)$  the WZW action then becomes (see section \ref{sec:General_case})
\begin{align}\label{action1}
S_{\textrm{WZW}}(qq^\ast)=\frac{1}{2\pi i}\int_\Sigma( \partial\phi\wedge \bar\partial\phi+e^{-2\phi}\partial\bar\gamma\wedge \bar\partial \gamma)=-\frac{1}{4\pi}\int_\Sigma (|d\phi|^2_g+4e^{-2\phi}|\bar{\pl}\gamma|_g^2){\rm dv}_g.
\end{align}
where in the last identity we have picked a Riemannian metric $g$ compatible with the complex structure, and ${\rm v}_g$ is the Riemannian measure.
The group ${\rm SL}(2,\C)$ acts on positive definite matrices $h$ by $h_0.h=h_0hh_0^\ast$ with  $h_0\in {\rm SL}(2,\C)$ and the 
${\rm SL}(2,\C)$-invariant measure on $\H^3$ is  given by $\dd {\rm v}_{\H^3}:=e^{-2\phi}\dd\phi\, \dd \gamma$. The formal path integral of the $\H^3$ model is then\footnote{Note the sign in the exponent: the Killing form is negative on the positive matrices so  we flip the sign of $k$ in \eqref{wzwpath} for stability.}
\begin{align}\label{pathi}
\big\langle F\big\rangle_\Sigma^{\H^3}=\int F(qq^*)e^{kS_{\textrm{WZW}}(qq^\ast)}Dq
\end{align}
with the formal measure $Dq:=\prod_x \dd  {\rm v}_{\H^3}(q(x))$. 

We will give a probabilistic definition and construction of this expression and prove its conformal invariance. Note that, in contrast to the path integral in the case of the compact groups $G$, there is no need to restrict $k$ to integers and indeed our construction holds for all real  $k>3$. 

\subsection{Probabilistic definition of the $\H^3$-WZW  path integral}
We specialize here to the Riemann sphere $\Sigma=\hat\C$ equipped with a smooth  metric $g=g(z)|dz|^2$ conformal to the Euclidean metric for simplicity. 
For fixed smooth $\phi$, the action \eqref{action1} is quadratic in $\gamma$. The quadratic form is given by
\begin{align*}
\frac{1}{2\pi i}\int_{\hat{\C}} e^{-2\phi}\partial\bar\gamma\wedge \bar\partial \gamma=-\frac{1}{\pi} \cjg \gamma,\caD_\phi \gamma\cjd_{L^2}
\end{align*}
where the scalar product is in $L^2(\hat{\C}, {\rm v}_g)$ and the operator $\caD_\phi =\partial_{\bar z}^\ast e^{-2\phi}\partial_{\bar z}$ with the adjoint taken with respect to ${\rm v}_g$. Hence,  conditionally on $\phi$, the $\gamma$-variable in  the path integral \eqref{pathi} should be interpreted as a complex Gaussian with covariance given by the inverse  of $\frac{k}{\pi}\caD_\phi $. This operator has a kernel,  namely the constant functions, so we set $\gamma=\gamma_0+\gamma_g$ where $\gamma_g$ is orthogonal to constants in $L^2(\Sigma, \dd {\rm v}_g)$ and $\gamma_0\in \C$.  Then, for smooth $\phi$ the covariance kernel of $\gamma_g$  is simply given by
\begin{align}\label{gammacov}
\E[ \gamma_g(z)\bar \gamma_g(z')]=\frac{1}{k\pi}\int_{\C} (\frac{1}{z-w} -\kappa_g(w))(\frac{1}{\bar z'-\bar w}-\overline{\kappa_g(w)})e^{2\phi(w)}\dd w,
\end{align}
together with $\E[ \gamma_g(z)\gamma_g(z')]=0$, where the function $\kappa_g$ comes from $\ker \mc{D}_\phi$ (see \eqref{covKsphere}). We call $\gamma_g$ a Witten field, as its covariance is related to the Witten type 
Laplacian $(e^{-\phi}\bar{\pl}_ze^{\phi})^*(e^{-\phi}\bar{\pl}_ze^{\phi})$.

However, since the action \eqref{action1} has a Dirichlet energy $\frac{k}{4\pi }\int_{\hat{\C}}|d\phi|^2_g{\rm dv}_g$, 
$\phi$ should be interpreted as a multiple of the Gaussian Free Field (GFF in short), which is a random distribution with a logarithmic singularity in its covariance at coinciding points. Hence the factor $e^{2\phi(w)}$ in the covariance of $\gamma_g$ needs renormalisation and has to be interpreted in terms of the Gaussian Multiplicative Chaos measure (GMC in short), a random measure defined by Kahane \cite{Kahane85}. Let us explain the construction.
For $k>2$, we denote 
\[
b:=(k-2)^{-\hf}
\]
and assume furthermore that $b\in (0,1)$ i.e.  $k>3$. Then we let $X_g$ be the Gaussian Free Field associated to the metric $g$,  with covariance given $\E[X_g(x)X_g(x')]=
G_g(x,x')$ if $G_g$ is the Green's function of the Laplacian $(2\pi)^{-1}\Delta_g$ on $(\hat{\C},g)$, and $X_{g,\eps}(x)$ is its regularisation by averaging the random distribution $X_g$ on the $g$-geodesic circle of center $x$ and small radius $\eps>0$. Let $c\in \R$ and define the field 
$\gamma_g$, conditionally on $X_g$, to be the complex Gaussian random variable with covariance kernel \eqref{gammacov} where $\phi_g=b(c+X_g)$, i.e. the term $e^{2\phi}\dd w$ in \eqref{gammacov}  is replaced by the GMC measure $e^{2bc}M_{2b}(X_g,\dd w):=e^{2bc} \lim_{\eps\to 0}\eps^{2b^2}e^{2b X_{g,\eps}(w)}\dd w$:
\[
\E[ \gamma_g(z)\bar \gamma_g(z')]=\frac{1}{k\pi}\int_{\C} (\frac{1}{z-w} -\kappa_g(w))(\frac{1}{\bar z'-\bar w}-\overline{\kappa_g(w)})e^{2bc}M_{2b}(X_g,\dd w).
\]
We call the pair $(\phi_g,\gamma_g)$ of random variables a \emph{Witten pair}.
Our definition for the path integral \eqref{pathi} for $\Sigma=\hat{\C}$ is then:  for $F:\mc{D}'(\hat{\C})\times \mc{D}'(\hat{\C},\C)\to \R^+$ continuous\footnote{Here $\mc{D}'(\hat{\C})$ is the set of real valued distributions on the sphere and $\mc{D}'(\hat{\C},\C)$ the set of complex valued distributions.}
\begin{equation}\label{defPI_intro}
\big\langle F(\phi,\gamma) \big\rangle^{\H^3}_{\hat{\C},g}:=\big(\frac{k-2}{32\pi}\big)^{\frac{1}{3}}\big(\frac{\pi}{k}\big)^{-\frac{2}{3}}\Big(\frac{\det(\Delta_g)}{{\rm v}_g(\hat{\C})}\Big)^{-\frac{3}{2}}\int_{\R\times\C}\E\Big[F(b(c+X_g), \gamma_0+\gamma_g)e^{-\frac{b}{4\pi}\int_\Sigma K_g(c+X_g)\dd {\rm v}_g}\Big]\,\dd c\, \dd \gamma_0.
\end{equation}
Here $K_g$ is the scalar curvature (twice the Gauss curvature) of the metric $g$ and the corresponding term in \eqref{defPI_intro} stems from the fact that the path integral \eqref{pathi} is intended to be {\it unnormalised}, which means that when defining the $\gamma$ integral in terms of Gaussian expectation we need to restore the ``partition function" of the Gaussian measure \eqref{gammacov}, which is a determinant of a Witten Laplacian in this case. The $K_g$ term appears from a Polyakov anomaly in this determinant, which also contains an extra 
$(-2)\times \frac{1}{4\pi}\int_{\hat{\C}} |d\phi|_g^2 {\rm dv}_g$, that explains the choice $\phi_g=(k-2)^{-1/2}(c+X_g)$ 
rather than $k^{-1/2}(c+X_g)$.

The heuristic justification of this definition is explained in more details  in Section \ref{s:heuristic_sphere}. We show in Proposition \ref{prop:weyl} that this definition satisfies the following conformal covariance: if $g'=e^{\omega}g$ for $\omega \in C^\infty(\hat{\C})$, 
\begin{equation}\label{PI_Weyl_Anomaly}
\big\langle F(\phi,\gamma) \big\rangle^{\H^3}_{\Sigma,g'}=e^{ \frac{{\bf c}(k)}{96\pi}\int_{\hat{\C}}(|d\omega|^2_g+2K_g\omega){\rm dv}_g}\big\langle F(\phi-\tfrac{b^2}{2}\omega,\gamma) \big\rangle^{\H^3}_{\Sigma,g}
\end{equation}
with ${\bf c}(k):=\frac{3k}{k-2}=3+6b^2$ the central charge of the $\H^3$-WZW model.
The vertex operators, also called primary fields, for this theory are 
\begin{equation}\label{Vjmuz}
V_{j,\mu}(z):=e^{2(j+1)\phi(z)}e^{\mu \gamma(z)-\bar\mu \bar\gamma(z)}, \ \ \mu\in \C, \, j< -\hf
\end{equation}
and the $n$-point correlations correspond to taking $F(\phi,\gamma)=\prod_{\ell=1}^m V_{j_\ell,\mu_\ell}(z_\ell)$ for distinct points $z_1,\dots,z_m$
 on the sphere. We show the following (see Propositions \ref{existcorrel} and \ref{confweight} for precise statements):
 \begin{theorem}\label{intro:th_correl}
For $k>3$ with $b=(k-2)^{-1/2}\in (0,1)$, the path integral \eqref{defPI_intro} defines a CFT with central charge 
$\mathbf{c}(k):=\frac{3k}{k-2}$.
For ${\bf z}=(z_1,\dots,z_m)$ distinct points  in $\hat{\C}$ and $\boldsymbol{j}=(j_1,\dots,j_m)$ some associated weights in $(-\infty,-1/2)$ such that $s_0:=-1+\sum_{\ell=1}^m(j_\ell+1)>0$, 
the $n$-point correlation functions admit a probabilistic definition using renormalisation of \eqref{Vjmuz}, and are  distributions in the 
variable $\boldsymbol{\mu}=(\mu_1,\dots,\mu_m)\in \C^m$. Moreover they are 
 equal to  
\[ \Big\cjg \prod_{\ell=1}^m V_{j_\ell,\mu_\ell}(z_\ell) \Big \cjd_{\hat{\C},g}^{\H^3}=C_{g}({\bf z},\boldsymbol{j})
\E\Big[\Big(\int_{\C}\Big|\sum_{\ell=1}^m\frac{\mu_\ell}{z_\ell-z}\Big|^2e^{2bu(z)}M_{2b}^g(X_g,\dd z)\Big)^{-s_0}\Big] \delta_{V^0}\]
where $\delta_{V^0}$ is the uniform measure supported on the hyperplane $V^0=\{ \boldsymbol{\mu}\in \C^m\,|\, \sum_{\ell=1}^m\mu_\ell=0\}$, see \eqref{deltaV^0}, the function $u$ is given by 
\[u(z):= \sum_{\ell=1}^m 2b^2(j_\ell+1)G_{g}(z,z_\ell)-\frac{b^2}{4\pi} \int_\Sigma G_{g}(z,z')K_g(z'){\rm dv}_g(z')\]
and $C_{g}({\bf z},\boldsymbol{j})$ is an explicit deterministic constant given in Proposition \ref{existcorrel}. For $g'=e^\omega g$ a metric conformal to $g$, the following conformal covariance holds
\[
\Big\langle\prod_{\ell=1}^mV_{j_\ell,\mu_\ell}(z_\ell) \Big \rangle^{\H^3}_{\hat{\C},g'}=  e^{ \frac{{\bf c}(k)}{96\pi}\int_{\hat{\C}}(|d\omega|^2_g+2K_g\omega){\rm dv}_g-\sum_{\ell=1}^m \triangle_{j_\ell}\omega(z_\ell)}\Big\langle\prod_{\ell=1}^mV_{j_\ell,\mu_\ell}(z_\ell) \Big\rangle^{\H^3}_{\hat{\C},g}
\]
where $\triangle_j:=-\frac{j(j+1)}{k-2}$ is the conformal weight   of the primary field $V_{j,\mu}$. 
\end{theorem}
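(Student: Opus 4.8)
The plan is to substitute the probabilistic definition \eqref{defPI_intro} with $F=\prod_{\ell=1}^m V_{j_\ell,\mu_\ell}(z_\ell)$ and integrate out the fields one at a time. First I would integrate the zero mode $\gamma_0\in\C$: since $\sum_\ell(\mu_\ell\gamma(z_\ell)-\bar\mu_\ell\bar\gamma(z_\ell))$ depends on $\gamma_0$ only through $\gamma_0\sum_\ell\mu_\ell-\bar\gamma_0\sum_\ell\bar\mu_\ell=2i\Im(\gamma_0\sum_\ell\mu_\ell)$, the $\gamma_0$-integral produces a factor proportional to $\delta^{(2)}(\sum_\ell\mu_\ell)$, i.e. the uniform measure $\delta_{V^0}$ on the hyperplane $V^0$; this is the source of the distributional character in $\boldsymbol\mu$. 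Conditionally on $X_g$ the field $\gamma_g$ is a centred circularly symmetric complex Gaussian ($\E[\gamma_g\gamma_g]=0$), so the conditional expectation of $\prod_\ell e^{\mu_\ell\gamma_g(z_\ell)-\bar\mu_\ell\bar\gamma_g(z_\ell)}$ equals $\exp(-\sum_{\ell,\ell'}\mu_\ell\bar\mu_{\ell'}\E[\gamma_g(z_\ell)\bar\gamma_g(z_{\ell'})\mid X_g])$. Inserting the covariance \eqref{gammacov} and using that on $V^0$ the $\kappa_g$-terms drop out (being multiplied by $\sum_\ell\mu_\ell=0$), this collapses to $\exp(-\tfrac{e^{2bc}}{k\pi}\int_\C|\sum_\ell\frac{\mu_\ell}{z_\ell-w}|^2 M_{2b}(X_g,\dd w))$; the conditional Gaussian is well defined because $\E[|\gamma_g(z_\ell)|^2]=\tfrac1{k\pi}\int|z_\ell-w|^{-2}M_{2b}(X_g,\dd w)<\infty$ a.s., the chaos $M_{2b}$ itself being subcritical precisely when $b<1$.

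Only the integrals over $c\in\R$ and the expectation over $X_g$ then remain. Collecting the $c$-dependence — a factor $e^{2bc\sum_\ell(j_\ell+1)}$ from the $\phi$-parts of the vertex operators, a factor $e^{-2bc}$ from the curvature term (using $\int_{\hat\C}K_g\,\dd{\rm v}_g=8\pi$), and the factor $e^{2bc}$ inside the Gaussian exponential above — the $c$-integral is $\int_\R e^{2bc\,s_0}\exp(-\tfrac{e^{2bc}}{k\pi}I)\,\dd c$ with $I:=\int_\C|\sum_\ell\frac{\mu_\ell}{z_\ell-w}|^2 M_{2b}(X_g,\dd w)$ and $s_0=-1+\sum_\ell(j_\ell+1)$. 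The substitution $t=e^{2bc}$ turns this into $\tfrac1{2b}\Gamma(s_0)(k\pi)^{s_0}I^{-s_0}$, convergent exactly because $s_0>0$; this is the origin of the power $I^{-s_0}$. Finally the Gaussian insertions — the $\phi$-exponentials $\prod_\ell e^{2b(j_\ell+1)X_g(z_\ell)}$ (after Wick renormalisation) and the linear functional $e^{-\frac{b}{4\pi}\int K_g X_g\,\dd{\rm v}_g}$ — are handled by the Girsanov/Cameron--Martin theorem, which shifts $X_g$ by $h(w):=\sum_\ell 2b(j_\ell+1)G_g(w,z_\ell)-\frac{b}{4\pi}\int G_g(w,z')K_g(z')\,\dd{\rm v}_g(z')$; equivalently it shifts $\phi_g=b(c+X_g)$ by $u=bh$, so that $M_{2b}(X_g+h,\dd w)=e^{2bh(w)}M_{2b}(X_g,\dd w)=e^{2u(w)}M_{2b}(X_g,\dd w)$ reweights the chaos in $I$ by $e^{2u}$ with $u$ the function of the statement, while the accompanying Wick and Girsanov constants, the determinant factor and the Gamma factor assemble into the deterministic prefactor $C_g(\mathbf z,\boldsymbol j)$. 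This gives the claimed formula.

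The technical core — and the step I expect to be the main obstacle — is to make all of this rigorous: to define the renormalised operators \eqref{Vjmuz} as $\eps\to0$ limits of their regularisations (Wick renormalisation being needed only for the $\phi$-exponentials, the $\gamma$-ones requiring none by the bound above), to justify the interchange of the $\gamma_0$-, $\gamma_g$-, $c$- and $X_g$-integrations (carried out first at $\eps>0$, where each layer is finite, then passed to the limit), and above all to prove $I\in(0,\infty)$ almost surely with $\E[I^{-s_0}]<\infty$. Finiteness of $I$ is exactly where the Seiberg bound $j_\ell<-\tfrac12$ enters: near an insertion $z_\ell$ the reweighting density behaves like $|w-z_\ell|^{-4b^2(j_\ell+1)}$ (from the logarithmic pole $2b^2(j_\ell+1)G_g(\cdot,z_\ell)$ of $u$), and the factor $|\sum_\ell\frac{\mu_\ell}{z_\ell-w}|^2$ contributes an extra $|w-z_\ell|^{-2}$, so one integrates $|w-z_\ell|^{-2-4b^2(j_\ell+1)}$ against $M_{2b}$; by the a.s. multifractal behaviour of the chaos, $\int_{B(z_\ell,\delta)}|w-z_\ell|^{-s}M_{2b}(\dd w)<\infty$ iff $s<2+2b^2$, which here reads $4b^2(j_\ell+1)<2b^2$, i.e. $j_\ell<-\tfrac12$. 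Finiteness of the negative moment $\E[I^{-s_0}]$ then follows from standard lower-tail/negative-moment estimates for chaos integrals, yielding the existence statement of Proposition~\ref{existcorrel}.

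It remains to establish conformal covariance, which I would deduce from the Weyl covariance \eqref{PI_Weyl_Anomaly} of the path integral (Proposition~\ref{prop:weyl}) applied to $F=\prod_\ell V_{j_\ell,\mu_\ell}$. The shift $\phi\mapsto\phi-\tfrac{b^2}2\omega$ there turns each $e^{2(j_\ell+1)\phi(z_\ell)}$ into $e^{-b^2(j_\ell+1)\omega(z_\ell)}e^{2(j_\ell+1)\phi(z_\ell)}$, and the anomaly prefactor $e^{\frac{\mathbf{c}(k)}{96\pi}\int(|d\omega|^2_g+2K_g\omega)\dd{\rm v}_g}$ already identifies the central charge $\mathbf{c}(k)=3+6b^2$. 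Since the renormalised vertex operators are themselves metric-dependent, the clean route is to apply \eqref{PI_Weyl_Anomaly} to the $\eps$-regularised, $g$-independently normalised functional and then track the ratio of the Wick constants of $g'=e^\omega g$ and $g$; for $e^{2b(j_\ell+1)X_{g,\eps}(z_\ell)}$ this contributes a further factor $e^{b^2(j_\ell+1)^2\omega(z_\ell)}$. Adding the two contributions gives the net weight $-b^2(j_\ell+1)\omega(z_\ell)+b^2(j_\ell+1)^2\omega(z_\ell)=b^2 j_\ell(j_\ell+1)\omega(z_\ell)=-\triangle_{j_\ell}\omega(z_\ell)$, with $\triangle_j=-\frac{j(j+1)}{k-2}=-b^2 j(j+1)$, which is the assertion of Proposition~\ref{confweight}. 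The hardest part of the whole proof remains the simultaneous renormalisation and the sharp a.s. finiteness of $I$ discussed above.
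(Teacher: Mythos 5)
Your proposal is correct and follows essentially the same route as the paper's proof of Propositions \ref{existcorrel} and \ref{confweight}: the zero-mode integral producing $\delta_{V^0}$, the conditional complex-Gaussian expectation over $\gamma_g$ yielding the GMC quadratic form (with the $\kappa_g$-terms killed by $\sum_\ell \mu_\ell=0$), the Cameron--Martin shift producing $e^{2u}$ and $B_{\H^3}({\bf z},\boldsymbol{j})$, the $c$-integral giving $\Gamma(s_0)$ times the negative moment under the Seiberg-type bounds $j_\ell<-\tfrac12$, $s_0>0$, and the Weyl anomaly combined with the change of Wick constants between $g$- and $g'$-regularisations giving $\triangle_{j}=-b^2j(j+1)$. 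The only differences are immaterial orderings (you perform the $c$-integral before the GFF shift, the paper after), and your bookkeeping of the two $\omega$-contributions $-b^2(j_\ell+1)\omega+b^2(j_\ell+1)^2\omega$ is in fact the correct version of the computation the paper carries out.
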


The case of general surfaces with positive genus has a more complicated structure due to the fact that the operator 
$\bar\partial$ has a nontrivial co-kernel. Yet, the final definition of the path integral has a similar flavour, see subsection \ref{general}. We remark the similarities with Liouville theory \cite{DKRV16,GRVIHES} where the correlation functions are expressed in terms of negative moments of integrals of the
Gaussian Multiplicative Chaos.

\subsection{The $\mathbb{H}^3$-WZW  -- Liouville correspondence}

Liouville CFT on a Riemannian surface $(\Sigma,g)$ has the classical action functional\footnote{For simplicity, we take the cosmological constant to be $\mu=1$ in the conventions of \cite{DKRV16,GRVIHES}.}
\begin{align*}
S_{\rm L}(\phi,g)=\frac{1}{4\pi}\int_\Sigma( |d\phi|^2_g+Q K_g\phi+ e^{2b\phi})\dd {\rm v}_g
\end{align*}
with $Q=\frac{1}{b}+b$.
 Its path integral is defined probabilistically as
\begin{align*}
\big\langle F(\phi)\big\rangle^{\rm L}_{\Sigma,g}=\Big(\frac{\det(\Delta_g)}{{\rm v}_g(\Sigma)}\Big)^{-1/2}\int_\Sigma \E\big[F(c+X_g)e^{-\frac{Q}{4\pi}\int_\Sigma (c+X_g) K_g\dd {\rm v}_g+ e^{2bc} M^g_{2b}(X_g,\Sigma)}\Big]\,\dd c
\end{align*}
with $M^g_{2b}(X_g,\dd z)$ the GMC measure with respect to the GFF $X_g$ and the primary fields are given by (suitably renormalised) $V_\alpha(z)=e^{\alpha\phi(z)}$ for $\alpha\in\C$. A  surprising connection between the $\mathbb{H}^3$ model and the LCFT correlation functions  was observed in physics by Ribault and Teschner \cite{Ribault:2005wp}. A  formal path integral derivation was later given by Hikeda and Schomerus \cite{Hikida:2007tq}. The present paper can be seen as a mathematical proof, and also a  generalisation, of   this correspondence. For $\Sigma=\hat\C$, this correspondence takes the form (see Theorem \ref{Liouvillesphere2} for the precise statement):

\begin{theorem}
Let $m\geq 3$. Under the assumptions of Theorem \ref{intro:th_correl}, for each $\boldsymbol{\mu}\in \C^m$ there exist a 
set ${\bf y}=(y_1,\dots,y_p)$ of disjoint points in the sphere $\hat\C$ and integers $n_1,\dots,n_p\in \N\setminus \{0\}$, depending on ${\bf z}$ and $\boldsymbol{\mu}$, with $ n_1+\dots+n_p=m-2$, such that the following identity holds in the distribution sense in $\boldsymbol{\mu}\in \C^m$
\[ \Big\langle \prod_{\ell=1}^mV_{j_\ell,\mu_\ell}(z_\ell )\Big\rangle_{\hat{\C},g}^{\mathbb{H}^3}=\mu_{\rm L}^{-s_0} \mc{F}_{{\bf z},\boldsymbol{j}}({\bf y})\,  \Big\cjg \prod_{\ell=1}^{m} V_{\alpha_\ell}(z_\ell) \prod_{j=1}^{p} V_{-\frac{n_j}{b}}(y_j)\Big\cjd_{\hat{\C},g}^{\rm L} \delta_{V^0}\]
where $\mu_{\rm L}:=\frac{ |\sum_{\ell=1}^m\mu_\ell z_\ell^r |^2}{\pi k}$with $r$ the smallest integer such that $\sum_{\ell=1}^m\mu_\ell z_\ell^r \not=0$, $s_0:=-1+\sum_{\ell=1}^m(j_\ell+1)>0$, 
the function $\mc{F}_{{\bf z},\boldsymbol{j}}({\bf y})$ is explicitly given by \eqref{calculFdeg}, $\delta_{V^0}$ is the measure 
\eqref{deltaV^0} and $\alpha_\ell=2b(j_\ell+1)+\frac{1}{b}$.
\end{theorem}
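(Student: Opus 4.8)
The plan is to prove the identity by matching the two probabilistic representations of the correlators: the $\H^3$ side is the Gaussian multiplicative chaos (GMC) moment furnished by Theorem \ref{intro:th_correl}, and the Liouville side is the GMC moment furnished by the DKRV construction \cite{DKRV16,GRVIHES}. The object linking them is the rational function $\Omega(z):=\sum_{\ell=1}^m\frac{\mu_\ell}{z_\ell-z}$ weighting the $\H^3$ chaos integral, and the proof is essentially an exercise in reading off the Liouville insertion data from the divisor of $\Omega$. I would first restrict to the open dense set of $\boldsymbol\mu\in V^0$ with $\sum_\ell\mu_\ell z_\ell\neq0$ (so $r=1$) and such that the numerator of $\Omega$ has simple zeros disjoint from ${\bf z}$. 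Using $\sum_\ell\mu_\ell=0$ (enforced by $\delta_{V^0}$), one factors $\Omega(z)=C\prod_{j=1}^p(z-y_j)^{n_j}\big/\prod_{\ell=1}^m(z-z_\ell)$ with $C=-\sum_\ell\mu_\ell z_\ell$; since $\Omega$ decays like $z^{-2}$ at infinity, the numerator has degree $m-2$, whence $\sum_j n_j=m-2$. This reads off ${\bf y}$, the $n_j$, and gives $|C|^2=\pi k\,\mu_{\rm L}$. Substituting $|\Omega|^2=|C|^2\prod_j|z-y_j|^{2n_j}/\prod_\ell|z-z_\ell|^2$ into Theorem \ref{intro:th_correl} and pulling $|C|^{2}$ out of the $(-s_0)$-th moment yields the prefactor $(\pi k\,\mu_{\rm L})^{-s_0}$, hence the $\mu_{\rm L}^{-s_0}$.

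The core step is to recognise the remaining chaos integral $\int_\C\prod_j|z-y_j|^{2n_j}\prod_\ell|z-z_\ell|^{-2}e^{2bu(z)}M^g_{2b}(X_g,\dd z)$ as the Liouville one carrying insertions $V_{\alpha_\ell}(z_\ell)$ and degenerate fields $V_{-n_j/b}(y_j)$. Writing $G_g(z,w)=-\log|z-w|+W_g(z,w)$ and comparing logarithmic singularities: the double pole $|z-z_\ell|^{-2}$ together with the $\H^3$ weight $e^{2(j_\ell+1)\phi}$ shifts the spin to the momentum $\alpha_\ell=2b(j_\ell+1)+\tfrac1b$, the $\tfrac1b$ being precisely the pole contribution, while each factor $|z-y_j|^{2n_j}$ reproduces the Girsanov weight $e^{2b(-n_j/b)G_g(z,y_j)}$ of a degenerate insertion $V_{-n_j/b}$. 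A clean consistency check fixing these charges is the matching of the two moment exponents: with $\beta_j=-n_j/b$ and $Q=b+\tfrac1b$, and using $\sum_j n_j=m-2$, one finds $\tfrac{1}{2b}\big(\sum_\ell\alpha_\ell+\sum_j\beta_j-2Q\big)=\sum_\ell(j_\ell+1)-1=s_0$, so the Liouville correlator is a $(-s_0)$-th moment, exactly as on the $\H^3$ side. This also confirms the global Seiberg bound $s_0>0$ and, since $j_\ell<-\tfrac12$ forces $\alpha_\ell<Q$, the chaos integrability near each $z_\ell$, so the Liouville correlator is well defined with these insertions.

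After the logarithmic singularities cancel, what is left is the smooth ratio $\Theta(z)$ of the two integrands. The Green's-function identities $\tfrac{1}{2\pi}\Delta_gW_g(\cdot,w)=-\tfrac{1}{{\rm v}_g}$ and $\tfrac{1}{2\pi}\Delta_g\!\int G_g(\cdot,z')K_g(z'){\rm dv}_g(z')=K_g-\tfrac{8\pi}{{\rm v}_g}$, together with $\sum_j n_j=m-2$ and the value $\alpha_\ell-2b(j_\ell+1)=\tfrac1b=Q-b$, show that $\Theta$ is proportional to the conformal factor of $g$ — the expected outcome, since $\Omega(z)\dd z$ is a meromorphic $1$-form and $|\Omega|^2$ therefore carries conformal weight $(1,1)$. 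This smooth factor is reconciled with the metric normalisations built into the two definitions (the $K_g$-term in $u$, the background charge $Q$, and the vertex/GMC conventions), and together with the constant $C_g({\bf z},\boldsymbol j)$ of Theorem \ref{intro:th_correl}, the factor $(\pi k)^{-s_0}$, the Liouville determinant, the $\Gamma(s_0)$ and the Girsanov self-energy terms it condenses into the explicit deterministic prefactor $\mc F_{{\bf z},\boldsymbol j}({\bf y})$ of \eqref{calculFdeg}. I expect this reconciliation of the $1$-form/conformal-factor discrepancy $\Theta$ — proving that it is exactly compensated by the metric data rather than spoiling the equality of the two moments — to be the main obstacle, and it is where the precise Green's-function and Weyl-anomaly bookkeeping is indispensable.

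Finally I would upgrade the pointwise identity, valid for generic $\boldsymbol\mu$, to the claimed identity in the distribution sense on $\C^m$. Both sides are distributions in $\boldsymbol\mu$ supported on $V^0$ (the $\H^3$ side by Theorem \ref{intro:th_correl}) and they agree on the open dense set where $r=1$ and the zeros $y_j$ are simple and distinct from ${\bf z}$; the complementary loci — colliding zeros, where some $n_j$ increase, and $\sum_\ell\mu_\ell z_\ell=0$, where $r$ jumps and $\mu_{\rm L}$ is built from the leading nonvanishing coefficient $\sum_\ell\mu_\ell z_\ell^r$ (with $\sum_j n_j=m-r-1$) — are of lower dimension, so a density and continuity argument promotes the equality to the distributional statement.
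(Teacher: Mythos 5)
Your proposal is correct in substance and rests on the same mechanism as the paper's proof of Theorem \ref{Liouvillesphere2}: factorize $\Gamma_{\boldsymbol{\mu},{\bf z}}$ as in \eqref{polyndeg}, read off degenerate insertions $V_{-n_j/b}$ at its zeros, match the momenta $\alpha_\ell=2b(j_\ell+1)+\tfrac{1}{b}$ and the exponents $s(\boldsymbol{\alpha})=2bs_0$, and absorb the leading coefficient into $\mu_{\rm L}^{-s_0}$. Two execution choices differ, and both are worth comparing. First, the paper does not fight the general-metric bookkeeping that you flag as the main obstacle: it invokes Weyl covariance (Proposition \ref{confweight} and \eqref{LCFT_anomaly}) to reduce to the round metric $g_0$, where $K_{g_0}$ is constant and the Green function has the explicit form \eqref{greenS}; your ``smooth ratio $\Theta$'' then becomes an explicit product of powers of $g_0(z_i)$ and $|z_i-z_j|$, the prefactor \eqref{calculFdeg} drops out by direct computation, and the general-$g$ statement is recovered at the very end from the anomaly relations. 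Your route (general $g$, Laplacian identities for $W_g$) would work, but it reconstructs by hand exactly what the anomaly formulas already encode. Second, the paper treats all values of $r$ uniformly: for $r\geq 2$ the zero at infinity is handled by writing the Liouville correlation with $m+p+1$ insertions and sending the last point to $\infty$, which is precisely where the factor $4^{\Delta^{\rm L}_{-(r-1)/b}}$ in \eqref{calculFdeg} originates. Your restriction to the generic stratum $r=1$ with simple zeros, followed by a null-set argument, does suffice for the distributional identity: the locus $\{\sum_\ell\mu_\ell z_\ell=0\}$ and the discriminant locus are proper analytic subsets of $V^0$, hence Lebesgue-null there, and both densities are $\mc{O}(|\boldsymbol{\mu}|^{-2s_0})$ with $2s_0<m-2$ (Proposition \ref{existcorrel}), hence locally integrable, so a.e.\ equality of densities gives equality of the measures. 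But this shortcut never produces the $r$-dependent part of \eqref{calculFdeg} — it silently uses that this factor equals $1$ when $r=1$ — so, as written, your argument establishes the theorem's distributional identity while verifying the stated explicit formula for $\mc{F}_{{\bf z},\boldsymbol{j}}$ only on the generic stratum; completing the statement as the paper phrases it (data and formula for every $\boldsymbol{\mu}$, including $y_{p+1}=\infty$) requires the limiting computation of the Liouville correlator with an insertion at infinity.
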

The weights $-\frac{n_j}{b}$ corresponding to the points $y_j$ are ``degenerate weights'' in the CFT terminology. In this correspondence, 
the new insertions $y_\ell$ appearing in the Liouville side are the zeros of the meromorphic $1$-form $\sum_{\ell=1}^m\frac{\mu_\ell}{z-z_\ell}dz$.
A similar correspondence holds also on surfaces with genus larger than zero with an extra integral over the   degenerate field insertion points $y_\ell$. We will address this question in a broader context, namely that of a gauge-twisted version of the $\H^3$ model.

\subsection{Gauged WZW model and holomorphic vector bundles}

Conformal field theories have an algebra of conformal symmetries called Virasoro algebra. They appear by considering the
variation of the correlation functions under deformation of the complex structure on a compact surface $\Sigma$, which 
can be done by varying the metric. 
As mentioned above, the WZW models are expected to have a   rich symmetry structure, extending these conformal symmetries.
The deformations generating these extra symmetries, called Kac-Moody symmetries, need the introduction of a $1$-form $A$ on $\Sigma$ taking values in the Lie algebra $\mathfrak{g}^{\C}$ of $G^\C={\rm SL}(2,\C)$, giving rise to a connection on the trivial bundle $\Sigma\times \C^2$. We choose $A$ such that $A^*=-A$, that is the connection is assumed unitary.
This connection splits into 
$A=A^{1,0}+A^{0,1}$ where $A^{0,1}\in C^\infty(\Sigma;(T^*\Sigma)^{1,0})$ and $A^{1,0}=-(A^{0,1})^*\in C^\infty(\Sigma;(T^*\Sigma)^{0,1})$ (in local complex coordinates $A^{1,0}(z)=a(z)dz$ and $A^{0,1}(z)=-\bar{a}(z)d\bar{z}$ for some trace free matrix $a$). 
The Dolbeault operator $\bar{\pl}+A^{0,1}$ gives rise to a structure of holomorphic vector bundle on the bundle $\Sigma \times \C^2$ 
with local holomorphic trivialisations $(s_1,s_2)$ over a chart $U\subset \Sigma$ with  $s_i: U \to \C^2$ solving $(\bar{\pl}+A^{0,1})s_i=0$ for $i=1,2$. We denote this holomorphic vector bundle by $(E,\bar{\pl}_E:=\bar{\pl}+A^{0,1})$. Since we work with $A^{0,1}$ taking values in $\mathfrak{g}^\C={\rm sl}_2(\C)$, there is an extra structure on $E$, namely its determinant bundle is trivial. 
The gauge group  of such holomorphic rank-$2$ vector bundles with trivial determinant is ${\bf G}:=C^\infty(\Sigma,{\rm SL}(2,\C))$ and it acts on such bundles by conjugating the Dolbeault operator $h\circ \bar{\pl}_{E}\circ h^{-1}$, and the associated 
connection form changes by 
\[A^{0,1}_{h}=hA^{0,1}h^{-1}+h\bar\partial h^{-1}.\]
Changing $A^{0,1}$ amounts to deforming the holomorphic structure on $E$ and the action of these changes on the correlations should generate the Kac-Moody symmetries.
The  WZW action twisted by $A$ is associated to the holomorphic structure  $\bar{\pl}_E=\bar{\pl}+A^{0,1}$, and is defined by
\[
S_{\textrm{WZW}}(h,A)=S_{\textrm{WZW}}(h)+\frac{i}{2\pi}\int_\Sigma {\rm Tr}[A^{1,0}\wedge h^{-1}\bar\partial h+h\partial h^{-1} \wedge A^{0,1}+hA^{1,0}h^{-1} \wedge A^{0,1}].
\]
For fixed $A$, we can restrict this action to positive definite elements $h=qq^*$ with $q$ given by \eqref{hmatrix} as before: geometrically, 
$h$ represents a Hermitian metric on the holomorphic vector bundle $(E=\Sigma \times \C^2,\bar{\pl}_E)$. The formal path integral 
\begin{align}\label{pathiA}
\big\langle F\big\rangle^{\H^3}_{\rm A}=\int F(qq^*)e^{kS_{\textrm{WZW}}(qq^*,A)}Dq
\end{align} 
is thus a sum over all possible Hermitian metrics on $E$. Each such metric $h$ on $E$ induces a 
 unique $h$-unitary connection whose $(0,1)$-part is given by $A^{0,1}$, i.e. the holomorphic structure induced by this connection 
 is $\bar{\pl}_E$, it is 
explicitly given by $A^{1,0}_h+A^{0,1}$ with $A^{1,0}_h:=hA^{1,0}h^{-1}+h\partial h^{-1}$. The critical points of the action 
$h\mapsto S_{\textrm{WZW}}(h,A)$ are exactly the metrics $h$ such that the unitary connection $A^{1,0}_h+A^{0,1}$ is flat, i.e. has zero curvature (see Section \ref{s:WZWaction_E}), so the path integral describes fluctuations around flat unitary ${\rm SL}(2,\C)$-connections. 

The  gauge group ${\bf G}$ acts on the Hermitian metrics $h$ and on the connection forms $A=A^{1,0}+A^{0,1}$ 
by 
\[h_0.h=h_0hh^*_0, \quad   h_0.A:=A_{h_0}:=A_{h_0}^{0,1}+A^{1,0}_{{h_0^{-1}}^*}\] 
for $h_0\in {\bf G}$: $A_{h_0}=-A_{h_0}^*$ is the unitary connection associated to the conjugated operator  $h_0\bar{\pl}_Eh_0^{-1}$ and to the canonical Hermitian metric on the trivial bundle $E=\Sigma \times \C^2$ over $\Sigma$.
A computation gives (Lemma \ref{l:invarianceS})
\begin{align}\label{gauge}
S_{\textrm{WZW}}(h_0.h, h_0.A)=S_{\textrm{WZW}}(h,A)-S_{\textrm{WZW}}(h_0^{*{-1}},A^{1,0})-S_{\textrm{WZW}}(h_0^{-1},A^{0,1}).
\end{align}
This identity and a heuristic change of variables in the path integral suggest that the following should be true for each $h_0\in {\bf G}$
\begin{equation}\label{intro_gauge_inv}
\big\langle F\big\rangle^{\H^3}_A =e^{kS_{\textrm{WZW}}(h_0^{*-1},A^{1,0})+kS_{\textrm{WZW}}(h_0^{-1},A^{0,1})}\big\langle \caT_{h_0^{-1}}F\big\rangle^{\H^3}_{ A_{h_0}}
\end{equation}
and  $\caT_{h_0}F(h)=F(h_0.h)$. Thus e.g. for $\caT$-invariant observables $\caT_{h_0}F=F$, $\langle F\rangle^{\H^3}_A$ is constant up to a scalar factor on the gauge orbit $\{  A_{h_0}\,|\,  h_0\in {\bf G}\}$. Continuing along such heuristic lines and considering observables transforming under representations of $G^\C$ leads to the so called Ward identities for the correlation functions and ultimately to representations of the affine Lie algebra (see \cite{Gaw1} for a lucid discussion).
 
\subsection{Probabilistic definition, general case.} For practical purposes, we replace 
$S_{\textrm{WZW}}(h,A)$ by 
\[\tilde{S}_{\rm WZW}(h,A):=S_{\textrm{WZW}}(h,A)-S_{\textrm{WZW}}({\rm Id},A)\] 
which has the advantage of being invariant by $h_0\in {\rm SU}(2)$, $\tilde{S}_{\textrm{WZW}}(h_0.h, h_0.A)=\tilde{S}_{\textrm{WZW}}(h, A)$ (see Corollary \ref{gauge_inv}) and only add a deterministic term in the path integral. We will take as before 
\[  k>3 , \quad b=(k-2)^{-1/2}\in (0,1).\]
The probabilistic definition of the gauged model is more involved, since the action $S_{\textrm{WZW}}(h,A)$ 
contains non linearities in $(\phi,\gamma)$ that look harder to renormalise if $(\phi,\gamma)=(\phi_g,\gamma_g)$ 
is a Witten pair as defined before. Instead, we follow the strategy of Gawedzki in \cite{Gaw1} and write the holomorphic 
bundle $(E=\Sigma\times \C^2,\bar{\pl}_E=\bar{\pl}+A^{0,1})$ in terms of extension of a holomophic 
line bundle: every rank $2$ holomorphic bundle with trivial determinant and Hermitian metric admits
 an orthogonal (non-holomorphic) splitting $E=\mc{L}^{-1}\oplus \mc{L}$ where $\mc{L}$ is a complex line bundle equipped with a holomorphic structure $\bar{\pl}_{\mc{L}}$, $\mc{L}^{-1}$ its dual equipped with the dual holomorphic structure $\bar{\pl}_{\mc{L}^{-1}}$ 
 and the $\bar{\pl}_E$ operator has a upper triangular form in this splitting
 \begin{equation}\label{dbar_Esetup_intro}
  \bar{\pl}_E= \left(\begin{array}{cc} 
\bar{\pl}_{\mc{L}^{-1}} & \beta \\
0 & \bar{\pl}_{\mc{L}}
\end{array}\right) \end{equation}
for some $\mc{L}^{-2}$-valued $1$-form $\beta\in C^\infty(\Sigma;\mc{L}^{-2}\otimes (T^*\Sigma)^{0,1})$. This means that $(\mc{L}^{-1},\bar{\pl}_{\mc{L}^{-1}})$ is a holomorphic subbundle of $E$ and one has a short exact sequence of holomorphic bundles \eqref{short_exact_seq}. 
We call the pair  $(\mc{L},\beta)$
the parameters of the extension describing $(E,\bar{\pl}_E=\bar{\pl}+A^{0,1})$ and we notice that such a splitting is not unique in general 
and $\mc{L}$ is not always topologically trivial. 
Each 
$h\in C^\infty(\Sigma, {\rm SL}(2,\C))$ 
can be represented as an automorphism of $E=\mc{L}^{-1}\oplus \mc{L}$ and written as $h=qq^*$ with $q$ having the form \eqref{hmatrix} but now in the splitting $\mc{L}^{-1}\oplus \mc{L}$ rather than $\C\oplus \C$, with $\phi:\Sigma\to \R$ and 
$\gamma:\Sigma\to \mc{L}^{-2}$ is a section of the holomorphic line bundle $\mc{L}^{-2}=\mc{L}^{-1}\otimes \mc{L}^{-1}$ identified with 
the endomorphisms from $\mc{L}$ to $\mc{L}^{-1}$. The advantage of this description is that the WZW action now has the form (see Lemma \ref{lem:S(h,A)_formula_stable})
\begin{equation}\label{actionLbeta_intro} 
\tilde{S}_{\rm WZW}(h,A)=-\frac{1}{\pi }\int_\Sigma \big(\frac{1}{4}|d\phi|^2_g +e^{-2\phi}|\bar{\pl}_{\mc{L}^{-2}}\gamma+ \beta|^2_{\mc{L}^{-2},g}-|\beta|^2_{\mc{L}^{-2},g}\big){\rm dv}_g +\frac{1}{\pi i}\int_{\Sigma }F_{\mc{L}}\phi
\end{equation}
where  $(\mc{L},\beta)$ contains the information about the unitary connection form $A$, $F_{\mc{L}}$ 
is the curvature of $(\mc{L},\bar{\pl}_{\mc{L}})$ equipped with the canonical Hermitian metric on $E=\Sigma\times \C^2$ (which induces a 
metric on $\mc{L},\mc{L}^{-1}$). The action has a somehow similar form as \eqref{action1}, but 
with an additional term involving $\beta$ and $\bar{\pl}_{\mc{L}^{-2}}$ replacing the usual $\bar{\pl}$ operator on the trivial bundle. 
For generic $A$, the parameters $(\mc{L},\beta)$ representing $E$ as an extension are such that the $0$th-cohomology space\footnote{i.e. the space of holomorphic sections of $\mc{L}^{-2}$} 
\[ H^0(\Sigma,\mc{L}^{-2}):=\ker \bar{\pl}_{\mc{L}^{-2}}=0\]
is trivial. In that case, the operator $\mc{D}_\phi:=\bar{\pl}_{\mc{L}^{-2}}^*e^{-2\phi}\bar{\pl}_{\mc{L}^{-2}}$ describes the quadratic form
\[ \int_\Sigma e^{-2\phi}|\bar{\pl}_{\mc{L}^{-2}}\gamma|^2{\rm dv}_g=\cjg \mc{D}_\phi \gamma,\gamma\cjd_{2}\]
where $\bar{\pl}_{\mc{L}^{-2}}\gamma\in \mc{L}^{-2}\otimes (T^*\Sigma)^{0,1}$ and 
we use the Hermitian metric on $\mc{L}^{-2}$ and $g$ on $ (T^*\Sigma)^{0,1}$ to compute the pointwise norm in the integrand. 
The inverse of $\mc{D}_\phi$ has the form (Lemma \ref{inverseDphi})
\[ \cjg \mc{D}_\phi^{-1} f,f\cjd_{L^2(\Sigma,{\rm v}_g)}= \int_\Sigma |(1-\Pi_\phi)T_0f|^2e^{2\phi}{\rm dv}_g\]
with $T_0$ the operator satisfying $\bar{\pl}_{\mc{L}^{-2}}T_0={\rm Id}$ and mapping to $(\ker \bar{\pl}_{\mc{L}^{-2}}^*)^\perp$  and $\Pi_\phi$
the orthogonal projection on the 1st cohomology space\footnote{The space of holomorphic $1$-forms with values in $\mc{L}^{-2}$} $\ker \bar{\pl}_{\mc{L}^{-2}}^*\simeq H^1(\Sigma,\mc{L}^{-2})$ with respect to the measure $e^{2\phi}{\rm dv}_g$ . 
 We thus define $\gamma=\gamma_g$ to be a random variable, which is, conditionally on $\phi_g=b(X_g+c)$, 
 a centered Gaussian with (distributional) values in the complex line bundle $\mc{L}^{-2}$ with covariance kernel  
\[ \E[ \cjg \gamma_g,f\cjd_2 \overline{\cjg \gamma_g,f'\cjd_2}] =     \frac{\pi}{k}  \cjg (1-\Pi^\theta)T_0f',
(1-\Pi^\theta)T_0 f\cjd_{L^2(\Sigma,\theta)},\qquad \E[\cjg \gamma_g,f\cjd_2 \cjg \gamma_g,f'\cjd_2]=0\]
for any pair of smooth sections $f,f'\in C^\infty(\Sigma;\mc{L}^{-2})$ where $\theta=M_{2b}^g(\phi_g,\dd x)$ is the GMC measure, $\Pi^{\theta}: L^2(\Sigma;\mc{L}^{-2}\otimes \Lambda^{0,1}\Sigma)\to \ker (\bar{\pl}_{\mc{L}^{-2}}^*)$ is the orthogonal 
projection for the $L^2(\Sigma,\theta)$ pairing, and the $\cjg \cdot,\cdot\cjd_2$ pairing is with respect to the Riemannian volume measure ${\rm v}_g$. Using this Witten pair $(\phi_g,\gamma_g)$ we give in Definition \ref{defH3sigma} a probabilistic definition for the formal path integral 
\[\cjg F\cjd^{\H^3}_{\Sigma,g,\mc{L},\beta}=\int F(qq^*) e^{k\tilde{S}_{\rm WZW}(qq^*,A)}Dq\] 
associated to the action \eqref{actionLbeta_intro}, similar to \eqref{defPI_intro}, but involving also $\beta$. A new difficulty that appears is the cokernel $\ker \bar{\pl}_{\mc{L}^{-2}}^*$ that needs to be taken into account, and for non-generic connection form 
$A$, $H^0(\Sigma,\mc{L}^{-2})=\ker \bar{\pl}_{\mc{L}^{-2}}\not=0$, meaning that the 
$\gamma$ field has ``zero modes'' of dimension  $\dim H^0(\Sigma,\mc{L}^{-2})$ as it was the case for $A=0$ on $\Sigma=\hat{\C}$ (where 
$\ker \bar{\pl}=\C$) discussed above.  
We prove that the probabilistic path integral satisfies:
\begin{itemize}
\item \textbf{Weyl anomaly}  (Proposition \ref{prop:weyl_surface}): with ${\bf c}(k)=3k/(k-2)$ and $\omega\in C^\infty(\Sigma)$ 
\[\big\langle F(\phi,\gamma) \big\rangle^{\H^3}_{\Sigma,e^{\omega}g,\mc{L},\beta}=e^{ \frac{{\bf c}(k)}{96\pi}\int_{\hat{\C}}(|d\omega|^2_g+2K_g\omega){\rm dv}_g}\big\langle F(\phi-\tfrac{b^2}{2}\omega,\gamma) \big\rangle^{\H^3}_{\Sigma,g,\mc{L},\beta}\]
\item \textbf{Diffeomorphism invariance} (Proposition \ref{prop:weyl_surface}): for $\psi$ diffeomorphism,  $\psi^*g$ and $\psi^*\mc{L}$ 
the pullback metric and bundle
\[ \big\langle F(\psi^*\phi ,\psi^*\gamma) \big\rangle^{\H^3}_{\Sigma,\psi^*g,\psi^*\mc{L},\psi^*\beta}=\big \langle F(\phi ,\gamma) \big\rangle_{\Sigma,g,\mc{L},\beta}^{\H^3} .\]
\item \textbf{Gauge invariance} (Proposition \ref{gauge_invariance}): for upper triangular gauge (in the splitting $E=\mc{L}^{-1}\oplus \mc{L}$),
\[h_0^{-1}:= \left(\begin{array}{cc} 
e^{\frac{u}{2}} & e^{-\frac{u}{2}}v \\
0 & e^{-\frac{u}{2}} 
\end{array}\right), \quad \big\langle  F(\phi,\gamma) \big\rangle^{\H^3}_{\Sigma,g,\mc{L}_u,\beta_{u,v}}=e^{k\mc{W}(u,\beta,v)}\Big\langle  F(\phi-{\rm Re}(u),e^{-u}(\gamma+v)) \big\rangle^{\H^3}_{\Sigma,g,\mc{L},\beta}\]
with $u\in C^\infty(\Sigma,\C)$, $v\in C^\infty(\Sigma;\mc{L}^{-2})$,  $\mc{L}_u$ being $\mc{L}$ equipped with $\bar{\pl}_{\mc{L}_u}:=e^{u/2}\bar{\pl}_{\mc{L}}e^{-u/2}$, $\mc{L}_u^{-1}$ its dual, $\beta_{u,v}=e^{-u}(\beta+\bar{\pl}_{\mc{L}^{-2}}(v))$ and $\mc{W}(u,\beta,v)$ defined in \eqref{Wubetav}. This corresponds to  the invariance \eqref{intro_gauge_inv}. 
\end{itemize}

For ${\bf z}=(z_1,\dots,z_m)$ disjoint points in $\Sigma$ 
and $\boldsymbol{j}=(j_1,\dots,j_m)$ some real weights, the primary fields appearing in the correlations are also given in this case 
by \eqref{Vjmuz}, except that  $\boldsymbol{\mu}=(\mu_1,\dots,\mu_m)\in \mc{L}^2_{z_1}\times \dots \times \mc{L}^2_{z_m}$ so that $\mu_\ell \gamma(z_\ell)\in \C$ is well defined if $\mc{L}^2$ is the dual  line bundle of $\mc{L}^{-2}$. We prove, as for the case of the sphere with trivial bundle (i.e. $A=0$), that the correlation functions can be defined and are expressed in terms of moments of GMC measures.
In this introduction, we focus on the case of generic bundles represented as an extension with parameters $(\mc{L},\beta)$, 
but the non-generic bundle case is also dealt with in Proposition \ref{exists_correl_Sigma_NG}. 
The admissibility bounds for the weights in order to have well-defined probabilistic correlations are 
\begin{equation}\label{intro_admissibility}
\forall \ell=1,\dots,m, \,\, j_\ell <-1/2 , \quad -k\, {\rm deg}(\mc{L})+\sum_{\ell}(j_{\ell}+1)>\dim H^0(\Sigma,\mc{L}^{-2})
\end{equation} 
with $\chi(\Sigma)$ the Euler characteristic and ${\rm deg}(\mc{L})$ the degree\footnote{The degree classifies the topological type of $\mc{L}$, see Section \ref{sec:holo_vect_bundle}.} of $\mc{L}$.
There is a particular $\mc{L}^{-2}$-valued $(0,1)$-form that plays an important role 
\[\Gamma_{\boldsymbol{\mu},{\bf z}}:=\sum_{\ell=1}^m T_0(\bar{\mu}_\ell \delta_{z_\ell}) \quad \textrm{ solving }\,\, \bar{\pl}^*_{\mc{L}^{-2}}\Gamma_{\boldsymbol{\mu},{\bf z}}=\sum_{\ell=1}^m\bar{\mu}_\ell \delta_{z_\ell}\] 
with $\delta_{z_\ell}$ the Dirac mass at $z_\ell$. We have (see also Proposition \ref{exists_correl_Sigma}): 
\begin{proposition}\label{intro_correl}
Assume that $H^0(\Sigma,\mc{L}^{-2})=\ker \bar{\pl}_{\mc{L}^{-2}}=0$, and that $\boldsymbol{j}=(j_1,\dots,j_m)$ satisfy the admissibility bounds \eqref{intro_admissibility}, and let 
$s_0:= -(k-2){\rm deg}(\mc{L})-\frac{\chi(\Sigma)}{2}+\sum_{\ell}(j_{\ell}+1)>0$.
For each  $\boldsymbol{\mu}=(\mu_1,\dots,\mu_m)\in \mc{L}^{2}_{z_1}\times \dots \times \mc{L}^{2}_{z_m}$ non-zero, 
the $n$-point correlation function admits a probabilistic definition using renormalisation of \eqref{Vjmuz} and is equal to
\[\begin{split} 
\Big\langle \prod_{\ell=1}^mV_{j_\ell,\mu_\ell}(z_\ell)\Big\rangle^{\H^3}_{\Sigma,g,\mc{L},\beta}=&\frac{C_{k,\mc{L},\beta}(s_0)}{\det(\mc{D}_0)}\Big(\frac{\det(\Delta_{g})}{{\rm v}_g(\Sigma)}\Big)^{-\frac{1}{2}}e^{B_{\H^3}({\bf z},\boldsymbol{j})} \int_{W_1}e^{-2i{\rm Im}\cjg \Gamma_{\boldsymbol{\mu},{\bf z}}+ \omega,\beta \cjd_2} 
 \E\Big[ \|\Gamma_{\boldsymbol{\mu},{\bf z}}+\omega\|_{L^2(\theta_u)}^{-2s_0} \Big] \dd \omega.
\end{split}\]
with $\dd \omega$ the volume measure associated to the Hermitian metric given by the $L^2(\Sigma,{\rm v}_g)$ pairing on $W_1=\ker \bar{\pl}^*_{\mc{L}^{-2}}\simeq H^1(\Sigma,\mc{L}^{-2})$, $u$ and $\theta_u$ are given by 
\[
u(x):= \sum_{\ell=1}^m 2b^2(j_\ell+1)G_{g}(x,z_\ell)-b^2 \int_\Sigma G_g(x,y)\mc{K}_{\mc{L}}(y), \qquad \theta_u:= e^{2u}M^g_{2b}(X_g,\dd x)
\]
and $B_{\H^3}({\bf z},\boldsymbol{j})$, $C_{k,\mc{L},\beta}(s_0)$ are explicit constants defined in  \eqref{Cklbeta} and \eqref{Bzj}, $\mc{K}_{\mc{L}}:=\frac{K_g}{4\pi}{\rm v}_g-\frac{k-2}{\pi i}F_{\mc{L}}
$.
\end{proposition}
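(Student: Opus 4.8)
The plan is to expand the regularised path integral of Definition~\ref{defH3sigma} applied to $F=\prod_{\ell}V_{j_\ell,\mu_\ell}(z_\ell)$, disintegrate it into the Gaussian $\gamma$-sector (conditionally on $\phi=b(c+X_g)$) and the Liouville-type $\phi$-sector, and carry out the two Gaussian computations in turn. First I would insert the $\varepsilon$-regularised vertex operators $V^\varepsilon_{j_\ell,\mu_\ell}(z_\ell)=e^{2(j_\ell+1)\phi_\varepsilon(z_\ell)}e^{\mu_\ell\gamma_\varepsilon(z_\ell)-\bar\mu_\ell\bar\gamma_\varepsilon(z_\ell)}$ and note that their product factors into the $\phi$-exponentials $e^{2b(j_\ell+1)(c+X_{g,\varepsilon})(z_\ell)}$ and the $\gamma$-phases $e^{\mu_\ell\gamma_\varepsilon(z_\ell)-\bar\mu_\ell\bar\gamma_\varepsilon(z_\ell)}$, the latter depending on $\gamma$ only linearly.

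For the $\gamma$-sector I would use that, conditionally on $\phi$, $\gamma_g$ is the centred complex Gaussian described in the introduction, so the phases are evaluated by a Wick/Cameron--Martin computation. The key algebraic identity is $\cjg\bar{\pl}_{\mc{L}^{-2}}\gamma,\Gamma_{\boldsymbol{\mu},{\bf z}}\cjd_2=\sum_\ell\mu_\ell\gamma(z_\ell)$, which follows from $\bar{\pl}^*_{\mc{L}^{-2}}\Gamma_{\boldsymbol{\mu},{\bf z}}=\sum_\ell\bar\mu_\ell\delta_{z_\ell}$ by integration by parts, so the vertex phase equals $e^{2i\,{\rm Im}\,\cjg\bar{\pl}_{\mc{L}^{-2}}\gamma,\Gamma_{\boldsymbol{\mu},{\bf z}}\cjd_2}$. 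Since the action \eqref{actionLbeta_intro} also contains a term linear in $\gamma$ through $\beta$ (namely $2\,{\rm Re}\,\cjg\bar{\pl}_{\mc{L}^{-2}}\gamma,\beta\cjd$ weighted by $e^{-2\phi}$), completing the square in the $\gamma$-integral produces the Gaussian normalisation $\det(\mc{D}_0)^{-1}$ (the Polyakov anomaly of $\det\mc{D}_\phi$ being already absorbed into the definition), the phase $e^{-2i\,{\rm Im}\,\cjg\Gamma_{\boldsymbol{\mu},{\bf z}}+\omega,\beta\cjd_2}$, and a quadratic output proportional to $e^{-\frac{\pi}{k}\|\Gamma_{\boldsymbol{\mu},{\bf z}}+\omega\|^2_{L^2(\theta)}}$ with $\theta=M^g_{2b}(\phi_g,\cdot)$; here the weights $e^{\pm2\phi}$ coming from the covariance $\frac{\pi}{k}\mc{D}_\phi^{-1}$ and from the $\beta$-source cancel, leaving the plain $L^2(v_g)$ pairing $\cjg\cdot,\cdot\cjd_2$ in the phase. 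The integration $\int_{W_1}\dd\omega$ over the cokernel $W_1=\ker\bar{\pl}^*_{\mc{L}^{-2}}\simeq H^1(\Sigma,\mc{L}^{-2})$ arises because these are the genuine zero modes of the $\gamma$-sector: the centred field $\gamma_g$ has covariance built from $(1-\Pi^\theta)T_0$ and hence lives $\theta$-orthogonally to $W_1$, so the $W_1$-directions (through which $\beta$ and $\Gamma_{\boldsymbol{\mu},{\bf z}}$ couple) must be integrated separately, in exact analogy with the kernel-mode integration $\int_\C\dd\gamma_0$ on the sphere (where instead $H^0=\ker\bar{\pl}=\C$). One must first check that the random $\theta$-Gram matrix of a basis of $W_1$ is a.s.\ invertible, so that $\Pi^\theta$ and this computation are well defined.

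For the $\phi$-sector I would run the standard Girsanov argument: shifting $X_g$ by $\sum_\ell 2b^2(j_\ell+1)G_g(\cdot,z_\ell)$ absorbs the insertions $e^{2b(j_\ell+1)X_{g,\varepsilon}(z_\ell)}$, replaces $\theta$ by $\theta_u$ with $u$ as in the statement (the curvature part of $u$ arising likewise from the linear $K_g$-term of \eqref{defPI_intro}), and produces the deterministic self-energy factor $e^{B_{\H^3}({\bf z},\boldsymbol{j})}$, with the $c$-dependence pulled out as an explicit $e^{2bc}$. Collecting the coefficient of $c$ in the remaining exponential — the vertex weights $2b\sum_\ell(j_\ell+1)$, the Euler-characteristic term $-b\chi(\Sigma)$ from $\int K_g\,{\rm dv}_g$, and the contribution $-2b(k-2){\rm deg}(\mc{L})$ from the $\frac{1}{\pi i}\int F_{\mc{L}}\phi$ term of \eqref{actionLbeta_intro} via Chern--Weil together with the index/anomaly contribution of the $\gamma$-determinant (responsible for the shift $k\mapsto k-2$) — one finds it equals $2bs_0$, which is precisely why $s_0$ is defined as it is. The substitution $t=e^{2bc}$ then turns $\int_\R e^{2bs_0 c}\,e^{-\frac{\pi}{k}e^{2bc}\|\Gamma_{\boldsymbol{\mu},{\bf z}}+\omega\|^2_{L^2(\theta_u)}}\,\dd c$ into $\frac{1}{2b}\Gamma(s_0)\big(\tfrac{\pi}{k}\big)^{-s_0}\|\Gamma_{\boldsymbol{\mu},{\bf z}}+\omega\|^{-2s_0}_{L^2(\theta_u)}$, producing the negative moment and the constant $C_{k,\mc{L},\beta}(s_0)$; convergence of this $c$-integral at $t\to0$ is exactly the requirement $s_0>0$.

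The hardest part will be the analytic control needed to remove the regularisation and to show the resulting object is finite. Concretely, I must prove that $\E[\|\Gamma_{\boldsymbol{\mu},{\bf z}}+\omega\|^{-2s_0}_{L^2(\theta_u)}]$ is finite and integrable against $\dd\omega$. Since $\Gamma_{\boldsymbol{\mu},{\bf z}}=\sum_\ell T_0(\bar\mu_\ell\delta_{z_\ell})$ has $\tfrac{1}{z-z_\ell}$-type singularities at the insertions, $\|\Gamma_{\boldsymbol{\mu},{\bf z}}+\omega\|^2_{L^2(\theta_u)}$ is the $\theta_u$-integral of a nonnegative function that is strictly positive on a set of full $\theta_u$-measure, and finiteness of its negative moments follows from small-ball/multifractal estimates for Gaussian multiplicative chaos, using $s_0>0$ together with the admissibility bounds $j_\ell<-\demi$ (which ensure $\alpha_\ell=2b(j_\ell+1)+\tfrac1b<Q$, i.e.\ each log-singularity of $u$ lies below the Seiberg threshold and is integrable against the chaos). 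These estimates, combined with uniform-in-$\varepsilon$ moment bounds, give convergence of the regularised correlations to the stated expression. Finally, since $\Gamma_{\boldsymbol{\mu},{\bf z}}$ depends linearly on $\boldsymbol{\mu}$, the right-hand side is manifestly a distribution in $\boldsymbol{\mu}\in\C^m$ — no $\delta_{V^0}$ constraint appears here because $H^0(\Sigma,\mc{L}^{-2})=0$ means the $\gamma$-sector has no kernel mode — which yields the claimed identity.
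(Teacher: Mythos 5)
Your overall scaffolding (conditional Gaussian computation in the $\gamma$-sector, Girsanov shift of $X_g$ producing $u$, $\theta_u$ and $e^{B_{\H^3}({\bf z},\boldsymbol{j})}$, change of variables in $c$ producing $\Gamma(s_0)$ and the negative moment) does match the paper's proof of Proposition \ref{exists_correl_Sigma}. However, your explanation of where $\int_{W_1}\dd\omega$ comes from is wrong, and this is the crux of the proposition, not a cosmetic point. $W_1=\ker \bar{\pl}^*_{\mc{L}^{-2}}$ is the \emph{cokernel} of $\bar{\pl}_{\mc{L}^{-2}}$, a space of $\mc{L}^{-2}$-valued $(0,1)$-forms; it contains no modes of the field $\gamma$ (a section of $\mc{L}^{-2}$), and in the generic case the path-integral definition \eqref{def_PI_case_noker} contains no $W_1$-integration at all — this is exactly the opposite of the sphere situation you invoke, where $\gamma_0$ ranges over the kernel $H^0=\ker\bar{\pl}$. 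What completing the square in $\gamma$ actually yields, conditionally on $\phi$, is the paper's intermediate expression: a factor $\det(\mc{G}^{\theta_u})^{-1}\exp\big(-e^{2bc}\tfrac{\pi}{k}\|(1-\Pi^{\theta_u})\Gamma_{\boldsymbol{\mu},{\bf z}}\|^2_{L^2(\theta_u)}-\tfrac{k}{\pi}e^{-2bc}\|\hat{\Pi}^{\theta_u}\beta\|^2_{L^2(\theta_u)}+2i\sum_\ell {\rm Im}(\mu_\ell Y^{\theta_u}(\beta,z_\ell))\big)$, i.e.\ a \emph{projected} norm of $\Gamma_{\boldsymbol{\mu},{\bf z}}$, a residual cokernel norm of $\beta$ carrying a weight $e^{-2bc}$, and the random Gram determinant. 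None of these objects appear in the claimed formula. The missing step is the Gaussian/Fourier identity \eqref{FourierT} (a Hubbard--Stratonovich-type transform): one rewrites $\det(\mc{G}^{\theta_u})^{-1}$ times the exponential of the quadratic forms in $(\mc{G}^{\theta_u})^{-1}$ as $k^{-N}\int_{\C^N}e^{-\frac{\pi}{k}\cjg \mc{G}^{\theta_u}\la,\la\cjd+i(\cdots)}\dd\la$, which simultaneously removes the projections, the determinant and the residual $\beta$-term, and — after the rescaling $\la\to e^{bc}\la$ — reassembles everything into $e^{-2i{\rm Im}\cjg \Gamma_{\boldsymbol{\mu},{\bf z}}+e(\la),\beta\cjd_2}\,e^{-\frac{\pi}{k}e^{2bc}\|\Gamma_{\boldsymbol{\mu},{\bf z}}+e(\la)\|^2_{L^2(\theta_u)}}$ with $\omega=e(\la)\in W_1$. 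Only after this step is the $\beta$-dependence a $c$-independent pure phase, so the $c$-integration you perform (substitution $t=e^{2bc}$ giving the Euler Gamma function) is not legitimate at the stage where you carry it out.

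There is also a secondary gap in the convergence analysis. You invoke only $s_0>0$ and $j_\ell<-1/2$, which control the $c$-integral and the GMC-integrability of the singularities of $e^{2u}|\Gamma_{\boldsymbol{\mu},{\bf z}}|^2$, but not the convergence of the $\omega$-integral at infinity: for that the paper proves $\E\big[\|\Gamma_{\boldsymbol{\mu},{\bf z}}+e(\la)\|^{-2s_0}_{L^2(\theta_u)}\big]\leq C|\boldsymbol{\mu}|^{-2s_0}\big(1+\tfrac{|\la|}{|\boldsymbol{\mu}|}\big)^{-2s_0}$ and needs $s_0>N=\dim W_1$, which by Riemann--Roch is precisely the second admissibility bound $-k\,{\rm deg}(\mc{L})+\sum_\ell(j_\ell+1)>0$ of \eqref{intro_admissibility}; without it the $\dd\omega$-integral diverges. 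Likewise, almost-sure invertibility of the Gram matrix, which is all you check, is not sufficient: to pass to the limit $\eps\to 0$ by dominated convergence one needs the moment bound $\E[\det(\mc{G}^{\theta_u})^{-1}]<\infty$ of Lemma \ref{det_negative_moment}.
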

The correlation function is an $\mc{O}(|\boldsymbol{\mu}|^{2(s_0-N)})$ near $\boldsymbol{\mu}=0$ if $N=\dim H^1(\Sigma,\mc{L}^{-2})$, and extends as a distribution on $\mc{L}^2_{z_1}\times \dots \times \mc{L}^2_{z_m}\simeq \C^N$, see Proposition \ref{exists_correl_Sigma}.
A similar statement is proven in the distribution sense in $\boldsymbol{\mu}$ when $H^0(\Sigma,\mc{L}^{-2})\not=0$, see Proposition \ref{exists_correl_Sigma_NG}.

\subsection{General $\mathbb{H}^3$-WZW -- LCFT correspondence}\label{general}

In the case of a general holomorphic bundle, which means a general choice of unitary gauge form $A$, we also have a correspondence with Liouville theory. Let us state the result in the generic case, i.e. when $H^0(\Sigma,\mc{L}^{-2})=0$, the non-generic case can be found in Theorem \ref{Th_corresp_NG}. In the generic case, for each $\omega\in \ker \bar{\pl}_{\mc{L}^{-2}}^*$, the $1$-form $\Gamma_{\boldsymbol{\mu},{\bf z}}+\omega$ has finitely many zeros $y_1(\omega),\dots,y_{m(\omega)}(\omega)$ with multiplicities $n_1(\omega),\dots,n_{m(\omega)}(\omega)$ obeying $\sum_{\ell=1}^{m(\omega)} 
n_\ell(\omega)=m-\chi(\Sigma)+2{\rm deg}(\mc{L})$ with ${\rm deg}(\mc{L})$ the degree of $\mc{L}$ (see  Corollary \ref{expression_norm_Gamma}). 

\begin{theorem}\label{intro_generic_corresp}
Consider a holomorphic rank $2$ bundle $E$ with trivial determinant represented as an extension with parameters $(\mc{L},\beta)$ and assume that $\ker \bar{\pl}_{\mc{L}^{-2}}=H^0(\Sigma,\mc{L}^{-2})=0$.
 Let ${\bf z}=(z_1,\dots,z_m)$ be $m$ distinct marked points, $\boldsymbol{j}=(j_1,\dots,j_m)$ be some weights 
satisfying the admissiblity bounds \eqref{intro_admissibility}.
For $\boldsymbol{\mu}=(\mu_1,\dots,\mu_m)\in \mc{L}^{2}_{z_1}\times \dots \times \mc{L}^{2}_{z_m}$ the following WZW-Liouville correspondence holds
\[
\Big\langle \prod_{\ell=1}^mV_{j_\ell,\mu_\ell}(z_\ell)  \Big\rangle^{\mathbb{H}^3}_{\Sigma,g,\mc{L},\beta}=
\int_{W_1}e^{-2i{\rm Im}\cjg \Gamma_{\boldsymbol{\mu},{\bf z}}+\omega,\beta \cjd_2}e^{-s_0C(\omega)}\mc{F}_{{\bf z},\boldsymbol{j}}({\bf y}(\omega))\, \Big\cjg \prod_{\ell=1}^m V_{\alpha_\ell}(z_\ell)\prod_{\ell=1}^{m(\omega)}V_{-\frac{n_\ell(\omega)}{b}}(y_\ell(\omega))\Big\cjd_{\Sigma,g}^{\rm L} \dd\omega
\]
with $W_1=\ker \bar{\pl}_{\mc{L}^{-2}}^*\simeq H^{1}(\Sigma,\mc{L}^{-2})$, $\dd \omega$ is the volume measure associated to the $L^2$-product on $W_1$, with $\alpha_\ell:=2b(j_\ell+1)+\frac{1}{b}$ and ${\bf y}(\omega)=(y_1(\omega),\dots,y_{m(\omega)}(\omega))$ 
is the set of zeros of $\Gamma_{\boldsymbol{\mu},{\bf z}}+\omega$ with multplicities $n_\ell(\omega)$  and 
$\mc{F}_{{\bf z},\boldsymbol{j}}({\bf y})$ is an explicit function given by \eqref{formula_F_lastTh} and $C(\omega)=\int_\Sigma \log | \Gamma_{\boldsymbol{\mu},{\bf z}}+\omega|^2{\rm dv}_g$.
\end{theorem}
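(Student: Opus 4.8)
The plan is to start from the closed-form probabilistic expression for the left-hand side provided by Proposition~\ref{intro_correl} and to convert the Gaussian multiplicative chaos (GMC) moment it contains into a Liouville GMC moment by factorising the density $|\Gamma_{\boldsymbol{\mu},{\bf z}}+\omega|^2$ along its zeros and poles. Since Proposition~\ref{intro_correl} already exhibits the correlation function as $\int_{W_1}e^{-2i\Im\cjg\Gamma_{\boldsymbol{\mu},{\bf z}}+\omega,\beta\cjd_2}(\cdots)\,\dd\omega$, with the GMC moment $\E[\|\Gamma_{\boldsymbol{\mu},{\bf z}}+\omega\|_{L^2(\theta_u)}^{-2s_0}]$ sitting inside, the phase factor, the measure $\dd\omega$ and the integration over $W_1$ are carried over verbatim into the target identity. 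Thus the whole statement is pointwise in $\omega\in W_1$ and reduces to rewriting the single moment $\E[\|\Gamma_{\boldsymbol{\mu},{\bf z}}+\omega\|_{L^2(\theta_u)}^{-2s_0}]$ as (deterministic prefactors) times a Liouville correlation function.

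First I would analyse the pointwise structure of $\eta:=\Gamma_{\boldsymbol{\mu},{\bf z}}+\omega$. By construction $\bar\pl^*_{\mc{L}^{-2}}\eta=\sum_\ell\bar\mu_\ell\delta_{z_\ell}$, so away from the $z_\ell$ the form $\eta$ lies in $\ker\bar\pl^*_{\mc{L}^{-2}}$; via the Hermitian metric and Serre duality this means $\bbar\eta$ is the conjugate of a meromorphic section of $\mc{L}^2\otimes K_\Sigma$ with simple poles at the $z_\ell$ and zeros exactly at the $y_\ell(\omega)$ of order $n_\ell(\omega)$, as recorded in Corollary~\ref{expression_norm_Gamma}; the count $\sum_\ell n_\ell(\omega)=m-\chi(\Sigma)+2\deg(\mc{L})$ is then Riemann--Roch for $\mc{L}^2\otimes K_\Sigma$. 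The Poincaré--Lelong formula yields a factorisation
\[\log|\eta|^2_{\mc{L}^{-2},g}=2\psi_\omega+\sum_\ell n_\ell(\omega)\log|\cdot-y_\ell(\omega)|^2-\sum_\ell\log|\cdot-z_\ell|^2\]
modulo smooth metric and curvature terms, where $\psi_\omega$ is smooth; this is the central geometric input. Inserting it into $\|\eta\|_{L^2(\theta_u)}^2=\int_\Sigma|\eta|^2 e^{2u}M^g_{2b}(X_g,\cdot)$ and using $G_g(x,x')\sim-\log|x-x'|$ converts the $z_\ell$- and $y_\ell$-contributions into GMC exponents $-2b\alpha_\ell$ at $z_\ell$ and $+2n_\ell(\omega)$ at $y_\ell(\omega)$, matching precisely the Liouville weights $\alpha_\ell=2b(j_\ell+1)+\tfrac1b$ and the degenerate weights $-n_\ell(\omega)/b$.

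Next I would identify the rewritten moment with the probabilistic definition of the Liouville correlator $\cjg\prod_\ell V_{\alpha_\ell}(z_\ell)\prod_\ell V_{-n_\ell(\omega)/b}(y_\ell(\omega))\cjd^{\rm L}_{\Sigma,g}$. A Cameron--Martin/Girsanov shift of $X_g$ absorbs the deterministic density $e^{2u}$ and the smooth factor $e^{2\psi_\omega}$ into the mean of the field, producing the Liouville GMC with the insertions above; the admissibility bounds \eqref{intro_admissibility} give $j_\ell<-1/2$, hence $\alpha_\ell<Q$, which secures the Seiberg bounds and the finiteness of all GMC moments, while the zeros $y_\ell(\omega)$ contribute integrable vanishings. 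A short computation using $\sum_\ell n_\ell(\omega)=m-\chi(\Sigma)+2\deg(\mc{L})$ and $1/b^2=k-2$ shows that the Liouville negative-moment exponent $\tfrac1{2b}\big(\sum_\ell\alpha_\ell-\tfrac1b\sum_\ell n_\ell(\omega)-Q\chi(\Sigma)\big)$ equals $s_0=-(k-2)\deg(\mc{L})-\tfrac{\chi(\Sigma)}2+\sum_\ell(j_\ell+1)$, so the two sides carry the same GMC moment.

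The main obstacle, and the bulk of the computation, is the bookkeeping of the deterministic prefactors. Once the smooth factor $e^{2\psi_\omega}$ and the weight $e^{2u}$ have been absorbed into the Liouville correlator by the Girsanov shift, one must check that the remaining constants — the ratio of the $\H^3$ normalisation $C_{k,\mc{L},\beta}(s_0)\det(\mc{D}_0)^{-1}(\det(\Delta_g)/{\rm v}_g(\Sigma))^{-1/2}e^{B_{\H^3}({\bf z},\boldsymbol{j})}$ to the Liouville normalisation (the vertex constants $C_{g,\alpha}$ and the Polyakov/determinant factors from the constant-mode integral), together with the zero/pole self-energy terms generated by the shift and the normalising constant $e^{-s_0C(\omega)}$ with $C(\omega)=\int_\Sigma\log|\Gamma_{\boldsymbol{\mu},{\bf z}}+\omega|^2{\rm dv}_g$ — collapse exactly onto the explicit function $\mc{F}_{{\bf z},\boldsymbol{j}}({\bf y}(\omega))$ of \eqref{formula_F_lastTh}. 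This amounts to reconciling the curvature terms hidden in $u$ and in $\mc{K}_{\mc{L}}$ with the Gauss--Bonnet and Riemann--Roch data of $\mc{L}$ and $K_\Sigma$, i.e. matching the conformal anomalies of the two theories; with this in hand the GMC identity itself is immediate.
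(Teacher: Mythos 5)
Your proposal is correct and follows essentially the same route as the paper's proof of Theorem \ref{Th_corresp_G}: starting from Proposition \ref{exists_correl_Sigma}, using the zero/pole Green-function decomposition of $\log|\Gamma_{\boldsymbol{\mu},{\bf z}}+\omega|^2$ from Lemma \ref{expression_norm_Gamma} (which the paper also obtains by the distributional Laplacian computation you call Poincar\'e--Lelong), matching the GMC moment and the exponent $2s_0=s(\boldsymbol{\alpha}')/b$ with the probabilistic Liouville representation \eqref{zidaneilamarque}, and collecting the constants into $\mc{F}_{{\bf z},\boldsymbol{j}}$. The only cosmetic difference is that the paper first reduces to constant curvature via the Weyl anomalies while you track the curvature terms directly; since $bQ=1+b^2$ these cancel identically either way, so the two treatments coincide.
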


We emphasize that our construction of the path integral and correlation functions for a general connection $A$ depends a priori on the representation of the bundle $E$ as an extension of a line bundle $\mc{L}^{-1}$, and we are able in Proposition \ref{gauge_invariance} to check gauge covariance corresponding to equivalent extensions. It is an open problem which, at the moment, does appear difficult from the probabilistic point of view, to prove that the  correlation functions defined as above no not depend on the choice of extension. This problem is equivalent to proving gauge covariance of the path integral when the gauge transform
maps une extension of $E$ to another non-equivalent one. We plan to investigate this aspect in future work, in relation to the construction of the Kac-Moody algebra in our probabilistic model.\\

\textbf{Acknowledgements.} We thank particularly J\"org Teschner for many discussions, explanations and inspirations that motivated 
this work. We benefited also from helpful explanations by Federico Ambrosino, Laurent Charles and Emanuele Macri about holomorphic vector bundles.   R. Rhodes is partially supported by the Institut Universitaire de France (IUF) and acknowledges the support of the ANR-21-CE40-0003.  A. Kupianen is partially supported by the Academy of Finland.


\section{Preliminary background}
\subsection{Geometric background}
We consider a closed oriented Riemann surface $(\Sigma,g)$ and we denote by ${\rm v}_g$ its volume form (and view it as a measure using the orientation of $\Sigma$) and $K_g$ its scalar curvature.
The metric $g$ induces a conformal class $[g]:=\{ e^{\rho}g\, |\,\rho\in C^\infty(\Sigma)\}$, which is equivalent to a complex structure, i.e. a field $J\in C^\infty(\Sigma,{\rm End}(T\Sigma))$ of endomorphisms of the tangent bundle such that $J^2=-{\rm Id}$. We say that $(\Sigma,J)$ is a closed Riemann surface.
 For a metric $\hat{g}=e^{\rho}g$, one has the relation 
\begin{equation}\label{curv} 
K_{\hat{g}}=e^{-\rho}(\Delta_g \rho+K_{g})
\end{equation}
where $\Delta_g=d^*d$ is the non-negative Laplacian (here $d$ is exterior derivative and $d^*$ its adjoint with respect to ${\rm dv}_g$). 

\subsection{Determinant of Laplacians} \label{detoflap}

For a Riemannian metric $g$ on  $\Sigma$, the non-negative Laplacian $\Delta_g=d^*d$ has discrete spectrum
${\rm Sp}(\Delta_g)=(\la_j)_{j\in \N_0}$ with $\la_0=0$ and $\la_j\to +\infty$ and we shall denote $e_j$ the associated orthonormalized eigenfunctions. We can define the determinant of $\Delta_g$ by 
\[ {\det} (\Delta_g)=\exp(-\pl_s\zeta(s)|_{s=0})\]
where $\zeta(s):=\sum_{j=1}^\infty \la_j^{-s}$ is the spectral zeta function of $\Delta_g$, which admits a meromorphic continuation from ${\rm Re}(s)\gg 1$ to $s\in \cc$ and is holomorphic at $s=0$. We recall that if $\hat{g}=e^{\varphi}g$ for some $\varphi\in C^\infty(\Sigma)$, one has the so-called Polyakov formula (see \cite[eq. (1.13)]{OsgoodPS88}) 
\begin{equation}\label{detpolyakov} 
\frac{{\det}(\Delta_{\hat{g}})}{{\rm v}_{\hat{g}}(\Sigma)}= \frac{{\det}(\Delta_g)}{{\rm v}_{g}(\Sigma)}e^{-\frac{1}{48\pi}S^0_{\rm L}(g,\varphi)}
\end{equation}
where $K_g$ is the scalar curvature of $g$ as above and $S^0_{\rm L}$ is the Liouville action
\begin{equation}\label{LiouvilleS0}
S^0_{\rm L}(g,\varphi):=\int_\Sigma(  |d\varphi|_g^2+2K_g\varphi){\rm dv}_g.
\end{equation} 

\subsection{Green's function and resolvent of the Laplacian}\label{sec:Green}
Each closed Riemannian surface $(\Sigma,g)$ has a (non-negative) Laplace operator $\Delta_g=d^*d$ where the adjoint is taken with respect to ${\rm v}_g$.  The Green function $G_g$ on  $\Sigma$  is defined 
to be the integral kernel of the resolvent operator $R_g:L^2(\Sigma)\to L^2(\Sigma)$ satisfying $\Delta_{g}R_g=2\pi ({\rm Id}-P_0)$, $R_g^*=R_g$ and $R_g1=0$, where $P_0$ is the orthogonal projection in $L^2(\Sigma,{\rm v}_g)$ on $\ker \Delta_{g}$ (the constants). By integral kernel, we mean that for each $f\in L^2(\Sigma,{\rm v}_g)$
\[ R_gf(x)=\int_{\Sigma} G_g(x,x')f(x'){\rm dv}_g(x').\] 
The Laplacian $\Delta_{g}$ has an orthonormal basis of real valued eigenfunctions $(e_j)_{j\in \N_0}$ in $L^2(\Sigma,{\rm v}_g)$ with associated eigenvalues $\la_j\geq 0$; we set $\la_0=0$ and $\varphi_0=({\rm v}_g(\Sigma))^{-1/2}$.  The Green function then admits the following Mercer's representation in $L^2(\Sigma\times\Sigma, {\rm v}_g \otimes {\rm v}_g)$
\begin{equation}
G_g(x,x')=2\pi \sum_{j\geq 1}\frac{1}{\lambda_j}e_j(x)e_j(x').
\end{equation}

\subsection{Gaussian Free Field} 
On a Riemann surface without boundary, the Gaussian Free Field (GFF in short) is defined as follows. Let $(a_j)_j$ be a sequence of i.i.d. real Gaussians   $\mc{N}(0,1)$ with mean $0$ and variance $1$, defined on some probability space   $(\Omega,\mc{F},\mathbb{P})$,  and define  the Gaussian Free Field with vanishing mean in the metric $g$ by the random distribution (recall that $(e_j)_j$ is an orthonormal basis of eigenfunctions for $\Delta_g$ with eigenvalues $\la_j$)
\begin{equation}\label{GFFong}
X_g:= \sqrt{2\pi}\sum_{j\geq 1}a_j\frac{e_j}{\sqrt{\la_j}} 
\end{equation}
 where  the sum converges almost surely in the Sobolev space  $\mc{H}^{s}(\Sigma)$ for $s<0$ defined by
\begin{equation}\label{sobolev}
 \mc{H}^{s}(\Sigma):=\{f=\sum_{j\geq 0}f_je_j\,|\, \|f\|_{s}^2:=|f_0|^2+\sum_{j\geq 1}\lambda_j^{s}|f_j|^2<+\infty\}.
 \end{equation}
This Hilbert space is independent of $g$, only its norm depends on a choice of $g$. When it is important, we shall write $\mc{H}^s(\Sigma,\R)$ 
and $\mc{H}^s(\Sigma,\C)$ if the distribution is real or complex valued. 

The covariance $X_g$ is then the Green function when viewed as a distribution, which we will write with a slight abuse of notation
\[\mathbb{E}[X_g(x)X_g(x')]= \,G_g(x,x').\]
 
 \subsection{Metric regularisation of the GFF} \label{greg}
 As Gaussian Free Fields  are distributions on $\Sigma$, we need to regularize them using a procedure that we will call \emph{$g$-regularization}. If $d_g$ is Riemannian distance on $\Sigma$ and  $h$ a random distribution on $\Sigma$, for $\eps>0$ small we define a regularization $h_{\eps}$ of $h$ by averaging on geodesic circles of radius $\eps>0$: let $x\in \Sigma$ and let $\mc{C}_g(x,\eps)$ be the geodesic circle of center $x$ and radius $\eps>0$, and let $(f^n_{x,\eps})_{n\in \N} \in C^\infty(\Sigma)$ be a sequence with $||f^n_{x,\eps}||_{L^1}=1$ 
which is given by $f_{x,\eps}^n=\theta^n(d_g(x,\cdot)/\eps)$ where $\theta^n\in C_c^\infty((0,2))$ is non-negative, 
supported near $r=1$ and such that $f^n_{x,\eps}{\rm dv}_g$ 
converges in $\mc{D}'(\Sigma)$ to the uniform probability measure 
$\mu_{x,\eps}$
on $\mc{C}_g(x,\eps)$ as $n\to \infty$. If the pairing $\langle h, f_{x,\eps}^n\rangle$ converges almost surely towards a random variable $h_\epsilon(x)$ that has a modification which is continuous in the parameters $(x,\epsilon)$, we will say that $h$ admits a $g$-regularization $(h_\epsilon)_\epsilon$. This is the case for the GFF $X_g $ defined in \eqref{GFFong} and we denote by $X_{g,\epsilon}$ its  $g$-regularization.
   
\subsection{Gaussian multiplicative chaos} \label{sub:GMC}
 Given a random distribution $h$ on $\Sigma$ with a $g$-regularization $h_{g,\epsilon}$, we consider, for $b\in\R$ and $\sigma(\dd w)$ a Radon measure on $\Sigma$, the following measure on $\Sigma$
\begin{equation}\label{GMC}
M^g_{2b}(h,\sigma,\dd x):=\lim_{\epsilon\to 0}\epsilon^{2b^2}e^{2b h_{g,\epsilon}(x)}\, \sigma(\dd x)
\end{equation}
whenever the limit holds in probability in the sense of weak convergence of measures. If $\sigma$ admits a smooth Radon-Nikodym derivative with respect to the volume form ${\rm v}_g$ on $\Sigma$ and if $h=X_g$, then the above limit holds indeed in probability and    the limiting object is non trivial as long as the Radon-Nikodym derivative does not identically vanish and the parameter $b$ obeys $b\in (0,1)$. The reader may consult \cite[Section 3]{GRVIHES} for a pedagogical introduction to the construction of these random measures for non-probabilists.

 Also, from  \cite[Lemma 3.2]{GRVIHES},  we recall that there exist $W_g$   such that uniformly on $\Sigma$ 
\begin{equation}\label{varYg}
\lim_{\eps \to 0}\E[X^2_{g,\eps}(x)]+\ln\eps=W_g(x)  .
\end{equation}
 In particular, considering a metric $g'=e^{\omega}g$ conformal to the  metric $g$, we obtain the relation
\begin{equation}\label{relationentrenorm} 
M^{  g'}_{2b}(X_g,\sigma,\dd x)=e^{b^2\omega(x)} M^{g}_{2b}(X_g,\sigma,\dd x).
\end{equation}

\section{$\H^3$-WZW model on the Riemann sphere}
We first explain the construction of the model on the Riemann sphere to avoid any (heavy) geometrical background.  As we plan to write here an easy-access section, we explain in the first subsection only informally how the WZW leads to a   path integral that can  be given sense probabilistically. Detailed computations will be explained more systematically in Section \ref{H3surface} and require more geometrical background. Then the next subsections describe the probabilistic approach and the correspondence with Liouville theory. Finally we compute the structure constants of the $\H^3$-WZW model, i.e. the 3 point correlation functions on the Riemann sphere.

\subsection{Heuristic derivation for the probabilistic definition of the path integral}\label{s:heuristic_sphere}
We warn the reader that the following discussion is not fully rigorous mathematically but it explains and motivates the rigorous mathematical definition of the path integral given in the next section.
The Riemannian manifold $\Sigma$ we consider is the Riemann sphere equipped with a smooth Riemannian metric $g$. Actually we will specialize to the case when the Riemann sphere is realized as the extended complex plane $\hat\C$ equipped with a metric conformal to the Euclidean metric, thus $g=g(z)|dz|^2$ for some smooth functions $g(z)$ in $\C$. 
Being a smooth metric on the Riemann sphere means that   $\psi^*g$ is smooth on $\C$ if $\psi(z):=1/z$, which means that 
$g(1/z)|z|^{-4}$ extends smoothly near $z=0$. For example one could choose 
$g(z)=(1+|z|^2)^{-2}$, in which case the metric $g$ has constant positive curvature. We recall that the Dirichlet energy is conformally invariant, thus for $\phi\in C^\infty(\hat{\C})$, we have 
\[ \int_{\hat{\C}} |d\phi(z)|_g^2 {\rm dv}_g(z)=\int_{\C} |d\phi(z)|^2 \dd z\]
where $|d\phi|$ is the Euclidean norm of $d\phi$ and $|d\phi|_g=g(z)^{-1/2}|d\phi|$ is the norm of $d\phi$ with respect to $g$. Here and below, we use complex variable $z$ in $\C$ and we denote by $\dd z:=\dd {\rm Re}(z)\dd {\rm Im}(z)$ the Lebesgue measure in $\C$.
Later we will see that fixing such a choice of metric is not a restriction since the theory is invariant under diffeomorphism and 
every smooth metric on the sphere is diffeomorphic to such a $g$ for some function $g(z)$ by the uniformisation theorem.

As explained in the introduction, the $\H^3$-WZW path integral (for the case of the sphere $\hat{\C}$) 
is a formal integral over the space of functions on $\hat{\C}$ with values in positive definite matrices in ${\rm SL}(2,\C)$. Such fields of matrices can always be represented under the form 
\begin{equation}\label{paramh}
h(z)=\left(\begin{array}{cc}
e^{\phi(z)}(1+|\nu(z)|^2) & \nu(z) \\
\bar{\nu}(z) &e^{-\phi(z)}
\end{array}\right)
\end{equation}
for some functions $\phi: \hat{\C}\to \R$ and $\nu :\hat{\C}\to \C$. The pair $(e^{\phi},\nu)$ takes values in the $3$-dimensional hyperbolic space
$\H^3:=\{(e^{\phi},\nu)\,|\, \phi\in \R, \nu\in \C\}$, which also represents the coset space ${\rm SL}(2,\C)/{\rm SU}(2)$.
The WZW action $S_\Sigma(h)$, defined in general by the expression \eqref{WZW_def}, can be computed to be (the norm $|\cdot|$ below is simply the Euclidean norm)
\begin{equation}\label{actionS(h)sphere}
 \begin{split}
 S_\Sigma(h)
 &= \frac{1}{4\pi} \int_\C (|d\phi|^2+ 4|\pl_{\bar{z}}\nu+(\pl_{\bar{z}}\phi) \nu|^2)  \dd z.
 \end{split}
 \end{equation}

The path integral we have to focus on is thus
\begin{align}\label{formalPI0}
F\mapsto \int F( \phi, \nu) e^{-\frac{k}{4\pi }   \int_{\C} |d\phi|^2 + 4|\pl_{\bar{z}}\nu+(\pl_{\bar{z}}\phi)\nu|^2\dd z} D\phi D\nu,
\end{align}
where $D\phi$ and $D\nu$ respectively stand for the formal Lebesgue measure on maps $\phi:\hat{\C}\to \R$ and   $\nu:\hat{\C}\to\C$ (note that $D\phi D\nu$ is just the formal Haar measure on maps $h$ of the form \eqref{paramh} under the ${\rm SL}(2,\C)$ action). Restricting the path integral to the first term in the action, i.e.   $e^{-\frac{k}{4\pi } \int_\C |d\phi|^2\dd z} D\phi $, gives rise to a standard Gaussian Free Field on the sphere. One crucial point to understand how we will derive the probabilistic formulation is to see that, conditionally on $\phi$, the above path integral is a Gaussian integral in $\nu$. This is what we explain now.
So the first step is the probabilistic interpretation of the formal path integral
\begin{align}\label{formalPItilde}
\int F(   \nu) e^{-\frac{k}{\pi }   \int_{\C}|\pl_{\bar{z}}\nu+(\pl_{\bar{z}}\phi)\nu|^2\dd z}  D\nu.
\end{align}
Assume for a while that $\phi$ were a smooth bounded function.   Let   $L^2(\hat{\C},{\rm v}_g)$ be the $L^2$ space of complex valued functions, where ${\rm dv}_g=g(z)\dd z$ is the Riemannian measure associated to the metric $g$. 
We denote by $\langle\cdot,\cdot\rangle_2$ the sesquilinear product on  $L^2(\hat{\C},{\rm v}_g)$ and we consider the Witten-type Laplacian 
\[ \widetilde{\mc{D}}_{\phi}\nu :=   - g(z)^{-1}e^{-\phi}\pl_z (e^{-2\phi}\pl_{\bar{z}}(e^{\phi}\nu)). \]
It is straightforward to check that $\widetilde{\mc{D}}_{\phi}$ is self-adjoint with respect to the ${\rm v}_g$ measure
and that 
\[   \int_{\C}|\pl_{\bar{z}}\nu+(\pl_{\bar{z}}\phi)\nu|^2 \dd z= \int_{\C}e^{-2\phi}|\pl_{\bar{z}}(e^{\phi}\nu)|^2 \dd z  = \cjg \widetilde{\mc{D}}_{\phi}\nu,\nu\cjd_2.\]
The operator $\tilde{\mc{D}}_{\phi}$ is studied in a much more general setting in Lemma \ref{inverseDphi}, but in the present case we have, with the notations of Section \ref{sec:General_case}, that $\mc{L}$ is the trivial holomorphic bundle  $\hat{\C}\times \C$, $F_{\mc{L}}=0$, $\bar{\pl}f:=(\pl_{\bar{z}}f)d\bar{z}$ for each $f\in C^\infty(\hat{\C})$, $\ker \bar{\pl}^*=0$ and $\ker \bar{\pl}=\C$. 
From this lemma, one gets that $\tilde{\mc{D}}_{\phi}$ is Fredholm, has $1$-dimensional kernel given by $\ker \tilde{\mc{D}}_{\phi}:=\{Ce^{-\phi}\, |\, C\in \C\}$
 and there is a continuous operator $\widetilde{R}_{\phi}: C^\infty(\hat{\C})\to C^\infty(\hat{\C})$ such that 
\[ \widetilde{\mc{D}}_{\phi} \widetilde{R}_{\phi} ={\rm Id}-\tilde{P}_\phi, \quad \tilde{P}_\phi u:=  \frac{e^{-\phi}}{\|e^{-\phi}\|^2_2} \cjg u,e^{-\phi}\cjd_2\]
($\tilde{P}_\phi$ is the orthogonal projector on $\ker  \widetilde{\mc{D}}_{\phi}$).  
The operator $\widetilde{R}_{\phi}$ 
has  an integral kernel, which can be expressed in terms of Cauchy-transform on the Riemann sphere, given by
\begin{equation}\label{covKspheretilde}
\widetilde{  R}_{\phi}(z,z'):=\frac{e^{-\phi(z)-\phi(z')}}{\pi^2} \int_\C \big(\frac{1}{z-w}-\rho_\phi(w)\big)\big(\frac{1}{\bar{z}'-\bar{w}}-\bar{\rho}_\phi(w)\big)e^{2\phi(w)}\dd w
\end{equation}
with  
\begin{equation}\label{rhophi}
\rho_\phi(w):=\frac{1}{\|e^{-\phi}\|^2_2}\int_\C \frac{1}{z-w}  e^{-2\phi(z)} g(z)\dd z.
\end{equation}
Below, when the role of the background metric $g$ will be important, we shall add the subscript $g$ in the notations, namely we will write  $\widetilde{\mc{D}}_{\phi,g}$ and $\tilde{R}_{\phi,g}$ instead of $\widetilde{\mc{D}}_{\phi}$ and $\tilde{R}_{\phi}$. When $\phi$ is assumed to be smooth, we can consider a centered $\C$-valued Gaussian field $\nu_g$ with covariance $\tilde{R}_{\phi}$, which we will call a {\it Witten field}. Thus it obeys
\begin{equation}\label{gaussiancovtilde}
\E[\nu_g(z)\overline{\nu_g(z')}] =\frac{\pi}{k} \widetilde{R}_{\phi}(z,z') \quad \text{ and  }\quad  \E[\nu_g(z) {\nu_g(z')}]=0.
\end{equation}

We can thus  give the following probabilistic definition for the path integral interpretation of  \eqref{formalPItilde}:
\[
\int F(   \nu) e^{-\frac{k}{\pi }   \int_{\C}|\pl_{\bar{z}}\nu+(\pl_{\bar{z}}\phi)\nu|^2\dd z} D\nu =\det\Big(\frac{k}{\pi}\tilde{\mc{D}}_{\phi}\Big)^{-1}\int_{\C}\E\Big[F\Big(\nu_0 \frac{e^{-\phi}}{\|e^{-\phi}\|_{L^2}}+ \nu_g \big)\Big]\dd \nu_0.
\]
 The $\phi$ dependence in the regularized determinant could be problematic (in view of integrating latter in $\phi$) but  the determinant of our 
 Witten type Laplacian makes this  dependence explicit: applying Lemma \ref{lem:detwitten} (with $\mc{L}$ being the trivial bundle 
 $\hat{\C}\times \C$, $F_{\mc{L}}=0$, $\ker \bar{\pl}^*=0$ and $\ker \bar{\pl}=\C$)
\begin{equation}\label{regdetsphere}
\det\Big(\frac{k}{\pi}\tilde{\mc{D}}_{\phi}\Big) =e^{-\frac{1}{2\pi}\int_{\hat{\C}} (|d\phi|^2_g-\tfrac{1}{2}K_g\phi){\rm dv}_g} 
\det\big(\frac{k}{\pi}\tilde{\mc{D}}_{0}\Big)\frac{\|e^{-\phi}\|^2_2}{{\rm v}_g(\hat{\C})},
\end{equation}
where  $\det((k/\pi)\tilde{\mc{D}}_{0}) =(k/\pi)^{-\frac{2}{3}}\det( \tilde{\mc{D}}_{0})$ and $\tilde{\mc{D}}_{0}=\frac{1}{4}\Delta_g$ coincides with the standard Laplacian on functions, and $K_g$ is the scalar curvature of $g$, given explictly by $K_g(z)=g(z)^{-1}\Delta \log g(z)$ for $\Delta$ the positive Euclidean Laplacian on $\C=\R^2$.
  Finally, after a change of variables in the zero mode $\nu_0$, we end up with 
\begin{align*}
\int F(   \nu) e^{\frac{k}{\pi }   \int_{\C}|\pl_{\bar{z}}\nu+(\pl_{\bar{z}}\phi)\nu|^2\dd z} D\nu 
=(\frac{\pi}{k})^{-2/3}\Big(\frac{\det (\tilde{\mc{D}}_{0})}{{\rm v}_g(\hat{\C})}\Big)^{-1}e^{\frac{1}{2\pi}\int_{\hat{\C}} (|d\phi|^2_g-\tfrac{1}{2}K_g\phi){\rm dv}_g}\int_{\C}  \E\Big[F\big(\nu_0 e^{-\phi}+ \nu_g \big)\Big]\dd \nu_0.
\end{align*}
Later on, we will sample $\phi $ according to the GFF law, thus $\phi$ will be a real valued distribution in the Sobolev space $\mc{H}^{-s}(\hat{\C})$ for any $s>0$. In that respect, the expression we have obtained for the law of the field $\nu_g$ is not convenient because of
the factor $e^{-\phi(z)-\phi(z')}$ in the covariance kernel \eqref{covKspheretilde}, which cannot be evaluated pointwise if $\phi $ is a	 GFF. To get rid of this term, it is more reasonable to perform the   change of  field variables $(\phi,\nu)\mapsto(\phi,\gamma)$ with $\gamma:=e^\phi\nu$. 

To implement this concretely,  we consider the field $\gamma_g=e^{\phi}\nu_g-P_0(e^{\phi}\nu_g)$, called {\it rescaled Witten field},  where $P_0$ is the orthogonal projection on constants (the kernel of $\pl_{\bar{z}}$ on $\hat{\C}$).  
Conditionally on $\phi$, the rescaled Witten field  is thus a centered Gaussian distribution with covariance kernel (up to the factor $\frac{\pi}{k}$)
\begin{equation}\label{covKsphere}
  R_\phi (z,z'):=\frac{1}{\pi^2} \int_\C \big(\frac{1}{z-w}-\kappa_g(w)\big)\big(\frac{1}{\bar{z}'-\bar{w}}-\bar{\kappa}_g(w)\big)\theta(\dd w)
\end{equation}
with $\theta(\dd w)=e^{2\phi(w)}\dd w$ and  
\begin{equation}
\kappa_g(w):=\frac{1}{ {\rm v}_g(\hat{\C})}\int_\C \frac{1}{z-w}    {\rm dv}_g(z).
\end{equation}
The advantage of the expression \eqref{covKsphere} is that it perfectly makes sense in the case when $\phi$ is a GFF by means of  Gaussian multiplicative chaos theory, represented by the measure  $\theta(\dd w)=e^{2\phi}\dd w$. Notice also that (by Lemma \ref{inverseDphi})
$\cjg R_\phi f',f\cjd_2=\cjg T_0f',T_0f\cjd_{L^2(\hat{\C},\theta)}$ for each smooth pair $f,f'$ of functions on $\hat{\C}$, where $T_0: L^2(\hat{\C},{\rm v}_g)\to L^2(\C, \dd z)$ is the operator (with $\bar{\rho}_0$ given by \eqref{rhophi} with $\phi=0$)
\begin{equation}\label{Tphi}
T_0 f(w):=\frac{1}{\pi}\int_{\C} f(z) (\frac{1}{\bar{z}-\bar{w}}-\bar{\rho}_0(w)){\rm dv}_g(z),
\end{equation}
which satisfies for all $f\in L^2(\hat{\C},{\rm v}_g)$
\[- g(z)^{-1} \pl_z(T_0 f)(z)=f(z)-P_0 f.\]

 Here we emphasize that the background measure used to define $\theta$ appears as the Lebesgue measure $\dd z$ on $\C$, but if one thinks of $\hat{\C}$ as the Riemann sphere, $(T_0f)d\bar{z}$ is a smooth 
differential $1$-form on $\hat{\C}$ and the product above is actually equal to
\[ \cjg R_\phi f',f\cjd_2= \int_{\hat{\C}} \cjg (T_0f')d\bar{z},(T_0f)d\bar{z}\cjd_{g}e^{2\phi(z)}{\rm dv}_g(z)\]
where the product $\cjg\cdot,\cdot\cjd_g$ is the scalar product on $1$-forms on $\hat{\C}$ induced by $g$. The reader will find more details about this when we deal with general surfaces (where it is necessary to work with forms as there are no global coordinates), 
from Section \ref{sec:General_case} to the end of the paper.  This discussion motivates the following probabilistic formulation for the path integral of the field $\gamma$ (here $\gamma_0=({\rm v}_g(\hat{\C}))^{1/2}P_0(\gamma)$)
\begin{align}\label{PIcondphi}
\int & F( \gamma )e^{   -\frac{k}{\pi }   \int_{\C}|\pl_{\bar{z}}\nu+(\pl_{\bar{z}}\phi)\nu|^2\dd z}D\nu
\\&=(\frac{\pi}{k})^{-\frac{2}{3}}\Big(\frac{\det(\mc{D}_0)}{{\rm v}_g(\hat{\C})}\Big)^{-1}e^{\frac{1}{2\pi}\int_{\hat{\C}} (|d\phi|^2_g-\tfrac{1}{2}K_g\phi){\rm dv}_g}\int_{\C}  \E\Big[F\big(\gamma_0  + \gamma_g \big)\Big]\dd \gamma_0.\nonumber
\end{align}
Integrating then  this expression  against the term $e^{-\frac{k}{4\pi }    \int_{\hat{\C}} |d\phi|^2\dd z} D\phi$ produces the following definition
\begin{align}\label{PIcondphi2}
\int & F(\phi, \gamma )e^{-\frac{k}{4\pi}\int_{\hat{\C}} (|d\phi|^2 + 4|\pl_{\bar{z}}\nu+(\pl_{\bar{z}}\phi)\nu|^2)\dd z}  
D\phi D\nu  
\\
&:=(\frac{\pi}{k})^{-\frac{2}{3}}\Big(\frac{\det(\mc{D}_0)}{{\rm v}_g(\hat{\C})}\Big)^{-1}  \int e^{-\tfrac{k-2}{4\pi}\int_{\hat{\C}} |d\phi|^2\dd z-\tfrac{1}{4\pi}\int_{\hat{\C}} K_g\phi \, {\rm dv}_g}\int_{\C}  \E[F\big(\phi,\gamma_0  + \gamma_g \big)|\phi]\dd \phi\dd \gamma_0 \nonumber
\end{align}
for the path integral where $\E[\,\cdot\,|\phi]$ stands for conditional expectation given the field $\phi$. Assuming $k>3$, we introduce the parameter $b=(k-2)^{-1/2}$, which belongs to $(0,1)$. As usual, the $\phi$ integral corresponding to the term $\tfrac{k-2}{4\pi}\int_{\hat{\C}} |d\phi|^2_g{\rm dv}_g$ in the action  gives rise to $b$ times $c+X_g$ (with $X_g$ the standard GFF and $c\in \R$ distributed with Lebesgue measure $\dd c$), and has total mass $(\tfrac{k-2}{2\pi})^{\frac{1}{3}}(\frac{\det(\Delta_g)}{{\rm v}_g(\hat{\C})})^{-\frac{1}{2}}$. The curvature term $\int_{\hat{\C}} K_g\phi \, {\rm dv}_g$ is just a linear perturbation of this GFF. We will summarize all these considerations in Subsection \ref{sub:PIsphere} thereafter to produce the final expression for the path integral, see Definition \ref{defH3sphere}.

\subsection{Probabilistic setup}
We introduce here the probabilistic material to give sense to the  path integral explained  in the heuristics above.
\begin{definition}\label{wittensphere}
Let $\theta$ be a measure on the complex plane. A  rescaled Witten field $\gamma_g$ on the Riemann sphere associated to the pair $({\rm v}_g, \theta)$ is a centered Gaussian process on $\hat{\C}$ with values in complex valued  distributions on $\hat{\C}$, and  with covariance: for each $f,f'\in C^\infty(\hat{\C})$,
\begin{equation}\label{gaussiancov}
\E[ \cjg \gamma_g,f\cjd_2 \overline{\cjg \gamma_g,f'\cjd_2}] =     \frac{\pi}{k}  \cjg T_0f',
T_0 f\cjd_{L^2(\hat{\C},\theta)},\qquad  \E[\cjg \gamma_g,f\cjd_2 \cjg \gamma_g,f'\cjd_2]=0
\end{equation}
where $T_0$ is the operator \eqref{Tphi} with $\phi=0$.
\end{definition}

If $\theta$ is a measure on $\C$ (or $\hat{\C}$) and $u\in C^\infty(\hat{\C},\C)$, we shall denote 
\begin{equation}\label{notation_theta_u} 
\theta_u(\dd z):=e^{2b{\rm Re}(u)}\theta(\dd z).
\end{equation}
If $\theta$ is a measure as in Definition \ref{wittensphere}, we define the operator $R^\theta$ by 
\begin{equation}\label{def_of_Rtheta} 
\cjg R^\theta f',f\cjd_2= \int_{\hat{\C}} (T_0f').\bbar{(T_0f)}\theta(\dd z).
\end{equation}

 In what follows, we will be interested in the case when the parameter $\theta$ of the rescaled  Witten field is sampled according to a GMC measure (see Section \ref{sub:GMC}) associated to the GFF with an eventual smooth drift. One interesting aspect of the rescaled Witten field is that its kernel \eqref{covKsphere} is a measurable function of the integration measure $\theta$, which makes possible to sample it in a measurable way with respect to the GFF.  Yet, one aspect to be analyzed carefully is that the law (not only the kernel) of $\gamma_g$,  with covariance kernel \eqref{covKsphere}, can be produced from the law of  the GFF in a measurable way. We will focus on these aspects in the remaining part of this subsection. 
In the treatment of the Riemann sphere, for $h$ a GFF with a possible smooth drift, we will always denote by $M^g_{2b}(h, \dd z)$ the GMC measure $M^g_{2b}(h, \sigma,\dd z)$ with background measure $\sigma(\dd z)=\dd z$ being the Lebesgue measure on $\C$.  If $\theta(\dd z)=M^{g}_{2b}(X_g ,\dd z)$, one has $e^{2bc}\theta(\dd z)= M^{g}_{2b}(c+X_g ,\dd z)$ and 
we notice that $e^{2bc}R^{\theta}=R_\phi$ with $\phi=c+X_g$, if one uses the notation above (with a slight abuse of notation since now $\phi$ is not smooth anymore).
 First, we introduce the notion of GFF-Witten pair.
\begin{definition}[\textbf{GFF-Witten pair}]
Let $b\in (0,1)$. For $c\in \R$, a random distribution  $(X_g,\gamma_g)$ on $\mc{H}^{-s}(\hat{\C},\R)\times \mc{H}^{-s}(\hat{\C},\C)$, with $s>0$, such that $X_g$ is a GFF and, conditionally on $X_g$, $\gamma_g$ is a rescaled Witten field with respect to the pair $({\rm v}_g,e^{2bc}M^{g}_{2b}(X_g ,\dd z))$, is called a GFF-Witten pair with parameters $({\rm v}_g,M^{g}_{2b}(c+X_g ,\dd z))$. 
\end{definition}
Notice that if $\gamma_{g}^0$ is a rescaled Witten field with respect to the pair $({\rm v}_g,M^{g}_{2b}(X_g ,\dd z))$, then 
$\gamma_g:=e^{bc}\gamma_{g}^0$ is a rescaled Witten field with respect to the pair $({\rm v}_g,e^{2bc}M^{g}_{2b}(X_g ,\dd z))$. Thus, to construct the latter,
 it will be sufficient to consider the case $c=0$. 
 \begin{remark}\label{rem:integ}
For $\theta(\dd z)=M^{g}_{2b}(X_g ,\dd z)$, the integral kernel
\[
  R^{\theta} (z,z'):=\frac{1}{\pi^2} \int_\C \big(\frac{1}{z-w}-\kappa_g(w)\big)\big(\frac{1}{\bar{z}'-\bar{w}}-\bar{\kappa}_g(w)\big)\theta(\dd w)
\]
 of $R^{\theta}$  satisfies,  for all $z,z'$, $|R^{\theta} (z,z') |<\infty$   almost surely. However,   it is not true that almost surely, $\forall z,z'$ $|R^{\theta}(z,z') |<\infty$. 
 Indeed, since the kernel $R^{\theta}(z,z')$ is positive definite, by Cauchy Schwartz, it suffices to show
\[{R}^{\theta}(z,z)=\frac{1}{\pi^2} \int_\C \big|\frac{1}{z-w}-\kappa_g(w)\big|^2M^{g}_{2b}(X_g ,\dd w)<+\infty.\]
The finiteness of this integral follows from  the integrability of the singularity $|z-w|^{-2}$ with respect to $\theta$ since it satisfies, for fixed $z$ and any $\epsilon>0$, $M^{g}_{2b}(X_g ,B(z,r))\leq Cr^{2+2b^2-\epsilon}$ for some random constant $C$, see \cite[proof of Lemma 3.3]{DKRV16}, if $B(z,r)$ is the ball of center $z$ and radius $r$.
\end{remark}
 
 Now we study the existence in law of the GFF-Witten pair, and more precisely we will give a construction that is measurable with respect to a given GFF.
\begin{proposition}\label{existpair}
If $b\in (0,1)$, there exists a GFF-Witten pair $(X_g,\gamma_g^0)$ with parameters  $({\rm v}_g,M^{g}_{2b}(X_g ,\dd z))$, 
and thus a Witten pair $(X_g,\gamma_g=e^{bc}\gamma_g^0)$ with parameters  $({\rm v}_g,M^{g}_{2b}(c+X_g ,\dd z))$. Furthermore, given a GFF $X_g$, we can construct the law of the field $\gamma_g^0$ in a measurable way with respect to $X_g$.
\end{proposition}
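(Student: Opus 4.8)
The plan is to build $\gamma_g^0$ conditionally on $X_g$ as the image of a complex Gaussian white noise under $T_0$, and to arrange this construction to be measurable in $X_g$. Write $\theta:=M^g_{2b}(X_g,\dd z)$, which is almost surely a finite Borel measure on the compact space $\hat{\C}$, so that $L^2(\hat{\C},\theta)$ is a.s. separable. The conditional covariance required by \eqref{gaussiancov} is $\tfrac{\pi}{k}R^\theta$, and \eqref{def_of_Rtheta} exhibits the key factorisation $\cjg R^\theta f',f\cjd_2=\cjg T_0f',T_0f\cjd_{L^2(\hat{\C},\theta)}$; thus the natural object is an isonormal complex Gaussian process (white noise) on $L^2(\hat{\C},\theta)$ precomposed with $T_0$. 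One first records that $T_0f\in L^2(\hat{\C},\theta)$ a.s. for smooth $f$: with $K_w(z):=\tfrac1\pi\big(\tfrac{1}{z-w}-\kappa_g(w)\big)$ one has $T_0f(w)=\cjg f,K_w\cjd_2$, a Cauchy-type transform of a bounded density on a compact surface, hence bounded and continuous in $w$, so $\|T_0f\|_{L^2(\theta)}^2\leq\|T_0f\|_\infty^2\,\theta(\hat{\C})<\infty$ by finiteness of the GMC mass. The conceptual crux is that the Hilbert space $L^2(\hat{\C},\theta)$ carrying the noise is itself random.

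Next I would make the white noise measurable. Fix once and for all a countable family $(\eta_n)_{n\geq1}\subset C^\infty(\hat{\C})$ dense in $C(\hat{\C})$, hence dense in $L^2(\hat{\C},\theta)$ for every finite $\theta$, and apply Gram–Schmidt in $L^2(\hat{\C},\theta)$ to obtain an orthonormal basis $(\psi^\theta_n)_n$; its coefficients are measurable functions of the Gram matrix $(\cjg\eta_i,\eta_j\cjd_{L^2(\theta)})_{i,j}$, hence measurable in $X_g$ (the usual handling of degenerate vectors is routine). On an independent probability space take i.i.d.\ standard complex Gaussians $(\xi_n)_n$, with $\E[\xi_n\bar\xi_m]=\delta_{nm}$ and $\E[\xi_n\xi_m]=0$, and set for smooth $f$
\[
\cjg\gamma_g^0,f\cjd_2:=\Big(\tfrac{\pi}{k}\Big)^{\demi}\sum_{n\geq1}\overline{\cjg T_0f,\psi^\theta_n\cjd_{L^2(\theta)}}\,\xi_n,
\]
which converges in $L^2(\P)$ conditionally on $X_g$ since $\sum_n|\cjg T_0f,\psi^\theta_n\cjd_{L^2(\theta)}|^2=\|T_0f\|_{L^2(\theta)}^2<\infty$. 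Parseval's identity then yields $\E[\cjg\gamma_g^0,f\cjd_2\,\overline{\cjg\gamma_g^0,f'\cjd_2}\mid X_g]=\tfrac\pi k\cjg T_0f',T_0f\cjd_{L^2(\theta)}$ and $\E[\cjg\gamma_g^0,f\cjd_2\,\cjg\gamma_g^0,f'\cjd_2\mid X_g]=0$, which are precisely the two identities \eqref{gaussiancov}. Since the whole map is jointly measurable in $(X_g,(\xi_n)_n)$, integrating out $(\xi_n)_n$ exhibits the conditional law of $\gamma_g^0$ as a measurable function of $X_g$.

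It remains to verify that $\gamma_g^0$ is a.s. a genuine element of $\mc{H}^{-s}(\hat{\C},\C)$ for every $s>0$, which is the technical heart. Put $\gamma_j:=\cjg\gamma_g^0,e_j\cjd_2$; since $\kappa_g=\rho_0$ one checks $T_0 1=0$, so $\gamma_0=0$ a.s., while
\[
\sum_{j\geq1}\lambda_j^{-s}\,\E\big[|\gamma_j|^2\mid X_g\big]=\frac\pi k\sum_{j\geq1}\lambda_j^{-s}\|T_0e_j\|_{L^2(\theta)}^2=\frac\pi k\int_{\hat{\C}}\Big(\sum_{j\geq1}\lambda_j^{-s}|\cjg e_j,K_w\cjd_2|^2\Big)\theta(\dd w)=\frac\pi k\int_{\hat{\C}}\|(1-P_0)K_w\|_{-s}^2\,\theta(\dd w),
\]
where $\|(1-P_0)K_w\|_{-s}^2=\sum_{j\geq1}\lambda_j^{-s}|\cjg K_w,e_j\cjd_2|^2$ is the $\mc{H}^{-s}$-norm with the constant mode removed. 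The map $w\mapsto(1-P_0)K_w$ is continuous into $\mc{H}^{-s}(\hat{\C})$ for $s>0$, because the pole $\tfrac{1}{z-w}$ carries a universal, $\mc{H}^{-s}$-integrable singularity that is translated continuously with $w$; by compactness of $\hat{\C}$ there is $C_s<\infty$ with $\|(1-P_0)K_w\|_{-s}^2\leq C_s$ uniformly in $w$. Hence the right-hand side is bounded by $\tfrac\pi k C_s\,M^g_{2b}(X_g,\hat{\C})<\infty$ almost surely, and Fubini gives $\sum_{j\geq1}\lambda_j^{-s}|\gamma_j|^2<\infty$ a.s., so $\gamma_g^0=\sum_j\gamma_j e_j\in\mc{H}^{-s}(\hat{\C},\C)$. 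This produces the GFF–Witten pair $(X_g,\gamma_g^0)$ with parameters $({\rm v}_g,M^g_{2b}(X_g,\dd z))$, and setting $\gamma_g:=e^{bc}\gamma_g^0$ multiplies the conditional covariance by $e^{2bc}$, i.e.\ replaces $\theta$ by $e^{2bc}\theta=M^g_{2b}(c+X_g,\dd z)$, giving the asserted Witten pair with parameters $({\rm v}_g,M^g_{2b}(c+X_g,\dd z))$.

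The main obstacle is neither the Gaussian construction nor the regularity estimate in isolation, but their reconciliation: the Hilbert space $L^2(\hat{\C},\theta)$ that naturally carries the white noise depends on the sample $X_g$, so no fixed orthonormal basis is available, and the measurable Gram–Schmidt of the second paragraph is exactly what turns a construction done \emph{conditionally on} $X_g$ into one that is \emph{measurable in} $X_g$. The supporting quantitative input — that $\gamma_g^0$ actually lands in $\mc{H}^{-s}$ — rests on the uniform Sobolev bound for the Cauchy kernel $K_w$ together with the almost-sure finiteness of the GMC mass recorded in Remark \ref{rem:integ}.
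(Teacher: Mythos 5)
Your construction scheme (a conditional complex white noise on $L^2(\theta)$, made measurable in $X_g$ through a Gram--Schmidt procedure applied to a fixed countable family, run on an independent probability space) is a legitimate alternative to the paper's route, which instead builds the joint law of the coefficients $(\alpha_n,\beta_n)$ by Kolmogorov extension. However, your analytic estimates rest on a false premise: the total mass $\theta(\C)=M^g_{2b}(X_g,\C)$ is \emph{not} finite almost surely. The background measure here is the Lebesgue measure $\dd z$ on $\C$, not the Riemannian volume ${\rm v}_g$ (the paper fixes this convention explicitly before Definition of the GFF-Witten pair). In the chart $\zeta=1/z$ at infinity one has $\dd z=|\zeta|^{-4}\dd \zeta$, and by diffeomorphism invariance the mass of $\{|z|>R\}$ equals in law $\int_{|\zeta|<1/R}|\zeta|^{-4}M_{2b}(\tilde{X},\dd \zeta)$ for a GFF $\tilde{X}$ of a smooth metric near $\zeta=0$. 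A singularity $|\zeta|^{-a}$ is GMC-integrable if and only if $a<2+2b^2$ (\cite[Lemma 3.3]{DKRV16}), and $4>2+2b^2$ precisely because $b\in(0,1)$; hence $\theta(\C)=+\infty$ almost surely. (Remark \ref{rem:integ}, which you cite as recording ``almost-sure finiteness of the GMC mass'', asserts only the finiteness of $\int_\C|\tfrac{1}{z-w}-\kappa_g(w)|^2\theta(\dd w)$ for fixed $z$, whose integrand decays like $|w|^{-4}$ at infinity --- exactly the decay that compensates the Jacobian.) This breaks your argument at three places: the bound $\|T_0f\|^2_{L^2(\theta)}\leq \|T_0f\|^2_\infty\,\theta(\hat{\C})$; the Gram--Schmidt step, since bounded continuous functions $\eta_n$ are not even elements of $L^2(\theta)$ (the constant function has infinite norm), so the Gram matrix entries may be infinite; and the final estimate $\sum_j\lambda_j^{-s}\E[|\gamma_j|^2\mid X_g]\leq \tfrac{\pi}{k}C_s\,\theta(\hat{\C})$, where a uniform-in-$w$ bound $C_s$ (which is plausibly true) is useless against an infinite-mass measure.

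What saves the construction is decay, not boundedness. Since $T_01=0$, one has $T_0f(w)=\mc{O}(|w|^{-2})$ as $w\to\infty$ for smooth $f$, and $\int_\C(1+|w|)^{-4}\theta(\dd w)<\infty$ almost surely, because $(1+|w|)^{-4}\dd w$ is comparable to ${\rm v}_g$, so this is a GMC of finite total mass; with these two facts $\|T_0f\|^2_{L^2(\theta)}<\infty$ a.s.\ follows, and the Gram--Schmidt family can be replaced by one lying in $L^2(\theta)$, e.g. $(T_0\eta_n)_n$. Likewise, your regularity step would need the quantitative bound $\|(1-P_0)K_w\|^2_{-s}\lesssim (1+|w|)^{-4}$ rather than a uniform constant; this is provable but requires tracking the cancellation $\tfrac{1}{z-w}-\kappa_g(w)=\mc{O}(|w|^{-2})$ in the kernel. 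Alternatively you can sidestep all almost-sure estimates as the paper does: take the expectation also over $X_g$, use $\E[\theta(\dd w)]=e^{2b^2W_g(w)}\dd w\leq C\,\dd w$ (boundedness of $W_g$ suffices here), and the exact identity $\int_\C|T_0f_n|^2\dd z=\cjg R_{0}f_n,f_n\cjd_2=4\lambda_n^{-1}$, which already encodes the decay at infinity and yields $\E[\|\gamma_g^0\|^2_{\mc{H}^{-s}}]\leq C\sum_n\lambda_n^{-s-1}<\infty$ by the Weyl law.
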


\begin{proof} Consider a Witten pair $(X_g,\gamma_g)$ with parameters $({\rm v}_g,M^{g}_{2b}(X_g ,\dd z))$ and let $\theta(\dd z)=M_{2b}^g(X_g,\dd z)$. 
For $f\in L^2(\hat{\C},{\rm v}_g)$ real valued and $h,h'\in L^2(\hat{\C},{\rm v}_g)$ complex valued, we define (recall that $G_g$ is the Green's function of $\Delta_g$)
\[ u(z):=b\int_\Sigma G_g(z,z')f(z'){\rm dv}_g(z'),\quad \text{ and }\quad  \theta_{u}(\dd z)=e^{2u}\theta(\dd z)=M^{g}_{2b}(X_g+\frac{u}{b},\dd z).\]
Then the law of the Witten pair is characterized by
\[\E[e^{\langle X_g,f\rangle_2+\langle\gamma_g ,h\rangle_2+\langle \bar\gamma_g ,h'\rangle_2}]=e^{\frac{1}{2b}\langle f,u\rangle_2 }\E[e^{\frac{\pi}{k}\langle h,  R^{\theta_{u}} h'\rangle_2}]\]
for all $f\in \mc{H}^{s}(\hat{\C})$ real valued and $h,h'\in \mc{H}^{s}(\hat{\C})$ complex valued.

Let us now consider an orthonormal basis  $(f_n)_{n\geq 0}$ of real valued eigenfunctions of  $\Delta_g$ (which in turn is a basis of eigenfunction of $\mc{D}_0=\frac{1}{4}\Delta_g$) with increasing eigenfunctions $(\lambda_n)_{n\geq 0}$ (thus $\lambda_0=0$). Let us consider a sequence of real-valued i.i.d. Gaussian random variables $(\alpha_n)_{n\geq 1}$ and define (actually any GFF admits the following expansion)
\begin{equation}\label{hsxg}
X_g:=\sqrt{2\pi}\sum_{n\geq 1}\frac{\alpha_n}{\sqrt{\lambda_n}}f_n.
\end{equation}
Then the $\mc{H}^{-s}$-norm (defined using the metric $g$)  is easily evaluated in expectation 
$$\E[\|X_g\|^2_{\mc{H}^{-s}}]=2\pi \sum_{n\geq 1}\lambda_n^{-1-s}$$
which converges by the Weyl law for $s>0$. Hence the series \eqref{hsxg} converges almost surely. 
Let us now consider the GMC $\theta(\dd z)=M^{g}_{2b}(X_g,\dd z)$ with respect to this GFF, which  converges in probability.

We define a probability measure on $\R^{N}\times\C^{N}$, for all $N$, as the law of $(\alpha_1,\dots,\alpha_N,\beta_1,\dots, \beta_N)$ where $(\alpha_1,\dots,\alpha_N)$ are as above and, conditionally on $X_g$, the law of $(\beta_1,\dots, \beta_N)$ is that of complex Gaussians with covariances
\begin{equation}\label{betacov}
\E[\beta_n\bar{\beta}_m|X_g]=\int_\Sigma T_0 f_m(z) \bbar{T_0 f_n (z)}\theta(\dd z),\quad \E[\beta_n \beta_m|X_g]=0.
\end{equation}
 These probability laws are compatible so that, by Kolmogorov's theorem, there exists a random sequence $(\alpha_n,\beta_n)_{n\geq 1}\in \R^{\N^*}\times \C^{\N^*}$ with that law. From this sequence, we can define again $X_g$ as in \eqref{hsxg}, consider then $M^{g}_{2b}(X_g,\dd z)$ and, conditionally on $X_g$, the law of the sequence $(\beta_n)_{n\geq 1}$ is that of centered Gaussian random variables with covariance given by \eqref{betacov}. We set 
\[\gamma_g^0:=\Big(\frac{\pi}{k}\Big)^{1/2}\sum_{  n\geq 1} \beta_n f_n .\]
 The Sobolev norm of $\gamma^0_g$ (defined using the metric $g$) is then 
\[\|\gamma^0_g\|^2_{\mc{H}^{-s}}=\frac{\pi}{k}\sum_{n\geq 1}\lambda_n^{-s}|\beta_n|^2.\]
Therefore
\begin{align*}
\E[\|\gamma^0_g\|^2_{\mc{H}^{-s}}]
=&\frac{\pi}{k}\E\Big[\sum_{n\geq 1}\lambda_n^{-s}\E[|\beta_n|^2|X_g]\Big]
= \frac{\pi}{k}\E\Big[\sum_{n\geq 1}\lambda_n^{-s}\int_{\hat{\C}} |T_0 f_n(z)|^2\theta(\dd z)\Big]
\\
\leq & C\sum_{n\geq 1}\lambda_n^{-s}\int_{\hat{\C}} |T_0 f_n(z)|^2 \dd z=C \sum_{n\geq 1}\lambda_n^{-s-1 }
\end{align*}
where we used that $\E[M^{g}_{2b}(X_g,\dd z)]\leq C\dd z$ and that, when $\phi=0$, $\int_{\hat{\C}}|T_0 f_n(z)|^2 \dd z=\cjg R_{\phi=0}^{-1}f_n,f_n\cjd_2 =4\lambda_n^{-1}$. Therefore the last series converges for $s>0$ by the Weyl law.
Hence the process $\gamma_g^0$ converges in Sobolev spaces with negative index and it has the desired covariance.
\end{proof}

\noindent \textbf{Regularized GFF-Witten pair}.
To define later the correlation functions, we need to provide a regularisation of a GFF-Witten pair. We explain here how to proceed.  Let $(X_g,\gamma_g)$ be a GFF-Witten pair with parameters $({\rm v}_g,M^{g}_{2b}(X_g ,\dd z))$. Let $\rho$ be a smooth radial function with compact support on $\C$, nonnegative with $\int \rho(z)\,\dd z=1$. We set $\rho_{\epsilon,x}(y):=\epsilon^{-2}\rho((x-y)/\epsilon)$ for $y\in \C$ and
 \begin{equation}\label{regpair}
 \gamma^0_{g, \epsilon}(x):=\cjg \gamma^0_g,\rho_{\epsilon,x}\cjd_2.
 \end{equation}
 The mapping $x\mapsto \rho_{\epsilon,x}$ is continuous with values in $\mc{H}^{s}$ (for all $s>0$) so that the process $\gamma^0_{g,\epsilon}$ is continuous. The regularized GFF-Witten pair is the continuous process $(X_{g,\epsilon},\gamma^0_{g,\epsilon})$. 
 
 The conditional variance of $\gamma^0_{g,\epsilon}$ is
 \begin{equation}\label{exprtildeK}
\E[\gamma^0_{g,\epsilon}(z)\overline{\gamma^0_{g,\epsilon}}(z')|X_g]=\frac{1}{\pi k}  \int \big(\frac{\tilde{\rho}\big(\frac{|z-w|}{\epsilon}\big)}{z-w}-\kappa_g(w)\big)\big(\frac{\tilde{\rho}\big(\frac{|z-w|}{\epsilon}\big)}{\bar z'-\bar w}-\bar \kappa_g(w)\big)M^{g}_{2b}(X_g ,\dd w)
\end{equation}
with $\tilde{\rho}(u):=\int_0^\infty \mathbf{1}_{r<|u|}\rho(r)r\,\dd r$, where we have used the relation $\int_0^{2\pi}\frac{1}{z+\epsilon e^{i\theta}}\,\dd \theta=\frac{1}{z}\mathbf{1}_{|z|>\epsilon}$. Note that $\tilde{\rho}(u)=1$ when $|u|$ is large.

\subsection{Probabilistic formulation of the path integral}\label{sub:PIsphere}
The previous considerations lead to the following probabilistic expression for the  path integral \eqref{PIcondphi2}.  We set 
\[b=(k-2)^{-1/2}\in (0,1)\] 
and we consider a GFF-Witten pair $(X_g,\gamma_g)\in \mc{H}^{-s}(\hat{\C},\R)\times \mc{H}^{-s}(\hat{\C},\C)$ for some $s>0$,  with parameters  $({\rm v}_g, M^g_{2b}(c+X_g,\dd w))$, defined on some probability space $(\Omega,\mc{F},\P)$ (with expectation $\E$).

\begin{definition}\label{defH3sphere}
For $b=(k-2)^{-\frac{1}{2}}\in (0,1)$, the path integral for the $\H^3$-WZW model on $(\Sigma,g)$ is defined by 
\begin{equation}\label{rigPI}
\langle F(\phi,\gamma) \rangle_{\hat{\C},g}:=C_k\Big(\frac{\det(\Delta_g)}{{\rm v}_g(\hat{\C})}\Big)^{-\frac{1}{2}}\Big(\frac{\det(\mc{D}_0)}{{\rm v}_g(\hat{\C})}\Big)^{-1}\int_{\R\times\C}\E\Big[F(b(c+X_g), \gamma_0+\gamma_g)e^{-\frac{b}{4\pi}\int_\Sigma K_g(c+X_g)\dd {\rm v}_g}\Big]\,\dd c\, \dd \gamma_0
\end{equation}
for nonnegative measurable $F$ on $\mc{H}^{-s}(\Sigma,\R)\times \mc{H}^{-s}(\Sigma,\C)$. The constant $C_{k}$ is given by
\begin{equation}\label{defck}
C_k:=(\tfrac{k-2}{2\pi})^{\frac{1}{3}}(\tfrac{\pi}{k})^{-\frac{2}{3}}.
\end{equation}
\end{definition}
Notice that, since $\mc{D}_0=\frac{1}{4}\Delta_g$, then $\det(\mc{D}_0)=4^{2/3}\det(\Delta_g)$ (see \eqref{scalingdet}) so that \eqref{rigPI} can be rewritten as 
\[\langle F(\phi,\gamma) \rangle_{\hat{\C},g}:=(\tfrac{k-2}{32\pi})^{\frac{1}{3}}(\tfrac{\pi}{k})^{-\frac{2}{3}}\Big(\frac{\det(\Delta_g)}{{\rm v}_g(\hat{\C})}\Big)^{-\frac{3}{2}}\int_{\R\times\C}\E\Big[F(b(c+X_g), \gamma_0+\gamma_g)e^{-\frac{b}{4\pi}\int_\Sigma K_g(c+X_g)\dd {\rm v}_g}\Big]\,\dd c\, \dd \gamma_0.
\]
 
This path integral defines conjecturally a CFT. Though proving this statement requires more (like establishing Segal's gluing axioms), the central charge can already be  read off the variations of the conformal factor of the metric:
\begin{proposition}{\bf (Weyl anomaly and diffeormophism invariance)}\label{prop:weyl}  
\begin{description}
\item[ Weyl anomaly] Consider a metric $g'=e^{\omega}g$ conformal to the metric $g$ for some smooth $\omega\in C^\infty(\Sigma)$. Then, for nonnegative measurable $F$ on $\mc{H}^{-s}(\Sigma,\R)\times \mc{H}^{-s}(\Sigma,\C)$,
$$\langle F(\phi,\gamma) \rangle_{\Sigma,g'}=e^{ \frac{{\bf c}(k)}{96\pi}S^0_{\rm L}(g,\omega)}\langle F(\phi-\tfrac{b^2}{2}\omega,\gamma) \rangle_{\Sigma,g}$$
with ${\bf c}(k):=\frac{3k}{k-2}=3+6b^2$ the central charge of the $\H^3$-model and $S^0_{\rm L}$ the Liouville action \eqref{LiouvilleS0}.
\item[Diffeomorphism invariance] If $\psi:\Sigma\to\Sigma$ is a diffeomorphism and $\psi^*g$ the pullback metric then
\[\langle F(\psi^*\phi ,\psi^*\gamma) \rangle_{\Sigma,\psi^*g}= \langle F(\phi ,\gamma) \rangle_{\Sigma,g} .\]

\end{description}
\end{proposition}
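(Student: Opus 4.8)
The plan is to substitute the definition \eqref{rigPI} of $\langle\,\cdot\,\rangle_{\Sigma,g'}$ and convert, one block at a time, each $g'$-object into its $g$-counterpart, collecting the anomaly. \textbf{Determinants.} By the Polyakov formula \eqref{detpolyakov}, $(\det(\Delta_{g'})/{\rm v}_{g'}(\Sigma))^{-1/2}=(\det(\Delta_g)/{\rm v}_g(\Sigma))^{-1/2}e^{\frac1{96\pi}S^0_{\rm L}(g,\omega)}$; since $\mc{D}_0=\tfrac14\Delta_g$ gives the $g$-independent scaling $\det(\mc{D}_0)=4^{2/3}\det(\Delta_g)$ (see \eqref{scalingdet}), the same Polyakov anomaly applies to $\det(\mc{D}_0)$ and produces $(\det(\mc{D}_0^{g'})/{\rm v}_{g'})^{-1}=(\det(\mc{D}_0^{g})/{\rm v}_{g})^{-1}e^{\frac1{48\pi}S^0_{\rm L}(g,\omega)}$. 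Multiplying yields $e^{\frac{3}{96\pi}S^0_{\rm L}(g,\omega)}$, the ``bosonic'' part $3$ of ${\bf c}(k)=3+6b^2$.

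\textbf{GFF, curvature and the shift of $\phi$.} I would couple $X_{g'}\overset{\rm law}{=}X_g-m_{g'}(X_g)$, with $m_{g'}(X_g):=\tfrac1{{\rm v}_{g'}(\Sigma)}\int_\Sigma X_g\,{\rm dv}_{g'}$, and use \eqref{curv} to write $\int_\Sigma K_{g'}(c+X_{g'})\,{\rm dv}_{g'}=\int_\Sigma(\Delta_g\omega+K_g)(c+X_{g'})\,{\rm dv}_g$. As $\int_\Sigma\Delta_g\omega\,{\rm dv}_g=0$, this is the $g$-curvature weight plus a single linear functional $-\tfrac{b}{4\pi}\langle X_g,\Delta_g\omega\rangle$. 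A Cameron--Martin/Girsanov shift then removes the linear term: with $h=-\tfrac{b}{4\pi}\Delta_g\omega$ and $R_g$ the resolvent of \S\ref{sec:Green}, so that $R_gh=-\tfrac{b}{2}(\omega-\bar\omega_g)$ where $\bar\omega_g$ is the $g$-mean of $\omega$, it multiplies by $e^{\frac12\langle h,R_g h\rangle}=e^{\frac{b^2}{16\pi}\int_\Sigma|d\omega|^2_g{\rm dv}_g}$ and shifts $X_g\mapsto X_g-\tfrac{b}{2}(\omega-\bar\omega_g)$, turning $b(c+X_{g'})$ into $b(c+X_g)-\tfrac{b^2}{2}\omega$ up to the constant $\tfrac{b^2}{2}\bar\omega_g$, while the $g$-curvature term acquires $e^{\frac{b^2}{8\pi}\int_\Sigma K_g\omega\,{\rm dv}_g}$. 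Hence this block contributes $e^{\frac{b^2}{16\pi}S^0_{\rm L}(g,\omega)}=e^{\frac{6b^2}{96\pi}S^0_{\rm L}(g,\omega)}$, the ``$6b^2$''.

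\textbf{Witten field and conclusion.} The conditional covariance \eqref{covKsphere} is built from the measure $\theta$ and the Cauchy kernel $\tfrac1{z-w}$, which is fixed by the complex structure and so unchanged by $g\mapsto e^\omega g$; only the zero-mode subtraction $\kappa_g$ and the GMC regularisation feel the metric. Combining the chaos anomaly \eqref{relationentrenorm} ($M^{g'}_{2b}=e^{b^2\omega}M^{g}_{2b}$), the Girsanov shift of the chaos ($e^{-b^2(\omega-\bar\omega_g)}$) and a final translation $\hat c=\tilde c+\tfrac{b}{2}\bar\omega_g$ of the zero mode $c$, performed \emph{simultaneously} in $\phi$, in the curvature weight and in $\theta$, I would check that $\theta$ becomes exactly $M^g_{2b}(\hat c+X_g,\dd z)$ and that the residual $\bar\omega_g$-constants cancel; the leftover $\kappa_{g'}$-versus-$\kappa_g$ discrepancy affects only the constant mode, so $\gamma_{g'}\overset{\rm law}{=}\gamma_g+(\text{a conditionally constant Gaussian})$, absorbed by the translation-invariant $\dd\gamma_0$ integral. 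Multiplying the three blocks gives $e^{\frac{3+6b^2}{96\pi}S^0_{\rm L}(g,\omega)}=e^{\frac{{\bf c}(k)}{96\pi}S^0_{\rm L}(g,\omega)}$, the Weyl anomaly. For diffeomorphism invariance I would argue by naturality: $\Delta_{\psi^*g},\mc{D}_0^{\psi^*g}$ are $\psi^*$-conjugate to $\Delta_g,\mc{D}_0$, hence isospectral, so the determinants and ${\rm v}(\Sigma)$ are preserved; $X_{\psi^*g}\overset{\rm law}{=}\psi^*X_g$, the distance and geodesic circles pull back so $M^{\psi^*g}_{2b}(\psi^*X_g,\psi^*\sigma)=\psi^*M^g_{2b}(X_g,\sigma)$, the Witten field obeys $\gamma_{\psi^*g}\overset{\rm law}{=}\psi^*\gamma_g$, and $K_{\psi^*g}=\psi^*K_g$; substituting $F(\psi^*\phi,\psi^*\gamma)$ and changing variables $x\mapsto\psi(x)$, leaving $\dd c,\dd\gamma_0$ untouched, gives the identity.

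I expect the main obstacle to be precisely the simultaneous bookkeeping in the last two blocks: the three sources of $\bar\omega_g$-type constants---the Girsanov shift of $\phi$, the Girsanov shift of the $g$-curvature term, and the conformal GMC anomaly---only cancel if the $c$-translation is carried out consistently across the field, the curvature weight and the chaos measure at once; done piecemeal it leaves a spurious deterministic rescaling of $\gamma$. The conceptually delicate but ultimately short point is the conformal invariance of the non-constant part of the Witten-field covariance, which rests on the Cauchy kernel depending only on the complex structure and on the harmlessness of the zero-mode subtraction under the $\dd\gamma_0$ integration.
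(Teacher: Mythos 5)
Your proof is correct and follows essentially the same route as the paper's: the paper establishes this statement in the general bundle setting (Proposition \ref{prop:weyl_surface}) using precisely your three blocks — the Polyakov anomalies \eqref{detpolyakov} and \eqref{anomalydetD0} for the determinants, the Cameron--Martin shift by $Y=-\tfrac{b}{4\pi}\int_\Sigma\Delta_g\omega\, X_g\,{\rm dv}_g$ together with the GMC anomaly \eqref{relationentrenorm} and the zero-mode translation in $c$, and the conformal covariance of the Witten field with the resulting constant mode absorbed by translation invariance of $\dd\gamma_0$. The only cosmetic differences are that your verification of the Witten-field step is a direct kernel/white-noise computation on the sphere (conformal invariance of the Cauchy kernel, with $\kappa_{g'}-\kappa_g$ producing only a conditionally constant Gaussian shift) in place of the paper's operator identity (Lemma \ref{lemmainvar_Sigma}), and diffeomorphism invariance is handled by the same naturality argument as the paper's Lemma \ref{l:diffeos}.
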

The proof will be done in a more general setting in Proposition \ref{prop:weyl_surface}. 

\subsection{Correlation functions}
The   observables of the model we will consider in this manuscript are the operators  formally given, for $z\in \hat{\C}$, $j\in\R$ and $\mu\in\C$, by  
\begin{equation}
V_{j,\mu}(z):=e^{2(j+1)\phi(z)}e^{\mu \gamma(z)-\bar\mu \bar\gamma(z)}.
\end{equation}
Because the pair of fields $(X_g,\gamma_g)$  is irregular and cannot be evaluated pointwise, we need to regularize these  fields
\begin{equation}
V^{g, \epsilon}_{j,\mu}(z):=\epsilon^{2b^2(j+1)^2}e^{2(j+1)b\phi_{g,\epsilon}(z)}e^{\mu \gamma_{g,\eps}(z)-\bar\mu \bar\gamma_{g,\eps}(z)}
\end{equation}
where $(X_{g,\epsilon},\gamma_{g,\epsilon})$ is a regularized  GFF-Witten pair as in Subsection  \ref{regpair}, and $\phi_{g,\epsilon}=c+X_{g,\epsilon}$.

Another issue we have to deal with is the integrability in the zero mode $  \gamma_0$ so that the correlation functions will be given sense as distributions in the $\mu$-variables. For this, we fix $m\in \N$ and we introduce the   space of test functions $\mc{D}(\C^m)$ (smooth functions with compact support) equipped with the usual topology and $\mc{D}'(\C^m)$ its dual. We fix distinct points $z_1,\dots,z_m\in\hat{\C}$ that we gather in the notation ${\bf z}=(z_1,\dots,z_m)$ and weights $\boldsymbol{j}=(j_1,\dots,j_m) \in\R^m$. The correlation functions of the $\H^3$-model are seen as elements in $\mc{D}'(\C^m)$ with respect to the variable $\boldsymbol{\mu}:=(\mu_1,\dots,\mu_m)\in\C^m$, and are defined as the following limit in $\mc{D}'(\C^m)$, if it exists, 
\begin{equation}\label{correl}
\Big\langle \prod_{\ell=1}^mV_{j_\ell,\mu_\ell}(z_\ell) \Big\rangle_{\hat{\C},g}:=\lim_{T'\to \infty}\lim_{T\to \infty}\lim_{\epsilon\to 0}\Big\langle \chi_T(|\gamma_0|)\chi_{T'}(c)\prod_{\ell=1}^mV^{g, \epsilon}_{j_\ell,\mu_\ell}(z_\ell) \Big\rangle_{\hat{\C},g}
\end{equation}
where  $\chi_T(t)={\bf 1}_{|t|<T}$. 
The existence of this limit depends on the following condition
\begin{equation}\label{seib} 
\forall \ell=1,\dots,m,\quad j_\ell+1<1/2\quad \text{ and }\quad \sum_{\ell=1}^m(j_\ell+1)>1,
\end{equation}
similar to the Seiberg bounds for Liouville CFT (see \cite{DKRV16,GRVIHES} and Proposition \ref{limitcorel}).
From now on, we define the random measures, for $u\in C^\infty(\hat{\C}\setminus \{{\bf z}\},\C)$ satisfying $u(z)=\alpha_\ell \log |z-z_\ell|+\mc{O}(1)$ as $z\to z_\ell$ for some $\alpha_\ell\in \C$ satisfying $2{\rm Re}(\alpha_\ell)>-2-2b^2$ (this condition is explained in the Proposition below), by 
\begin{equation}\label{d:thetaMg} 
\theta(\dd z):= M_{2b}^g(X_g,\dd z), \qquad  \theta_u(\dd z):=e^{2{\rm Re}(u)}M_{2b}^g(X_g,\dd z).
\end{equation}
and we write $L^2(\theta_u)=L^2(\hat{\C},\theta_u)$ for the $L^2$-space with respect to $\theta_u$. The correlation functions will be understood in the distribution sense in $\C^m$ by using the measure $\delta_{V^0}$ induced by the Euclidean metric on the hyperplane  $V^0:=\{\boldsymbol{\mu}\in\C^m\,|\, \sum_{\ell=1}^m\mu_\ell=0\}$: 
\begin{equation}\label{deltaV^0}
 \delta_{V^0}(f)=\pi^2\int_{\C^{m-1}} f\Big(\mu_1,\mu_2,\dots,-\sum_{\ell=1}^{m-1}\mu_\ell \Big)\dd \mu_1\dots \dd \mu_{m-1}, \qquad \forall f\in C_c^\infty(\C^m).
 \end{equation}
 Notice that if $H\in C^\infty(\C^m\setminus\{0\})$ satisfies $|H(\boldsymbol{\mu})|=\mc{O}(|\boldsymbol{\mu}|^{-2s_0})$ with $2s_0<m-1$, then the product $H \delta_{V^0}$ makes sense as a distribution on $\C^m$, and if $H>0$, it is a 
 measure with infinite mass, but well-defined on compactly supported continuous functions. If $s_0=-1+\sum_{\ell=1}^m(j_\ell+1)$ and $j_\ell+1<1/2$ for all $\ell$, we have in particular $2s_0<m-2$. 

\begin{proposition}[\textbf{Correlations functions}]\label{existcorrel}
Let $s_0:=-1+\sum_{\ell=1}^m(j_\ell+1)$ and assume that the condition \eqref{seib} holds, thus $s_0>0$. Then the limit \eqref{correl} exists as distributions on $\C^m$ in the $\boldsymbol{\mu}$ variable and given by the measure supported on $V^0$
\[
\Big\langle \prod_{\ell=1}^mV_{j_\ell,\mu_\ell}(z_\ell) \Big\rangle_{\hat{\C},g}=\big(\frac{k-2}{32\pi}\big)^{\frac{1}{3}}(\frac{\pi}{k})^{-\frac{2}{3}-s_0}\Big(\frac{\det(\Delta_g)}{{\rm v}_g(\hat{\C})}\Big)^{-\frac{3}{2}}e^{B_{\H^3}({\bf z},\boldsymbol{j})}
\frac{\Gamma(s_0)}{2b} \E[ \|\Gamma_{\boldsymbol{\mu},{\bf z}}\|^{-2s_0}_{L^2(\theta_{u})}] \delta_{V^0}
\]
where $\Gamma(\cdot)$ is the Euler Gamma function, $\theta_u$ is the measure in \eqref{d:thetaMg} with 
\[
\Gamma_{\boldsymbol{\mu},{\bf z}}(z):=\frac{1}{\pi} \sum_{\ell=1}^m  \frac{\mu_\ell}{z_\ell-z}, \qquad u(z):= \sum_{\ell=1}^m 2b^2(j_\ell+1)G_{g}(z,z_\ell)-\frac{b^2}{4\pi} \int_\Sigma G_{g}(z,z')K_g(z'){\rm dv}_g(z'),\]
the expectation satisfies $ \E[ \|\Gamma_{\boldsymbol{\mu},{\bf z}}\|^{-2s_0}_{L^2(\theta_{u})}]=\mc{O}(|\boldsymbol{\mu}|^{-2s_0})=\mc{O}(|\boldsymbol{\mu}|^{-(m-2)})$ near $\boldsymbol{\mu}=0$ and the global factor is (recall \eqref{varYg})
\[\begin{split}
B_{\H^3}({\bf z},\boldsymbol{j}) := &2b^2\sum_{\ell=1}^m(j_\ell+1)^2W_g(z_\ell)+2b^2\sum_{\ell\not=\ell'}(j_\ell+1)(j_{\ell'}+1)G_g(z_\ell,z_{\ell'})\\
& +
\frac{b^2}{32\pi^2}\int_{\Sigma^2} G_g(z,z')K_g(z)K_g(z'){\rm dv}_g(z){\rm dv}_g(z') -2b^2\sum_{\ell=1}^m(j_\ell+1)\int_\Sigma G_g(z,z_\ell)K_g(z){\rm dv}_g(z).
\end{split}\]
\end{proposition}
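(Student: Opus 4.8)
The plan is to reduce the correlation function to a purely probabilistic quantity by carrying out, in turn, the three Gaussian integrations present in Definition \ref{defH3sphere} --- the conditional average over the Witten field $\gamma_g$, the integral over the zero mode $\gamma_0$, and the Liouville-type integral over the constant mode $c$ --- while taking the regularisation $\eps\to0$ and removing the cut-offs $T,T'$ in the prescribed order. \textbf{First} I would work at fixed $\eps$ and fixed cut-offs. Conditionally on $X_g$, the regularised field $\gamma_{g,\eps}$ is a centred complex Gaussian with $\E[\gamma_{g,\eps}(z)\gamma_{g,\eps}(z')\mid X_g]=0$, so the insertion $\prod_\ell e^{\mu_\ell\gamma_{g,\eps}(z_\ell)-\bar\mu_\ell\bar\gamma_{g,\eps}(z_\ell)}$ is integrated out exactly: writing $W_\eps:=\sum_\ell\mu_\ell\gamma_{g,\eps}(z_\ell)$, the exponent equals $2i\,\Im W_\eps$ and $\E[e^{2i\Im W_\eps}\mid X_g]=e^{-\E[|W_\eps|^2\mid X_g]}$ with $\E[|W_\eps|^2\mid X_g]=\tfrac{\pi}{k}\sum_{\ell,\ell'}\mu_\ell\bar\mu_{\ell'}R^\theta_\eps(z_\ell,z_{\ell'})$, where $R^\theta_\eps$ is the regularised kernel \eqref{exprtildeK} and $\theta=e^{2bc}M^g_{2b}(X_g,\dd w)$. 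The zero mode enters only through $e^{\gamma_0\sum_\ell\mu_\ell-\bar\gamma_0\sum_\ell\bar\mu_\ell}$, so the $\gamma_0$-integral $\int_\C\chi_T(|\gamma_0|)e^{2i\Im(\gamma_0\sum_\ell\mu_\ell)}\dd\gamma_0$ converges, as $T\to\infty$ and in the sense of distributions in $\boldsymbol\mu$, to $\pi^2\delta_{V^0}$; this pins the computation to $\{\sum_\ell\mu_\ell=0\}$ and produces the measure $\delta_{V^0}$ of \eqref{deltaV^0}.

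\textbf{On $V^0$} the constant $\kappa_g$ cancels inside $\sum_\ell\mu_\ell\bigl(\tfrac{1}{z_\ell-w}-\kappa_g(w)\bigr)$, so after sending $\eps\to0$ (using $R^\theta_\eps\to R^\theta$, finite on the diagonal by Remark \ref{rem:integ}, whence the Witten field needs no renormalisation) the $\gamma_g$-factor collapses to $e^{-\frac{\pi}{k}\|\Gamma_{\boldsymbol\mu,{\bf z}}\|^2_{L^2(\theta)}}$ with $\Gamma_{\boldsymbol\mu,{\bf z}}(w)=\tfrac1\pi\sum_\ell\tfrac{\mu_\ell}{z_\ell-w}$. \textbf{Next} I would perform the $c$-integral. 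The $c$-dependent factors are $e^{2bc\sum_\ell(j_\ell+1)}=e^{2bc(s_0+1)}$ from the vertex exponentials, $e^{-2bc}$ from the curvature term (using $\tfrac{1}{4\pi}\int_{\hat\C}K_g\dd{\rm v}_g=\chi(\hat\C)=2$ by Gauss--Bonnet), and $e^{-\frac{\pi}{k}e^{2bc}\|\Gamma\|^2_{L^2(\theta_0)}}$ with $\theta_0=M^g_{2b}(X_g)$, giving an integrand $e^{2bcs_0}e^{-\frac{\pi}{k}e^{2bc}\|\Gamma\|^2_{L^2(\theta_0)}}$. The substitution $t=e^{2bc}$ turns this into a Gamma integral, so that, as $T'\to\infty$ and using $s_0>0$,
\[
\int_\R e^{2bcs_0}e^{-\frac{\pi}{k}e^{2bc}\|\Gamma\|^2_{L^2(\theta_0)}}\dd c=\frac{\Gamma(s_0)}{2b}\Big(\frac{\pi}{k}\Big)^{-s_0}\|\Gamma\|^{-2s_0}_{L^2(\theta_0)},
\]
which is exactly where the negative power $-2s_0$ and the factor $\tfrac{\Gamma(s_0)}{2b}(\tfrac{\pi}{k})^{-s_0}$ originate.

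\textbf{It remains} to treat the expectation over the GFF $X_g$, carrying the Liouville insertions $\prod_\ell\eps^{2b^2(j_\ell+1)^2}e^{2(j_\ell+1)bX_{g,\eps}(z_\ell)}$, the linear curvature term, and the random weight $\|\Gamma\|^{-2s_0}_{L^2(\theta_0)}$. Applying Cameron--Martin/Girsanov to the linear functional $L(X_g)=\sum_\ell 2(j_\ell+1)bX_g(z_\ell)-\tfrac{b}{4\pi}\int_{\hat\C}K_gX_g\dd{\rm v}_g$ shifts $X_g\mapsto X_g+h$ with $h=\E[L(X_g)X_g(\cdot)]$ satisfying $bh=u$; hence $\theta_0=M^g_{2b}(X_g)$ becomes $M^g_{2b}(X_g+h)=e^{2u}M^g_{2b}(X_g)=\theta_u$, turning the weight into $\|\Gamma\|^{-2s_0}_{L^2(\theta_u)}$. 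The factors $\eps^{2b^2(j_\ell+1)^2}$ absorb the self-energies $\E[X_{g,\eps}(z_\ell)^2]\sim-\log\eps+W_g(z_\ell)$, and the deterministic prefactor $e^{\frac12\E[L^2]}$ together with these diagonal terms assembles into $e^{B_{\H^3}({\bf z},\boldsymbol j)}$. Finally I would gather the constants: the prefactor of Definition \ref{defH3sphere}, rewritten via $\det(\mc D_0)=4^{2/3}\det(\Delta_g)$ as $(\tfrac{k-2}{32\pi})^{1/3}(\tfrac{\pi}{k})^{-2/3}(\det(\Delta_g)/{\rm v}_g(\hat\C))^{-3/2}$, multiplied by $(\tfrac{\pi}{k})^{-s_0}$ and $\tfrac{\Gamma(s_0)}{2b}$ from the $c$-integral, reproduces the announced constant.

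The manipulations above are formal rearrangements; the substantive work is twofold. The first obstacle is to make the interchange of the three integrations and the $\eps\to0$ limit rigorous as an identity of distributions in $\boldsymbol\mu$: one must establish the joint convergence of the regularised Witten kernel $R^\theta_\eps$, the chaos measures $\theta_\eps$, and the renormalised vertex operators, and control them uniformly enough in $T,T'$ to pass to the limit under the $\gamma_0$- and $c$-integrals in the order $\lim_{T'}\lim_{T}\lim_{\eps}$. The second, and I expect the harder, obstacle is the finiteness and small-$\boldsymbol\mu$ behaviour of $\E[\|\Gamma_{\boldsymbol\mu,{\bf z}}\|^{-2s_0}_{L^2(\theta_u)}]$: this requires $\P(\|\Gamma\|^2_{L^2(\theta_u)}<\delta)$ to decay fast enough as $\delta\to0$, resting on multifractal/negative-moment estimates for Gaussian multiplicative chaos. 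Here the hypotheses are used precisely: the bound $j_\ell+1<\demi$ is equivalent to the Seiberg bound $\alpha_\ell=2b(j_\ell+1)+\tfrac1b<Q$ ensuring that $\theta_u$ is a.s.\ locally finite near each puncture $z_\ell$, while $s_0>0$ makes the Gamma integral converge; the scaling $\Gamma_{\boldsymbol\mu,{\bf z}}\propto\boldsymbol\mu$ yields the $\mc O(|\boldsymbol\mu|^{-2s_0})$ estimate, and since $2s_0<m-2$ the product $\E[\|\Gamma\|^{-2s_0}_{L^2(\theta_u)}]\,\delta_{V^0}$ is a well-defined distribution on $\C^m$. I would import the relevant chaos moment bounds and renormalisation estimates from the Liouville construction \cite{DKRV16,GRVIHES} and from Proposition \ref{limitcorel}.
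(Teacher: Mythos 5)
Your proposal is correct and follows essentially the same route as the paper's proof: Gaussian characteristic-function computation for the $\gamma$-insertions (the paper phrases this as conditional Cameron--Martin), the $\gamma_0$-integral producing $\delta_{V^0}$ and the cancellation of $\kappa_g$ on $V^0$, the Cameron--Martin shift of the GFF yielding $u$ and $B_{\H^3}$, the Gamma-integral in $c$, and the GMC negative-moment estimates from \cite{DKRV16} for finiteness and the $\mc{O}(|\boldsymbol{\mu}|^{-2s_0})$ bound. The only deviations are cosmetic (you perform the $c$-integral before the GFF shift, which commutes), so no gap to report.
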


\begin{proof} Let us introduce some further notations (recall \eqref{defck}):
\[ D_g:=C_k \Big(\frac{\det(\Delta_{g})}{{\rm v}_g(\Sigma)}\Big)^{-3/2},\qquad  \phi_g:=c+X_g, \qquad \phi_{g,\epsilon}:=c+X_{g,\epsilon},\]
Let us consider the limit in \eqref{correl} in $\mc{D}'(\C^m)$: for $f\in C_c^\infty(\C^m)$
 \begin{align}
  \int_{\C^m}&f(\boldsymbol{\mu})\Big\langle \chi_T(|\gamma_0|)\chi_{T'}(c) \prod_{\ell=1}^mV^{g, \epsilon}_{j_\ell,\mu_\ell}(z_\ell)\Big \rangle_{\Sigma,g}\dd\boldsymbol{\mu} \nonumber
 \\
 =&D_g \int_{\C} \int_{\C^m}f(\boldsymbol{\mu})e^{2i{\rm Im}(\sum_{\ell=1}^m \mu_\ell \gamma_0)} \chi_T(|\gamma_0|)\dd \gamma_0 \nonumber \\
 & \times 
 \int_{\R}\chi_{T'}(c)\E\Big[\prod_{\ell=1}^m\epsilon^{2b^2(j_\ell+1)^2}e^{2b(j_\ell+1)\phi_{g,\epsilon}(z_\ell)}e^{2i{\rm Im}(\mu_\ell  \gamma_{g,\eps}(z_\ell))}e^{-\frac{b}{4\pi}\int_\Sigma K_g\phi_{g,\eps}\dd {\rm v}_g}\Big]\,\dd c \,\dd\boldsymbol{\mu}. \label{expectation_term}
 \end{align}
First, for $\eps>0$, the expectation above is a continuous function of $\boldsymbol{\mu}$ and we will see below that the limit as $T\to \infty$ of the term in the last line of \eqref{expectation_term} is continuous in $\boldsymbol{\mu}\not=0$ and an $\mc{O}(|\boldsymbol{\mu}|^{-s_1})$ for some $s_1<m-1$ as $\boldsymbol{\mu}\to 0$. 
Suppose $\tilde{f}_{T'}\in C^0(\C^m)$ is a family of functions with uniform compact support with respect to $T'>1$ such that 
 $|\tilde{f}_{T'}(\boldsymbol{\mu})|=\mc{O}(|\boldsymbol{\mu}|^{-s_1})$ for $s_1<m-1$ near $\boldsymbol{\mu}=0$ uniformly in $T'>1$ and $\tilde{f}_{T'}\to \tilde{f}$ as $T'\to \infty$ pointwise in $\C^m\setminus\{0\}$ for some $\tilde{f}\in C^0(\C^m\setminus\{0\})$
satisfying  $|\tilde{f}(\boldsymbol{\mu})|=\mc{O}(|\boldsymbol{\mu}|^{-s_1})$ for $s_1<m-1$.
Then it is straightforward to show that
 \begin{equation}\label{deltamu}
 \begin{split}
\lim_{T'\to \infty} \lim_{T\to \infty}\int_{|\gamma_0|<T}\Big(\int_{\C^m}\tilde{f}_{T'}(\boldsymbol{\mu})e^{2i{\rm Im}(\sum_{\ell=1}^m\mu_\ell \gamma_0)}\dd\boldsymbol{\mu} \Big)\dd \gamma_0
=&\pi^2  \int_{\C^{m-1}} \tilde{f}\Big(\mu_1,\mu_2,\dots,-\sum_{\ell=1}^{m-1}\mu_\ell \Big)\dd \mu_1\dots \dd \mu_{m-1}\\
= &\delta_{V^0}(\tilde{f})
 \end{split}
 \end{equation}
where $\delta_{V^0}$ is the measure on the hyperplane $V^0$ induced by the Euclidean metric on $\C^m$.
 Next we compute the expectation in \eqref{expectation_term}. We apply the Cameron-Martin's theorem, applied conditionally on $X_g$, to shift the 
 $\gamma_{g,\eps}(z_\ell)$ terms: using \eqref{exprtildeK}, the expectation and $c$-integral in \eqref{expectation_term}  can be rewritten as 
 \begin{equation}\label{Expectation_part}
  \int_{\R}\chi_{T'}(c)\E\Big[\prod_{\ell=1}^m\epsilon^{2b^2(j_\ell+1)^2}e^{2b(j_\ell+1)\phi_{g,\epsilon}(z_\ell)} e^{-\frac{b}{4\pi}\int_\Sigma K_g\phi_g\dd {\rm v}_g}e^{-\frac{\pi}{k}\sum_{\ell,\ell'=1}^m\mu_\ell\bar\mu_{\ell'}R^\theta_\epsilon(z_\ell,z_{\ell'})}\Big]\,\dd c .
  \end{equation}
  where  $R^\theta_\epsilon(z,z')$ is given by 
\[   \frac{\pi}{k} R^\theta_\epsilon(z,z'):=\E[\gamma_{g,\epsilon}(z)\overline{\gamma_{g,\epsilon}}(z')|X_g].\]
Using  the expression \eqref{exprtildeK} for $R^\theta_\eps(z_\ell,z_{\ell'})$ and the fact that we can restrict to $\sum_{\ell=1}^m\mu_\ell=0$ by \eqref{deltamu}, we get (recall \eqref{d:thetaMg})
\begin{equation}
 \frac{\pi}{k} \sum_{\ell,\ell'=1}^m\mu_\ell\bar\mu_{\ell'}R^\theta_\epsilon(z_\ell,z_{\ell'})=  \frac{e^{2bc}}{\pi k}\int_\C\Big|\sum_\ell \mu_\ell\frac{\tilde{\rho}\big(\frac{|z_\ell-z|}{\epsilon}\big)}{z_\ell-z}\Big|^2 \theta(\dd z) .
\end{equation}
 Now we deal with the $X_{g,\epsilon}$ insertions. Using Cameron-Martin's theorem to shift the terms involving $X_{g,\eps}(z_\ell)$ and 
 $\int_{\C}K_g X_{g,\eps}{\rm dv}_g$, we obtain (recall \eqref{varYg}) that the expectation with $c$-integral in \eqref{expectation_term} is equal to (as $\eps\to 0$)
\begin{equation}\label{eq:to_bound}
(1+o(1))e^{B_{\H^3}({\bf z},\boldsymbol{j})}
\int_{\R}\chi_{T'}(c)e^{-2bc+\sum_{\ell=1}^m2b(j_\ell+1) c}\E\Big[e^{-\frac{e^{2bc}}{\pi k}\int_\C\big|\sum_\ell \mu_\ell\frac{\tilde{\rho}(|z_\ell-z|/\epsilon)}{z_\ell-z}\big|^2 \theta_{u_\eps}(\dd z)}\Big]\,\dd c 
\end{equation}
with $\theta_{u_\eps}$ defined as in \eqref{d:thetaMg} with 
  \[u_\eps(z):=\sum_{\ell=1}^m 2b^2(j_\ell+1)G_{g,\eps}(z,z_\ell)-\frac{b^2}{4\pi} \int_\Sigma G_{g,\eps,\eps}(z,z')K_g(z'){\rm dv}_g(z').\]
where $G_{g,\epsilon}$ (resp. $G_{g,\eps,\eps}$) is the $g$-regularization of the Green function 
 its second variable (resp. both variables). We have
 \[u(z)= \lim_{\epsilon\to 0}u_\eps(z)=\sum_{\ell=1}^m 2b^2(j_\ell+1)G_{g}(z,z_\ell)-\frac{b^2}{4\pi} \int_\Sigma G_{g}(z,z')K_g(z'){\rm dv}_g(z')\] 
 pointwise outside the points  $z_\ell$. Let us define 
 \[\Gamma^\eps_{\boldsymbol{\mu},{\bf z}}(z):=\frac{1}{\pi}\sum_{\ell=1}^m \mu_\ell\frac{\tilde{\rho}(|z_\ell-z|/\epsilon)}{z_\ell-z}.\]
Now we claim that the term $\|\Gamma^\eps_{\boldsymbol{\mu},{\bf z}}\|_{L^2(\theta_{u_\eps})}^2=\int_\C\big|\Gamma^\eps_{\boldsymbol{\mu},{\bf z}}\big|^2 \theta_{u_\eps}(\dd z)$ converges almost surely towards
\begin{equation}\label{integrand}
\|\Gamma_{\boldsymbol{\mu},{\bf z}}\|_{L^2(\theta_{u})}^2=\int_\C| \Gamma_{\boldsymbol{\mu},{\bf z}}(z)|^2 \theta_u(\dd z).
\end{equation}
Indeed, this follows from the dominated convergence theorem after noticing that the quantity \eqref{integrand} is finite, because the singularity at $z_\ell$ is of the  form $|z_\ell-z|^{-2-4b^2(j_\ell+1)}$, which is integrable with respect to the measure $\theta(\dd z)=M^{g}_{2b}(X_g  ,\dd z)$ provided that $2+4b^2(j_\ell+1)<2+2b^2$, namely $j_\ell<-1/2$ (see \cite[Lemma 3.3]{DKRV16}). Also, we have the following estimate that controls the behaviour near $z\to\infty$
\[  |z|\geq2 \max_\ell |z_\ell |\Rightarrow | \Gamma_{\boldsymbol{\mu},{\bf z}}(z)|^2\leq \frac{C}{|z|^4} \]
for some constant $C>0$ (here we used $\sum_\ell \mu_\ell=0$). In particular
\[\E\Big[\int_{ |z|\geq 2 \max_\ell |z_\ell |}| \Gamma_{\boldsymbol{\mu},{\bf z}}(z)|^2 \theta_{u}(\dd z)\Big]\leq C \int_{ |z|\geq2 \max_\ell |z_\ell |}    \frac{e^{2b^2W_g(z)}}{|z|^4}  \dd z\]
using the previous bound and the  fact that $u$ is bounded on  the set $ |w|\geq2 \max_\ell |z_\ell |$. Since $W_g$ is bounded on the whole plane, the expectation just above is finite, that it has a limit as $\eps\to 0$, that the limit is uniformly bounded with respect to $T'$ by $\mc{O}(|\boldsymbol{\mu}|^{-2s_0})$ and that it has a limit as $T'\to \infty$. 
Now we need to show that \eqref{Expectation_part} is finite and that its  limit as 
$T\to \infty$ exists. Since all the terms in \eqref{Expectation_part} are positive, we can bound above $\chi_{T'}(c)\leq 1$ and use the dominated convergence: by Fubini, we exchange the $c$ integral and the expectation and by a change of variable we see that
\[\int_{\R}e^{-2bc+\sum_{\ell=1}^m2b(j_\ell+1) c}e^{-\frac{e^{2bc}\pi}{k}\|\Gamma^\eps_{\boldsymbol{\mu},{\bf z}}\|_{L^2(\theta_{u_\eps})}^2} \dd c=\frac{(k/\pi)^{s_0}}{2b}  \|\Gamma^\eps_{\boldsymbol{\mu},{\bf z}}\|^{-2s_0}_{L^2(\theta_{u_\eps})}\Gamma(s_0)
\]
where $s_0=-1+\sum_{\ell}(j_\ell+1)>0$ and $\Gamma(s)$ is Euler gamma function. Now by (the proof of) \cite[Lemma 3.3]{DKRV16}, for each $\boldsymbol{\mu}=0$ and if $j_\ell+1<1/2$ for each $\ell$, we have
\[ \sup_{\eps\in (0,1)}\E[  \|\Gamma^\eps_{\boldsymbol{\mu},{\bf z}}\|^{-2s_0}_{L^2(\theta_{u_\eps})}]<\infty\] 
and its limit as $\eps\to 0$ is given by $\E[  \|\Gamma_{\boldsymbol{\mu},{\bf z}}\|^{-2s_0}_{L^2(\theta_{u})}]<\infty$.
This shows that 
\begin{align*} 
H_{T'}(\boldsymbol{\mu}):=&\lim_{\eps\to 0}\int_{\R}\chi_{T'}(c)e^{-2bc+\sum_{\ell=1}^m2b(j_\ell+1) c}\E\Big[e^{-\frac{e^{2bc}\pi}{k}\int_\C|\Gamma^\eps_{\boldsymbol{\mu},{\bf z}}(z)|^2 \theta_{u_\eps}(\dd z)}\Big]\,\dd c\\
=&\int_{\R}\chi_{T'}(c)e^{-2bc+\sum_{\ell=1}^m2b(j_\ell+1) c}\E\Big[e^{-\frac{e^{2bc}\pi}{k}\int_\C|\Gamma_{\boldsymbol{\mu},{\bf z}}(z)|^2 \theta_{u}(\dd z)}\Big]\,\dd c\leq \frac{(k/\pi)^{s_0}}{2b}  \|\Gamma_{\boldsymbol{\mu},{\bf z}}\|^{-2s_0}_{L^2(\theta_{u})}\Gamma(s_0).
\end{align*}
By dominated convergence, $\lim_{T'\to \infty}H_{T'}(\boldsymbol{\mu})=\frac{(k/\pi)^{s_0}}{2b}  \|\Gamma_{\boldsymbol{\mu},{\bf z}}\|^{-2s_0}_{L^2(\theta_{u})}\Gamma(s_0)$ for all $\boldsymbol{\mu}\not=0$.
Remark that since one can bound below $|\Gamma_{\boldsymbol{\mu},{\bf z}}(z)|>C_0|\boldsymbol{\mu}|$ 
for $z$ in a small ball where $\Gamma_{\boldsymbol{\mu},{\bf z}}$ has no zeros, with $C_0>0$ some constant, 
we obtain
\[ \E[  \|\Gamma_{\boldsymbol{\mu},{\bf z}}\|^{-2s_0}_{L^2(\theta_{u})}] \leq C_1|\boldsymbol{\mu}|^{-2s_0}\]
for some constant $C_1>0$ and thus $|H_{T'}(\boldsymbol{\mu})|=\mc{O}(|\boldsymbol{\mu}|^{-2s_0})$ uniformly in $T'$. We are in the situation where we can use \eqref{deltamu} with $\tilde{f}(\boldsymbol{\mu})=f(\boldsymbol{\mu})H_{T'}(\boldsymbol{\mu})$ since $2s_0<m-2$. This thus shows the desired result.
\end{proof}

Varying the metric in the correlation functions leads to identifying the conformal weights of the operators $V_{j,\mu}(z)$. The proof is contained in  Proposition \ref{confweightgeneral} written in a general framework.
\begin{proposition}\label{confweight}
Let $g'=e^\omega g$ be a metric conformal to $g$. Then
\begin{equation}\label{e:anomaly_WZW}
\Big\langle\prod_{\ell=1}^mV_{j_\ell,\mu_\ell}(z_\ell) \Big \rangle_{\hat{\C},g'}=  e^{ \frac{{\bf c}(k)}{96\pi}S^0_{\rm L}(g,\omega)-\sum_{\ell=1}^m \triangle_{j_\ell}\omega(z_\ell)}\Big\langle\prod_{\ell=1}^mV_{j_\ell,\mu_\ell}(z_\ell) \Big\rangle_{\hat{\C},g}
\end{equation}
where ${\bf c}(k)=\frac{3k}{k-2}=3+6b^2$ is   the central charge and $\triangle_{j_\ell} $ is the  conformal weight of $V_{j_\ell,\mu_\ell} $,  given by
\begin{equation}\label{CWH3}
\triangle_{j_\ell} =-\frac{j_\ell(j_\ell+1)}{k-2}=-b^2j_\ell(j_\ell+1).
\end{equation}
\end{proposition}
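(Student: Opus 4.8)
The plan is to deduce the statement from the path-integral Weyl anomaly of Proposition \ref{prop:weyl}, the only extra work being the behaviour of the regularized vertex operators $V^{g,\epsilon}_{j,\mu}$ at the insertion points. Since \eqref{rigPI} is an identity of (unnormalized) measures, Proposition \ref{prop:weyl} applies verbatim to the complex functional $F=\chi_T(|\gamma_0|)\chi_{T'}(c)\prod_\ell V^{g',\epsilon}_{j_\ell,\mu_\ell}(z_\ell)$ used to define \eqref{correl}. Applying it produces the bulk factor $e^{\frac{{\bf c}(k)}{96\pi}S^0_{\rm L}(g,\omega)}$ together with the field shift $\phi\mapsto \phi-\tfrac{b^2}{2}\omega$, and reduces the claim to showing that, after this shift, the $g'$-regularized insertions equal $\prod_\ell e^{-\triangle_{j_\ell}\omega(z_\ell)}$ times the $g$-regularized ones as $\epsilon\to 0$.

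First I would isolate the $\phi$-charge. The shift $\phi\mapsto\phi-\tfrac{b^2}{2}\omega$ turns $e^{2(j_\ell+1)\phi(z_\ell)}$ into $e^{-b^2(j_\ell+1)\omega(z_\ell)}e^{2(j_\ell+1)\phi(z_\ell)}$; as $\omega$ is smooth this is harmless under regularization. Next I would convert the $g'$-geodesic-circle regularization of the insertions into the $g$-one. Using $d_{g'}\simeq e^{\omega(z_\ell)/2}d_g$ near $z_\ell$, a $g'$-circle of radius $\epsilon$ is a $g$-circle of radius $\epsilon e^{-\omega(z_\ell)/2}+o(\epsilon)$, so \eqref{varYg} gives $W_{g'}(z_\ell)-W_g(z_\ell)=\tfrac12\omega(z_\ell)$; consequently $\epsilon^{2b^2(j_\ell+1)^2}e^{2b(j_\ell+1)X_{g',\epsilon}(z_\ell)}=e^{b^2(j_\ell+1)^2\omega(z_\ell)}(1+o(1))\,\epsilon^{2b^2(j_\ell+1)^2}e^{2b(j_\ell+1)X_{g,\epsilon}(z_\ell)}$, the same mechanism that yields \eqref{relationentrenorm} for the GMC normalization. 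Multiplying the two contributions gives $e^{(-b^2(j_\ell+1)+b^2(j_\ell+1)^2)\omega(z_\ell)}=e^{b^2 j_\ell(j_\ell+1)\omega(z_\ell)}=e^{-\triangle_{j_\ell}\omega(z_\ell)}$, which is the sought local factor and pins down $\triangle_{j}=-b^2 j(j+1)$.

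I would then argue that the $\gamma$-sector contributes nothing locally. The factor $e^{\mu\gamma(z_\ell)-\bar\mu\bar\gamma(z_\ell)}$ needs no Wick renormalization: by Remark \ref{rem:integ} the Witten covariance $R^\theta(z_\ell,z_\ell)$ is a.s.\ finite, the pole $|z_\ell-z|^{-2}$ being integrable against $\theta=M^g_{2b}(X_g,\cdot)$ for $j_\ell<-1/2$. Hence $\gamma_{g,\epsilon}(z_\ell)$ converges without an $\epsilon$-prefactor and the difference between its $g'$- and $g$-regularizations vanishes as $\epsilon\to 0$, producing no insertion factor. The only residual metric dependence of the $\gamma$-law is through $\theta=M^g_{2b}(X_g,\cdot)$, which transforms by the smooth reweighting $e^{b^2\omega}$ of \eqref{relationentrenorm}; after the $\delta_{V^0}$-reduction $\sum_\ell\mu_\ell=0$ this enters only through $\|\Gamma_{\boldsymbol{\mu},{\bf z}}\|_{L^2(\theta_u)}$ with the metric-independent $\Gamma_{\boldsymbol{\mu},{\bf z}}$, a globally smooth reweighting already absorbed into the bulk factor of Proposition \ref{prop:weyl}. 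Assembling the bulk factor, the $\phi$-insertion factors, and the trivial $\gamma$-insertion factors, and passing to the limits $\epsilon\to0$ then $T,T'\to\infty$ using the uniform bounds from the proof of Proposition \ref{existcorrel}, yields \eqref{e:anomaly_WZW}.

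The main obstacle is the uniform control needed to interchange the conformal change with the limits: one must justify the conversion $X_{g',\epsilon}(z_\ell)=X_{g,\epsilon e^{-\omega(z_\ell)/2}}(z_\ell)+o(1)$ inside the GMC- and Witten-field expectations, i.e.\ that the errors are negligible after integration against $\theta_u$ and after the Gaussian computation in $\gamma$, uniformly in the cutoffs. This is cleanest to carry out once in the general setting of Proposition \ref{prop:weyl_surface} (whence Proposition \ref{confweightgeneral}). An alternative, fully explicit route is to transform each factor of the formula in Proposition \ref{existcorrel}: the Polyakov formula \eqref{detpolyakov} turns $(\det\Delta_{g'}/{\rm v}_{g'})^{-3/2}$ into $e^{\frac{1}{32\pi}S^0_{\rm L}(g,\omega)}$ (the ``$3$'' in ${\bf c}(k)=3+6b^2$), while the conformal transformation laws of $G_g$ and $K_g$ (via \eqref{curv}) and of $W_g$, together with \eqref{relationentrenorm} applied to $\theta_u$, must reproduce the remaining $e^{\frac{b^2}{16\pi}S^0_{\rm L}(g,\omega)}$ (the ``$6b^2$'') inside $B_{\H^3}$ and $\E[\|\Gamma_{\boldsymbol{\mu},{\bf z}}\|^{-2s_0}_{L^2(\theta_u)}]$ plus the local weights; here the obstacle is the lengthier Green's-function bookkeeping, which the path-integral route avoids.
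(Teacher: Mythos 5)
Your proposal is correct and follows essentially the same route as the paper: the paper's proof (given in Lemma \ref{confweightgeneral}) combines exactly your two ingredients, namely the change of metric regularisation of the vertex operators giving $\prod_\ell e^{b^2(j_\ell+1)^2\omega(z_\ell)}$ via $W_{g'}-W_g=\tfrac{\omega}{2}$, and the path-integral Weyl anomaly (Proposition \ref{prop:weyl}/\ref{prop:weyl_surface}) whose shift $\phi\mapsto\phi-\tfrac{b^2}{2}\omega$ contributes $\prod_\ell e^{-b^2(j_\ell+1)\omega(z_\ell)}$, the product yielding $e^{-\triangle_{j_\ell}\omega(z_\ell)}$ with $\triangle_j=-b^2j(j+1)$. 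The only (immaterial) difference is the order of the two steps, and your observation that the $\gamma$-sector produces no local factor is likewise how the paper proceeds, since the Weyl anomaly proposition leaves the $\gamma$ argument untouched.
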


\section{Correspondence between  $\mathbb{H}^3$-WZW and Liouville theory on the sphere}

In the following statement, we will relate the correlation functions of the $\mathbb{H}^3$-model to the correlation functions of the Liouville CFT for the trivial holomorphic bundle.  To emphasize which model the correlation functions refer to, we will add the superscript $\cjg \cdot\cjd^{\mathbb{H}^3}$ or $\cjg \cdot\cjd^{\rm L}$ to the corresponding correlations. Before stating the main result, we first recall the construction of the correlation functions in the Liouville model.

\subsection{Liouville CFT path integral}
  Let  $\Sigma$ be a closed Riemann surface of genus $g(\Sigma)$, equipped with a Riemannian metric $g$ compatible with the complex structure and  consider the parameters 
 \[ b\in(0,1), \quad \mu_{\rm L}>0 , \quad Q=b+b^{-1}.\] 
  The Liouville field is the random variable $\phi_g=c+X_g\in \mc{H}^{s}(\Sigma)$ (for any $s<0$) with $c\in\R$ and $X_g$ the Gaussian Free Field on $(\Sigma,g)$. For $F:  \mc{H}^{s}(\Sigma)\to\R^+$ (with $s<0$) a bounded continuous functional, we set 
\begin{align}\label{def:pathintegralF}
\big\cjg F(\phi)\big\cjd_{\Sigma, g,\mu_{\rm L}}^{\rm L}:=& \big(\frac{\det(\Delta_g)}{{\rm v}_{g}(\Sigma)}\big)^{-\frac{1}{2}}  \int_\R  \E\Big[ F( \phi_g) \exp\Big( -\frac{Q}{4\pi}\int_{\Sigma}K_{g}\phi_g\,{\rm dv}_{g} - \mu_{\rm L}   M^g_{2b}(\phi_g,{\rm v}_g,\Sigma)  \Big) \Big]\,\dd c. 
\end{align}
By \cite[Proposition 4.1]{GRVIHES}, this quantity  defines a measure and moreover the partition function defined as  the total mass of this measure, i.e $\cjg 1 \cjd^{\rm L}_ {\Sigma, g,\mu_L}$, is finite iff the genus $g(\Sigma)\geq 2$. We warn the reader that in \cite{GRVIHES}, the parameters are denoted $(\gamma,Q,\mu)$ and correspond to $(2b,Q,\mu_{\rm L})$ in the notation of the present paper.

The regularized vertex operators, for fixed $\alpha\in\R$ (called {\it weight}) and $x\in \Sigma$, are defined by
\[
V_{\alpha,\eps}(x):=\eps^{\alpha^2/2}  e^{\alpha  \phi_{g,\epsilon}(x) } 
\]
with $\phi_{g,\epsilon}$ the $g$-regularisation of $\phi_g$. The correlation functions are defined by the limit
\begin{equation}\label{defcorrelg}
\Big\cjg \prod_i V_{\alpha_i}(x_i) \Big\cjd^{\rm L}_ {\Sigma, g,\mu_{\rm L}}:=\lim_{\eps \to 0} \: \Big\cjg \prod_i V_{\alpha_i,\eps}(x_i) \Big\cjd^{\rm L}_ {\Sigma, g,\mu_{\rm L}}
\end{equation}
where we have fixed $m$ distinct points $x_1,\dots,x_m$  on $\Sigma$ with respective associated weights $\alpha_1,\dots,\alpha_m\in\R$. By convention, we shall write for the case $\mu_L=1$
\begin{equation}\label{muL=1}
\Big\cjg \prod_i V_{\alpha_i}(x_i) \Big\cjd^{\rm L}_ {\Sigma, g}:=\Big\cjg \prod_i V_{\alpha_i}(x_i) \Big\cjd^{\rm L}_ {\Sigma, g,\mu_{\rm L}=1}.
\end{equation}
Non triviality of correlation functions are summarized in the following proposition (see \cite[Prop 4.4]{GRVIHES}):

 \begin{proposition}\label{limitcorel} Let $\mathbf{x}:=( x_1,\dots,x_m)\in\Sigma^m$ be distinct points on  $\Sigma$ and $\boldsymbol{\alpha}:= (\alpha_1,\dots,\alpha_m)\in\R^m$.
The limit \eqref{defcorrelg} exists and is non zero if and only if the weights $(\alpha_1,\dots,\alpha_m)$ obey the Seiberg bounds
 \begin{align}\label{seiberg}
 & s(\boldsymbol{\alpha}):=\sum_{i}\alpha_i + 2 Q (g(\Sigma)-1)>0, \qquad \forall i,\quad \alpha_i<Q.
 \end{align}
 They satisfy the following conformal covariance: if $g'=e^{\omega} g$ for $\omega \in C^\infty(\Sigma)$
 \begin{equation}\label{LCFT_anomaly}
\Big\cjg \prod_i V_{\alpha_i}(x_i) \Big\cjd^{\rm L}_ {\Sigma, g',\mu_{\rm L}}=e^{\frac{1+6Q^2}{96\pi}S_{\rm L}^0(g,\omega)-\sum_{i}\frac{\alpha_i}{2}(Q-\frac{\alpha_i}{2})\omega(x_i)}\Big\cjg \prod_i V_{\alpha_i}(x_i)\Big \cjd^{\rm L}_ {\Sigma, g,\mu_{\rm L}}.
 \end{equation}
 \end{proposition}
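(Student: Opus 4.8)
The plan is to reduce the correlation function to a negative moment of a Gaussian multiplicative chaos (GMC) measure, by first using Girsanov's theorem to absorb the vertex operators into the measure and then performing the integral over the zero mode $c$ explicitly. First I would apply the Cameron--Martin/Girsanov theorem to the Gaussian field $X_g$: since each regularised operator $V_{\alpha_i,\eps}(x_i)=\eps^{\alpha_i^2/2}e^{\alpha_i(c+X_{g,\eps}(x_i))}$ is a Gaussian exponential, the product $\prod_i e^{\alpha_i X_{g,\eps}(x_i)}$ shifts the law of $X_g$ by the deterministic function $\sum_i\alpha_i G_{g,\eps}(\cdot,x_i)$. Using \eqref{varYg} for the self-interaction, the normalising powers $\eps^{\alpha_i^2/2}$ exactly cancel the $\eps$-divergence of $\E[X_{g,\eps}(x_i)^2]$, so the $\eps\to0$ limit produces the finite deterministic prefactor $\exp\!\big(\tfrac12\sum_i\alpha_i^2 W_g(x_i)+\sum_{i<j}\alpha_i\alpha_j G_g(x_i,x_j)\big)$ (together with analogous curvature cross-terms), while the GMC measure acquires the weight $\prod_i e^{2b\alpha_i G_g(\cdot,x_i)}$, which behaves like $|x-x_i|^{-2b\alpha_i}$ near each $x_i$.

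After this shift the $c$-dependence is explicit: the insertions contribute $e^{(\sum_i\alpha_i)c}$, while the curvature term $-\tfrac{Q}{4\pi}\int_\Sigma K_g\phi_g\,{\rm dv}_g$ contributes $e^{2Q(g(\Sigma)-1)c}$ by Gauss--Bonnet ($\int_\Sigma K_g\,{\rm dv}_g=4\pi\chi(\Sigma)$, $\chi(\Sigma)=2-2g(\Sigma)$), so that the total exponential weight in $c$ is $e^{s(\boldsymbol{\alpha})c}$ with $s(\boldsymbol{\alpha})$ as in \eqref{seiberg}. The GMC term then reads $\mu_{\rm L}e^{2bc}Z$, where $Z$ is the total mass of $M^g_{2b}(X_g,{\rm v}_g,\cdot)$ weighted by $\prod_i e^{2b\alpha_i G_g(\cdot,x_i)}$ and a smooth curvature factor. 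Substituting $u=e^{2bc}$ and applying Fubini reduces the $c$-integral to a Gamma integral,
\begin{equation*}
\int_\R e^{s(\boldsymbol{\alpha})c}\,\E\big[e^{-\mu_{\rm L}e^{2bc}Z}\big]\,\dd c=\frac{\mu_{\rm L}^{-s(\boldsymbol{\alpha})/(2b)}}{2b}\,\Gamma\!\big(\tfrac{s(\boldsymbol{\alpha})}{2b}\big)\,\E\big[Z^{-s(\boldsymbol{\alpha})/(2b)}\big],
\end{equation*}
which is finite and nonzero precisely when $s(\boldsymbol{\alpha})>0$ (convergence as $c\to-\infty$) and when $0<Z<\infty$ almost surely. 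The first condition is exactly the global Seiberg bound. For the second, the behaviour of the singular weight $|x-x_i|^{-2b\alpha_i}$ against the GMC measure, whose mass of a ball $B(x_i,r)$ scales like $r^{2+2b^2}$ up to multifractal corrections (as in \cite[Lemma 3.3]{DKRV16}), shows that $Z<\infty$ a.s. iff $2b\alpha_i<2+2b^2$, i.e. $\alpha_i<Q$; if some $\alpha_i\geq Q$ then $Z=+\infty$ a.s. and the correlation vanishes. This yields both directions of the equivalence, the finiteness of the negative moment $\E[Z^{-s(\boldsymbol{\alpha})/(2b)}]$ following from standard small-ball estimates for GMC.

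For the conformal covariance I would track the $g$-dependence of every factor in \eqref{def:pathintegralF}--\eqref{defcorrelg} under $g'=e^\omega g$. The determinant factor transforms by the Polyakov formula \eqref{detpolyakov}, giving $e^{\frac{1}{96\pi}S^0_{\rm L}(g,\omega)}$ after taking the $-\tfrac12$ power; the curvature coupling transforms via \eqref{curv} together with ${\rm dv}_{g'}=e^\omega{\rm dv}_g$; the GMC measure rescales by \eqref{relationentrenorm} as $M^{g'}_{2b}=e^{b^2\omega}M^g_{2b}$; the change of GFF law, in which $X_{g'}\overset{\rm d}{=}X_g-\tfrac{1}{{\rm v}_{g'}(\Sigma)}\int_\Sigma X_g\,{\rm dv}_{g'}$, is absorbed by a translation of the $c$-integral; and the ratio of the $g$- and $g'$-regularisations of the vertex operators contributes a local factor at each $x_i$ that combines with the Weyl variation of the linear curvature coupling and of the singular GMC weight to produce the conformal weight $\tfrac{\alpha_i}{2}(Q-\tfrac{\alpha_i}{2})$. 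Collecting all the remaining local terms reconstitutes $S^0_{\rm L}(g,\omega)$ with coefficient $\frac{1+6Q^2}{96\pi}$, identifying the central charge $1+6Q^2$ and yielding \eqref{LCFT_anomaly}.

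The main obstacle is the sharp analysis at the insertion points: proving $\E[Z^{-s(\boldsymbol{\alpha})/(2b)}]<\infty$ and, crucially, that the threshold $\alpha_i<Q$ is sharp for non-degeneracy of $Z$ requires the precise multifractal small-ball estimates for the mass the GMC assigns to neighbourhoods of the $x_i$, together with uniform control of these bounds along the regularisation $\eps\to0$ so that the limit \eqref{defcorrelg} may be exchanged with the expectation. The remaining steps (Girsanov, Gauss--Bonnet, the Gamma integral, and the bookkeeping of Weyl factors) are routine given the results already quoted.
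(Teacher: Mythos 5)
Your proposal is correct and follows essentially the same route as the proof this statement rests on: the paper does not prove Proposition \ref{limitcorel} itself but quotes it from \cite[Prop 4.4]{GRVIHES}, whose argument is exactly your Cameron--Martin shift of the field, Gauss--Bonnet bookkeeping of the zero mode giving $e^{s(\boldsymbol{\alpha})c}$, reduction of the $c$-integral to a Gamma integral times a negative GMC moment, and the sharp integrability analysis $\alpha_i<Q$, $s(\boldsymbol{\alpha})>0$ at the insertions. The same scheme is also the one the paper itself deploys for the analogous $\H^3$ statement in Proposition \ref{existcorrel}.
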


Under \eqref{seiberg}, the correlation functions  have the following probabilistic representations \cite[Prop 4.4]{GRVIHES}  \begin{equation}\label{zidaneilamarque}
\begin{split}
\Big\cjg \prod_i V_{\alpha_i}(x_i) \Big\cjd^{\rm L}_ {\Sigma, g,\mu_{\rm L}} =&  \big(\frac{\det(\Delta_{g})}{{\rm v}_{g}(\Sigma)}\big)^{-\frac{1}{2}}   e^{B_{\rm L}(\mathbf{x},\boldsymbol{\alpha})}\int_\R e^{s(\boldsymbol{\alpha})c}\E[e^{-\mu_{\rm L}e^{2bc}M^g_{2b}(X_g+u_{\boldsymbol{\alpha}},{\rm v}_g,\Sigma)}]\dd c\\
=& \big(\frac{\det(\Delta_{g})}{{\rm v}_{g}(\Sigma)}\big)^{-\frac{1}{2}}   e^{B_{\rm L}(\mathbf{x},\boldsymbol{\alpha})}     \:   \mu_{\rm L}^{-s(\boldsymbol{\alpha})} \frac{\Gamma( \frac{s(\boldsymbol{\alpha})}{2b})}{2b} \:   \E\Big[   M^g_{2b}(X_g+u_{\boldsymbol{\alpha}},{\rm v}_g,\Sigma)^{-\frac{s(\boldsymbol{\alpha})}{2b}} \Big] 
\end{split}
\end{equation} 
 where $\Gamma(z)$ is Euler gamma function and, if $W_g$ is the function appearing in   \eqref{varYg}, 
 \begin{equation}\label{Z0Cx}
u_{\boldsymbol{\alpha}}(x):=   \sum_{i=1}^m \alpha_i   G_{g}(x_i,x)-\frac{Q}{4\pi}\int_\Sigma G_g(x,y)K_g(y){\rm dv}_g(y),
 \end{equation}
\begin{equation}\label{BL(x,alpha)}
\begin{split}
B_{\rm L}(\mathbf{x},\boldsymbol{\alpha}) := \sum_i\frac{\alpha_i^2}{2}W_{g}(x_i)+ \sum_{i<j}\alpha_i\alpha_j G_{g}(x_i,x_j)\ -\frac{Q}{4\pi}\Big(\cjg \sum_{i}\alpha_iG_g(x_i,\cdot),K_g\cjd_2-\frac{Q}{8\pi}\cjg G_g,K_g\otimes K_g\cjd_{L^2(\Sigma^2)}\Big)
\end{split}
\end{equation} 
In particular in constant curvature, all terms involving $K_g$ in \eqref{Z0Cx} and \eqref{BL(x,alpha)} vanish.

 \subsection{DOZZ formula} 
  The DOZZ formula is an explicit expression, in terms of special functions, for the structure constants of Liouville CFT,  which correspond to   the three point correlation function on the Riemann sphere, thereafter identified  with the extended complex plane $\hat\C$ by stereographic projection. On the sphere, every metric $g$ is (up to diffeomorphism) conformal to the round metric 
 \begin{equation}\label{def_g_0}
  g_0:=\frac{4}{(1+|z|^2)^2}|dz|^2.
  \end{equation} 
  From Proposition \ref{limitcorel}, the $m$-point correlations exist and are non trivial   if and only if the  Seiberg bounds \eqref{seiberg} are satisfied, here with genus $g(\Sigma)=0$. This implies in particular that $m$ must be greater or equal to $3$. 
Using the conformal covariance of the correlation functions, the 3 point functions depend in an explicit way on the points $z_1,z_2,z_3$ and the metric $g$, and are thus uniquely determined up to some factor depending only on the insertion weights $\alpha_1,\alpha_2,\alpha_3$ (and the parameters $b,\mu$) called the {\it structure constant}. More explicitly, it was proven   in \cite{KRV_DOZZ}:
\begin{align}
\label{3pointDOZZ}
 \Big\langle  &V_{\alpha_1}(z_1)  V_{\alpha_2}(z_2) V_{\alpha_3}(z_3)  \Big\rangle_{\hat\C,g,\mu_{\rm L}}^{\rm L}\\
  =&|z_1-z_3|^{2(\Delta_{\alpha_2}-\Delta_{\alpha_1}-\Delta_{\alpha_3})}|z_2-z_3|^{2(\Delta_{\alpha_1}-\Delta_{\alpha_2}-\Delta_{\alpha_3})}|z_1-z_2|^{2(\Delta_{\alpha_3}-\Delta^{}_{\alpha_1}-\Delta_{\alpha_2})}
\Big(\prod_{i=1}^3g(z_i)^{-\Delta_{\alpha_i}}\Big)\nonumber \\
&\times \frac{1}{2}C_{2b,\mu_{\rm L}}^{{\rm DOZZ}} (\alpha_1,\alpha_2,\alpha_3 ) \big(\frac{{\rm v}_{g}(\hat\C)}{{\det}'(\Delta_{g})}\big)^\hf e^{-6Q^2 S_{\rm L}^0(\hat\C,g,\omega)}  \nonumber
 \end{align} 
where the conformal weights are given by $\Delta^{\rm L}_{\alpha}:=\frac{\alpha}{2}(Q-\frac{\alpha}{2})$ (they differ from those of the $\mathbb{H}^3$-model), the function $\omega:\hat\C\to \R$ is defined by $e^\omega g= |z|_+^{-4}|dz|^2$, $S_{\rm L}^0(\hat\C,g,\omega)$ is the Liouville action given by \eqref{LiouvilleS0} and $C_{2b,\mu_{\rm L}}^{{\rm DOZZ}} (\alpha_1,\alpha_2,\alpha_3 ) $ is an explicit  function whose expression  can be found in   \cite[Subsection 1.2]{KRV_DOZZ}.

\subsection{Correspondence between $\mathbb{H}^3$-model and Liouville CFT}
The main result of this subsection is a correspondence between $\H^3$-WZW and Liouville theory. 
It expresses a mapping between the correlation functions of the $\mathbb{H}^3$-model and those of the Liouville CFT, but in the Liouville side one gets new insertions coming from the zeros $(y_\ell)_{\ell}$ of the meromorphic $1$-form $z\mapsto \Gamma_{\boldsymbol{\mu},{\bf z}}(z)dz$ on $\hat{\C}$. Let $r\in [ 1,m-1 ]$ be the smallest nonnegative integer such that    $\sum_\ell\mu_\ell z_\ell^{r}\not=0$. There exists $p=p(r)$ and distinct numbers $y_1,\dots,y_{p}\in \C$ and $n_1,\dots,n_{p}\in\N\setminus\{0\}$ such that $n_1+\dots+n_p+r-1=m-2$ and the following factorisation holds (see the proof of Theorem \ref{Liouvillesphere2} for details) 
\begin{equation}\label{polyndeg}
\forall z\in\C,\quad \sum_\ell  \frac{\mu_\ell}{z_\ell-z}=(-1)^{r}\Big(\sum_{\ell=1}^m\mu_\ell z_\ell^r\Big)\frac{\prod_{\ell=1}^p(y_\ell-z)^{n_\ell}}{\prod_{\ell=1}^m(z_\ell-z)}.
\end{equation}
The points $y_1,\dots,y_p$ are the zeros of $z\mapsto \Gamma_{\boldsymbol{\mu},{\bf z}}(z)dz$ in $\C$ with multiplicities 
$(n_1,\dots,n_p)$ and $y_{p+1}=\infty$ is also a zero with multiplicity $n_{p+1}=r-1$ iff $r\geq 2$. We denote by 
$m'$ the number of distincts zeros, i.e. $m'=p$ if $r=1$ ($z=\infty$ is not a zero) and $m'=p+1$ if $r\geq 2$ ($z=\infty$ is a zero).
We call these zeros $(y_\ell)_{\ell=1,\dots,m'}$
the \textbf{new insertions}, and remark that they depend on both $z_\ell$ and $\mu_\ell$. To avoid confusions, from now on 
we shall denote $\Delta^{\rm L}_{\alpha}=\frac{\alpha}{2}(Q-\frac{\alpha}{2})$ the Liouville conformal weight and $\Delta_{j}=-b^2j(j+1)$ the $\H^3$-WZW conformal weight. As before, we take $b\in(0,1)$.

 \begin{theorem}\label{Liouvillesphere2}
Let $(\hat{\C},g)$ be the Riemann sphere equipped with a metric $g=e^{\omega}g_0$ conformal to the round metric \eqref{def_g_0} 
and ${\bf z}=(z_1,\dots,z_m)$ distincts points in $\C$.  Assume the conditions \eqref{seib} hold and let ${\bf y}=(y_1,\dots,y_{m'})$ 
be the new insertions as defined just above, i.e. the zeros of the meromorphic $1$-form $z\mapsto \Gamma_{\boldsymbol{\mu},{\bf z}}(z)dz$ in $\hat{\C}$, with multiplicities $(n_\ell)_{\ell=1,\dots,m'}$.
Then, with $\delta_{V^0}$ the measure \eqref{deltaV^0}, one has in $\mc{D}'(\C^m)$
\[\Big \langle \prod_{\ell=1}^mV_{j_\ell,\mu_\ell}(z_\ell) \Big\rangle_{\Sigma,g}^{\mathbb{H}^3}
=\mu_{\rm L}^{-s_0}
\mc{F}_{{\bf z},\boldsymbol{j}}({\bf y})\, \Big\cjg \prod_{\ell=1}^{m} V_{\alpha_\ell}(z_\ell)\prod_{\ell=1}^{m'} 
V_{-\frac{n_\ell}{b}}(y_\ell)\Big\cjd^{\rm L}_ {\Sigma, g} \delta_{V^0}\]
where $\alpha_\ell=2b(j_\ell+1)+\frac{1}{b}$,  $\mu_{\rm L}:=\frac{ |\sum_{\ell=1}^m\mu_\ell z_\ell^r |^2}{\pi k}$ and the function $\mc{F}_{{\bf z},\boldsymbol{j}}$ is given by 
\begin{equation}\label{calculFdeg}
\begin{split}
\mc{F}_{{\bf z},\boldsymbol{j}}({\bf y})  &= \big(\frac{k-2}{32\pi}\big)^{\frac{1}{3}}\big(\frac{k}{\pi}\big)^{\frac{2}{3}} 
\Big(\frac{\det(\Delta_{g})}{{\rm v}_g(\Sigma)}\Big)^{-1} 4^{\Delta^{{\rm L}}_{-\frac{r-1}{b}}}   e^{-2\chi(Q^2-b^2) }
 e^{\frac{b^2-Q^2}{16\pi}S^0_{\rm L}(g_0,\omega)}
\prod_{\ell=1}^mg(z_\ell)^{\frac{1}{2}+\frac{1}{4b^2}}\prod_{\ell=1}^pg(y_\ell)^{\Delta^{\rm L}_{-n_\ell/b}} \\
& \times \prod_{\ell\not=\ell'=1}^m |z_\ell-z_{\ell'}|^{j_\ell+j_{\ell'}+2+\frac{1}{2b^2}}
\prod_{\ell=1}^m\prod_{\ell'=1}^p |z_\ell-y_{\ell'}|^{-2n_{\ell'}(j_\ell+1)-\frac{n_{\ell'}}{b^2}}\prod_{\ell\not=\ell'=1}^p|y_\ell-y_{\ell'}|^{\frac{n_\ell n_{\ell'}}{2b^2}}.
\end{split}
\end{equation}
Here we recall that $y_\ell$ depends on $\boldsymbol{\mu}$, $\chi=\log 2-1/2$ and $Q=b+b^{-1}$, the Liouville action $S^0_{\rm L}(g_0,\omega)$ is defined by  \eqref{LiouvilleS0} and $\Delta^{\rm L}_{\alpha}=\frac{\alpha}{2}(Q-\frac{\alpha}{2})$ is the Liouville conformal weight
\end{theorem}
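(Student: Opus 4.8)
The plan is to match the two probabilistic representations already at our disposal: the $\H^3$ correlation of Proposition~\ref{existcorrel}, which is the negative moment $\E[\|\Gamma_{\boldsymbol{\mu},{\bf z}}\|_{L^2(\theta_u)}^{-2s_0}]$, and the Liouville correlation \eqref{zidaneilamarque}, which is $\E[M^g_{2b}(X_g+u_{\boldsymbol{\alpha}},{\rm v}_g,\Sigma)^{-s(\boldsymbol{\alpha})/2b}]$. Working on the hyperplane $V^0=\{\sum_\ell\mu_\ell=0\}$ where both formulas live, I would first record the elementary factorisation \eqref{polyndeg}: clearing denominators in $\Gamma_{\boldsymbol{\mu},{\bf z}}(z)=\frac1\pi\sum_\ell\frac{\mu_\ell}{z_\ell-z}$ gives a rational function whose numerator has degree $m-1-r$, so the meromorphic $1$-form $\Gamma_{\boldsymbol{\mu},{\bf z}}\,dz$ has simple poles at the $z_\ell$ and zeros $y_\ell$ of total multiplicity $m-2$ (with $y_{p+1}=\infty$ of order $r-1$ when $r\geq2$), consistent with the Riemann--Roch count on $\hat\C$. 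I would then set $\boldsymbol{\alpha}=(\alpha_1,\dots,\alpha_m,-n_1/b,\dots,-n_{m'}/b)$ with insertions at $({\bf z},{\bf y})$ and check the exponent matching: since $\sum_\ell n_\ell=m-2$ and $\alpha_\ell=2b(j_\ell+1)+\frac1b$, a short computation gives $s(\boldsymbol{\alpha})=2bs_0$, so $s(\boldsymbol{\alpha})/2b=s_0$ and the two moments carry identical powers.

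The core step is to identify the two random measures up to a deterministic scalar. Using \eqref{polyndeg} to write $|\Gamma_{\boldsymbol{\mu},{\bf z}}(z)|^2=\frac1{\pi^2}\big|\sum_\ell\mu_\ell z_\ell^r\big|^2\,\prod_\ell|y_\ell-z|^{2n_\ell}\big/\prod_\ell|z_\ell-z|^2$ and recalling $M^g_{2b}(X_g+u_{\boldsymbol{\alpha}},{\rm v}_g,\dd z)=g(z)e^{2bu_{\boldsymbol{\alpha}}(z)}M^g_{2b}(X_g,\dd z)$, I would establish the pointwise density identity
\[
|\Gamma_{\boldsymbol{\mu},{\bf z}}(z)|^2\,e^{2u(z)}=C_0\,g(z)\,e^{2bu_{\boldsymbol{\alpha}}(z)},\qquad C_0=\frac{|\sum_\ell\mu_\ell z_\ell^r|^2}{\pi^2}\,e^{\kappa},
\]
for a $z$-independent constant. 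Writing $G_g(z,w)=-\log|z-w|+H_g(z,w)$ with $H_g$ smooth, the singular exponents match at each $z_\ell$ (both $|z-z_\ell|^{-4b^2(j_\ell+1)-2}$, using $2b\alpha_\ell=4b^2(j_\ell+1)+2$) and at each $y_\ell$ (both $|z-y_\ell|^{2n_\ell}$, from the degenerate weight $-n_\ell/b$), so the ratio of the two sides is smooth and positive. That it is constant I would prove by applying $\Delta_g$ to its logarithm and using $\Delta_{g,z}H_g(z,w)=-2\pi/{\rm v}_g(\Sigma)$, $\Delta_{g,z}\log g(z)=K_g(z)$ together with $bQ-b^2=1$ and the charge balance $\sum_\ell n_\ell=m-2$: the Dirac masses at the $z_\ell,y_\ell$ cancel, the $K_g$ terms cancel, and the leftover constant integrates to zero on the closed surface, so the logarithm is harmonic hence constant. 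This yields $\|\Gamma_{\boldsymbol{\mu},{\bf z}}\|_{L^2(\theta_u)}^2=C_0\,M^g_{2b}(X_g+u_{\boldsymbol{\alpha}},{\rm v}_g,\Sigma)$.

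Substituting this into Proposition~\ref{existcorrel} and inverting \eqref{zidaneilamarque} at $\mu_{\rm L}=1$ to express $\E[M^{-s_0}]$ through $\cjg\prod_\ell V_{\alpha_\ell}(z_\ell)\prod_\ell V_{-n_\ell/b}(y_\ell)\cjd^{\rm L}_{\Sigma,g}$, the two factors $\Gamma(s_0)/2b$ cancel, the determinant powers combine to $(\det\Delta_g/{\rm v}_g)^{-1}$, and $C_0^{-s_0}$ produces $(\tfrac k\pi)^{2/3}\mu_{\rm L}^{-s_0}e^{-s_0\kappa}$ after using $\mu_{\rm L}=|\sum_\ell\mu_\ell z_\ell^r|^2/(\pi k)$ and $(\tfrac\pi k)^{-2/3-s_0}(\tfrac k\pi)^{-s_0}=(\tfrac k\pi)^{2/3}$. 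This identifies
\[
\mc{F}_{{\bf z},\boldsymbol{j}}({\bf y})=\big(\tfrac{k-2}{32\pi}\big)^{1/3}\big(\tfrac k\pi\big)^{2/3}\Big(\tfrac{\det\Delta_g}{{\rm v}_g}\Big)^{-1}\exp\!\big(B_{\H^3}({\bf z},\boldsymbol{j})-B_{\rm L}(({\bf z},{\bf y}),\boldsymbol{\alpha})-s_0\kappa\big),
\]
and the whole identity holds in $\mc{D}'(\C^m)$ with the common factor $\delta_{V^0}$.

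The main obstacle, and the bulk of the remaining labour, is to reduce this deterministic exponent to the closed form \eqref{calculFdeg}. I would organise it by first treating the round metric $g=g_0$, where $K_{g_0}$ is constant so every $\int G_gK_g$ term drops (because $R_g1=0$) and $G_{g_0}(z,w)=-\log|z-w|+\tfrac14\log g_0(z)+\tfrac14\log g_0(w)+\chi$ is explicit. Converting all Green's functions in $B_{\H^3}$, $B_{\rm L}$ and $\kappa$ to logarithms then produces the products $\prod_{\ell\neq\ell'}|z_\ell-z_{\ell'}|$, $\prod_{\ell,\ell'}|z_\ell-y_{\ell'}|$, $\prod_{\ell\neq\ell'}|y_\ell-y_{\ell'}|$ with the stated exponents, while the self-energies $W_{g_0}$ and the diagonal regular parts collapse into the powers $g_0(z_\ell)^{1/2+1/(4b^2)}$, $g_0(y_\ell)^{\Delta^{\rm L}_{-n_\ell/b}}$ and the numerical constants $4^{\Delta^{\rm L}_{-(r-1)/b}}$, $e^{-2\chi(Q^2-b^2)}$ (the first coming from the value $g_0(\infty)=4$ at the zero $y_{p+1}=\infty$, the second from the $\chi$-constants in $G_{g_0}$). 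I would then pass from $g_0$ to $g=e^\omega g_0$ using the Weyl anomalies — Proposition~\ref{confweight} on the $\H^3$ side and \eqref{LCFT_anomaly} on the Liouville side: the required anomaly coefficient $\frac{{\bf c}(k)-(1+6Q^2)}{96\pi}=\frac{2+6(b^2-Q^2)}{96\pi}$ is reproduced by the explicit factor $e^{\frac{b^2-Q^2}{16\pi}S^0_{\rm L}(g_0,\omega)}$ together with the Polyakov anomaly $\frac1{48\pi}S^0_{\rm L}(g_0,\omega)$ of the determinant prefactor via \eqref{detpolyakov}, while the identity $\triangle_{j_\ell}-\Delta^{\rm L}_{\alpha_\ell}=-\tfrac12-\tfrac1{4b^2}$ reproduces exactly the conformal factors $g(z_\ell)^{1/2+1/(4b^2)}$, the $y_\ell$ factors coming from $\Delta^{\rm L}_{-n_\ell/b}$. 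The delicate points I expect here are the explicit evaluation of the constant $\kappa$ (where the behaviour at $\infty$ in the round metric must be handled) and the bookkeeping of the degenerate insertions $V_{-n_\ell/b}(y_\ell)$, whose positions $y_\ell(\boldsymbol{\mu})$ themselves depend on $\boldsymbol{\mu}$; this is where the computation is most error-prone.
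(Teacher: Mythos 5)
Your proposal is correct and follows the same overall architecture as the paper's proof: reduce both sides to their probabilistic representations (Proposition \ref{existcorrel} and \eqref{zidaneilamarque}), use the factorisation \eqref{polyndeg} to turn $|\Gamma_{\boldsymbol{\mu},{\bf z}}|^2$ into a product of $|z-z_\ell|^{-2}$ and $|z-y_\ell|^{2n_\ell}$ factors, match exponents via $s(\boldsymbol{\alpha})=2bs_0$, and obtain \eqref{calculFdeg} by computing in the round metric and transferring to $g=e^{\omega}g_0$ with the Weyl anomalies. The one step where you genuinely diverge is the identification of the two GMC densities: the paper does it by direct substitution of the explicit round Green function \eqref{greenS} into $u$, $B_{\H^3}$ and the Liouville data, so that $\kappa=\iota$ is read off; you instead prove the pointwise identity $|\Gamma_{\boldsymbol{\mu},{\bf z}}|^2e^{2u}=C_0\,g\,e^{2bu_{\boldsymbol{\alpha}}}$ for a general conformal metric by a harmonicity argument (match Dirac masses and curvature terms, then conclude the log-ratio is harmonic, hence constant). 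This is precisely the mechanism the paper itself deploys in the higher-genus case (Lemma \ref{expression_norm_Gamma}), so your route is the more general, metric-independent one; what it costs is care at $z=\infty$, since in the chart $\C$ both $\log g(z)\sim-4\log|z|$ and $|\Gamma_{\boldsymbol{\mu},{\bf z}}(z)|^2\sim c|z|^{-2(r+1)}$ contribute Dirac masses at $\infty$ which only cancel against the degenerate insertion of weight $-(r-1)/b$ placed there — the paper's sphere proof sidesteps this by realising the insertion at $\infty$ as the limit $x_{m+p+1}\to\infty$ of an $(m+p+1)$-point Liouville correlation, which is also the actual source of the factor $4^{\Delta^{\rm L}_{-(r-1)/b}}$: it arises from $\lim_{z\to\infty}|z|^4g_0(z)=4$, not from ``$g_0(\infty)=4$'' as you wrote (in fact $g_0(z)\to 0$ at infinity). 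With that point repaired, your identification of $\mc{F}_{{\bf z},\boldsymbol{j}}$ as the exponential of $B_{\H^3}-B_{\rm L}-s_0\kappa$ times the determinant and $k$-dependent prefactors, and its reduction to the closed form \eqref{calculFdeg}, is the same bookkeeping as in the paper.
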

\begin{proof} The starting point is the expression for correlation functions in Proposition \ref{existcorrel}.
The Weyl covariance  (see Prop. \ref{confweight} and \cite[Prop 4.6]{GRVIHES} for Liouville) allows us to choose the metric as we please, and we will work with $g$ the round metric on the Riemann sphere as it simplifies a few aspects of this expression. First, with $g_0$ the round metric \eqref{def_g_0}, the scalar curvature is  $K_{g_0}=2$ so that
\[\int_{\hat{\C}} K_{g_0}X_{g_0}\dd {\rm v}_{g_0}=0\quad \text{ and }\quad \int_{\hat{\C}} K_{g_0}(z')G_{g_0}(z,z') \dd {\rm v}_{g_0}(z')=0.\]
Second, recall now the expression for the Green function $G_{g_0}$ in the round metric
\begin{equation}\label{greenS}
G_{g_0}(z,z')=-\log|z-z'|-\frac{1}{4}\log g_0(z)-\frac{1}{4}\log g_0(z')+\chi
\end{equation}
where we have identified $g_0=g_0(z)|dz|^2$, and with $\chi=\log 2-1/2$. In particular, the Robin mass is given by $W_g=\chi$.  The expression for correlation functions in Proposition \ref{existcorrel} thus becomes
\begin{align*}
\big(\frac{k-2}{32\pi}\big)^{\frac{1}{3}}(\frac{\pi}{k})^{-\frac{2}{3}-s_0}\Big(\frac{\det(\Delta_{g_0})}{{\rm v}_{g_0}(\hat{\C})}\Big)^{-\frac{3}{2}}e^{B_{\H^3}({\bf z},\boldsymbol{j})}
\frac{1}{2b}\Gamma(s_0) \E[ \|\Gamma_{\boldsymbol{\mu},{\bf z}}\|^{-2s_0}_{L^2(\theta_{u})}] \,\delta_{V^0}
\end{align*}
with  $B_{\H^3}({\bf z},\boldsymbol{j})= \sum_{\ell=1}^m 2b^2(j_\ell+1)^2\chi+ \sum_{\ell\not=\ell'=1}^m 2b^2(j_\ell+1)(j_{\ell'}+1)G_{g_0}(z_\ell,z_{\ell'})$ and $u(z):=\sum_{\ell=1}^m 2b^2(j_\ell+1)G_{g_0}(z,z_\ell)$. Using \eqref{greenS}, we rewrite $u$ as 
\[u (z)=\sum_{\ell=1}^m 2b^2(j_\ell+1)\log\frac{1}{|z-z_\ell|}-\sum_{\ell=1}^m 2b^2(j_\ell+1)(\tfrac{1}{4}\log g_0(z_\ell)-\chi)-\tfrac{1}{4}\log g_0(z)\sum_{\ell=1}^m 2b^2(j_\ell+1).\]
 We rewrite the correlations as 
 \[ \begin{split}
\Big \langle \prod_{\ell=1}^mV_{j_\ell,\mu_\ell}(z_\ell) \Big\rangle_{\hat{\C},g_0}^{\mathbb{H}^3}
=&
\big(\frac{k-2}{32\pi}\big)^{\frac{1}{3}}(\frac{\pi}{k})^{-\frac{2}{3}-s_0}\Big(\frac{\det(\Delta_{g_0})}{{\rm v}_{g_0}(\hat{\C})}\Big)^{-\frac{3}{2}}e^{B_{\H^3}({\bf z},\boldsymbol{j})}
\frac{\Gamma(s_0)}{2b}\E[ \|\Gamma_{\boldsymbol{\mu},{\bf z}}\|^{-2s_0}_{L^2(\theta_{\tilde{u}})}] 
\\
& \times e^{-\chi (\sum_{\ell=1}^m 2b(j_\ell+1))2bs_0}\Big(\prod_{\ell=1}^mg_0(z_\ell)^{\frac{b(j_\ell+1)}{2}}\Big)^{2bs_0}\delta_{V^0}
\end{split}\]
 with $\tilde u(z):=\sum_{\ell=1}^m 2b^2(j_\ell+1)\log\frac{1}{|z-z_\ell|} -\tfrac{1}{4}\log g_0(z)\sum_{\ell=1}^m 2b^2(j_\ell+1)$. We proceed similarly with the term $B_{\H^3}({\bf z},\boldsymbol{j})$, replacing the Green functions involved in this expression by using \eqref{greenS}. Combining, the above expression becomes (recall that $s_0+1=\sum_{\ell}(j_{\ell}+1)$)
 \begin{equation}\label{H3S}
 \begin{split}
 \Big\langle   \prod_{\ell=1}^mV_{j_\ell,\mu_\ell}(z_\ell) \Big\rangle_{\hat{\C},g}^{\mathbb{H}^3}
 =&
\big(\frac{k-2}{32\pi}\big)^{\frac{1}{3}}(\frac{\pi}{k})^{-\frac{2}{3}-s_0}\Big(\frac{\det(\Delta_{g_0})}{{\rm v}_{g_0}(\hat{\C})}\Big)^{-\frac{3}{2}} 
\frac{\Gamma(s_0)}{2b}  e^{-2b^2\chi (s_0^2-1) }\Big(\prod_{\ell=1}^mg_0(z_\ell)^{-  \Delta_{j_\ell}}\Big) 
 \\
&\times \prod_{\ell\not=\ell'} |z_\ell-z_{\ell'}|^{-2b^2(j_\ell+1)(j_{\ell'}+1)}\E[ \|\Gamma_{\boldsymbol{\mu},{\bf z}}\|^{-2s_0}_{L^2(\theta_{\tilde{u}})}] \delta_{V^0}
\end{split}\end{equation}
where $\Delta_{j_\ell}$ is given by \eqref{CWH3}. For $(x_i)_{i=1,\dots,m+p}$ disjoint points in $\C$ and weights $\alpha_i$, 
the same computation can be done on Liouville correlation functions and this yields (in the round metric)
\begin{equation} \label{Lioucorrcorr}
\begin{split}
\Big\cjg \prod_{i=1}^{m+p} V_{\alpha_i}(x_i) \Big\cjd^{\rm L}_ {\hat{\C}, g_0,\mu_{\rm L}}=& 
 \big(\frac{\det(\Delta_{g_0})}{{\rm v}_{g_0}(\hat{\C})}\big)^{-\frac{1}{2}} e^{ -\frac{\chi}{2} ( s(\boldsymbol{\alpha})^2-4Q^2) } 
\Big(\prod_{i=1}^{m+p}g_0(x_i)^{ -\Delta^{\rm L}_{\alpha_i}}\Big)\\
& \times \mu_{{\rm L}}^{-s(\boldsymbol{\alpha})} \prod_{i\not= i'}|x_i-x_{i'}|^{-\frac{\alpha_i\alpha_{i'}}{2}} \frac{\Gamma( \frac{s(\boldsymbol{\alpha})}{2b})}{2b}
   \E[M^{g_0}_{2b}(X_{g_0}+\kappa,\hat{\C})^{-\frac{ s(\boldsymbol{\alpha})}{2b}}]
   \end{split}
   \end{equation}
with\footnote{Here, we recall that $M_{2b}^{g_0}(X_{g_0},\dd z)=M_{2b}^g(X_g,\sigma,\dd z)$ uses the Lebesgue measure  $\sigma(\dd z)=\dd z$ on $\C$ as background measure, which is the reason of the term $\frac{1}{2b}\log g_0(z)$ in $\kappa(z)$.} $\kappa(z) :=(\frac{1}{2b}-\frac{1}{4}\sum_{i=1}^{m+p}\alpha_i)\log g_0(z)+\sum_{i=1}^{m+p}\alpha_i\log \frac{1}{|z-x_i|}$, $ s(\boldsymbol{\alpha}):=\sum_{i=1}^{m+p}\alpha_i-2Q$ and the conformal weight in Liouville CFT given by $\Delta_{\alpha}^{\rm L}=\frac{\alpha}{2}(Q-\frac{\alpha}{2})$. What we need now  is to get an expression for the Liouville correlations with a point at $\infty$. For this we use formula \eqref{Lioucorrcorr} applied with $m+p+1$ insertion points and we have to send the last one, say $x_{m+p+1}$, to $\infty$. This manipulation is straightforward and yields  
\[\begin{split} 
\Big\cjg \prod_{i=1}^{m+p+1} V_{\alpha_i}(x_i) \Big\cjd^{\rm L}_ {\hat{\C}, g_0,\mu_{\rm L}}
=&
 \big(\frac{{\rm v}_{g_0}(\hat{\C})}{\det(\Delta_{g_0})}\big)^{-\frac{1}{2}} 4^{-\Delta^{\rm L}_{\alpha_{m+p+1}}}e^{ -\frac{\chi}{2}( s(\boldsymbol{\alpha})^2-4Q^2) } 
 \prod_{1\leq i\not= i'\leq m+p}|x_i-x_{i'}|^{-\frac{\alpha_i\alpha_{i'}}{2}} 
\\
&\times  \mu_{{\rm L}}^{-s(\boldsymbol{\alpha})}\frac{\Gamma( \frac{s(\boldsymbol{\alpha})}{2b})}{2b} \Big(\prod_{i=1}^{m+p}g_0(x_i)^{ -\Delta^{\rm L}_{\alpha_i}}\Big)   \E[M^{g_0}_{2b}(X_{g_0}+\kappa,\Sigma)^{-\frac{ s(\boldsymbol{\alpha})}{2b}}]
\end{split}\]
where the new parameters are now 
\[\kappa(z) :=\sum_{i=1}^{m+p}\alpha_i\log\frac{1}{|z-x_i|}+
(\frac{1}{2b}-\frac{1}{4}\sum_{i=1}^{m+p+1}\alpha_i)\log g_0(z) \quad \textrm{ and } s(\boldsymbol{\alpha}):=\sum_{i=1}^{m+p+1}\alpha_i-2Q.\] 
The key remark here is the summation index in the function $\kappa$ for the terms  $\log\frac{1}{|z-x_i|}$, which only ranges from $1$ to $m+p$ whereas the summation in $s(\boldsymbol{\alpha})$ ranges from $1$ to   $m+p+1$. This will give us some room to adjust the parameters with those of the $\mathbb{H}^3$ model. This way, we have obtained two reduced expressions for the correlations of both models.

\medskip
The next step is to rewrite the function $\Gamma_{\boldsymbol{\mu},{\bf z}}(z)=\frac{1 }{\pi}  \sum_\ell  \frac{\mu_\ell}{z_\ell-z}$, appearing  in the expression of the correlation of the $\mathbb{H}^3$-model,  under the form \eqref{polyndeg}, if $\boldsymbol{\mu}\not=0$. To land on such an expression, we first write $\Gamma_{\boldsymbol{\mu},{\bf z}}$ as 
$$\sum_{\ell=1}^m  \frac{\mu_\ell}{z_\ell-z}= \frac{\sum_{\ell=1}^m \mu_\ell\prod_{\ell'\not=\ell }(z_{\ell'}-z)}{\prod_{\ell=1}^m(z_\ell-z)}.$$
Now we want to identify the polynomial in the numerator. Using a Vandermonde type argument, there exists a smallest nonnegative integer $r\in [ 1,m-1 ]$ such that $\sum_\ell\mu_\ell z_\ell^{r}\not=0$, since  the insertion points $z_1,\dots, z_m$ are assumed to be distinct. 
By some elementary  algebra based on the Newton identities for polynomials, one can see that the numerator is a polynomial in the variable $z$ of degree exactly $m-1-r$ with coefficient for the (highest order) monomial $ z^{m-1-r}$ given  by $(-1)^{m-1}(\sum_{\ell=1}^m\mu_\ell z_\ell^r)$. Let $y_1,\dots,y_{p}\in \C$ be the distinct zeros of this polynomial, and    $n_1,\dots,n_p\in\N\setminus\{0\}$ their respective multiplicities, thus  satisfying $n_1+\dots+n_p=m-r-1$. This leads to the formula \eqref{polyndeg}, and therefore
\[\Big|\sum_{\ell=1}^m  \frac{\mu_\ell}{z_\ell-z}\Big|^2
= 
 \Big|\sum_{\ell=1}^m\mu_\ell z_\ell^r\Big|^2 \frac{\prod_{\ell=1}^p|z-y_\ell|^{2n_\ell}}{\prod_{\ell=1}^m|z-z_\ell|^{2}}.\]
Plugging this relation into \eqref{H3S}, we get 
\begin{equation}\label{rayas}
\begin{split} 
\Big\langle   \prod_{\ell=1}^mV_{j_\ell}(z_\ell) \Big\rangle_{\Sigma,g_0}^{\mathbb{H}^3}
 =&\big(\frac{k-2}{32\pi}\big)^{\frac{1}{3}}(\frac{\pi}{k})^{-\frac{2}{3}}\Big(\frac{\det(\Delta_{g_0})}{{\rm v}_{g_0}(\hat{\C})}\Big)^{-\frac{3}{2}} 
\frac{\Gamma(s_0)}{2b}  e^{-2b^2\chi (s_0^2-1) }\Big(\prod_{\ell=1}^mg_0(z_\ell)^{-  \Delta_{j_\ell}}\Big) 
 \\
& \times \prod_{\ell\not=\ell'} |z_\ell-z_{\ell'}|^{-2b^2(j_\ell+1)(j_{\ell'}+1)}\mu_{\rm L}^{-s_0}  \E[M^{g_0}_{2b}(X_{g_0} +\iota,\Sigma)^{-s_0}] \delta_{V^0}
\end{split}\end{equation}
with $\mu_{\rm L}:=\frac{ |\sum_{\ell=1}^m\mu_\ell z_\ell^r|^2}{\pi k}$ and the function 
\[\iota(z):=\sum_{i=1}^{m+p}\alpha_i\log\frac{1}{|z-x_i|} +\Big(\frac{1}{2b} -\frac{1}{4} \sum_{i=1}^{m+p+1} \alpha_i\Big)\log g_0(z)\]
with $\alpha_i=2b(j_\ell+1)+\frac{1}{b}$ and $x_i=z_\ell$ for $i=\ell\in\{1,\dots,m\}$, and $\alpha_i=-\frac{n_\ell}{b}$ and $x_i=y_\ell$ for $i=\ell+m$ and $\ell\in\{1,\dots,p\}$, and finally $\alpha_{m+p+1}=-\frac{r-1}{b}$.  Notice that, with this choice of $x_i,\alpha_i$, 
\[s_0=s(\boldsymbol{\alpha})/2b, \qquad \kappa(z)=\iota(z).\]
The expression \eqref{rayas} then corresponds to the Liouville correlation functions, up to a prefactor: 
\[
\Big\langle   \prod_{\ell=1}^mV_{j_\ell,\mu_\ell}(z_\ell) \Big\rangle_{\Sigma,g_0}^{\mathbb{H}^3}
 =
 \mc{F}_{{\bf z},\boldsymbol{j}}({\bf y}) \, \Big\cjg \prod_{i=1}^{m+p+1} V_{\alpha_i}(x_i) \Big\cjd^{\rm L}_ {\Sigma, g_0} \delta_{V^0}
\]
where $\mu_{\rm L}=\frac{ |\sum_{\ell=1}^m\mu_\ell z_\ell^k |^2}{\pi k}$, the $x_i,\alpha_i$'s are defined just above and the overall factor $\mc{F}_{{\bf z},\boldsymbol{j}}$ (for the case $g=g_0$) is given by
\[\begin{split}   
 \mc{F}_{{\bf z},\boldsymbol{j}}({\bf y}) 
  =& \big(\frac{k-2}{32\pi}\big)^{\frac{1}{3}}\big(\frac{k}{\pi}\big)^{\frac{2}{3}} 
\Big(\frac{\det (\Delta_{g_0})}{{\rm v}_{g_0}(\hat{\C})}\Big)^{-1} e^{\frac{\chi}{2}( s(\boldsymbol{\alpha})^2-4Q^2)-2b^2\chi (s_0^2-1)}  
 4^{\Delta^{\rm L}_{\alpha_{m+p+1}}}\prod_{i\not= i'=1}^{m+p}|x_i-x_{i'}|^{\frac{\alpha_i\alpha_{i'}}{2}}\\
& \times \mu_{\rm L}^{-s_0}\Big(\prod_{i=1}^{m+p}g_0(x_i)^{ \Delta^{\rm L}_{\alpha_i}}\Big)  \Big(\prod_{\ell=1}^mg_0(z_\ell)^{- \Delta_{j_\ell}}\Big) 
\prod_{\ell\not=\ell'} |z_\ell-z_{\ell'}|^{-2b^2(j_\ell+1)(j_{\ell'}+1)}
\end{split}\]
For $i=\ell\in\{1,\dots,m\}$, we have $\Delta_{\alpha_i}^{\rm L}=\Delta_{j_\ell}+\frac{1}{2}+\frac{1}{4b^2}$. Using the conformal anomaly relations \eqref{detpolyakov}, \eqref{LCFT_anomaly} and \eqref{e:anomaly_WZW}, we come back to the expression for the metric $g=e^{\omega}g_0$ and we land on the formula \eqref{calculFdeg}.
\end{proof}

\subsection{Structure constants of the $\mathbb{H}^3$-model}
 Now we study   the 3 point correlation functions  of the $\mathbb{H}^3$-model $\langle \prod_{\ell=1}^3V_{j_\ell,\mu_\ell}(z_\ell) \rangle_{\Sigma,g}^{\mathbb{H}^3}$ on the Riemann sphere $\hat{\C}$, equipped with a smooth metric of the form $g=g(z)|dz|^2$ on the sphere. 
 Let us also write $g_0=g_0(z)|dz|^2$ for  the round metric \eqref{def_g_0}. Proposition \ref{confweight}  entails  that the  correlation functions 
are {\it conformally covariant}. More precisely, if $z_1, \cdots, z_{m}$ are  distinct points in $\hat \C$  then for a M\"obius map $\psi(z)= \frac{az+b}{cz+d}$ (with $a,b,c,d \in \C$ and $ad-bc=1$) 
 \begin{equation}\label{KPZformula}
\Big\langle \prod_{\ell=1}^mV_{j_\ell,\mu_\ell}(\psi(z_\ell)) \Big\rangle_{\hat{\C},g_0}^{\mathbb{H}^3}=  \prod_{\ell=1}^{m} \Big(\frac{|\psi'(z_\ell)|^2g_0(\psi(z_\ell))}{g_0(z_\ell)}\Big)^{-   \triangle_{j_\ell}}     \Big\langle \prod_{\ell=1}^mV_{j_\ell,\mu_\ell}(z_\ell) \Big\rangle_{\hat{\C},g_0}^{\mathbb{H}^3}.
\end{equation}  
Combining with the Weyl anomaly in Proposition \ref{confweight} , the dependence of the  $3$-point correlation functions on the metric or the points $z_\ell$  is uniquely determined by the conformal symmetries
\begin{multline}\label{3ptsym}
 \Big\langle  V^{g}_{j_1,\mu_1}(z_1) V^{g}_{j_2,\mu_2}(z_2) V^{g}_{j_3,\mu_3}(z_3) \Big\rangle_{\hat{\C},g}^{\mathbb{H}^3} = e^{ -\frac{{\bf c}(k)}{96\pi}S^0_{\rm L}(g_0,g)}\Big(\prod_{\ell=1}^3g(z_\ell)^{- \triangle_{j_\ell}}\Big)
 \\
 |z_1-z_3|^{2(\triangle_{j_2}-\triangle_{j_1}-\triangle_{j_3})}|z_2-z_3|^{2(\triangle_{j_1}-\triangle_{j_2}-\triangle_{j_3})}|z_1-z_2|^{2(\triangle_{j_3}-\triangle_{j_1}-\triangle_{j_2})}C^{\mathbb{H}^3}(\boldsymbol{j},\boldsymbol{\mu})\delta_{V^0}
\end{multline}
for some constant $C^{\mathbb{H}^3}(\boldsymbol{j},\boldsymbol{\mu})$, the so-called structure constant,  with  $\boldsymbol{j}$ and $\boldsymbol{\mu}$ triplets collecting respectively the $j_\ell$'s and the $\mu_\ell$'s.
We will find an expression for this quantity using Theorem \ref{Liouvillesphere2} under the condition \eqref{seib}. For this, we use the hypergeometric functions ${_2}F_{1}$, which can be extended holomorphically on $\mathbb{C} \setminus  (1,\infty) $,
 \begin{equation}\label{Fpmdef}
F_{-}(y)= {}_2F_1(\alpha,\beta,\gamma,y), \quad F_{+}(y)= y^{1-{ \gamma}} {}_2F_1(1+\alpha-\gamma,1+\beta-\gamma,2-\gamma,y)
\end{equation}
 where $\alpha,\beta,\gamma$ are given by
\begin{align}\label{defabcfirst}
\alpha&= 2+j_1+j_2+j_3,  \quad \beta= j_1+j_2-j_3+1,  \quad \gamma=2(j_1+1),
\end{align}
and the coefficient $ \mathcal{A}  (\alpha_1,\alpha_2,\alpha_3)$ defined by (with $\bar\alpha=\alpha_1+\alpha_2+\alpha_3$ and $l(x)=\Gamma(x)/\Gamma(1-x)$)
$$ \mathcal{A}  (\alpha_1,\alpha_2,\alpha_3)=\frac{l(-b^{-2})  l(\alpha_1/b) l(\alpha_1/b-b^{-2} )  l(\frac{1}{2b} (\bar{\alpha}-2\alpha_1-1/b) )   }{l( \frac{1}{2b} (\bar{\alpha}-1/b - 2Q)  ) l( \frac{1}{2b} (\bar{\alpha}-2\alpha_3-1/b ))  l( \frac{1}{2b} (\bar{\alpha}-2\alpha_2-1/b )) }$$
with  $\alpha_\ell$   given by $\alpha_\ell=2b(j_\ell+1)+\frac{1}{b}$ for $\ell=1,2,3$. 

Before proceeding to the statement of the result, let us stress that the computation of the structure constants is obvious in the case when all the $\mu_\ell$'s vanish: indeed, this is then a simple GFF computation. Thus the interesting situation occurs when at least one $\mu_\ell$ does not vanish and, up to permuting the $\mu_\ell$'s, we can assume that this is $\mu_3$.

\begin{theorem}Assume $\mu_3\not=0$. The structure constants of the $\mathbb{H}^3$-model are given by
\begin{multline*}
C^{\mathbb{H}^3}(\boldsymbol{j},\boldsymbol{\mu})  =     
      \\
 D(g_0)  
   \Big(\frac{\pi k}{|\mu_3|^2}\Big)^{2+j_1+j_2+j_3}  C_{2b,\mu_{\rm L}=1}^{{\rm DOZZ}} (\alpha_1-\tfrac{1}{b},\alpha_2,\alpha_3 ) \Big(|F_-(-\tfrac{\mu_1}{\mu_3})|^2 - \frac{  l(\frac{4}{\gamma^2})}{(\pi     l(\frac{\gamma^2}{4})  )^{\frac{4}{\gamma^2} }  } 
 \mathcal{A}  (\alpha_1,\alpha_2,\alpha_3)|F_+(-\tfrac{\mu_1}{\mu_3})|^2\Big)
\end{multline*}
where  $D(g_0)$ is the  constant  
\begin{equation}
D(g_0):= \frac{1}{2} \big(\frac{k-2}{32\pi}\big)^{\frac{1}{3}}\big(\frac{k}{\pi}\big)^{\frac{2}{3}} 
\Big(\frac{\det (\Delta_{g_0})}{{\rm v}_{g_0}(\Sigma)}\Big)^{-3/2}e^{-2\chi(Q^2-b^2)}4^{\frac{1}{2}+\frac{1}{4b^2}} e^{-6Q^2 S_{\rm L}^0(\hat\C,g,\omega_0)}   
\end{equation}
with  the function $\omega_0:\hat{\C}\to \R$ defined by $e^{\omega_0} g_0= |z|_+^{-4}|dz|^2$ and $S_{\rm L}^0(\hat\C,g,\omega)$   the Liouville action \eqref{LiouvilleS0}.
\end{theorem}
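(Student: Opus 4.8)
The strategy is to specialise the $\H^3$--Liouville correspondence of Theorem~\ref{Liouvillesphere2} to $m=3$ and then evaluate the resulting Liouville correlation, which carries a single degenerate insertion, by the BPZ/hypergeometric method. Since $\mu_3\neq0$, for generic (hence, by $z$-independence of the structure constant, for all) configurations one has $r=1$, so the count $n_1+\dots+n_p+(r-1)=m-2=1$ forces exactly one new insertion $y_1$ of multiplicity $n_1=1$ and degenerate weight $-1/b$. Noting that $s_0=-1+\sum_\ell(j_\ell+1)=2+j_1+j_2+j_3$, Theorem~\ref{Liouvillesphere2} gives
\[
\Big\langle \prod_{\ell=1}^3 V_{j_\ell,\mu_\ell}(z_\ell)\Big\rangle^{\H^3}_{\hat{\C},g}
=\mu_{\rm L}^{-s_0}\,\mc{F}_{{\bf z},\boldsymbol{j}}(y_1)\,
\Big\langle V_{\alpha_1}(z_1)V_{\alpha_2}(z_2)V_{\alpha_3}(z_3)\,V_{-1/b}(y_1)\Big\rangle^{\rm L}_{\hat{\C},g}\,\delta_{V^0},
\]
with $\alpha_\ell=2b(j_\ell+1)+\tfrac1b$. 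Using the conformal covariance \eqref{KPZformula} on the $\H^3$ side and the Möbius covariance of Liouville, I would place $(z_1,z_2,z_3)$ at $(0,1,\infty)$; a direct solution of $\Gamma_{\boldsymbol\mu,{\bf z}}=0$ then shows that, on $V^0$, the degenerate point sits at the cross-ratio $x=y_1=-\mu_1/\mu_3$, which is precisely the argument of the blocks $F_\pm$ in \eqref{Fpmdef}.

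The core step is to compute the degenerate Liouville four-point function. Because $\gamma_{\rm GRV}=2b$, the weight $-1/b=-2/\gamma_{\rm GRV}$ is a level-two degenerate weight, so $x\mapsto \langle V_{\alpha_1}(0)V_{\alpha_2}(1)V_{\alpha_3}(\infty)V_{-1/b}(x)\rangle^{\rm L}$ obeys the second-order BPZ equation proved probabilistically in \cite{GKRV20_bootstrap}. After removing the KPZ prefactor $x^{\bullet}(1-x)^{\bullet}$ this ODE becomes the Gauss hypergeometric equation with exponents $\alpha,\beta,\gamma$ as in \eqref{defabcfirst}, whose two solutions are $F_-$ and $F_+$, associated with the fusion channels $\alpha_1\mapsto\alpha_1\mp\tfrac1b$ (indeed $\alpha_1-\tfrac1b=2b(j_1+1)$). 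Single-valuedness of the probabilistic correlation, together with the bootstrap analysis of \cite{GKRV20_bootstrap}, selects the diagonal combination
\[
\langle\,\cdots V_{-1/b}(x)\rangle^{\rm L}
= P(x)\Big(C^{\rm DOZZ}(\alpha_1-\tfrac1b,\alpha_2,\alpha_3)\,|F_-(x)|^2
+C^{\rm DOZZ}(\alpha_1+\tfrac1b,\alpha_2,\alpha_3)\,|F_+(x)|^2\Big),
\]
where $P(x)$ collects the prefactor and the $\mu_{\rm L}=1$ DOZZ constants \eqref{3pointDOZZ} label the two channels.

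Factoring out $C^{\rm DOZZ}(\alpha_1-\tfrac1b,\alpha_2,\alpha_3)$, the relative coefficient of $|F_+|^2$ is the ratio $C^{\rm DOZZ}(\alpha_1+\tfrac1b,\cdot)/C^{\rm DOZZ}(\alpha_1-\tfrac1b,\cdot)$. I would evaluate this by applying the $\tfrac1b$-shift functional equation satisfied by the DOZZ formula \eqref{3pointDOZZ} (passing from $\alpha_1-\tfrac1b$ to $\alpha_1+\tfrac1b$); the double-Gamma factors telescope into products of $l(x)=\Gamma(x)/\Gamma(1-x)$, reproducing exactly $-\,l(4/\gamma^2)\,(\pi\, l(\gamma^2/4))^{-4/\gamma^2}\,\mathcal{A}(\alpha_1,\alpha_2,\alpha_3)$. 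This identifies the bracket $|F_-|^2-\tfrac{l(4/\gamma^2)}{(\pi l(\gamma^2/4))^{4/\gamma^2}}\mathcal{A}\,|F_+|^2$ appearing in the statement.

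It then remains to collect the deterministic factors. Matching the assembled expression against the conformal normal form \eqref{3ptsym} and dividing out the universal $z_\ell$-dependence and $\delta_{V^0}$ isolates $C^{\H^3}(\boldsymbol j,\boldsymbol\mu)$. The scalars from $\mc{F}_{{\bf z},\boldsymbol j}$ in \eqref{calculFdeg}, the determinant/volume ratios, the Weyl anomalies \eqref{detpolyakov}, \eqref{LCFT_anomaly}, \eqref{e:anomaly_WZW} relating $g$ to the round metric \eqref{def_g_0}, and the Jacobians of the position-fixing map combine into $D(g_0)$, while the homogeneous factor $\mu_{\rm L}^{-s_0}$ with $s_0=2+j_1+j_2+j_3$ yields $(\pi k/|\mu_3|^2)^{2+j_1+j_2+j_3}$ after normalising by $\mu_3$. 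The main obstacle is the second and third steps: realising the two hypergeometric functions as the conformal blocks of the probabilistic BPZ equation with the correct DOZZ normalisations, and carrying out the $1/b$-shift computation of their ratio through the special-function identities. Both require care with analytic continuation, since $-1/b$ and $\alpha_1+1/b$ lie outside the Seiberg window \eqref{seiberg} and the corresponding degenerate correlations are defined only by continuation from the probabilistic regime.
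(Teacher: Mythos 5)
Your proposal is correct and follows essentially the same route as the paper's proof: specialise the $\H^3$--Liouville correspondence of Theorem \ref{Liouvillesphere2} to $m=3$ (so $r=1$ and a single degenerate insertion of weight $-\tfrac{1}{b}$ located at $-\mu_1/\mu_3$), expand the degenerate Liouville four-point function in the hypergeometric blocks $F_\pm$ coming from the BPZ equation, convert the $|F_+|^2$ coefficient using the DOZZ $\tfrac{1}{b}$-shift relation to produce the $\mathcal{A}(\alpha_1,\alpha_2,\alpha_3)$ factor, and collect the remaining deterministic factors into $D(g_0)$ and $\big(\pi k/|\mu_3|^2\big)^{2+j_1+j_2+j_3}$. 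The only cosmetic differences are that the paper realises the third insertion at infinity through an explicit limit $z_3=z\to\infty$ of the correspondence at finite points (with $y(z)=-\mu_1 z/(\mu_2+\mu_3 z)\to-\mu_1/\mu_3$) rather than by invoking M\"obius covariance directly, and that it quotes the four-point expansion and the shift relation from \cite{KRV_DOZZ} (Theorem 9.7 and Section 1.2, with the analytic continuation you flag handled by Theorem 6.1 there) instead of re-deriving them from the BPZ equation.
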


\begin{proof} Let us first observe that, from \eqref{3ptsym},  $C^{\mathbb{H}^3}(\boldsymbol{j},\boldsymbol{\mu}) $ can be recovered from the limit 
$$ C^{\mathbb{H}^3}(\boldsymbol{j},\boldsymbol{\mu}) = g_0(0)^{ \triangle_{j_1}} g_0(1)^{  \triangle_{j_2}}  \lim_{|z|\to\infty}\langle  V^{g_0}_{j_1,\mu_1}(0) V^{g_0}_{j_2,\mu_2}(1) V^{g_0}_{j_3,\mu_3}(z) \rangle_{\hat{\C},g_0}^{\mathbb{H}^3} .$$
From now on, we will work with the round metric $g_0$ only so that we will mostly skip it from notations. Next, we apply Theorem \ref{Liouvillesphere2} in order to express the 3 point correlation functions in the $\mathbb{H}^3$ model with insertions at $0,1,z$  in terms of Liouville correlation functions, and then send $z\to\infty$.  Note that the set of those $z\in\C$ such that $\mu_2+\mu_3z\not =0$ is open and dense in $\C$ and we will always choose such $z$ in what follows. This guarantees that  the parameter $r$ in Theorem \ref{Liouvillesphere2}) is equal to $1$ and, in the case $m=3$, we have then necessarily $p=1$, $n_1=1$. Then the unique point $y$ has  the explicit expression $y:=- \frac{\sum_{\ell=1}^3\mu_\ell \prod_{\ell'\not=\ell}z_{\ell'}}{\sum_{\ell=1}^3\mu_\ell z_\ell}$ obtained from the formula \eqref{polyndeg}.  We write  ${\bf z}=(0,1,z)$ and by \eqref{Liouvillesphere2}
\begin{equation}\label{correlz}
 \Big \langle  V_{j_1,\mu_1}(0)  V_{j_2,\mu_2}(1) V_{j_3,\mu_3}(z)\Big \rangle_{\hat{\C},g_0}^{\mathbb{H}^3} = \mu_{{\rm L}}^{-s_0}
\mc{F}_{{\bf z},\boldsymbol{j}}(y) \Big\cjg   V_{\alpha_1}(0) V_{\alpha_2}(1) V_{\alpha_3}(z) V_{-\frac{1}{b}}(y(z)) \Big\cjd^{\rm L}_ {\hat{\C}, g_0 } 
\end{equation}
\[\textrm{ with } y=y(z):=-\frac{\mu_1z}{\mu_2+\mu_3z}, \quad \mu_{\rm L}:=\frac{ |\mu_2+\mu_3 z|^2}{\pi k}.\]
Now $s_0=\frac{\sum_{j=1}^4\alpha_j-2Q}{2b}=j_1+j_2+j_3+2$ and we set 
\begin{align*}
F(\boldsymbol{\mu}):=&
\lim_{z\to\infty} \mu_{{\rm L}}^{-s_0}\mc{F}_{{\bf z},\boldsymbol{j}}(y(z))
\\
=&
e^{-2\chi(Q^2-b^2) }(\pi k)^{s_0}4^{\frac{1}{2}+\frac{1}{4b^2}}g_0(0)^{\frac{1}{2}+\frac{1}{4b^2}}g_0(1)^{\frac{1}{2}+\frac{1}{4b^2}}g_0(-\tfrac{\mu_1}{\mu_3})^{\Delta_{-\frac{1}{b}}}|\tfrac{\mu_1}{\mu_3}|^{-2(j_1+1)-\frac{1}{b^2}}|1+\tfrac{\mu_1}{\mu_3}|^{-2(j_2+1)-\frac{1}{b^2}}|\mu_3|^{-2s_0}\\
& \times \big(\frac{k-2}{32\pi}\big)^{\frac{1}{3}}\big(\frac{k}{\pi}\big)^{\frac{2}{3}} 
\Big(\frac{\det (\Delta_{g_0})}{{\rm v}_{g_0}(\hat{\C})}\Big)^{-1}.
\end{align*}
Taking the limit $z\to\infty$ in the relation \eqref{correlz}, we obtain
\begin{equation}
C^{\mathbb{H}^3}(\boldsymbol{j},\boldsymbol{\mu})  = \big(\frac{k-2}{32\pi}\big)^{\frac{1}{3}}\big(\frac{k}{\pi}\big)^{\frac{2}{3}} 
\Big(\frac{\det (\Delta_{g_0})}{{\rm v}_{g_0}(\hat{\C})}\Big)^{-1} g_0(0)^{ \triangle_{j_1}} g_0(1)^{  \triangle_{j_2}} F(\boldsymbol{\mu}) \Big\cjg   V_{\alpha_1}(0) V_{\alpha_2}(1) V_{\alpha_3}(\infty) V_{-\frac{1}{b}}(-\tfrac{\mu_1}{\mu_3}) \Big\cjd^{\rm L}_ {\hat{\C},g_0} .
\end{equation}
Now we exploit the fact that the $4$-point correlation function $ \cjg   V_{\alpha_1}(0) V_{\alpha_2}(1) V_{\alpha_3}(\infty) V_{-\frac{1}{b}}(-\tfrac{\mu_1}{\mu_3}) \cjd^{\rm L}_ {\hat{\C},g_0} $ involves a vertex operator $V_{-\frac{1}{b}}$ with a degenerate weight and thus obeys the BPZ equation \cite{KRV19_local}. The expression for the $4$-point function is then deduced from the BPZ equation in   \cite[Theorem 9.7]{KRV_DOZZ}, when $\alpha_1$ is close to $Q$ but this relation extends to all $\alpha_1$ within the Seiberg bounds \eqref{seiberg} by analytic continuation \cite[Theorem 6.1]{KRV_DOZZ}. It reads\footnote{Beware that the Liouville correlations in  \cite{KRV_DOZZ} differ by an overall multiplicative factor $ \prod_{j=1}^mg(z_j)^{ \triangle_{\alpha_j}} $ compared to the present manuscript.}, for $y\in\C$
\begin{multline*}
  g_0(0)^{ \triangle_{\alpha_1}} g_0(1)^{  \triangle_{\alpha_2}} g_0(y)^{\Delta_{-\frac{1}{b}}}\cjg   V_{\alpha_1}(0) V_{\alpha_2}(1) V_{\alpha_3}(\infty) V_{-\frac{1}{b}}(y) \cjd^{\rm L}_ {\hat{\C},g_0}
 \\=
 |y|^{\frac{\alpha_1}{b}}|y-1|^{\frac{\alpha_2}{b}} \Big(g_0(0)^{\Delta_{\alpha_1-\frac{1}{b}}} g_0(1)^{  \triangle_{\alpha_2}} \Big\cjg   V_{\alpha_1-\frac{1}{b}}(0) V_{\alpha_2}(1) V_{\alpha_3}(\infty)   \Big\cjd^{\rm L}_ {\hat{\C},g_0} |F_-(y)|^2
\\
+ g_0(0)^{\Delta_{\alpha_1+\frac{1}{b}}} g_0(1)^{  \triangle_{\alpha_2}} \Big\cjg   V_{\alpha_1+\frac{1}{b}}(0) V_{\alpha_2}(1) V_{\alpha_3}(\infty)   \Big\cjd^{\rm L}_ {\hat{\C},g_0} |F_+(y)|^2\Big).
\end{multline*}  
We stress that this relation is valid for $\alpha_1<Q$ (i.e. $j_1<-1/2$) and the structure constants in the right-hand side have to be understood as the analytic continuation of $ \cjg   V_{\alpha_1+\frac{1}{b}}(0) V_{\alpha_2}(1) V_{\alpha_3}(\infty)   \cjd^{\rm L}_ {\Sigma,g_0}$ in case $\alpha_1+\frac{1}{b}\geq Q$. 
Now recall the following shift relation for the Liouville 3 point function  \cite[section 1.2]{KRV_DOZZ} 
\begin{align*}
 &g_0(0)^{\Delta_{\alpha_1+\frac{1}{b}}} g_0(1)^{  \triangle_{\alpha_2}} \Big\cjg   V_{\alpha_1+\frac{1}{b}}(0) V_{\alpha_2}(1) V_{\alpha_3}(\infty)   \Big\cjd^{\rm L}_ {\hat{\C},g_0} 
\\
&=
 - \frac{  l(\frac{4}{\gamma^2})}{(\pi     l(\frac{\gamma^2}{4})  )^{\frac{4}{\gamma^2} }  } 
 \mathcal{A}  (\alpha_1,\alpha_2,\alpha_3)
  g_0(0)^{\Delta_{\alpha_1-\frac{1}{b}}} g_0(1)^{  \triangle_{\alpha_2}}\Big\cjg   V_{\alpha_1-\frac{1}{b}}(0) V_{\alpha_2}(1) V_{\alpha_3}(\infty)   \Big\cjd^{\rm L}_ {\hat{\C},g_0} 
\end{align*} 
Combining with \eqref{3pointDOZZ}, we land on the announced formula.   
 \end{proof}


\section{Holomorphic rank-$2$ bundles and WZW action}\label{sec:General_case}

Generally speaking, the Wess-Zumino-Witten theory can be understood as a path integral over the space of gauge transforms on a holomorphic vector bundle $E$ over a surface $\Sigma$. In our case, the bundle will have rank $2$ and trivial determinant, which implies it is isomorphic (as a complex vector bundle) to $\Sigma\times \C^2$, and  that the gauge transforms can be thought of as functions $h:\Sigma \to {\rm SL}(2,\C)$. In the $\mathbb{H}^3$ model, the gauges live in the coset ${\rm SL}(2,\C)/{\rm SU}(2)$, which basically corresponds to the hyperbolic space $\mathbb{H}^3$ (hence the name), in which case the elements in the cosets are represented by positive definite ${\rm SL}(2,\C)$ matrices and  should rather be thought of as   Hermitian metrics on the bundle $E$. 
The path integral then averages over these metrics, with the holomorphic structure of $E$ being fixed. \\

We consider a closed oriented Riemannian surface $(\Sigma,g)$. The metric $g$ induces  a complex structure, i.e. a field $J\in C^\infty(\Sigma;{\rm End}(T\Sigma))$ of endomorphisms of the tangent bundle such that $J^2=-{\rm Id}$. We say that $(\Sigma,J)$ is a closed Riemann surface.
The dual map to $J$ is the Hodge star operator $*_g$ on $1$-forms, which only depends on the conformal class $[g]$. On the complexified cotangent bundle $\C T^*\Sigma$, the Hodge operator induces two subbundles
\[ (T^*\Sigma)^{1,0}:=\ker (*_g+i{\rm Id}) ,\quad (T^*\Sigma)^{0,1}:=\ker (*_g-i{\rm Id})\]
and, if $\pi^{1,0}$ and $\pi^{0,1}$ denote the projection on these subpaces, 
we define $\pl :=\pi^{1,0}\circ d : C^\infty(\Sigma)\to C^\infty(\Sigma;(T^*\Sigma)^{1,0})$ and 
$\bar{\pl} :=\pi^{0,1}\circ d : C^\infty(\Sigma)\to C^\infty(\Sigma;(T^*\Sigma)^{0,1})$. Functions $u$ on an open set $U\subset \Sigma$ satisfying $\bar{\pl}u=0$ (resp. $\pl u=0$) are called holomorphic (resp. anti-holomorphic). There are local coordinates $z:U\to \C$ (called holomorphic coordinates) such that $\bar{\pl}z=0$ and in these coordinates we can write 
$\bar{\pl}u(z)=\pl_{\bar{z}}u d\bar{z}$ and $\pl u(z)=\pl_{z}u dz$. Moreover one can find a covering $(U_j)_j$ of $\Sigma$ with charts $\omega_j:U_j\to \D\subset \C$ such that the transition functions $\omega_i\circ \omega_j^{-1}$ are 
holomorphic where they are defined. The metric $g$ on $\Sigma$ extends as a Hermitian metric on $\C T\Sigma$ by 
$g(v_1+iv_1^*,v_2+iv_2^*)=g(v_1,v_2)+g(v_1^*,v_2^*)+i(g(v_2,v_1^*)-g(v_1,v_2^*))$ (if $v_i,v_i^*$ are real tangent vectors), and this metric induces also a Hermitian metric, denoted by $\cjg u,v\cjd_{g}$,  on $\C T^*\Sigma$ by duality.
There is a natural $L^2$-scalar product on $\C T^*\Sigma$ given by
\[ \cjg u,v\cjd_2:= \frac{1}{2}\int_{\Sigma} u\wedge *_g \bbar{v}= \int_{\Sigma} \cjg u,v\cjd_{g} {\rm dv}_g \]
where 
${\rm v}_g$ denotes  the Riemannian volume measure. For a vector space, we shall denote by $\Lambda^p V$ the $p$-antisymmetric product of $V$ and for a vector bundle $V$ over $\Sigma$ we also write $\Lambda^pV$ for the bundle whose fiber at $x$ is the 
$p$-antisymmetric product $\Lambda^p V_x$.
We denote $\Lambda^{p,q}\Sigma:=\Lambda ^p (T^*\Sigma)^{1,0}\wedge \Lambda ^q (T^*\Sigma)^{0,1}$ for $p+q\leq 2$ (remark that $\Lambda^{0,2}\Sigma=\Lambda^{2,0}\Sigma=\{0\}$). We recall that the Hodge star operator on $2$-forms $*_g: C^\infty(\Sigma,\Lambda^2\Sigma)\to C^\infty(\Sigma)$ is defined by $*_g {\rm v}_g:=1$.

\subsection{Wess-Zumino-Witten action}\label{WZWaction}

We shall now introduce the WZW action for the trivial connection on $\Sigma\times \C^2$. The action will be associated to a pair 
$(h,A)$ where $h:\Sigma \to {\rm SL}(2,\C)$ is a smooth function and $A\in C^\infty(\Sigma; {\rm sl}(2,\C) \otimes \Lambda^1\Sigma)$, where ${\rm SL}(2,\C)$ are the $2\times 2$ complex matrices with determinant $1$, and ${\rm sl}(2,\C)$ its Lie algebra. If $G:={\rm SU}(2)$ then $G^\C={\rm SL}(2,\C)$ is its complexification.\\

\noindent\textbf{WZW action for the trivial ${\rm SL}(2,\C)$ connection.} 
 We use the following standard notation: on a manifold $\Sigma$,  if $\alpha_j: \Sigma \to M_n(\C)$ are $M_n(\C)$-valued (matrices $n\times n$) $1$-forms for $j=1,\dots,j_0$, their 
exterior product is defined by its action on vectors $v_1,\dots,v_{j_0}$ by
\[ (\alpha_1\wedge \dots\wedge \alpha_{j_0})(v_1,\dots, v_{j_0}):=\sum_{\sigma\in \mc{P}_{j_0}}\eps(\sigma)\alpha_1(v_{\sigma(1)})\dots \alpha_J(v_{\sigma(j_0))})\]
where $\eps(\sigma)$ is the signature of the permutation $\sigma$, $\mc{P}_{j_0}$ being the set of permutations of $j_0$ elements. In particular, ${\rm Tr}(\alpha_1\wedge \dots\wedge \alpha_{j_0})$ is a differential $j_0$-form  on $\Sigma$. Remark that for such forms,
\begin{equation}\label{Trwedge} 
{\rm Tr}(\alpha_i\wedge \alpha_j)=-{\rm Tr}(\alpha_j\wedge \alpha_i) , \quad {\rm Tr}(\alpha_i\wedge \alpha_j\wedge \alpha_k)={\rm Tr}(\alpha_k\wedge \alpha_i\wedge \alpha_j).
\end{equation}

\begin{lemma}
Let $(\Sigma,J)$ be a closed Riemann surface and let us define the Wess-Zumino-Witten (WZW) action for the trivial ${\rm SL}(2,\C)$ connection: for $h:\Sigma \to G^\C$ a smooth map, we let 
\begin{equation}\label{WZW_def}
S_\Sigma(h):=\frac{1}{4\pi i}\int_{\Sigma} {\rm Tr}((h^{-1}\pl h)\wedge (h^{-1}\bar{\pl}h))+S^{\rm top}_X(\tilde{h})
\end{equation}
where $X$ is any compact 3-manifold with boundary $\Sigma$, $\tilde{h}:X\to G^\C $ is a smooth extension of $h$ to $X$ and 
\[ S^{\rm top}_X(\tilde{h}):=\frac{1}{12\pi i}\int_{X}{\rm Tr}((\tilde{h}^{-1}d\tilde{h})\wedge (\tilde{h}^{-1}d\tilde{h})\wedge (\tilde{h}^{-1}d\tilde{h})).\]
The value of $S_\Sigma(h)\in \C$ a priori depends on the choice of extension $\tilde{h}$ but 
$h\in C^\infty (\Sigma,G^\C)  \mapsto  e^{S_\Sigma(h)}$ is well defined independently of the extension $\tilde{h}$, i.e. 
$S_\Sigma(h)$ can considered as an element in $\C/2\pi i\Z$.
\end{lemma}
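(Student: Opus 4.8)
The plan is to isolate the only term in $S_\Sigma(h)$ that depends on the extension. The surface integral $\tfrac{1}{4\pi i}\int_\Sigma{\rm Tr}((h^{-1}\pl h)\wedge(h^{-1}\bar{\pl}h))$ involves only the restriction of $\tilde h$ to $\Sigma$, namely $h$, so it is manifestly extension-independent; hence it suffices to prove that the topological term $S^{\rm top}_X(\tilde h)$ is well defined modulo $2\pi i\Z$. First I would introduce the bi-invariant $3$-form $\Theta:={\rm Tr}((g^{-1}dg)^{\wedge 3})$ on $G^\C={\rm SL}(2,\C)$, so that $S^{\rm top}_X(\tilde h)=\tfrac{1}{12\pi i}\int_X\tilde h^*\Theta$. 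Using the Maurer--Cartan identity $d(g^{-1}dg)=-(g^{-1}dg)\wedge(g^{-1}dg)$ together with cyclicity of the trace, a short computation gives $d\Theta=-{\rm Tr}((g^{-1}dg)^{\wedge 4})$, and this last expression vanishes because moving a matrix-valued $1$-form cyclically past three others produces a sign $(-1)^3$, forcing ${\rm Tr}((g^{-1}dg)^{\wedge 4})=-{\rm Tr}((g^{-1}dg)^{\wedge 4})=0$; thus $\Theta$ is closed. I also need extensions to exist: choosing $X$ to be a handlebody with $\pl X=\Sigma$, the obstructions to extending $h$ over the relative cells of $(X,\Sigma)$ lie in $\pi_0(G^\C),\pi_1(G^\C),\pi_2(G^\C)$, all of which vanish since $G^\C$ deformation retracts onto its maximal compact subgroup ${\rm SU}(2)\cong\mathbb{S}^3$, so a smooth $\tilde h$ always exists.

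The core of the argument is the integrality statement: for every closed oriented $3$-manifold $Y$ and smooth $f:Y\to G^\C$ one has $\tfrac{1}{12\pi i}\int_Y f^*\Theta\in 2\pi i\Z$. Since $\Theta$ is closed, $\int_Y f^*\Theta$ depends only on the class $f_*[Y]\in H_3(G^\C,\Z)$; and because $G^\C\simeq\mathbb{S}^3$ this group is $\Z$, generated by the inclusion $\iota:{\rm SU}(2)\hookrightarrow G^\C$. Writing $f_*[Y]=n\,[\iota]$ reduces everything to a single computation on the group manifold ${\rm SU}(2)\cong\mathbb{S}^3$: a direct evaluation gives $\int_{{\rm SU}(2)}{\rm Tr}((g^{-1}dg)^{\wedge 3})=\pm 24\pi^2$, whence $\tfrac{1}{12\pi i}\int_Y f^*\Theta=\mp 2\pi i\,n\in 2\pi i\Z$. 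This is exactly the normalization for which the Wess--Zumino term is quantized (the classical Witten integrality), and it explains the choice of constant $\tfrac{1}{12\pi i}$.

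Finally I would compare two extensions $\tilde h:X\to G^\C$ and $\tilde h':X'\to G^\C$ of the same boundary datum $h$. Gluing $X$ to $X'$ with reversed orientation along their common boundary $\Sigma$ produces a closed oriented $3$-manifold $Y=X\cup_\Sigma\overline{X'}$, and since $\tilde h$ and $\tilde h'$ agree on $\Sigma$ they assemble into a map $f:Y\to G^\C$. Then $S^{\rm top}_X(\tilde h)-S^{\rm top}_{X'}(\tilde h')=\tfrac{1}{12\pi i}\int_Y f^*\Theta\in 2\pi i\Z$ by the previous step, so $S_\Sigma(h)$ is independent of the extension modulo $2\pi i\Z$ and $e^{S_\Sigma(h)}$ is well defined. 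The hard part will be the integrality step, and within it the explicit normalization computation on $\mathbb{S}^3$ that pins the period lattice to exactly $24\pi^2\Z$. A secondary technical point is that the glued map $f$ is only Lipschitz across the seam $\Sigma$; this is handled either by a small smoothing homotopy---harmless since $\Theta$ is closed and $\int_Y f^*\Theta$ is a homotopy invariant---or by noting that the integral of a closed form is insensitive to modification on the measure-zero gluing locus.
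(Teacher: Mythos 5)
Your proof is correct and follows essentially the same route as the paper's: glue the two extensions into a map from a closed oriented $3$-manifold to $G^\C$, then use the homotopy equivalence $G^\C\simeq {\rm SU}(2)$ to reduce the period of the Wess--Zumino $3$-form to an integer multiple of $\int_{{\rm SU}(2)}{\rm Tr}((g^{-1}dg)^{\wedge 3})=\pm 24\pi^2$. The only cosmetic difference is that the paper performs the reduction by homotoping the glued map into ${\rm SU}(2)$ and invoking a degree argument, whereas you pair the closed form $\Theta$ with the class $f_*[Y]\in H_3(G^\C,\Z)$ — the same fact expressed in homological language.
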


\begin{proof} 
Take two compact manifolds $X_1$ and $X_2$ with boundary $\Sigma$,  and two extensions $\tilde{h}_1$ and $\tilde{h}_2$ of $h$ to $X_1$ and $X_2$ respectively. We can glue $X_1$ to $X_2$ 
along $\Sigma$ to produce a closed $3$ manifold $\tilde{X}$ and consider $\tilde{h}={\bf 1}_{X_1}\tilde{h}_1+{\bf 1}_{X_2}\tilde{h}_2$ (which is a continuous, piecewise smooth function with bounded derivative) as a map $\tilde{X}\to G^\C$. Since $G^\C/G=\mathbb{H}^3$, $G^\C$ is homotopy equivalent to $G$ and there is a homotopy $t\mapsto g_t\in C^\infty(\Sigma,G^\C)$ 
such that $g_0=\tilde{h}$ and $g_1$ takes values in $G$. We thus have
\begin{equation}\label{S_top} 
\begin{split}
S^{\rm top}_{X_1}(\tilde{h}_1)-S^{\rm top}_{X_2}(\tilde{h}_2)=&\frac{1}{12\pi i}\int_{\tilde{X}}{\rm Tr}((g_0^{-1}dg_0)^{\wedge 3})=\frac{1}{12\pi i}\int_{\tilde{X}}{\rm Tr}((g_1^{-1}dg_1)^{\wedge 3})=\frac{1}{12\pi i}\int_{\tilde{X}}g_1^*\omega_{{\rm MC}}\\
=&{\rm deg}(g_1)\frac{1}{12\pi i}\int_{G}\omega_{{\rm MC}}
\end{split}
\end{equation}
where $\omega_{{\rm MC}}(g)={\rm Tr}((g^{-1}dg)^{\wedge 3})$ is the Maurer-Cartan form, which satisfies 
that $\frac{1}{24\pi^2}\int_{{\rm SU}(2)}\omega_{{\rm MC}}=1$, thus \eqref{S_top} belongs to $2\pi i\Z$ (here ${\rm deg}(g_1)$ is the degree of the map $g_1:\tilde{X}\to G^\C$). 
\end{proof}
The value $e^{k S_\Sigma(h)}$ for $k\in \Z$ is thus independent of the choice of the $3$-manifold $X$ and of the extension $\tilde{h}$, it depends only on $h$ defined on $\Sigma$.
 The  following  fundamental identity is due to Polyakov and Wiegmann:
\begin{lemma}[\textbf{Polyakov-Wiegmann anomaly formula}]\label{l:Polyakov_Wiegmann}
For $h_1,h_2:\Sigma\to G^\C$ smooth, the following holds true in $\C/2\pi i\Z$
\begin{equation}\label{PolWi}
S_\Sigma(h_1h_2)=S_\Sigma(h_1)+S_{\Sigma}(h_2)+\Upsilon(h_1,h_2), \quad \Upsilon(h_1,h_2):=\frac{1}{2\pi i}\int_\Sigma {\rm Tr}(h_1^{-1}\bar{\pl}h_1\wedge h_2\pl h_2^{-1}).
\end{equation}
\end{lemma}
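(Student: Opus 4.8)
The plan is to work modulo $2\pi i\Z$ throughout, which is legitimate because the preceding lemma shows $e^{S_\Sigma(\cdot)}$ is independent of the chosen $3$-manifold $X$ and of the extension. Fixing extensions $\tilde{h}_1,\tilde{h}_2:X\to G^\C$ of $h_1,h_2$, I would take the pointwise product $\tilde{h}_1\tilde{h}_2$ as the extension of $h_1h_2$, and split $S_\Sigma=S^{\rm WZ}_\Sigma+S^{\rm top}_X$ with $S^{\rm WZ}_\Sigma(h):=\frac{1}{4\pi i}\int_\Sigma {\rm Tr}((h^{-1}\pl h)\wedge(h^{-1}\bar{\pl}h))$, then treat the kinetic and topological pieces separately.

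For the kinetic part, I would substitute $(h_1h_2)^{-1}\pl(h_1h_2)=h_2^{-1}(h_1^{-1}\pl h_1)h_2+h_2^{-1}\pl h_2$ and the analogous formula for $\bar{\pl}$. Expanding the trace and using conjugation invariance of ${\rm Tr}(\alpha\wedge\beta)$ together with the relations \eqref{Trwedge}, the two diagonal terms reproduce $S^{\rm WZ}_\Sigma(h_1)+S^{\rm WZ}_\Sigma(h_2)$, while the identity $(\pl h_2)h_2^{-1}=-h_2\pl h_2^{-1}$ recasts the off-diagonal terms as
\[
S^{\rm WZ}_\Sigma(h_1h_2)-S^{\rm WZ}_\Sigma(h_1)-S^{\rm WZ}_\Sigma(h_2)=\frac{1}{4\pi i}\int_\Sigma\Big({\rm Tr}(h_1^{-1}\pl h_1\wedge (\bar{\pl}h_2)h_2^{-1})-{\rm Tr}(h_1^{-1}\bar{\pl}h_1\wedge (\pl h_2)h_2^{-1})\Big).
\]

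For the topological part, writing $\alpha_i:=\tilde{h}_i^{-1}d\tilde{h}_i$ and $\alpha_{12}:=(\tilde{h}_1\tilde{h}_2)^{-1}d(\tilde{h}_1\tilde{h}_2)=\tilde{h}_2^{-1}\alpha_1\tilde{h}_2+\alpha_2$, I would expand ${\rm Tr}(\alpha_{12}^{\wedge 3})$. Conjugation invariance identifies the pure cubes with ${\rm Tr}(\alpha_1^{\wedge 3})$ and ${\rm Tr}(\alpha_2^{\wedge 3})$, and the Maurer–Cartan equation $d\alpha_i=-\alpha_i\wedge\alpha_i$ combined with \eqref{Trwedge} collapses the mixed terms into an exact form, giving the key three-dimensional identity
\[
{\rm Tr}(\alpha_{12}^{\wedge 3})={\rm Tr}(\alpha_1^{\wedge 3})+{\rm Tr}(\alpha_2^{\wedge 3})-3\,d\,{\rm Tr}\big(\alpha_1\wedge (d\tilde{h}_2)\tilde{h}_2^{-1}\big).
\]
Integrating over $X$ and applying Stokes (with $\pl X=\Sigma$) turns the exact term into a boundary integral over $\Sigma$; decomposing $d=\pl+\bar{\pl}$ and discarding the $(2,0)$ and $(0,2)$ pieces, which vanish on a surface, yields $S^{\rm top}_X(\tilde{h}_1\tilde{h}_2)-S^{\rm top}_X(\tilde{h}_1)-S^{\rm top}_X(\tilde{h}_2)=-\frac{1}{4\pi i}\int_\Sigma\big({\rm Tr}(h_1^{-1}\pl h_1\wedge(\bar{\pl}h_2)h_2^{-1})+{\rm Tr}(h_1^{-1}\bar{\pl}h_1\wedge(\pl h_2)h_2^{-1})\big)$.

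Finally I would add the kinetic and topological off-diagonal contributions: the term ${\rm Tr}(h_1^{-1}\pl h_1\wedge(\bar{\pl}h_2)h_2^{-1})$ cancels, leaving $-\frac{1}{2\pi i}\int_\Sigma {\rm Tr}(h_1^{-1}\bar{\pl}h_1\wedge(\pl h_2)h_2^{-1})$, which equals $\Upsilon(h_1,h_2)$ after one more use of $(\pl h_2)h_2^{-1}=-h_2\pl h_2^{-1}$. I expect the main obstacle to be the three-dimensional step: establishing the cubic identity above and carrying out the Stokes reduction with the correct overall constant and sign, since this is precisely where the Maurer–Cartan structure equation and the exactness of the mixed terms must be handled with care. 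The remaining steps are routine bookkeeping using the trace relations \eqref{Trwedge} and the observation that products of forms of equal bidegree vanish in top degree on $\Sigma$.
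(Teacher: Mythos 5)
Your proposal is correct and takes essentially the same route as the paper's proof: the same split of $S_\Sigma$ into the kinetic term and the three-dimensional topological term, the same expansion of the kinetic part via cyclicity of the trace, the same cubic identity for the topological part (your exact term $-3\,d\,{\rm Tr}\big(\tilde{h}_1^{-1}d\tilde{h}_1\wedge (d\tilde{h}_2)\tilde{h}_2^{-1}\big)$ coincides with the paper's $+3\,d\,{\rm Tr}\big(d\tilde{h}_2\tilde{h}_2^{-1}\wedge \tilde{h}_1^{-1}d\tilde{h}_1\big)$ by antisymmetry of the trace of a wedge of $1$-forms), followed by the same Stokes reduction and cancellation of the $\partial h_1\wedge\bar{\partial}h_2$ terms. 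All signs and constants check out, so there is no gap.
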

\begin{proof} Define $S^1_\Sigma(h):=\frac{1}{4\pi i}\int_{\Sigma} {\rm Tr}((h^{-1}\pl h)\wedge (h^{-1}\bar{\pl}h))$. Then a direct computation using the cyclicity of the trace yields 
\begin{equation}\label{expandS1} 
S^1_\Sigma(h_1h_2)=S^1_\Sigma(h_1)+S^1_\Sigma(h_2)+\frac{1}{4\pi i}\int_{\Sigma}{\rm Tr}(h_2^{-1}h_1^{-1}\pl h_1\wedge \bar{\pl}h_2 )-{\rm Tr}(h_2^{-1}h_1^{-1}\bar{\pl} h_1\wedge \pl h_2 )
\end{equation}
and, using again the cyclicity of the trace, for $\tilde{h}_i$ an extension of $h_i$ to $X$, any compact 
$3$-manifold with boundary $\Sigma$,
\[\begin{split} 
{\rm Tr}((\tilde{h}_2^{-1}\tilde{h}_1^{-1}d\tilde{h}_1\tilde{h}_2+\tilde{h}_2^{-1}d\tilde{h}_2)^{\wedge 3})=& {\rm Tr}((\tilde{h}_2^{-1}d\tilde{h}_2)^{\wedge 3})+{\rm Tr}((\tilde{h}_1^{-1}d\tilde{h}_1)^{\wedge 3})+3{\rm Tr}(d\tilde{h}_2\tilde{h}_2^{-1}\wedge d\tilde{h}_2\tilde{h}_2^{-1}\wedge \tilde{h}_1^{-1}d\tilde{h}_1)\\
&+3{\rm Tr}(\tilde{h}_1^{-1}d\tilde{h}_1\wedge \tilde{h}_1^{-1}d\tilde{h}_1\wedge d\tilde{h}_2 \tilde{h}_2^{-1})\\
=& {\rm Tr}((\tilde{h}_2^{-1}d\tilde{h}_2)^{\wedge 3})+{\rm Tr}((\tilde{h}_1^{-1}d\tilde{h}_1)^{\wedge 3})+3d{\rm Tr}(
d\tilde{h}_2\tilde{h}_2^{-1}\wedge \tilde{h}_1^{-1}d\tilde{h}_1).
\end{split}\]
We write on $\Sigma$
\[\begin{split} 
{\rm Tr}(dh_2h_2^{-1}\wedge h_1^{-1}dh_1 )=& {\rm Tr}(\pl h_2h_2^{-1}\wedge h_1^{-1}\bar{\pl}h_1 )+{\rm Tr}(\bar{\pl} h_2h_2^{-1}\wedge h_1^{-1}\pl h_1 )\\
= & -{\rm Tr}(h_2^{-1} h_1^{-1}\bar{\pl}h_1 \wedge \pl h_2)-{\rm Tr}(h_2^{-1}h_1^{-1}\pl h_1\wedge \bar{\pl}h_2 )
\end{split}\]
from which we deduce that 
\[ S_{X}^{\rm top}(\tilde{h}_1\tilde{h}_2)=S_{X}^{\rm top}(\tilde{h}_1)+S_{X}^{\rm top}(\tilde{h}_2)-\frac{1}{4\pi i}\int_\Sigma {\rm Tr}(h_2^{-1} h_1^{-1}\bar{\pl}h_1 \wedge \pl h_2)-\frac{1}{4\pi i}\int_\Sigma {\rm Tr}(h_2^{-1}h_1^{-1}\pl h_1\wedge \bar{\pl}h_2 ).\]
Combining with \eqref{expandS1}, we get the desired result since ${\rm Tr}(h_2^{-1} h_1^{-1}\bar{\pl}h_1 \wedge \pl h_2)=- {\rm Tr}( h_1^{-1}\bar{\pl}h_1 \wedge h_2 \pl h_2^{-1} )$
\end{proof}
\noindent\textbf{WZW action for general ${\rm SL}(2,\C)$ connections.} 
Consider $E=\Sigma\times \C^2$, viewed as a (trivial) complex vector bundle over $\Sigma$ and let $A\in C^\infty(\Sigma;\mathfrak{g}^\C\otimes \Lambda^{1}\Sigma)$ be a $1$-form with values in the Lie algebra $\mathfrak{g}^\C:={\rm sl}(2,\C)$ of ${\rm SL}(2,\C)$. We take $A$ unitary, in the sense that $A^*=-A$, and we denote by $A^{1,0}\in C^\infty(\Sigma;\mathfrak{g}^\C\otimes \Lambda^{1,0}\Sigma)$ and  
$A^{0,1}\in C^\infty(\Sigma;\mathfrak{g}^\C\otimes \Lambda^{0,1}\Sigma)$ its $(1,0)$ and $(0,1)$ part. As we shall explain later, $A^{0,1}$ induces a holomorphic structure on $E$, by saying that a function $u\in C^\infty(\Sigma,\C^2)$ is holomorphic if $(\bar{\pl}+A^{0,1})u=0$; this is a way to twist the classical holomorphic structure on the trivial bundle $E$. We then define the WZW action with the connection $A$ as follows:

\begin{definition}[\textbf{WZW action for general unitary ${\rm SL}(2,\C)$-connections}]
The WZW action with connection form $A\in C^\infty(\Sigma;\mathfrak{g}^\C \otimes \Lambda^{1}\Sigma)$ on the trivial rank $2$ bundle $\Sigma\times \C^2$ is defined as an element in $\C/2\pi i\Z$ by 
\begin{equation}\label{gaugeWZW}
S_\Sigma (h,A):=S_\Sigma(h)-\frac{1}{2\pi i}\int_\Sigma {\rm Tr}(A^{1,0}\wedge h^{-1}\bar{\pl} h+h\pl h^{-1}\wedge A^{0,1}+hA^{1,0}h^{-1}\wedge A^{0,1}).
\end{equation}
Observe that $S_\Sigma ({\rm Id},A)=-\frac{1}{2\pi i}\int_\Sigma {\rm Tr}(A^{1,0}\wedge A^{0,1})$. 
\end{definition}
If one thinks in terms of connections, the form $A$ is associated to the representation of a connection $\nabla=d+A$ on $E$ in the canonical 
basis $s=(s_1,s_2)$ of $E$ by: for each $f\in C^\infty(\Sigma)$
\[ \nabla (fs)=(df +Af)s, \qquad s_1(x)=\left (\begin{array}{c}
1\\
0 \end{array}\right), \quad s_2(x)=\left (\begin{array}{c}
0\\
1 \end{array}\right).\]
On $E$, one can act by a gauge transform $h\in C^\infty(\Sigma,G^\C)$, this changes the basis to $s':=hs$
and the representation of $\nabla$ in the basis $s'$ is the conjugated connection $h\circ \nabla \circ h^{-1}$ has connection form given by
\[ h.A:=hA{h}^{-1}+hd{h}^{-1}\]
which becomes, at the level of $(1,0)$ and $(0,1)$ forms, 
 \begin{equation}\label{def:A_h} 
 A_{h}^{1,0}:=(h . A)^{1,0}=hA^{1,0}{h}^{-1}+h\pl {h}^{-1}, \quad A_{h}^{0,1}:=(h. A)^{0,1}=hA^{0,1}{h}^{-1}+h\bar{\pl} {h}^{-1}.
 \end{equation}
There is also a left action of $C^\infty(\Sigma,G^\C)\times C^\infty(\Sigma,G^\C)$ 
on the set of pairs $(h,A=A^{1,0}+A^{0,1})$ given by 
\begin{equation}\label{gauge_action} 
(h_1,h_2).(h,A)=(h_2hh_1^{-1},A_{h_1}^{1,0}+A_{h_2}^{0,1})
\end{equation}
and we compute the gauge transformation of the action $S_\Sigma(h,A)$:
\begin{lemma}\label{l:invarianceS}
For $h_1,h_2,h\in C^\infty(\Sigma,G^\C)$ and $A\in C^\infty(\Sigma,\mathfrak{g}^\C\otimes \Lambda^1\Sigma)$, the following identity holds true as an element in $\C/2\pi i\Z$
\[ S_\Sigma ((h_1,h_2).(h,A))=S_{\Sigma}(h,A)-S_{\Sigma}(h_1,A^{1,0})-S_{\Sigma}(h_2^{-1},A^{0,1}).\]
\end{lemma}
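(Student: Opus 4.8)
The plan is to reduce the two-sided identity to two ``chiral'' one-sided statements and prove each by a direct expansion. Observe first that the left action \eqref{gauge_action} factorizes as $(h_1,h_2).(h,A)=(h_1,{\rm Id}).\big(({\rm Id},h_2).(h,A)\big)$, where $({\rm Id},h_2).(h,A)=(h_2h,A^{1,0}+A_{h_2}^{0,1})$ only modifies the $(0,1)$-part of the connection and $(h_1,{\rm Id}).(h,A)=(hh_1^{-1},A_{h_1}^{1,0}+A^{0,1})$ only modifies the $(1,0)$-part. Since the $(1,0)$-part is untouched by the first factor, it suffices to establish the two identities
\[ S_\Sigma\big((h_1,{\rm Id}).(h,A)\big)=S_\Sigma(h,A)-S_\Sigma(h_1,A^{1,0}), \qquad S_\Sigma\big(({\rm Id},h_2).(h,A)\big)=S_\Sigma(h,A)-S_\Sigma(h_2^{-1},A^{0,1}), \]
and then compose them; the composition reproduces the claim precisely because the $A^{1,0}$ entering $S_\Sigma(h_1,A^{1,0})$ is unchanged by the $h_2$-action. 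Here $S_\Sigma(h_1,A^{1,0})$ means \eqref{gaugeWZW} evaluated on the connection whose $(0,1)$-part vanishes, so that $S_\Sigma(h_1,A^{1,0})=S_\Sigma(h_1)-\frac{1}{2\pi i}\int_\Sigma {\rm Tr}(A^{1,0}\wedge h_1^{-1}\bar{\pl}h_1)$, and likewise $S_\Sigma(h_2^{-1},A^{0,1})=S_\Sigma(h_2^{-1})-\frac{1}{2\pi i}\int_\Sigma {\rm Tr}(h_2^{-1}\pl h_2\wedge A^{0,1})$.

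For the first identity I would set $\hat{h}=hh_1^{-1}$ and use \eqref{def:A_h} to write $\hat{A}^{1,0}=h_1A^{1,0}h_1^{-1}+h_1\pl h_1^{-1}$, $\hat{A}^{0,1}=A^{0,1}$, then expand $S_\Sigma(\hat{h},\hat{A})$ via \eqref{gaugeWZW}. The WZW term $S_\Sigma(\hat{h})$ is split by Polyakov-Wiegmann \eqref{PolWi} into $S_\Sigma(h)+S_\Sigma(h_1^{-1})+\Upsilon(h,h_1^{-1})$. One then substitutes $\hat{h}^{-1}\bar{\pl}\hat{h}=h_1(h^{-1}\bar{\pl}h)h_1^{-1}+h_1\bar{\pl}h_1^{-1}$, $\hat{h}\pl\hat{h}^{-1}=h(h_1^{-1}\pl h_1)h^{-1}+h\pl h^{-1}$ and $\hat{h}\hat{A}^{1,0}\hat{h}^{-1}=hA^{1,0}h^{-1}-h(h_1^{-1}\pl h_1)h^{-1}$ into the three gauge terms and simplifies every trace using cyclicity, the antisymmetry ${\rm Tr}(\alpha\wedge\beta)=-{\rm Tr}(\beta\wedge\alpha)$ from \eqref{Trwedge}, and the vanishing of $(2,0)$- and $(0,2)$-forms on $\Sigma$. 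The bookkeeping is arranged so that: the $h$-only and $A$-only pieces reassemble into the three gauge terms of $S_\Sigma(h,A)$; the cross term ${\rm Tr}(h(h_1^{-1}\pl h_1)h^{-1}\wedge A^{0,1})$ cancels between the $\hat{h}\pl\hat{h}^{-1}\wedge\hat{A}^{0,1}$ and $\hat{h}\hat{A}^{1,0}\hat{h}^{-1}\wedge\hat{A}^{0,1}$ contributions; the mixed term ${\rm Tr}(h^{-1}\bar{\pl}h\wedge h_1^{-1}\pl h_1)$ exactly cancels $\Upsilon(h,h_1^{-1})$; and the term ${\rm Tr}(A^{1,0}\wedge h_1^{-1}\bar{\pl}h_1)$ produces precisely the gauge part of $-S_\Sigma(h_1,A^{1,0})$.

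The delicate point, which I expect to be the main obstacle, is the purely $h_1$-dependent remainder. Expanding $\hat{A}^{1,0}\wedge\hat{h}^{-1}\bar{\pl}\hat{h}$ leaves a term ${\rm Tr}(h_1\pl h_1^{-1}\wedge h_1\bar{\pl}h_1^{-1})$, which is exactly the integrand of $4\pi i\,S^1_\Sigma(h_1^{-1})$ with $S^1_\Sigma(h):=\frac{1}{4\pi i}\int_\Sigma {\rm Tr}((h^{-1}\pl h)\wedge(h^{-1}\bar{\pl}h))$; carried with its prefactor $-\frac{1}{2\pi i}$ it contributes $-2S^1_\Sigma(h_1^{-1})$. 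Together with the $S_\Sigma(h_1^{-1})$ coming from Polyakov-Wiegmann, one must check the identity $S_\Sigma(h_1^{-1})-2S^1_\Sigma(h_1^{-1})=-S_\Sigma(h_1)$ in $\C/2\pi i\Z$. This follows from the two elementary symmetries $S^1_\Sigma(h^{-1})=S^1_\Sigma(h)$ (cyclicity of the trace applied to \eqref{WZW_def}) and $S^{\rm top}_X(\tilde{h}^{-1})=-S^{\rm top}_X(\tilde{h})$ (since $\tilde{h}^{-1}d\tilde{h}\mapsto-(d\tilde{h})\tilde{h}^{-1}$ cubes to minus the original Maurer-Cartan density), whose sum yields $S_\Sigma(h_1)+S_\Sigma(h_1^{-1})=2S^1_\Sigma(h_1)$; reconciling this factor of $2$ between the coefficient $\frac{1}{4\pi i}$ of $S^1_\Sigma$ and the coefficient $\frac{1}{2\pi i}$ of the gauge cross-terms is the crux of the computation. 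The second identity is proved identically with the roles of $\pl$ and $\bar{\pl}$, and of $h_1$ and $h_2$, interchanged (the inversion $h_2\mapsto h_2^{-1}$ in the answer arising from left- rather than right-multiplication), and the whole argument is read modulo $2\pi i\Z$ as in the preceding lemmas.
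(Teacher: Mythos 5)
Your proof is correct, and its organization differs genuinely from the paper's. The paper proves the identity in one pass: it expands $S_\Sigma((h_1,h_2).(h,A))$ with both gauge elements present at once, collects the cross terms into $\Upsilon(h_2,hh_1^{-1})$, $\Upsilon(h,h_1^{-1})$, $\Upsilon(h_2,h_2^{-1})$, $\Upsilon(h_1,h_1^{-1})$, and then invokes Polyakov--Wiegmann \eqref{PolWi} twice, once to split $S_\Sigma(h_2hh_1^{-1})$ and once (applied to $h_ih_i^{-1}={\rm Id}$) to convert $S_\Sigma(h_2)$ and $S_\Sigma(h_1^{-1})$ into $-S_\Sigma(h_2^{-1})-\Upsilon(h_2,h_2^{-1})$ and $-S_\Sigma(h_1)-\Upsilon(h_1,h_1^{-1})$. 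You instead factor the left action \eqref{gauge_action} as $(h_1,h_2)=(h_1,{\rm Id})\circ({\rm Id},h_2)$, prove two one-sided identities, and compose them, which is legitimate precisely because the $({\rm Id},h_2)$-action does not touch $A^{1,0}$. The computational engine is the same in both arguments (Polyakov--Wiegmann, cyclicity and antisymmetry \eqref{Trwedge} of traces of wedge products, and the inversion identity $S_\Sigma(h)+S_\Sigma(h^{-1})=2S^1_\Sigma(h)$ in $\C/2\pi i\Z$), but you reach the inversion identity through the symmetries $S^1_\Sigma(h^{-1})=S^1_\Sigma(h)$ and $S^{\rm top}_X(\tilde{h}^{-1})=-S^{\rm top}_X(\tilde{h})$, whereas the paper encodes it as $S_\Sigma(h)+S_\Sigma(h^{-1})=-\Upsilon(h,h^{-1})$, obtained from \eqref{PolWi} applied to $hh^{-1}={\rm Id}$; both derivations are valid, and your sign bookkeeping checks out (the remainder ${\rm Tr}(h_1\pl h_1^{-1}\wedge h_1\bar{\pl}h_1^{-1})$ indeed contributes $-2S^1_\Sigma(h_1^{-1})$ once multiplied by the prefactor $-\tfrac{1}{2\pi i}$, and the cancellations you list --- the cross term against $\Upsilon(h,h_1^{-1})$, the two $A^{0,1}$-contributions against each other, and the term producing the gauge part of $-S_\Sigma(h_1,A^{1,0})$ --- are exactly the ones that occur). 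What your route buys is two shorter expansions, each involving a single gauge element, so the cancellation pattern and the chiral/antichiral structure of the statement are transparent; what the paper's route buys is that no composition property of the action needs to be verified, everything being absorbed into one longer simultaneous computation.
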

\begin{proof} 
We compute 
\[\begin{split}
S_\Sigma ((h_1,h_2).(h,A))=&S_\Sigma(h_2hh_1^{-1})-\frac{1}{2\pi i}\int_\Sigma {\rm Tr}((h_1A^{1,0}h_1^{-1}+h_1\pl h_1^{-1})\wedge h_1h^{-1}h_2^{-1}\bar{\pl} (h_2hh_1^{-1}))\\
& -\frac{1}{2\pi i}\int_\Sigma {\rm Tr}(h_2hh_1^{-1}\pl (h_1h^{-1}h_2^{-1})\wedge (h_2A^{0,1}h_2^{-1}+h_2\bar{\pl}h_2^{-1}))\\
& -\frac{1}{2\pi i}\int_{\Sigma}{\rm Tr}(h(A^{1,0}-h_1^{-1}\pl h_1) h^{-1}\wedge (A^{0,1}-h_2^{-1}\bar{\pl} h_2))\\
 = &S_\Sigma(h_2hh_1^{-1})+\frac{1}{2\pi i}\int_\Sigma {\rm Tr}(h_2^{-1}\pl h_2\wedge A^{0,1})
+\frac{1}{2\pi i}\int_{\Sigma}{\rm Tr}(A^{1,0}\wedge h_1^{-1}\bar{\pl}h_1)\\
&  -\frac{1}{2\pi i}\int_{\Sigma}{\rm Tr}(h\pl h^{-1} \wedge A^{0,1}+A^{1,0}\wedge h^{-1}\bar{\pl}h
+hA^{1,0}h^{-1}\wedge A^{0,1})\\
&-\Upsilon(h_2,hh_1^{-1})-\Upsilon(h,h_1^{-1}) +\Upsilon(h_2,h_2^{-1})+\Upsilon(h_1,h_1^{-1}).
\end{split}\]
From Lemma \ref{l:Polyakov_Wiegmann}, we can expand
\[ \begin{split}
S_\Sigma(h_2hh_1^{-1})=&S_{\Sigma}(h_2)+S_{\Sigma}(h)+S_{\Sigma}(h_1^{-1})+\Upsilon(h_2,hh_1^{-1})+\Upsilon(h,h_1^{-1}) \\
=& -S_{\Sigma}(h_2^{-1})-\Upsilon(h_2,h_2^{-1})-S_\Sigma (h_1)-\Upsilon(h_1,h_1^{-1})+S_{\Sigma}(h)+\Upsilon(h_2,hh_1^{-1})+\Upsilon(h,h_1^{-1}).
\end{split}\]
Gathering everything, we deduce the  gauge covariance for the action $S_{\Sigma}(h,A)$.
\end{proof}
A direct corollary of Lemma \ref{l:invarianceS} and of the relation (by Lemma \ref{l:Polyakov_Wiegmann}) in $\C/2\pi i\Z$
\[ S_\Sigma(h)+S_\Sigma(h^{-1})=\frac{1}{2\pi i}\int_\Sigma {\rm Tr}(h^{-1}\pl h\wedge h^{-1}\bar{\pl}h)\]
is the invariance of $\tilde{S}_\Sigma$ under the diagonal gauge action:
\begin{corollary}[Gauge invariance]\label{gauge_inv}
For all $h',h\in G^{\C}$ and  $A\in C^\infty(\Sigma,\mathfrak{g}^\C\otimes \Lambda^1\Sigma)$, the following identity holds true in $\C/2\pi i\Z$
\begin{equation}\label{invarianceWZW} 
\tilde{S}_\Sigma((h',h').(h,A))=\tilde{S}_\Sigma(h,A).
\end{equation}
where 
\begin{equation}\label{deftildeS}
\tilde{S}_\Sigma (h,A):=S_\Sigma (h,A)-S_\Sigma ({\rm Id},A).
\end{equation}
\end{corollary}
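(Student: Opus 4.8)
The plan is to derive the statement directly from Lemma \ref{l:invarianceS} by applying it twice: once to the pair $(h,A)$ and once to the trivial pair $({\rm Id},A)$, and then subtracting. The key structural observation is that the anomaly terms produced by Lemma \ref{l:invarianceS} depend only on the gauge parameters and on $A$, but \emph{not} on the field $h$, so they cancel in the difference defining $\tilde{S}_\Sigma$ in \eqref{deftildeS}.

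First I would specialize Lemma \ref{l:invarianceS} to the diagonal action $h_1=h_2=h'$ applied to $(h,A)$. Writing the transformed connection as $\hat{A}:=A_{h'}^{1,0}+A_{h'}^{0,1}$, so that by \eqref{gauge_action} one has $(h',h').(h,A)=(h'h(h')^{-1},\hat{A})$, the lemma gives in $\C/2\pi i\Z$
\begin{equation*}
S_\Sigma((h',h').(h,A))=S_\Sigma(h,A)-S_\Sigma(h',A^{1,0})-S_\Sigma((h')^{-1},A^{0,1}).
\end{equation*}

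Next I would apply the very same identity with $h$ replaced by ${\rm Id}$. Since conjugation fixes the identity, the action \eqref{gauge_action} gives $(h',h').({\rm Id},A)=({\rm Id},\hat{A})$, and Lemma \ref{l:invarianceS} yields
\begin{equation*}
S_\Sigma({\rm Id},\hat{A})=S_\Sigma({\rm Id},A)-S_\Sigma(h',A^{1,0})-S_\Sigma((h')^{-1},A^{0,1}).
\end{equation*}
Subtracting the second identity from the first, the two anomaly terms $S_\Sigma(h',A^{1,0})$ and $S_\Sigma((h')^{-1},A^{0,1})$ cancel; recalling that $\tilde{S}_\Sigma((h',h').(h,A))=S_\Sigma((h',h').(h,A))-S_\Sigma({\rm Id},\hat{A})$ by definition \eqref{deftildeS}, one is left with $S_\Sigma(h,A)-S_\Sigma({\rm Id},A)=\tilde{S}_\Sigma(h,A)$ in $\C/2\pi i\Z$, which is exactly \eqref{invarianceWZW}.

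I do not anticipate a serious obstacle: the whole point is that the non-invariance of $S_\Sigma(\cdot,A)$ is captured entirely by the $h$-independent cocycle $S_\Sigma(h',A^{1,0})+S_\Sigma((h')^{-1},A^{0,1})$, which is precisely what subtracting $S_\Sigma({\rm Id},A)$ is designed to remove. The only points requiring care are bookkeeping ones: verifying the identification $(h',h').({\rm Id},A)=({\rm Id},\hat{A})$ from \eqref{gauge_action}, and keeping all equalities in $\C/2\pi i\Z$ so that the extension-dependence of $S_\Sigma$ remains harmless. As an alternative to invoking the lemma a second time, one could instead evaluate $S_\Sigma({\rm Id},\hat{A})$ by hand from the Observation $S_\Sigma({\rm Id},B)=-\frac{1}{2\pi i}\int_\Sigma {\rm Tr}(B^{1,0}\wedge B^{0,1})$, expanding $\hat{A}^{1,0}=h'A^{1,0}(h')^{-1}+h'\pl (h')^{-1}$ and $\hat{A}^{0,1}=h'A^{0,1}(h')^{-1}+h'\bar{\pl}(h')^{-1}$; the resulting cross terms then reorganize via the relation $S_\Sigma(h')+S_\Sigma((h')^{-1})=\frac{1}{2\pi i}\int_\Sigma{\rm Tr}((h')^{-1}\pl h'\wedge (h')^{-1}\bar{\pl}h')$, producing the same cancellation.
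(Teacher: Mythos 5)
Your proof is correct. The underlying mechanism is the same as the paper's: everything rests on Lemma \ref{l:invarianceS} and the observation that the anomaly $S_\Sigma(h',A^{1,0})+S_\Sigma((h')^{-1},A^{0,1})$ is independent of the field $h$, so it drops out of the difference defining $\tilde{S}_\Sigma$. The bookkeeping, however, differs. The paper's one-line proof uses Lemma \ref{l:invarianceS} \emph{once}, combined with the Polyakov--Wiegmann consequence $S_\Sigma(h)+S_\Sigma(h^{-1})=\frac{1}{2\pi i}\int_\Sigma {\rm Tr}(h^{-1}\pl h\wedge h^{-1}\bar{\pl}h)$; the latter is exactly what is needed to evaluate $S_\Sigma({\rm Id},\hat{A})$ by expanding ${\rm Tr}(\hat{A}^{1,0}\wedge \hat{A}^{0,1})$, since the cross terms match the $A$-dependent parts of $S_\Sigma(h',A^{1,0})+S_\Sigma((h')^{-1},A^{0,1})$ and the leftover ${\rm Tr}(h'^{-1}\pl h'\wedge h'^{-1}\bar{\pl}h')$ term must cancel against $S_\Sigma(h')+S_\Sigma((h')^{-1})$. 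In other words, the route you sketch as an ``alternative'' at the end of your proposal is the paper's actual proof. Your main argument instead applies Lemma \ref{l:invarianceS} a second time at $h={\rm Id}$, using $(h',h').({\rm Id},A)=({\rm Id},\hat{A})$, and subtracts; this is self-contained modulo the lemma (Polyakov--Wiegmann enters only through the lemma's own proof, not as a separate input) and is arguably the cleaner organization. Both arguments are valid and yield the identity in $\C/2\pi i\Z$.
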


\noindent\textbf{Variation of WZW action.} 
We will check that the WZW action has a simple  variation formula that can be expressed in terms of local quantities on $\Sigma$, namely the curvature of a connection. The critical points of the actions will be interepreted as flat connections, as we  will explain later.
\begin{lemma}\label{l:variation_WZW_action}
Let $h_t\in C^\infty(\Sigma,G^\C)$ be a $1$-parameter smooth family of gauge transforms such that $h_0={\rm Id}$, $A\in C^\infty(\Sigma;\mathfrak{g}^\C\otimes \Lambda^1\Sigma)$ and $h\in C^\infty(\Sigma,G^\C)$. Then we get with $\dot{h}=\pl_th_t|_{t=0}$
\[ \pl_tS_\Sigma (hh_t,A)|_{t=0}= -\frac{1}{2\pi i}\int_\Sigma {\rm Tr}(\dot{h}F_{A^{1,0}+A_{h^{-1}}^{0,1}})\]
where $F_B=dB+B\wedge B$ denotes the curvature of a connection form $B$.
\end{lemma}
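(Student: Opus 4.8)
The plan is to exploit the gauge-covariance identity of Lemma \ref{l:invarianceS} to reduce the variation of $S_\Sigma(hh_t,A)$ to a variation at the identity, and then to differentiate the explicit formula \eqref{gaugeWZW} term by term. Set $B:=A^{1,0}+A^{0,1}_{h^{-1}}$, so that $B^{1,0}=A^{1,0}$ and $B^{0,1}=A^{0,1}_{h^{-1}}$. Using the cocycle property $(A^{0,1}_{h^{-1}})_h=A^{0,1}$ of the gauge action \eqref{def:A_h}, one checks that the left action \eqref{gauge_action} with $(h_1,h_2)=({\rm Id},h)$ sends $(h_t,B)$ to $(hh_t,A)$, i.e. $({\rm Id},h).(h_t,B)=(hh_t,A)$. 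Lemma \ref{l:invarianceS} then gives, in $\C/2\pi i\Z$,
\[ S_\Sigma(hh_t,A)=S_\Sigma(h_t,B)-S_\Sigma({\rm Id},B^{1,0})-S_\Sigma(h^{-1},B^{0,1}). \]
The last two terms do not depend on $t$, and although $S_\Sigma$ is only defined modulo $2\pi i\Z$ its $t$-derivative is unambiguous (the $2\pi i\Z$-ambiguity is locally constant); for a smooth family $h_t$ with $h_0={\rm Id}$ one may choose extensions $\tilde h_t$ smoothly, so differentiation is legitimate. Hence it suffices to compute $\partial_t S_\Sigma(h_t,B)|_{t=0}$.

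Next I would differentiate \eqref{gaugeWZW} for the pair $(h_t,B)$ at $t=0$. The pure WZW contribution $\partial_t S_\Sigma(h_t)|_{t=0}$ vanishes: both the kinetic piece $S^1_\Sigma$ and the topological piece $S^{\rm top}_X$ have first variations given by wedge products each factor of which contains $h_0^{-1}dh_0=0$, so they vanish at the identity. For the connection-dependent terms I would use $\partial_t(h_t^{-1}\bar\partial h_t)|_{t=0}=\bar\partial\dot h$, $\partial_t(h_t\partial h_t^{-1})|_{t=0}=-\partial\dot h$ and $\partial_t(h_tB^{1,0}h_t^{-1})|_{t=0}=[\dot h,B^{1,0}]$, which yields
\[ \partial_t S_\Sigma(h_t,B)|_{t=0}=-\frac{1}{2\pi i}\int_\Sigma{\rm Tr}\big(B^{1,0}\wedge\bar\partial\dot h-\partial\dot h\wedge B^{0,1}+[\dot h,B^{1,0}]\wedge B^{0,1}\big). \]

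The final step is to integrate by parts and reorganize using cyclicity of the trace and the fact that on a Riemann surface $\Lambda^{2,0}\Sigma=\Lambda^{0,2}\Sigma=\{0\}$. Stokes' theorem (no boundary) turns $\int{\rm Tr}(B^{1,0}\wedge\bar\partial\dot h)$ into $\int{\rm Tr}(\dot h\,\bar\partial B^{1,0})$ and $-\int{\rm Tr}(\partial\dot h\wedge B^{0,1})$ into $\int{\rm Tr}(\dot h\,\partial B^{0,1})$, the unwanted $(2,0)$- and $(0,2)$-components of $d\dot h\wedge B$ dropping out. A short trace manipulation gives ${\rm Tr}([\dot h,B^{1,0}]\wedge B^{0,1})={\rm Tr}(\dot h\,B^{1,0}\wedge B^{0,1})+{\rm Tr}(\dot h\,B^{0,1}\wedge B^{1,0})$. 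Collecting, the integrand becomes ${\rm Tr}\big(\dot h(\bar\partial B^{1,0}+\partial B^{0,1}+B^{1,0}\wedge B^{0,1}+B^{0,1}\wedge B^{1,0})\big)$, and since $\partial B^{1,0}=\bar\partial B^{0,1}=0$ and $B^{1,0}\wedge B^{1,0}=B^{0,1}\wedge B^{0,1}=0$ (again by vanishing of $(2,0)$ and $(0,2)$ forms), the parenthesis is exactly $F_B=dB+B\wedge B$. This produces $\partial_t S_\Sigma(hh_t,A)|_{t=0}=-\frac{1}{2\pi i}\int_\Sigma{\rm Tr}(\dot h\,F_{A^{1,0}+A^{0,1}_{h^{-1}}})$, as claimed.

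The main obstacle is organizational rather than conceptual: identifying the right gauge element $({\rm Id},h)$ and the auxiliary connection $B$ so that all the $h$-dependence is pushed into $t$-independent spectator terms, and then carefully tracking signs in the integration by parts together with the bidegree vanishing. A secondary point to be careful about is the legitimacy of differentiating the multivalued functional $S_\Sigma$, which I would justify by the locally constant nature of its $2\pi i\Z$-ambiguity.
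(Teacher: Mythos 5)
Your proof is correct, but it takes a genuinely different route from the paper's. The paper differentiates $S_\Sigma(hh_t,A)$ directly, keeping $h$ general throughout: it first applies the Polyakov--Wiegmann formula (Lemma \ref{l:Polyakov_Wiegmann}) to get $\pl_t S_\Sigma(hh_t)|_{t=0}=-\frac{1}{2\pi i}\int_\Sigma{\rm Tr}(\dot h\, d(h^{-1}\bar{\pl}h))$, then carries out a fairly long direct differentiation (Stokes plus trace manipulations) of the $A$-coupling terms in \eqref{gaugeWZW}, and only at the very end do the pieces reassemble into the curvature of $A^{1,0}+A^{0,1}_{h^{-1}}$. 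You instead push all the $h$-dependence into $t$-independent spectator terms at the outset, by applying the gauge-covariance Lemma \ref{l:invarianceS} with $(h_1,h_2)=({\rm Id},h)$ acting on $(h_t,B)$, $B:=A^{1,0}+A^{0,1}_{h^{-1}}$ --- this hinges on the cocycle identity $(A^{0,1}_{h^{-1}})_h=A^{0,1}$, which indeed follows from $(A_{h_1})_{h_2}=A_{h_2h_1}$ --- and then differentiate at the identity, where every term quadratic in derivatives of $h_t$ drops out and $F_B$ appears in a few lines. Your route buys brevity and makes it conceptually transparent why the curvature of the $h$-twisted connection is the answer; its cost is invoking Lemma \ref{l:invarianceS} (itself a consequence of Polyakov--Wiegmann, so there is no circularity) together with the cocycle check, whereas the paper's computation is self-contained at the level of Lemma \ref{l:Polyakov_Wiegmann}. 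All your individual steps check out: the vanishing $\pl_tS_\Sigma(h_t)|_{t=0}=0$ at the identity (with the extension chosen so that $\tilde h_0={\rm Id}$), the two integrations by parts with the $(2,0)$/$(0,2)$ bidegree vanishing, the trace identity ${\rm Tr}([\dot h,B^{1,0}]\wedge B^{0,1})={\rm Tr}(\dot h(B^{1,0}\wedge B^{0,1}+B^{0,1}\wedge B^{1,0}))$, and the observation that the $2\pi i\Z$-ambiguity does not obstruct differentiation, which is exactly how the paper opens its own proof.
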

\begin{proof} Since one can choose the extensions $\tilde{h}_t$ of $h_t$ to be $C^1$ in $t$, $t\mapsto S_\Sigma (hh_t,A)$ is $C^1$ with values in $\C/2\pi i\Z$, its $\pl_t$-derivative at $t=0$ is thus well-defined independently of the extension.
Using Lemma \ref{l:Polyakov_Wiegmann},
\begin{equation}\label{variationSht}
\begin{split} 
\pl_t S_\Sigma(hh_t)|_{t=0}=& \pl_t S_\Sigma(h)|_{t=0}+\pl_t S_\Sigma(h_t)|_{t=0}+\frac{1}{2\pi i}\pl_t\int_\Sigma{\rm Tr}(h^{-1}\bar{\pl} h\wedge h_t\pl h_t^{-1})|_{t=0}\\
=&- \frac{1}{2\pi i}\int_\Sigma{\rm Tr}(h^{-1}\bar{\pl} h\wedge \pl \dot{h})\\
=& - \frac{1}{2\pi i}\int_\Sigma{\rm Tr}(\dot{h} d(h^{-1} \bar{\pl} h)).
\end{split}\end{equation}
We also compute using Stokes formula in the second equality
\[\begin{split}
(\pl_t(S_\Sigma (hh_t,A)-S_\Sigma(hh_t))|_{t=0}=& -\frac{1}{2\pi i}\int_\Sigma {\rm Tr}(\dot{h}h^{-1}\bar{\pl}h\wedge A^{1,0})+  {\rm Tr}(A^{1,0}\wedge h^{-1}\bar{\pl}(h\dot{h}))\\
 & -\frac{1}{2\pi i}\int_\Sigma {\rm Tr}(\dot{h}\pl h^{-1}\wedge A^{0,1}h)-{\rm Tr}(\pl(\dot{h}h^{-1})\wedge A^{0,1}h)\\
& -  \frac{1}{2\pi i}\int_\Sigma {\rm Tr}(h\dot{h}A^{1,0}h^{-1}\wedge A^{0,1}-hA^{1,0}\dot{h}h^{-1}\wedge A^{0,1}) \\
 =& - \frac{1}{2\pi i}\int_\Sigma {\rm Tr}(\dot{h}h^{-1}\bar{\pl}h\wedge A^{1,0})+{\rm Tr}(\dot{h}d(A^{1,0}h^{-1})h)\\
 & - \frac{1}{2\pi i}\int_\Sigma {\rm Tr}(\dot{h}\pl h^{-1}\wedge A^{0,1}h)+ {\rm Tr}(\dot{h}h^{-1}d(A^{0,1}h))\\
&  -  \frac{1}{2\pi i}\int_\Sigma {\rm Tr}(\dot{h}A^{1,0}\wedge h^{-1}A^{0,1}h+\dot{h} h^{-1}A^{0,1}h\wedge A^{1,0}) \\
 =&- \frac{1}{2\pi i}\int_\Sigma {\rm Tr}(\dot{h} d(A^{1,0}+h^{-1}A^{0,1}h))\\
 & - \frac{1}{2\pi i}\int_\Sigma {\rm Tr}(\dot{h}(A^{1,0}+A_{h^{-1}}^{0,1})\wedge (A^{1,0}+A_{h^{-1}}^{0,1}))
\end{split}\]
and this proves the Lemma by adding \eqref{variationSht}.
\end{proof}

\subsection{The WZW action on the coset space ${\rm SL}(2,\C)/{\rm SU}(2)=\mathbb{H}^3$}\label{WZW_coset}
We will now consider the WZW action on a space of cosets of $G^\C={\rm SL}(2,\C)$ by its maximal compact subgroup $G={\rm SU}(2)$.
 Working in the ${\rm SL}(2,\C)$ setting has the disadvantage that the action $S_\Sigma(h)$ is not bounded above, making the path integral definition from probability not (a priori) possible. By moding out by $G$, we obtain a Riemannian symmetric space, namely the hyperbolic $3$-space $\mathbb{H}^3$. 
Restricting the field $h$ to be  $G^\C/G$ valued turn the action into an functional of $h$ whose formal measure 
$e^{S_\Sigma(h)}Dh$ has the flavour of a Gaussian measure.\\ 

\noindent \textbf{The action.} The quotient ${\rm SL}(2,\C)/{\rm SU}(2)$ is the symmetric space $\mathbb{H}^3$, i.e. the $3$-dimensional hyperbolic space. We wish to describe the cosets in a convenient way. Using the polar decomposition, any matrix $h\in G^\C={\rm SL}(2,\C)$ can be uniquely written as $h=h_+u$ where $h_+$ is a positive definite $2\times 2$ matrix and $u\in {\rm U}(2)$. 
Moreover, since $\det(h)=1$, we deduce that $\det(h_+)=1=\det(u)$, i.e. $u\in {\rm SU}(2)$. We can write $h_+$ uniquely under the form
\begin{equation}\label{coset_p}
h_+=\left(\begin{array}{cc}
e^{\phi}(1+|\nu|^2) & \nu \\
\bar{\nu} &e^{-\phi}
\end{array}\right)
\end{equation}
for some $\phi\in \R$ and $\nu \in \C$. Observe that $h_+=qq^*$ with $q=an$ where 
\begin{equation}\label{def_an}
 a=\left(\begin{array}{cc}
e^{\phi/2}& 0 \\
0 & e^{-\phi/2}
\end{array}\right),\quad n=\left(\begin{array}{cc}
1 & \nu \\
0 & 1
\end{array}\right).
\end{equation}
We can thus represent the coset space as  
\[ {\rm SL}(2,\C)/{\rm SU}(2)\simeq \{ (e^{\phi}, \nu)\in \R^+\times \C\}=\mathbb{H}^3\]
and $\phi, \nu$ will be the coordinates used for this space. 
Consider the trivial vector bundle $E=\Sigma\times \C^2$ equipped with a Hermitian product $g_E$.
Any unitary connection with trivial determinant has the form $d+A$, in the canonical basis $(s_1,s_2)$, 
for some connection $1$-forms $A\in C^\infty(\Sigma;\mathfrak{g}^\C\otimes \Lambda^{0,1}\Sigma)$ satisfying $A^*=-A$ where the adjoint is with respect to $g_E$. Consider the set of pairs 
\[  \mc{Z}:= \{ (h,A) \in C^\infty(\Sigma;G^{\C})\times C^\infty(\Sigma;\mathfrak{g}^{\C} \otimes \Lambda^{0,1}\Sigma) \,|\, h \textrm{ positive definite}, \,  A^*=-A\}.\]
We claim that the restriction of the action \eqref{gauge_action} to smooth functions with values in  $(G^\C)_d:=\{ ((h^*)^{-1},h)\in G^\C \times G^\C\}$ preserves the set $\mc{Z}$. Indeed, for $(h,A)\in \mc{Z}$ and $h_0\in C^\infty(\Sigma,G^\C)$, the section $h_0hh_0^*$ takes values in positive definite matrices in $G^\C$ and 
\[ \begin{split}
(A_{{h_0^*}^{-1}}^{1,0}+A_{h_0}^{0,1})^*= & h_0(A^{1,0})^*h_0^{-1}+{h_0^*}^{-1}(A^{0,1})^*h_0^*
+(\pl h_0^*)^*h_0^{-1}+ (\bar{\pl} h_0^{-1})^*h_0^* \\
=& - h_0A^{0,1}h_0^{-1}-{h_0^*}^{-1}A^{1,0}h_0^*+\bar{\pl} h_0 h_0^{-1}+\pl {h_0^*}^{-1}h_0^*\\
 =& -(A_{{h_0^*}^{-1}}^{1,0}+A_{h_0}^{0,1})^*.
\end{split}\]

\noindent \textbf{Absence of quantization condition for $\mathbb{H}^3$-WZW.} Let us make an important remark about the possible values of multiple $k$ of the action 
to define the formal measure $e^{kS_\Sigma(h)}Dh$.
Proceeding as in \eqref{S_top}, if $h\in C^\infty(\Sigma;G^\C)$ is positive definite (thus of the form \eqref{coset_p} with $(e^{\phi},\nu)\in \mathbb{H}^3$) and $\tilde{h}_i\in C^\infty(X;G^\C)$ two extensions to two $3$-manifolds $X_1,X_2$ with boundary $\Sigma$ we see that, if $X_1\# X_2$ is the connected sum of $X_1$ with $X_2$ along their boundary,
\[S^{\rm top}_{X_1}(\tilde{h}_1)-S^{\rm top}_{X_2}(\tilde{h}_2)=\frac{1}{12\pi i}\int_{X_1\# X_2}{\rm Tr}((\tilde{h}^{-1}d\tilde{h})^{\wedge 3})=0\]
since $\mathbb{H}^3$ is contractible. We then see that $S^{\rm top}_{X}(\tilde{h})$ and $S_\Sigma(h,A)$ as elements in $\C$ (rather than $\C\setminus 2\pi i\Z$ when $h$ takes values in ${\rm SL}_2(\C)$) are independent of $X$ and of the extension of $h$ to a $3$-manifold $X$. The functional $e^{kS_\Sigma(h)}$ is then well-defined for all $k\in \C$ independently of the extension $\tilde{h}$ to $X$ as a positive definite matrix, if $h\in C^\infty(\Sigma;G^\C)$ is positive definite.\\

\noindent \textbf{Computation of the action.} 
Let us compute the action $S_\Sigma(h)$ where $h=ann^*a$ is a representative of the coset. We represent $a,n$ as in \eqref{def_an} with $\phi:\Sigma \to \R$ and $\nu:\Sigma\to \C$ two smooth functions. As we shall see in the probabilistic approach, it will be convenient to consider the variable $\gamma:=e^{\phi}\nu$. 
\begin{lemma}\label{lem:S(h)}
For $h=a nn^*a$ with the notation \eqref{def_an} and  with $\gamma:=e^{\phi}\nu$, the following holds true in $\R$
\begin{equation}\label{actionS(h)}
 \begin{split}
 S_\Sigma(h)
 &= \frac{1}{2\pi i} \int_\Sigma \pl \phi \wedge \bar{\pl}\phi + (\pl+\pl \phi)\bar{\nu}\wedge (\bar{\pl}+\bar{\pl}\phi)\nu\\ 
& = \frac{1}{2\pi i} \int_\Sigma \pl \phi \wedge \bar{\pl}\phi + e^{-2\phi}\pl \bar{\gamma} \wedge \bar{\pl}\gamma.
 \end{split}
 \end{equation}
\end{lemma}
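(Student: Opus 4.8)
The plan is to reduce the whole computation to the Polyakov--Wiegmann anomaly formula (Lemma \ref{l:Polyakov_Wiegmann}) applied to the factorisation $h=qq^*$ with $q=an$, using the matrices $a,n$ of \eqref{def_an}. First I would write, in $\C/2\pi i\Z$,
\[
S_\Sigma(qq^*)=S_\Sigma(q)+S_\Sigma(q^*)+\Upsilon(q,q^*),
\]
and expand further $S_\Sigma(q)=S_\Sigma(an)=S_\Sigma(a)+S_\Sigma(n)+\Upsilon(a,n)$, and, since $a^*=a$ gives $q^*=n^*a$, also $S_\Sigma(q^*)=S_\Sigma(n^*)+S_\Sigma(a)+\Upsilon(n^*,a)$. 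The value of this grouping is that the individual factors $a$ (diagonal) and $n,n^*$ (unipotent) have very simple Maurer--Cartan forms.

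Next I would dispatch the elementary pieces. As $n^{-1}dn$ and $(n^*)^{-1}dn^*$ are strictly triangular, hence nilpotent, both the $S^1_\Sigma$-integrand and $(n^{-1}dn)^{\wedge 3}$ vanish, so $S_\Sigma(n)=S_\Sigma(n^*)=0$; likewise $a^{-1}da=\tfrac12 d\phi\,\sigma_3$ (with $\sigma_3={\rm diag}(1,-1)$) is abelian, so $(a^{-1}da)^{\wedge3}=0$ and a one-line trace computation gives $S_\Sigma(a)=\frac{1}{8\pi i}\int_\Sigma\pl\phi\wedge\bar\pl\phi$. The mixed terms $\Upsilon(a,n)$ and $\Upsilon(n^*,a)$ vanish because the product of a diagonal matrix with a strictly triangular one is traceless. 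The only substantial term is $\Upsilon(q,q^*)=\frac{1}{2\pi i}\int_\Sigma{\rm Tr}(q^{-1}\bar\pl q\wedge q^*\pl(q^*)^{-1})$: here I would use $\pl q^*=(\bar\pl q)^*$ to obtain the identity $q^*\pl(q^*)^{-1}=-(q^{-1}\bar\pl q)^*$, and compute directly from \eqref{def_an} that
\[
q^{-1}\bar\pl q=\begin{pmatrix}\tfrac12\bar\pl\phi & (\bar\pl+\bar\pl\phi)\nu\\ 0 & -\tfrac12\bar\pl\phi\end{pmatrix}.
\]
Taking the trace of the wedge with $-(q^{-1}\bar\pl q)^*$ and using that $1$-forms anticommute then yields integrand $\tfrac12\pl\phi\wedge\bar\pl\phi+(\pl+\pl\phi)\bar\nu\wedge(\bar\pl+\bar\pl\phi)\nu$.

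Summing the four contributions, the coefficient of $\int_\Sigma\pl\phi\wedge\bar\pl\phi$ is $\frac{1}{8\pi i}+\frac{1}{8\pi i}+\frac{1}{4\pi i}=\frac{1}{2\pi i}$, giving exactly the first displayed equality. The second equality is then immediate: writing $\gamma=e^\phi\nu$ one has $\bar\pl\gamma=e^\phi(\bar\pl+\bar\pl\phi)\nu$ and $\pl\bar\gamma=e^\phi(\pl+\pl\phi)\bar\nu$, so $e^{-2\phi}\pl\bar\gamma\wedge\bar\pl\gamma=(\pl+\pl\phi)\bar\nu\wedge(\bar\pl+\bar\pl\phi)\nu$. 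The main obstacle is conceptual rather than computational, namely the passage from an identity in $\C/2\pi i\Z$ to an honest identity in $\R$. Since $h$ is positive definite, $S_\Sigma(h)$ is already well-defined in $\C$ (subsection \ref{WZW_coset}); moreover each factor $q,q^*,a,n,n^*$ lies in a contractible subgroup (upper/lower triangular, diagonal, or unipotent) on which the Wess--Zumino $3$-form is exact, so all topological terms are extension-independent and, as computed above, vanish. Hence every equality above holds in $\C$, and reality follows by inspecting the integrand in a holomorphic chart, where $\pl\phi\wedge\bar\pl\phi=-2i\,|\pl_z\phi|^2\,dx\wedge dy$ and similarly $e^{-2\phi}\pl\bar\gamma\wedge\bar\pl\gamma=-2i\,e^{-2\phi}|\pl_{\bar z}\gamma|^2\,dx\wedge dy$, so that $S_\Sigma(h)\in\R$.
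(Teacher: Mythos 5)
Your proof is correct, and it rests on the same two pillars as the paper's: the Polyakov--Wiegmann formula of Lemma \ref{l:Polyakov_Wiegmann} applied to the factorisation of $h$ into the blocks $a,n,n^*$ of \eqref{def_an}, together with a choice of extensions valued in contractible subsets of $G^\C$ so that all identities hold in $\C$ rather than $\C/2\pi i\Z$ (the paper achieves this by taking $\H^3$-valued extensions $\tilde h=(\tilde a\tilde n)(\tilde a\tilde n)^*$). What you do differently is the grouping. The paper writes $h=a\cdot(nn^*a)$ and must evaluate two non-trivial cross terms, $\Upsilon(n,n^*a)$ and $\Upsilon(a,nn^*a)$, whose integrands only recombine into the stated formula at the end; you instead write $h=qq^*$ with $q=an$, note $\Upsilon(a,n)=\Upsilon(n^*,a)=0$ by triangularity, and reduce everything to the single term
\[
\Upsilon(q,q^*)=-\frac{1}{2\pi i}\int_\Sigma {\rm Tr}\big(\omega\wedge\omega^*\big),\qquad \omega:=q^{-1}\bar\pl q,
\]
using the symmetry $q^*\pl(q^*)^{-1}=-(q^{-1}\bar\pl q)^*$. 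This makes the ``sum of squares'' structure of the answer visible in one step and is a genuine, if modest, streamlining. Two remarks. First, your intermediate value $S_\Sigma(a)=\frac{1}{8\pi i}\int_\Sigma\pl\phi\wedge\bar\pl\phi$ is the correct one, since ${\rm Tr}\big((a^{-1}\pl a)\wedge(a^{-1}\bar\pl a)\big)=\tfrac12\,\pl\phi\wedge\bar\pl\phi$; the paper's displayed $S_\Sigma(a)=\frac{1}{4\pi i}\int_\Sigma\pl\phi\wedge\bar\pl\phi$ is off by a factor $2$ (with that value the paper's own four terms would sum to $\tfrac32\,\pl\phi\wedge\bar\pl\phi$ plus $\nu$-terms, not $\pl\phi\wedge\bar\pl\phi$), so your bookkeeping, not the paper's display, is the internally consistent one. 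Second, your phrase that ``all topological terms vanish'' is slightly overstated for $h$ itself, whose topological term is not pointwise zero but is accounted for by the Polyakov--Wiegmann boundary terms; since you only invoke the vanishing for the factors $a,n,n^*,q,q^*$ --- where the integrand ${\rm Tr}((\tilde q^{-1}d\tilde q)^{\wedge 3})$ does vanish identically, as it involves only the diagonal entries $\pm\tfrac12 d\tilde\phi$ --- this does not affect the argument.
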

\begin{proof}
Since we choose the extension of $h$ to a $3$-manifold $X$ with $\pl X=\Sigma$  to be in $\mathbb{H}^3$ (positive definite), it has the form $\tilde{h}=(\tilde{a}\tilde{n})(\tilde{a}\tilde{n})^*$ with $\tilde{a},\tilde{n}$  
of the form \eqref{def_an} with $\tilde{\phi},\tilde{\nu}$ some extension of $\phi,\nu$ to $X$. 
We first observe that $S_{\Sigma}(n)=S_{\Sigma}(n^*)=0$ since $n^{-1}dn$  and ${n^*}^{-1}dn^*$ are nilpotent matrices,
and, a direct computation gives $S_{\rm top}(\tilde{a})=0$, implying that 
\[S_\Sigma(a)=\frac{1}{4\pi i} \int_\Sigma \pl \phi \wedge \bar{\pl}\phi. \]
By Lemma \ref{l:Polyakov_Wiegmann} and using that $S_{\Sigma}(n)=S_{\Sigma}(n^*)=0$ (since $n^{-1}dn$  and ${n^*}^{-1}dn^*$ are nilpotent matrices), one obtains
\[ S_\Sigma(a nn^*a )=2S_\Sigma (a)+\Upsilon(a,nn^*a)+\Upsilon(n,n^*a)+\Upsilon(n^*,a).\] 
We also have $\Upsilon(n^*,a)= 0$, and a direct computation  by taking the extension $\tilde{h}=\tilde{a}\tilde{n}$ gives
\[\Upsilon(n,n^*a)=-\frac{1}{2\pi i}\int_\Sigma \bar{\nu}\bar{\pl}\nu \wedge \pl \phi+\bar{\pl}\nu\wedge \pl \bar{\nu}\]
\[\Upsilon(a,nn^*a)= -\frac{1}{2\pi i}\int_{\Sigma}  (\frac{1}{2}+|\nu|^2) \bar{\pl}\phi\wedge \pl \phi+\nu\bar{\pl}\phi \wedge \pl \bar{\nu}.\qedhere\]
\end{proof}

\noindent \textbf{Discussion on a probabilistic definition of the path integral for non-trivial connections.} Using the expression \eqref{actionS(h)} for the action $S_\Sigma(h)$, we have seen in Section \ref{sub:PIsphere} that the path integral for the WZW action with $A=0$ can be done using $\phi=c+X_g$ a GFF and $\gamma_g$ a witten field with covariance written in terms of GMC integrals. When $A\not=0$, there are terms in the action \eqref{gaugeWZW} that cause problems. On the other hand, if we consider the special case where $A^{0,1}$ is upper triangular (and $A^{1,0}=-(A^{0,1})^*$)
\[ A^{0,1}=\left(\begin{array}{cc}
a & \beta  \\
0 & -a
\end{array}\right), \quad  a,\beta \in C^\infty(\Sigma;\Lambda^{0,1}\Sigma),\]
a direct computation gives\footnote{Lemma \ref{lem:S(h,A)_formula_stable} gives a more general proof.}, for $h$ of the form \eqref{coset_p} and $\gamma=e^{\phi}\nu$, 
\[\begin{split} 
\tilde{S}_{\Sigma}(h,A)=& \frac{1}{2\pi i} \int_\Sigma \pl \phi \wedge \bar{\pl}\phi +e^{-2\phi}( (\pl-2\bar{a})  (e^{\phi}\bar{\nu})+\bar{\beta})\wedge  ((\bar{\pl}+2a)(e^{\phi}\nu)+ \beta)-  2\phi da-\bar{\beta}\wedge \beta \\
=&  -\frac{1}{\pi }\int_\Sigma (\frac{1}{4}|d\phi|^2_g +e^{-2\phi}|(\bar{\pl}+2a)\gamma+ \beta|^2_{g}-|\beta|^2_{g}){\rm dv}_g -\frac{1}{\pi i}\int_{\Sigma }\phi da.
\end{split}\]
The probabilistic definition of the path integral for such action can be performed in a way quite similar to Section \ref{sub:PIsphere}, but replacing
the operator  $\caD_\phi =\bar{\pl}^\ast e^{-2\phi}\partial $ by 
\[ \mc{D}_\phi= (\bar{\pl}+2a)^*e^{-2\phi}(\bar{\pl}+2a)\]
and using its inverse as a covariance kernel for the random variable $\gamma_g$, still with $\phi=c+X_g$ a GFF. The $\beta$ term is a shift and can be dealt with by Cameron-Martin. 

When $A^{0,1}$ is not upper (or lower) triangular, the action contains some terms of the form 
$e^{\phi}|\nu|^2 \pl\nu$ and $e^{\phi}\nu \pl|\nu|^2$ that appear (with $\nu=e^{-\phi}\gamma$), and finding a proper probabilistic definition for these terms is not clear to us, 
because of the regularity of $\gamma_g$ and $X_g$. We shall rather use a geometric approach to set a definition for the path integral in this general case. The very definition of the fields $\phi$ and $\gamma$ is associated to the splitting $\C\oplus \C$ of $\C^2$ using the basis 
$s_1,s_2$. When $A^{0,1}$ is triangular superior, it means that $\C s_1$ is a holomorphic line subbundle  (this notion is explained in the next section) of $E=\Sigma \x \C^2$, and in general we will use that there always exists a holomorphic subbundle of $E$. Such subbundles do not a priori admit a global non-vanishing section. This procedure needs the general theory of rank $2$ holomoprhic bundles and its classification using the theory of extensions. This is explained in the next section.

\subsection{Holomorphic vector bundles}\label{sec:holo_vect_bundle}

In this section, we recall for completeness some standard facts about (rank 2) vector bundles on surfaces, how they define a principal ${\rm SL}(2,\C)$ bundle, their holomorphic structure and Hermitian metrics/unitary connections, as well as their description using extensions.
A reader familiar with these concepts can skip directly to Section \ref{s:WZWaction_E}.\\

\noindent\textbf{Complex vector bundles.} A smooth complex vector bundle $E\to \Sigma$ with projection $\pi_E$ is said to have (complex) \textbf{rank} $n$ if $\dim_{\C} E_x=n$ for each $x$ and $E_x:=\pi_E^{-1}(x)$ is the fiber over $x$. A smooth morphism $F:E\to E'$ between two rank $n$ complex vector bundle is a smooth map such that there is $f:\Sigma\to \Sigma$ smooth with $\pi_{E'}\circ F=f\circ \pi_E$ and such that $F|_{E_x}:E_x\to E'_{f(x)}$ is a complex linear map. We say it is an isomorphism if it admits a smooth inverse morphism.  Equivalence classes of isomorphic complex vector bundles are characterized by their rank and their degree. The \textbf{degree} of $E$ is denoted ${\rm deg}(E)$ and is defined to be 
\[{\rm deg}(E):=\int_\Sigma c_1(E)\in \Z\]
if $c_1(E)\in H^2(\Sigma,\Z)$ is the first Chern class of $E$ (the Chern class can be computed by choosing a connection $\nabla$ on $E$ and then  by using the formula \eqref{Chern_class}) and $H^k(\Sigma,\Z)$ is the $k$-th cohomology group of $\Sigma$ with value in $\Z$.
Two smooth complex vector bundles $E$ and $E'$ with (complex) rank $n$ are isomorphic if and only if ${\rm deg}(E)={\rm deg}(E')$.\\

\noindent\textbf{Principal bundles associated to $E$.} There is a (smooth) principal ${\rm GL}(n,\C)$ bundle $\pi_P:P\to \Sigma$ associated to $\pi_E:E\to \Sigma$ defined by setting the fibers $P_x=\pi^{-1}_P(x)$ at $x\in \Sigma$ to be 
\[ P_x=\{ p: \C^n \to E_x \textrm{ complex linear isomorphism}\}\]
and with action 
\begin{equation}\label{rightaction} 
(h,(x,p_x)) \in {\rm GL}(n,\C)\times  P \mapsto (x, p_x\circ h). 
\end{equation} 
The determinant bundle $\Lambda^nE$ of $E$ is the rank $1$ complex vector bundle over $\Sigma$ with fiber at $x$ given by $\Lambda^nE_x$. If the determinant bundle is trivial, there is a global non-vanishing section $s:\Sigma \to \Lambda^nE$ and one can define a principal ${\rm SL}(n,\C)$ bundle $\pi_Q:Q\to \Sigma$ by setting the fiber at $x$ to be 
\[ Q_x=\{ p: \C^n \to E_x \textrm{ complex linear isomorphism}\,|\, p_*(e_1\wedge \dots \wedge e_n)=s\}\]
if $(e_1,\dots, e_n)$ is the canonical basis of $\C^n$ and $p_*$ the linear action $\Lambda^n\C^n\to \Lambda^nE_x$ induced by $p:\C^n\to E_x$. The action \eqref{rightaction} with ${\rm GL}(n,\C)$ replaced by  ${\rm SL}(n,\C)$ produces an ${\rm SL}(n,\C)$-principal bundle structure on $Q$.
Any principal  ${\rm SL}(2,\C)$-bundle over a closed surface $\Sigma$ is trivial, thus has a global section. Indeed, there is a bijection between the isomorphism classes of ${\rm SL}(2,\C)$-principal bundles and the homology group $H^4(\Sigma,\Z)$, which is equal to $0$ as $\dim \Sigma=2$ (the bijection is given by taking the second Chern class $c_2(E)\in H^4(\Sigma,\Z)$ of $E$, where $E$ is a rank-$2$ complex vector bundle with trivial determinant associated to $P$).
This implies that the bundle $\pi_E:E\to \Sigma$ is trivial if its determinant bundle is trivial.\\

\noindent\textbf{Holomorphic vector bundles.} Here we review classical material about holomorphic vector bundles on surfaces, we refer to \cite{Kobayashi,Schaffhauser} and \cite[Appendix]{Wells} for more details.
A holomorphic structure on a smooth complex vector bundle $\pi_E: E\to \Sigma$ of rank $n$  over a Riemann surface $\Sigma$  is an atlas of smooth local trivializations $h_i: \pi^{-1}(U_i)\to U_i\times \C^n$, $i\in I$, such that the transition functions  
$h_{ij}=h_i\circ h_j^{-1}: U_i\cap U_j\to {\rm GL}(n,\C)$ are holomorphic with respect to the holomorphic structure on 
$\Sigma$. They satisfy the cocycle relation $h_{ij}h_{jk}=h_{ik}$ on $U_i\cap U_j\cap U_k$.
We say that a  section $s: U\subset \Sigma\to E$ on an open set $U$ is holomorphic if for all $i$, $h_i\circ s$ is holomorphic as a map from 
$U_i\cap U$ to $\C^n$. Tensor products of holomorphic vector bundles are also holomorphic vector bundles and the dual of a holomorphic vector bundle is also a holomorphic vector bundle. The bundle dual to a complex (or holomorphic) vector bundle $E$
will be denoted by $E^*$. A smooth complex vector subbundle 
$F\subset E$ (of dimension $k<n$) is said to be a \textbf{holomorphic subbundle} of $E$ if one can find trivializations of $E$ that satisfy 
$h_i:F \to U_i\times \C^{k}\times \{0\}$ and the transition functions $h_{ij}$ preserve $\C^k\otimes \{0\}$, i.e. they have upper triangular block 
form in ${\rm GL}(n,\C)$, over $U_i\cap U_j$. The $h_i|_{F}$ thus induce a structure of holomorphic vector bundle on $F$.
If $F\subset E$ is a holomorphic subbundle, then $E/F$, the complex vector bundle with fiber $E_x/F_x$ at $x\in \Sigma$, also inherits a structure of holomorphic vector bundle. But in general one can not find a complement holomorphic vector bundle 
$F'\subset E$ such that $F\oplus F'=E$.

The determinant bundle $\Lambda^nE$ inherits a holomorphic structure: a local section $s:U_i\to \Lambda^nE$ is holomorphic if for each local holomorphic trivialization $h_i$ of $E$, one has $(h_i)_*s=f_i e_1\wedge \dots \wedge e_n$ for some holomorphic function $f_i:U_i\to \C$, with $(e_1,\dots,e_n)$ the canonical basis of $\C^n$.
If $\Lambda^nE$ is holomorphically trivial, there is a global holomorphic section $s:\Sigma \to \Lambda^nE$ and 
$(h_i)_*s=f_i(x)e_1\wedge \dots \wedge e_n$ for some non-vanishing holomorphic functions $f_i:U_i\to \C$ , satisfying $f_i(x)f_j(x)^{-1}= \det(h_{ij}(x))$. We can modify the trivialization $h_i$ to $\tilde{h}_i=(f_i^{-1}h_{i,1},h_{i,2},\dots,h_{i,n})$ if $h_{i}=(h_{i,1},\dots,h_{i,n})$, and we see that 
$\tilde{h}_{ij}:=\tilde{h}_i\circ \tilde{h}_j^{-1}$ belongs to ${\rm SL}(n,\C)$. Conversely, if there is a covering by local trivializations such that $h_{ij}: U_i\cap U_j\to {\rm SL}(n,\C)$, the determinant bundle is trivial, since the local sections 
$s_i:=(h_i^{-1})_*(e_1\wedge \dots \wedge e_n)$ glue together to a global holomorphic non-vanishing section of $\Lambda^nE$. We shall say in this case that $E$ has \textbf{trivial determinant}.\\

\noindent\textbf{Dolbeault operators}. Given a smooth complex vector bundle $E$, a Dolbeault operator is a $\C$-linear differential operator of order $1$ acting on $E$-valued forms and 
satisfying $\bar{\pl}_E \circ \bar{\pl}_E=0$ and $\bar{\pl}_E(fs)= s\otimes (\bar{\pl}f)+ f\bar{\pl}_Es$ for any $f\in C^\infty(\Sigma)$ and $s\in  C^\infty(\Sigma;E)$. In fact, defining $\bar{\pl}_E: C^\infty(\Sigma;E)\to C^\infty(\Sigma;E\otimes \Lambda^{0,1}\Sigma)$ is sufficient, as it induces naturally an operator on all forms, satisfying $\bar{\pl}_E \circ \bar{\pl}_E=0$.
As an example, if $\nabla:C^\infty(\Sigma;E)\to C^\infty(\Sigma;E\otimes \Lambda^1\Sigma)$ is a connection on the bundle 
$E$, we can split the connection under the form $\nabla^{1,0}+\nabla^{0,1}$ with 
\[\nabla^{0,1}: C^\infty(\Sigma;E)\to C^\infty(\Sigma;E\otimes \Lambda^{0,1}\Sigma), \quad \nabla^{1,0}: C^\infty(\Sigma;E)\to C^\infty(\Sigma;E\otimes \Lambda^{1,0}\Sigma)\]
and we denote $\bar{\pl}^\nabla:=\nabla^{0,1}$. This operator $\bar{\pl}^\nabla$ can be extended to acting on all differential forms and it satisfies $\bar{\pl}^\nabla \circ \bar{\pl}^\nabla=0$; it is thus a Dolbeault operator. 
On a holomorphic vector bundle,  there is a Dolbeault  operator 
\[ \bar{\pl}_E: C^\infty(\Sigma;E\otimes  \Lambda^{p,q}\Sigma)\to C^\infty(\Sigma;E\otimes \Lambda^{p,q+1}\Sigma)\]
given in local trivializations $h_i$ over $U_i$ by 
\[\forall f\in C^\infty(U_i;\Lambda^{p,q}\C),\quad  \bar{\pl}_E (fh_i):= h_i\otimes \bar{\pl}f\]
and whose elements in $\ker \bar{\pl}_E\cap C^1(U;E)$ (for $U\subset \Sigma$ open set) are exactly the holomorphic sections of $E$ in $U$.
Conversly, by \cite[Proposition 3.7]{Kobayashi}, for a given Dolbeault operator $\bar{\pl}_E$ there is a holomorphic vector bundle on $E$ compatible with $\bar{\pl}_E$ (i.e. whose local sections killed by $\bar{\pl}_E$ are holomorphic), and it is unique up to isomorphism. Riemann-Roch theorem allows to relate the dimensions of the cohomologies
\[H^0(\Sigma;E):=\ker \bar{\pl}_E, \quad H^1(\Sigma;E):=C^\infty(\Sigma;E\otimes\Lambda^{0,1}\Sigma)/ \{ \bar{\pl}_Ef \,|\, f\in C^\infty(\Sigma;E)\} \] 
by the formula, if $r={\rm rank}(E)$,
\begin{equation}\label{RRE} 
\dim H^0(\Sigma;E)-\dim H^1(\Sigma;E)={\rm deg}(E)-r(g(\Sigma)-1).
\end{equation}
The group $\mc{G}_E$ of automorphisms of $E$ acts on the affine space of Dolbeault operators by 
\[ h.\bar{\pl}_E(s) := h\bar{\pl}_E(h^{-1}s)\]
and if ${\rm Dol}(E)$ denotes the space of Dolbeault operators, we have an isomorphism 
\[ {\rm Dol}(E)/\mc{G}_E \to \{\textrm{Holomorphic structures on }E\}/ {\rm isomorphism}.\]

This space of equivalence classes is studied in \cite{Narasimhan-Seshadri}, it is in general non-Hausdorff but there is a subclass of complex vector bundles, which is generic, for which the equivalence class is an analytic variety, the so-called stable bundles; 
let us give a very brief summary of the picture. 
This discussion will not be used in the current article but 
it is still instructive. Indeed,  it is natural to investigate how the path integral reacts to variations of the holomorphic vector bundle inside the moduli space, giving rise to the Kac-Moody symmetries.   
To define stable bundles, we need the notion of slope of a vector bundle $E$, given by $\mu(E):=\deg(E)/{\rm rank}(E)$.
A holomorphic vector bundle $(E,\bar{\pl}_E)$ on $\Sigma$ is called \textbf{stable} if for all non-trivial holomorphic vector subbundle $F$
\[ \mu(F)<\mu(E)\]
while it is called \textbf{semi-stable} if for all non-trivial holomorphic vector subbundle $F$
\[ \mu(F)\leq \mu(E).\] 
A stable bundle is indecomposable, i.e. can not be decomposed as a direct sum of holomorphic subbundles. 
In our case of interest, $E$ has trivial determinant (thus ${\rm deg}(E)=0$) and rank $2$, thus if it is stable (resp. semi-stable) then 
we see that any holomorphic line subbundle $\mc{L}$ satisfies
\begin{equation}\label{stable_deg}
{\rm deg}(\mc{L})<0 \qquad  (\textrm{resp. } {\rm deg}(\mc{L})\leq 0.) 
\end{equation}
The space of stable vector bundles is denoted $\mc{C}_s:=\{ \bar{\pl}_E\,|\, (E,\bar{\pl}_E) \textrm{ is stable}\}$ and the moduli space of stable vector bundle of rank $n$ and degree $d$ is the set 
\[\mc{M}_s(n,d):= \mc{C}_s/\mc{G}_E.\] 
 It was proved by Mumford and Seshadri that $\mc{M}_s(n,d)$ admits a structure of smooth complex manifold of dimension
 $1+n^2(g_\Sigma-1)$ if $g_\Sigma\geq 2$ denotes the genus of $\Sigma$. In the case we study in this article, $n=2$ and we assume in addition that the determinant is trivial, the space of stable bundles of rank $2$ with trivial determinant then has complex dimension $3g(\Sigma)-3$.\\

\noindent\textbf{Connection form.} 
Let $(E,\bar{\pl}_E)$ be a holomorphic vector bundle of rank $n$. If $U\subset \Sigma$ is a (simply connected) local chart, we can trivialize $E$ over $U$ using sections $S=(s_1,\dots, s_n)$.  Then the Dolbeault operator $\bar{\pl}_E$ is represented in the trivialization by: if $f_j$ are smooth functions on $U$  
\[\begin{split} 
\bar{\pl}_E(\sum_{j=1}^n f_j s_j)& =\sum_{j=1}^n (s_j\otimes \bar{\pl} f_j +f_j \bar{\pl}_Es_j) =\sum_{j=1}^n (\bar{\pl}_zf_jd\bar{z}+\sum_{i=1}^n A_{ji}d\bar{z} f_i)s_j 
\end{split}\] 
where  $A_S:=(A_{ij}d\bar{z})_{ij}$  is the  $(0,1)$-form over $U$, with values in the Lie algebra ${\rm GL}(n,\C)$ of ${\rm GL}(n,\C)$, defined by  $\bar{\pl}_E s_i=\sum_{j=1}^nA_{ji}s_jd\bar{z}$. This form is called the \textbf{connection form} of $\bar{\pl}_E$ in the frame $S$. 
Changing the basis of sections to $S'=(s'_1,\dots, s'_n)$ is encoded in a local gauge transform $h\in C^\infty(U,{\rm GL}(n,\C))$ with $S'=Sh$, i.e. $s'_i=\sum_{j=1}^n  h_{ji} s_j$,  and one has 
$$ A_{S'}=h^{-1}A_Sh  +  h^{-1} (\bar{\pl}h).$$

If $E$ is a trivial complex vector bundle (i.e. ${\rm deg}(E)=0$), it admits a global trivialization and the connection form $A_S$ in a trivialization $S$ is defined globally (depending on $S$). If in addition the holomorphic line bundle has trivial determinant, then the local connection $1$-form in a frame $S$ has valued in the Lie algebra ${\rm SL}(n,\C)$ of ${\rm SL}(n,\C)$. In particular, holomorphic rank $2$ bundles with trivial determinant admits a global smooth trivialization and thus can be described by a global connection $1$-form $A^{0,1}\in C^\infty(\Sigma;{\rm SL}(2,\C))$ (in the trivialization).\\

\noindent\textbf{Hermitian metric and unitary connections.} Consider $E$ a  complex vector bundle on a surface $\Sigma$. A Hermitian metric $g_E$ on $E$ is a smooth family $g_E\in C^\infty(\Sigma; S^2E^*)$ of Hermitian metrics on $E$ (here $S^2E^*$ denotes symmetric tensor product). A connection $\nabla$ on $E$ is said \textbf{unitary} if for each vector field $v$ and sections $u_1,u_2\in C^\infty(\Sigma;E)$
\[ v(g_E(u_1,u_2))=g_E(\nabla_{v}u_1,u_2)+g_E(u_1,\nabla_vu_2).\]
 It is a classical result (\cite[Chap. 3, Sec. 2]{Wells}) that if $(E,\bar{\pl}_E)$ is a holomorphic vector bundle equipped with a Hermitian metric $g_E$, there is a unique unitary connection $\nabla$ such that $\nabla^{0,1}=\bar{\pl}_E$. 
 The curvature $F_{\nabla}\in C^\infty(\Sigma; {\rm End}(E)\otimes \Lambda^2\Sigma)$ is called central if 
 \begin{equation}\label{central_connection}
F_{\nabla}=-i\mu {\rm Id}_E \omega_\Sigma
 \end{equation} 
 where $\omega_\Sigma$ is the canonical K\"ahler form on $\Sigma$ normalized so that $\int_\Sigma \omega_\Sigma=2\pi$. A connection with central curvature is called projectively flat, and flat if the curvature is $0$. 
 Since the first Chern class is 
 \begin{equation}\label{Chern_class}
 c_1(E)=\frac{i}{2\pi}{\rm Tr}(F_\nabla),
 \end{equation} 
 we see that $\mu=\mu(E)$ is the slope of $E$.\\

\noindent\textbf{Decomposition of rank $2$ bundles in terms of extensions.} Let us now focus on holomorphic vector bundles of rank $2$ and trivial determinant. A classical way to describe such bundles is to decompose them in terms of line subbundles, via a method called extension. 
First, any holomorphic rank $2$ vector bundle $E$ (with holomorphically trivial determinant in our case) contains a holomorphic line subbundle, i.e. of rank $1$, see \cite[Section 1.2]{Donaldson2021}. For convenience of notations for later, we shall denote by $\mc{L}^{-1}$ this holomorphic line subbundle of $E$ and by $\mc{L}$ its dual line bundle (note that $\mc{L}$ is not a priori a holomorphic subbundle of $E$). We also denote more generally by $\mc{L}^k:=\otimes^{k}\mc{L}$ and $\mc{L}^{-k}:=\otimes^{k}\mc{L}^{-1}$ the tensor products for $k>0$ (which are also holomorphic line bundles).  Our holomorphic line subbundle of $E$ is thus $\mc{L}^{-1}$, which means that there is an exact sequence of holomorphic vector bundles
\begin{equation}\label{exact_sequence} 
0 \rightarrow \mc{L}^{-1}  \rightarrow E  \rightarrow Q \rightarrow 0 
\end{equation}
where $Q:=E/\mc{L}^{-1}$ is a holomorphic line bundle, the arrow $\mc{L}^{-1}\to E$ is simply the injection and the map $E  \rightarrow Q $ is the natural quotient map, both of them being holomorphic in the sense that they map local holomorphic sections to local holomorphic sections.
Since $E$ has trivial determinant, there is a global holomorphic section $s$ of $\det(E)$ and this induces an isomorphism 
\[ \psi: Q \to  \mc{L} ,\quad   (\ell^-\wedge q)=\psi(q)(\ell^-)s, \quad \forall \ell^- \in \mc{L}^{-1}.\]
Here, we have viewed $\mc{L}=(\mc{L}^{-1})^*$ as the dual of $\mc{L}^{-1}$, we  
have used that for $\ell^- \in \mc{L}^{-1}$, $q\in E\mapsto \ell^-\wedge q\in \Lambda^2E$ 
descends to $Q=E/\mc{L}^{-1}$, and that any smooth section of $\Lambda^2E$ can be written as $fs$ in a unique way for some smooth function 
$f$ since $s$ never vanishes.
Let now $g_E$ be a Hermitian metric on $E$.
We will thus identify $Q$ to $\mc{L}$ and the bundle ${\rm Hom}(Q,\mc{L}^{-1})$ of 
endomorphisms from $Q$ to $\mc{L}^{-1}$ identifies to $\mc{L}^{-1}\otimes \mc{L}^*=\mc{L}^{-2}$. 
Since $E$ is equipped with a Hermitian metric $g_E$, one can split the sequence \eqref{exact_sequence} and write $E= \mc{L}^{-1}\oplus Q_0$ (which is not a holomorphic splitting) where $Q_0=(\mc{L}^{-1})^{\perp}$. Note that $g_E$ induces a Hermitian metric $g_{\mc{L}^{-1}}$ on $\mc{L}^{-1}$ and thus a metric $g_{\mc{L}}$ on $\mc{L}$ by duality.
The metric $g_E$ also induces a Hermitian metric on $Q_0$, which is isomorphic to $Q$ by $v\in E/\mc{L}^{-1}\mapsto v^{\perp}\in Q_0$ where $v^\perp$ is the orthogonal projection of $v$ on $Q_0$; using the isomorphism $\psi$, $\mc{Q}_0$ is isomorphic to $\mc{L}$ as holomorphic vector line bundle, and $g_E$ also induces a Hermitian metric on $\mc{L}$ via this isomorphism. 
Up to scaling the global section $s$ defining $\psi$ by a positive function, we can assume that these two metrics on $\mc{L}$ induced by $g_E$ are equal (note that this fixes $\psi$).  We can and shall identify $Q_0$ to $\mc{L}$ as complex vector bundles as described above. 
In the smooth splitting $E=\mc{L}^{-1}\oplus \mc{Q}_0$, the operator $\bar{\pl}_E$ then has the upper triangular matrix form
\begin{equation}\label{dbar_upper_triang} 
\bar{\pl}_E= \left(\begin{array}{cc} 
\bar{\pl}_{\mc{L}^{-1}} & \beta \\
0 & \bar{\pl}_{Q_0}
\end{array}\right) 
\end{equation}
with $\beta\in C^\infty(\Sigma;\Lambda^{0,1}\otimes {\rm Hom}(Q_0,\mc{L}^{-1}))$ (this readily follows from the fact that $\mc{L}^{-1}$ is a line subbundle using local trivialisations so that transition maps are upper triangular).

Using that $\bar{\pl}_E\circ \bar{\pl}_E=0$, we see that $\bar{\pl}_{Q_0}\circ \bar{\pl}_{Q_0}=0$, which means that $\bar{\pl}_E$ induces a holomorphic structure on $Q_0$. Using that the section $s$ of $\det(E)$ is holomorphic,
we obtain directly that our identification $Q_0\simeq \mc{L}$ described above transports $\bar{\pl}_{Q_0}$ to $\bar{\pl}_{\mc{L}}$. We also have 
from $\bar{\pl}_E\circ \bar{\pl}_E=0$ that $\bar{\pl}_{{\rm Hom}(Q_0,\mc{L}^{-1})}\beta=0$ where $\bar{\pl}_{{\rm Hom}(Q_0,\mc{L}^{-1})}$ is the Dolbeault operator on the complex vector bundle ${\rm Hom}(Q_0,\mc{L}^{-1})$ induced by $\bar{\pl}_{\mc{L}^{-1}}$ and $\bar{\pl}_{E}$. 
Thus $\beta$ defines a class  $[\beta]$  in the cohomology $H^1(\Sigma,{\rm Hom}(\mc{L},\mc{L}^{-1}))=H^1(\Sigma,\mc{L}^{-2})$, and this vector space turns out to be the set of equivalence classes of such extensions \cite[Proposition 1]{Donaldson2021}.
There is a family of gauge transform of $E$ which preserve the splitting $\mc{L}^{-1}\oplus \mc{L}^{-1}$ (i.e. the extension class) in the sense that the $h\circ \bar{\pl}_E \circ h^{-1}$ has a triangular form similar to \eqref{dbar_upper_triang} for such $h$; this is explained are used in Section \ref{sec:gauge_transform}.

To summarize, $\mc{L}$ describes a way to write $E$ as an extension of the form \eqref{exact_sequence}  and $\beta$ (or its cohomology class) describes the equivalence class of extension representing $E$. These two data 
\[ (\mc{L},\beta)\]
will be instrumental in our probabilistic construction below, but we emphasize that there are in general several ways to write $E$ as an extension, with either different subbundles $\mc{L}^{-1}$ but sometimes also with the same $\mc{L}^{-1}$ but different cohomology classes of $\beta$. 
In order to apprehend the situations that can appear, 
we gather a couple of (standard) important facts about line subbundles $\mc{L}^{-1}\subset E$ and the space of holomorphic sections of $\mc{L}^{-n}$ and $\mc{L}^{-n}\otimes \Lambda^{0,1}\Sigma$  for $n\geq 1$. For $E$ with rank $2$ and trivial determinant, any holomorphic  subbundle $\mc{L}^{-1}$ satisfies ((1), (2), (3) are easy, (4) can be found in 
\cite[Lemma 10.30]{Mukai}):
\begin{enumerate}
\item for $n\geq 1$, $\ker \bar{\pl}_{\mc{L}^{-n}}=H^{0}(\Sigma,\mc{L}^{-n})=0$ if ${\rm deg}(\mc{L}^{-1})<0$, and in that case (by \eqref{RRE})
\[\dim \ker \bar{\pl}^*_{\mc{L}^{-n}}=\dim H^1(\Sigma,\mc{L}^{-n}\otimes \Lambda^{0,1}\Sigma)=g(\Sigma)-1+n\, {\rm deg}(\mc{L}).\]
In particular if $E$ is stable, any holomorphic subbundle $\mc{L}^{-1}$ satisfies $H^0(\Sigma,\mc{L}^{-2})=0$. 
\item for $n\geq 1$, if ${\rm deg}(\mc{L}^{-1})>2(g(\Sigma)-1)$ then $\ker \bar{\pl}^*_{\mc{L}^{-n}}\simeq H^1(\Sigma,\mc{L}^{-n}\otimes \Lambda^{0,1}\Sigma)=0$ 
\item for $n\geq 1$, ${\rm deg}(\mc{L}^{-1})=0$ and $H^{0}(\Sigma,\mc{L}^{-n})\not=0$, then  $\mc{L}^{-n}$ is holomorphically equivalent to  $\Sigma\times \C$,
\item for $\Sigma=\hat{\C}$ being the sphere, $E$ is holomorphically equivalent to a direct sum $\mc{L}^{-1}\oplus \mc{L}$ of two dual line subbundles, with $\mc{L}^{-1}=\mc{O}(k)$ and $\mc{L}=\mc{O}(-k)$ where $\mc{O}(-k)$ for $k\in \Z$ is the $k$-th power of the tautological bundle on $\hat{\C}$, satisfying ${\rm deg}(\mc{O}(k))=k$ and 
\[ H^0(\hat{\C},\mc{O}(k))=\left \{ \begin{array}{ll}
0 & \textrm{ if }k<0\\
1+k & \textrm{ if }k\geq 0
\end{array} \right., \qquad H^1(\hat{\C},\mc{O}(k)) \left \{ \begin{array}{ll}
-k-1 & \textrm{ if }k<0\\
0 & \textrm{ if }k\geq 0
\end{array} \right..
\]
\end{enumerate}
 Finally, let us briefly discuss the possible subbundles that can appear in a rank-$2$ bundle $E$ with trivial determinant: the space of 
 holomorphic morphisms $\mc{L}^{-1}\to E$ is $H^0(\Sigma, E\otimes \mc{L})$ and by Riemann-Roch \eqref{RRE}
 \[ \dim H^0(\Sigma, E\otimes \mc{L})=\dim H^1(\Sigma, E\otimes \mc{L})+2\, {\rm deg}(\mc{L})+2(1-g(\Sigma))\geq 2({\rm deg}(\mc{L})+1-g(\Sigma))\]
 thus there exist holomorphic subbundles of $E$ isomorphic to $\mc{L}^{-1}$ if ${\rm deg}(\mc{L})>g(\Sigma)-1$.\\

\noindent\textbf{Equivalence between stable bundles and irreducible projectively flat connections.} Let $(E,g_E)$ be a complex vector bundle of degree $d$ and rank $n$ equipped with  Hermitian metric $g_E$, and let
\[\mc{A}_0:=\{\nabla \textrm{ unitary on }(E,g_E)\,|\, \Omega_\nabla=-i\mu(E) {\rm Id}_E \omega_\Sigma\} \]
 be the set of unitary connections on $(E,g_E)$ with central curvature, let $\mc{G}$ the group of automorphism preserving $g_E$ and let $\mc{A}_0^*\subset \mc{A}_0$ be the subset of those $\nabla$ which are irreducible, in the sense that $(E,g_E,\nabla)$ can not be split in a holomorphic way into a direct  sum $(E^1,g^1_E,\nabla^1)\oplus (E^2,g^2_E,\nabla^2)$. 
 Narasimhan-Seshadri \cite{Narasimhan-Seshadri} and Donaldson \cite{Donaldson} proved an homeomorphism
 \[\mc{A}_0^*/\mc{G} \to \mc{M}_s(n,d)=\mc{C}_s/\mc{G}_E.\] 
Concretely, this means that for a stable holomorphic bundle $(E,\bbar{\pl}_E)$ of rank $n$ and degree $d$ and $g_E$ a Hermitian metric, there is a unique (up to gauge) unitary connection $\nabla$ such that $\nabla^{0,1}=\bbar{\pl}_E$ and 
curvature $F_\nabla=-i\mu(E) {\rm Id}_E \omega_\Sigma$. In particular, if the bundle has degree $0$, i.e. is trivial, then there is a unique flat unitary connection up to gauge that corresponds to the holomorphic structure $\bbar{\pl}_E$.

\subsection{WZW actions of $(E,\bar{\pl}_E,g_E)$}\label{s:WZWaction_E}

For $E$ be a rank $2$ holomorphic bundle we denote by ${\rm SL}(2,E)$ the space of automorphism of $E$ with trivial determinant. 
Using \eqref{invarianceWZW} and the material just introduced, we set:  
\begin{definition}[WZW action of Hermitian holomorphic rank-$2$ bundles with trivial determinant]\label{def:actionWZW}
Let $(E,\bar{\pl}_E)$ be a holomorphic vector bundle of rank $2$ with trivial determinant and $g_E$ a Hermitian metric. 
The WZW action of $(\bar{\pl}_E,g_E)$ is defined as an element in $\C/2\pi i\Z$ by: for $h\in C^\infty(\Sigma,{\rm SL}(2,E))$ 
\[ \tilde{S}_{\Sigma,\bar{\pl}_E,g_E}(h):=\tilde{S}_\Sigma(s\circ h\circ s^{-1},A_s)\]
where $A_s\in C^\infty(\Sigma;\mathfrak{g}^{\C}\otimes \Lambda^{0,1}\Sigma)$ is the connection form of the unitary connection $\nabla$ associated to $(\bar{\pl}_E,g_E)$ in a trivialisation $s:E\to \Sigma\times \C^2$. By \eqref{invarianceWZW}, the right hand side is independent of the choice of trivialisation $s$.
\end{definition}

\noindent \textbf{Geometric interpretation of the $\mathbb{H}^3$-WZW action.} 
Let us fix a Hermitian metric $g_E$ on the rank $2$ complex vector bundle $E$ with trivial determinant and let $s:E\to \Sigma\times \C^2$ be a global  trivialization in which $g_E$ becomes the canonical Hermitian product $\cjg \cdot,\cdot\cjd_{\C^2}$ on $\C^2$. A holomorphic structure is given by a Dolbeault operator
$\bar{\pl}_E$, which in the trivialization $s$ is written under the form $\bar{\pl}_E=\bar{\pl}+A^{0,1}$ for some $\mathfrak{g}^\C$-valued  
$(0,1)$-form. 
The unique unitary connection associated to $(\bar{\pl}_E,g_E)$ has connection form (in the trivialization $s$)
given by $A=A^{1,0}+A^{0,1}:=A^{0,1}-(A^{0,1})^*$ (the adjoint is for $\cjg \cdot,\cdot\cjd_{\C^2}$).  
As noted above, any element in ${\rm SL}(2,\C)/{\rm SU}(2)$ can be represented by $h_+=hh^*$ for some $h=an\in {\rm SL}(2,\C)$ (see \eqref{def_an}). Now, for a gauge transform $h\in C^\infty(\Sigma,{\rm SL}(2,\C))$, the connection form $A^{0,1}_{h^{-1}}$ is the connection form of 
the Dolbeault operator $h^{-1}\circ \bar{\pl}_E\circ h$ which is equivalent to $\bar{\pl}_E$. The unique unitary connection for the metric $g_E$ that is associated to $A^{0,1}_{h^{-1}}$ has connection form (in the trivialization $s$)
\begin{equation} \label{defAh}
A_h:=A^{0,1}_{h^{-1}}+A^{1,0}_{h^*}.
\end{equation}
This unitary connection $d+A_h$ is gauge equivalent to $(h^*)^{-1}(d+A_h)h^*$ which has connection form 
\[ A^{1,0}+A^{0,1}_{h_+^{-1}}\]
with $h_+=hh^*$. Thus, $F_{A_h}=h^*F_{A^{1,0}+A^{0,1}_{h_+^{-1}}}(h^*)^{-1}$ and $F_{A_h}=0$ if and only if $A^{1,0}+A^{0,1}_{h_+^{-1}}$ is flat. In particular, by Lemma \ref{l:variation_WZW_action},  the critical points of $h_+ \in C^\infty(\Sigma, G^\C/G) \mapsto S_{\Sigma}(h_+,A)$ 
are exactly the metrics $h_+=hh^*$ with gauges $h$ such that the unitary  connection $A_h$ of \eqref{defAh} is flat. This is also equivalent to saying that the metric $h_+$ on the bundle $\Sigma\times \C^2$ is such that the $h_+$-unitary connection $A^{1,0}_{h_+}+A^{0,1}$ associated to $\bar{\pl}+A^{0,1}$ is flat. Forgetting the trivialization, this means that the metric $g_E(s^{-1} h_+ s \cdot,\cdot)$ on $E$ is such that the unitary connection associated to $\bar{\pl}_E$ and this metric is flat. The classical $\mathbb{H}^3$-WZW action $\tilde{S}_{\Sigma,\bar{\pl}_E,g_E}$ is an action on the space of metrics with determinant $1$ on the bundle $E$ with critical points the metrics $h_+$ 
for which the unique $h_+$-unitary connection associated to $\bar{\pl}_E$ are flat. We have seen above that for stable bundles, the work of Donaldson shows  the existence and uniqueness of such metric $h_+$ (up to unitary gauge).\\

\noindent \textbf{General case of a rank $2$-bundle.} In order to define a probabilistic path integral as we did for the sphere in the case of the trivial bundle, it is convenient to be able to isolate a real valued field $\phi$ and a complex valued field $\gamma$ as entries of the matrix $h$ representing our ${\rm SL}(2,\C)/{\rm SU}(2)$-valued field, for which the WZW action has the form \eqref{actionS(h)} (or some perturbation of this form). For a general rank $2$ holomorphic bundle $(E,\bar{\pl}_E)$ on a surface $\Sigma$, the coordinates $\phi$ and $\gamma$ are associated to a 
 splitting of the holomorphic bundle $E$ into two subbundles. Such a splitting may not always be a holomorphic splitting, but one can always
 split $E$ as a direct sum in a way that $\bar{\pl}_E$ becomes upper triangular. Indeed, as explained in Section \ref{sec:holo_vect_bundle}, we can write our holomorphic rank-$2$ bundle with trivial determinant under the form $E=\mc{L}^{-1}\oplus \mc{L}$ with  $\mc{L}^{-1}$ being a holomorphic line subbundle. We also consider a Hermitian metric $g_E$ so that the splitting is orthogonal, and denote by $g_{\mc{L}}$ the metric induced on $\mc{L}$ and more generally by $g_{\mc{L}^n}$ on $\mc{L}^n$ by   for $n\in \Z$.  We use a global trivialisation $s:E\to \Sigma\times \C^2$  in which the Hermitian product $g_E$ becomes the standard one on $\C^2$. The Dolbeault operator $\bar{\pl}_E$ has the block form on $\mc{L}^{-1}\oplus \mc{L}$
 \[ \bar{\pl}_E= \left(\begin{array}{cc} 
\bar{\pl}_{\mc{L}^{-1}} & \beta \\
0 & \bar{\pl}_{\mc{L}}
\end{array}\right) \] 
with  $\beta\in C^\infty(\Sigma;\mc{L}^{-2}\otimes \Lambda^{0,1}\Sigma)$. The operator then becomes $s\circ \bar{\pl}_E\circ s^{-1}=\bar{\pl}+A^{0,1}$ on the trivial bundle $\C^2$ for some $A^{0,1}\in C^\infty(\Sigma;\mathfrak{g}^\C\otimes \Lambda^{0,1}\Sigma)$.
 The associated unitary connection form is $A=A^{0,1}-(A^{0,1})^*=A^{1,0}+A^{0,1}$.
 Consider $h$ a positive definite matrix  automorphism of $E=\mc{L}^{-1}\oplus \mc{L}$ with trivial determinant. It must then be of the   form (for some $\hat{h}$ with values in ${\rm SL}(2,\C)$)  
\begin{equation}\label{handw}
 h:=s^{-1}\hat{h}s= ww^*,\quad \textrm{ with }  w:=a n
 \end{equation}
and, in the orthogonal splitting $E=\mc{L}^{-1}\oplus \mc{L}$, $a$ and $n$ have the block form
\begin{equation}\label{alphaandn}
 a=\left(\begin{array}{cc}
e^{\phi/2}& 0 \\
0 & e^{-\phi/2}
\end{array}\right),\quad n=\left(\begin{array}{cc}
1 & \nu \\
0 & 1
\end{array}\right)
\end{equation}
where $\phi\in C^\infty(\Sigma)$  and $\nu\in C^\infty(\Sigma;\mc{L}^{-2})$ are uniquely defined,
 similarly to  \eqref{def_an}. We want to express the WZW action  $\tilde{S}_{E,\bar{\pl}_E,g_E}(h)$ 
 in terms of $\phi$ and $\nu$.
We denote by $\nu^*\in C^\infty(\Sigma;\mc{L}^2)$ the adjoint of $\nu$, defined by $\nu^*(\nu):=|\nu|^2_{\mc{L}^{-2}}$, and similarly for the adjoint $\beta^*$  of $\beta$.

\begin{lemma}\label{lem:S(h,A)_formula_stable}
For $h=an(an)^*\in {\rm End}(E)$ of the form \eqref{handw}$+$\eqref{alphaandn} 
\[\begin{split} 
\tilde{S}_{E,\bar{\pl}_E,g_E}(h)=& \frac{1}{2\pi i} \int_\Sigma \pl \phi \wedge \bar{\pl}\phi +e^{-2\phi}( \pl_{\mc{L}^2}  (e^{\phi}\nu^*)+\beta^*)\wedge  (\bar{\pl}_{\mc{L}^{-2}}(e^{\phi}\nu)+ \beta)+  2F_{\mc{L}}\phi-\beta^*\wedge \beta \\
=&  -\frac{1}{\pi }\int_\Sigma (\frac{1}{4}|d\phi|^2_g +e^{-2\phi}|\bar{\pl}_{\mc{L}^{-2}}(e^{\phi}\nu)+ \beta|^2_{\mc{L}^{-2},g}-|\beta|^2_{\mc{L}^{-2},g}){\rm dv}_g +\frac{1}{\pi i}\int_{\Sigma }F_{\mc{L}}\phi
\end{split}\]
where $F_{\mc{L}}$ is the curvature $2$-form of $\mc{L}$, normalized by $\frac{i}{2\pi}\int_\Sigma F_{\mc{L}}={\rm deg}(\mc{L})$, and $\phi$ and $\nu$ are the sections in \eqref{alphaandn} attached to $h$, while $\pl_{\mc{L}^2}$ is the $(1,0)$ part of the unitary connection $\nabla_{\mc{L}^2}$ on $\mc{L}^2$.
 \end{lemma}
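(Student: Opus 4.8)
The plan is to reduce everything to a computation in a global unitary trivialization $s:E\to\Sigma\times\C^2$ in which $g_E$ becomes the standard Hermitian product on $\C^2$ and the orthogonal splitting $E=\mc{L}^{-1}\oplus\mc{L}$ is the standard splitting $\C^2=\C\oplus\C$. By Definition \ref{def:actionWZW} one has $\tilde S_{E,\bar\pl_E,g_E}(h)=\tilde S_\Sigma(\hat h,A)$ with $\hat h=s\circ h\circ s^{-1}$ and $A=A^{1,0}+A^{0,1}$ the connection form of the unitary connection attached to $(\bar\pl_E,g_E)$. Since $\mc{L}^{-1}$ is a holomorphic subbundle and the splitting is orthogonal, the block form \eqref{dbar_upper_triang} of $\bar\pl_E$ shows that in this frame $A^{0,1}=\left(\begin{smallmatrix} a & \beta\\ 0 & -a\end{smallmatrix}\right)$, with $a$ the $(0,1)$ connection form of $\mc{L}^{-1}$ (the lower-right entry being $-a$ because the determinant is trivial) and $A^{1,0}=-(A^{0,1})^*$. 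The matrix $\hat h=ww^*$, $w=an$, is then exactly of the upper-triangular type \eqref{alphaandn}, with $\nu\in C^\infty(\Sigma;\mc{L}^{-2})$ read off the frame.

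First I would treat the diagonal part in isolation to locate the curvature term. Writing the diagonal factor as $e^{\phi\sigma_3}$ with $\sigma_3={\rm diag}(1,-1)$ and diagonal connection $a\sigma_3$, everything commutes, so Lemma \ref{lem:S(h)} gives $S_\Sigma(e^{\phi\sigma_3})=\frac1{2\pi i}\int_\Sigma\pl\phi\wedge\bar\pl\phi$, while the three cross terms in \eqref{gaugeWZW}, together with the subtraction of $S_\Sigma({\rm Id},A)=-\frac1{2\pi i}\int_\Sigma{\rm Tr}(A^{1,0}\wedge A^{0,1})$ coming from \eqref{deftildeS}, collapse after an integration by parts on the closed surface $\Sigma$ to $\frac1{\pi i}\int_\Sigma\phi\,(\bar\pl\bar a-\pl a)=\frac1{\pi i}\int_\Sigma F_{\mc{L}}\phi$. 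This is the step where the \textbf{curvature $F_{\mc L}$ is produced}: the pointwise $da$ appearing in a naive local computation is upgraded to the genuine Chern curvature $F_{\mc L}=\bar\pl\bar a-\pl a=-F_{\mc L^{-1}}$ precisely by this Stokes argument, using $\bar\pl a=0$, $\pl\bar a=0$ and $\int_\Sigma d(\phi\bar a)=\int_\Sigma d(\phi a)=0$.

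Next I would reinstate the unipotent factor $n$ and the off-diagonal $\beta$. Expanding $S_\Sigma(ww^*)$ via the Polyakov--Wiegmann formula (Lemma \ref{l:Polyakov_Wiegmann}) exactly as in the proof of Lemma \ref{lem:S(h)}, but now with $\bar\pl$ replaced by $\bar\pl_E$ in the frame, the derivatives of $\nu$ acquire the connection: $\bar\pl$ acting on the section $e^{\phi}\nu$ of $\mc L^{-2}$ becomes the covariant operator $\bar\pl_{\mc L^{-2}}$ (connection form $2a$), and the off-diagonal entry $\beta$ of $A^{0,1}$ enters additively. Collecting the $\nu$-dependent contributions, including the $\beta$-linear terms from \eqref{gaugeWZW} and the $-\beta^*\wedge\beta$ remainder left over after the subtraction of $S_\Sigma({\rm Id},A)$, yields the combination $e^{-2\phi}(\pl_{\mc L^2}(e^\phi\nu^*)+\beta^*)\wedge(\bar\pl_{\mc L^{-2}}(e^\phi\nu)+\beta)-\beta^*\wedge\beta$, the middle term of the first displayed formula. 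Here $\pl_{\mc L^2}$ is the $(1,0)$-part of the unitary connection on $\mc L^2$ dual to $\bar\pl_{\mc L^{-2}}$, and $\beta^*$ is the adjoint of $\beta$, so this cross term is manifestly the squared norm of an $\mc L^{-2}$-valued $(0,1)$-form.

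Finally, the passage to the second, Riemannian expression is routine: for a real function $\phi$ one has $\frac1{2\pi i}\int_\Sigma\pl\phi\wedge\bar\pl\phi=-\frac1{4\pi}\int_\Sigma|d\phi|^2_g\,{\rm dv}_g$ by conformal invariance of the Dirichlet energy, and for any $\mc L^{-2}$-valued $(0,1)$-form $\omega$ the pointwise Kähler identity relating $\omega^*\wedge\omega$ to the Hodge star gives, in the normalization used here, $\frac1{2\pi i}\int_\Sigma\omega^*\wedge\omega=-\frac1\pi\int_\Sigma|\omega|^2_{\mc L^{-2},g}\,{\rm dv}_g$; applying these with $\omega=\bar\pl_{\mc L^{-2}}(e^\phi\nu)+\beta$ and $\omega=\beta$ turns the first formula into the second. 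The \textbf{main obstacle} is the careful matrix bookkeeping of the intertwined $a$- and $\beta$-terms in the third step: one must check that every derivative of $\nu$ is correctly covariantized to $\bar\pl_{\mc L^{-2}}$ and that all connection cross-terms not already absorbed into the curvature term cancel against $S_\Sigma({\rm Id},A)$, so that the final answer depends on $(\mc L,\beta)$ only through $\bar\pl_{\mc L^{-2}}$, $\beta$ and $F_{\mc L}$, consistently with the gauge invariance of Corollary \ref{gauge_inv}.
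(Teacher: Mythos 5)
Your reduction in the very first step contains a genuine gap that the rest of the argument inherits: a global unitary trivialization $s:E\to\Sigma\times\C^2$ in which the orthogonal splitting $E=\mc{L}^{-1}\oplus\mc{L}$ becomes the standard splitting $\C\oplus\C$ does not exist in general. Such a trivialization would in particular trivialize $\mc{L}^{-1}$ as a smooth complex line bundle, which is possible only when ${\rm deg}(\mc{L})=0$; the lemma is aimed precisely at the general case (for stable bundles one has ${\rm deg}(\mc{L})<0$), and the curvature term $\frac{1}{\pi i}\int_\Sigma F_{\mc{L}}\phi$ — which equals $-2\,{\rm deg}(\mc{L})\,\phi$ for constant $\phi$ — is exactly the footprint of this nontriviality. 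Consequently the upper-triangular form of $A^{0,1}$ with scalar entry $a$ holds only in local frames, as in \eqref{dbar_upper_triang}: $a$ is a locally defined $(0,1)$-form, not a global one. This invalidates the two pillars of your computation. First, the Stokes argument $\int_\Sigma d(\phi a)=\int_\Sigma d(\phi\bar a)=0$, which is what is supposed to "upgrade" $da$ to $F_{\mc{L}}=\bar{\pl}\bar a-\pl a$, is illegitimate because $\phi a$ is not a global $1$-form. Second, the Polyakov--Wiegmann bookkeeping requires evaluating $S_\Sigma$ and the cross terms $\Upsilon$ on the factors $e^{\phi\sigma_3}$ and $n$ written as matrix-valued functions $\Sigma\to{\rm SL}(2,\C)$ in that (nonexistent) adapted frame; in an actual global trivialization $s$ of $E$ these factors appear conjugated by $s$, and the clean identities $S_\Sigma(\hat a)=\frac{1}{4\pi i}\int_\Sigma\pl\phi\wedge\bar{\pl}\phi$ and $S_\Sigma(\hat n)=0$ acquire extra $s$-dependent contributions that your computation never accounts for. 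What you have written is, in substance, the computation the paper sketches at the end of Section \ref{WZW_coset} for an upper-triangular $A^{0,1}$ with globally defined entries $a,\beta$ on the trivial bundle, and it is correct exactly in that case ${\rm deg}(\mc{L})=0$.

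The paper's proof is built to sidestep this obstruction: rather than factorizing $h$ and integrating by parts, it differentiates along a path $h_t=(a_tn_t)(a_tn_t)^*$ joining ${\rm Id}$ to $h$ and invokes Lemma \ref{l:variation_WZW_action}, which expresses $\pl_t\tilde S_\Sigma(\hat h_t,A)$ as $-\frac{1}{2\pi i}\int_\Sigma{\rm Tr}\bigl(\dot h\,F_{B}\bigr)$ with $F_B$ the curvature of a connection, hence a globally defined $2$-form. Local orthonormal frames $(s_1,s_2)$ enter only to identify this density pointwise; the resulting expression is observed to be frame-independent, and it is then matched with the $t$-derivative of the claimed right-hand side, computed with the covariant operators $\bar{\pl}_{\mc{L}^{-2}}$ and $\pl_{\mc{L}^2}$, for which Stokes' theorem on $\Sigma$ is legitimate. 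Integrating in $t$ gives the lemma. To repair your argument you would either have to restrict to ${\rm deg}(\mc{L})=0$, or replace every frame-dependent intermediate quantity (the $\Upsilon$ terms, $\int_\Sigma\phi\,da$, the WZW functionals of the individual factors) by globally defined densities before integrating over $\Sigma$ — at which point you would in effect be reconstructing the paper's variational proof.
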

\begin{proof} We compute the derivative of $t\mapsto \tilde{S}_\Sigma(\hat{h}_t,A)$ using Lemma \ref{l:variation_WZW_action} where $t\in [0,1]\mapsto \hat{h}_t\in C^\infty(\Sigma,{\rm SL}(2,\C))$ is a smooth path, with $\hat{h}_t(x)$ taking values in the positive definite ${\rm SL}(2,\C)$ matrices for each $x\in \Sigma$. 
Let us write $h_t:=s^{-1}\hat{h}_ts=(a_t n_t)(a_t n_t)^*$ with $a_t,n_t$ under the form \eqref{alphaandn} for some path $\nu_t\in C^\infty(\Sigma;\mc{L}^{-2})$ and $\phi_t\in C^\infty(\Sigma)$. We get, with $w_t=a_tn_t$,
\[\begin{split}
\pl_t\tilde{S}_\Sigma (\hat{h}_t,A)& =-\frac{1}{2\pi i}\int_\Sigma {\rm Tr}\Big(\hat{h}_t^{-1}\pl_t\hat{h}_tF_{A^{1,0}+ A_{\hat{h}_t^{-1}}^{0,1}}\Big)\\
& = -\frac{1}{2\pi i}\int_\Sigma {\rm Tr}(h_t^{-1}\pl_th_t\, s^{-1}F_{A^{1,0}+ A_{\hat{h}_t^{-1}}^{0,1}}s)\\
\end{split}
\]
with $h_t^{-1}\pl_th_t =(w_t^*)^{-1}w_t^{-1}\dot{w}_tw_t^*+(w_t^*)^{-1}\dot{w}_t^*$ and $\dot{w}_t=\pl_tw_t$.
In particular we get (we remove the $t$-subscript for notational simplicity)
\[\begin{split}
{\rm Tr}\Big(h_t^{-1}\pl_th_t\, s^{-1}F_{A^{1,0}+ A_{\hat{h}_t^{-1}}^{0,1}}s\Big)=&{\rm Tr}((w^{-1}\dot{w} +\dot{w}^*{w^*}^{-1}) (w^*s^{-1})F_{A^{1,0}+ A_{\hat{h}^{-1}}^{0,1}}(s{w^*}^{-1}))\\
=& {\rm Tr}((w^{-1}\dot{w} +\dot{w}^*{w^*}^{-1})F_{B(p)}) 
\end{split}\]
where $B(p)$ is the connection form 
\[ \begin{split}
B(p)=&(s{w^*}^{-1})^{-1}(A^{1,0}+ A_{\hat{h}^{-1}}^{0,1})s{w^*}^{-1}+(s{w^*}^{-1})^{-1}d(s{w^*}^{-1})\\
=& (s{w^*}^{-1})^{-1}A^{1,0}(s{w^*}^{-1})+(s{w^*}^{-1})^{-1}\pl (s{w^*}^{-1})+w^{-1}\Big(s^{-1}A^{0,1}s+s^{-1}\bar{\pl} s+\bar{\pl}w.w^{-1}\Big)w\\
=&  w^*B^{1,0}{w^*}^{-1}+w^*\pl({w^*}^{-1}) +(w^{-1}B^{0,1}w+w^{-1}\bar{\pl}w)
\end{split}\]
where $B^{1,0}=s^{-1}A^{1,0}s+s^{-1}\pl s$ and $B^{0,1}=s^{-1}A^{0,1}s+s^{-1}\bar{\pl}s$. Using a local orthonormal trivialization $(s_1,s_2)$ of $\mc{L}^{-1}\oplus \mc{L}$ with 
$s_1$ section of $\mc{L}^{-1}$ and $s_2$ of $\mc{L}$, we can write $B^{1,0},B^{0,1}$ under the matrix form (in the splitting $\mc{L}^{-1}\oplus \mc{L}$) 
\[ B^{0,1}=\left(\begin{array}{cc}
a& b \\
0 &-a
\end{array}\right), \quad B^{1,0}=\left(\begin{array}{cc}
-\bar{a}& 0 \\
-b^* &\bar{a}
\end{array}\right)\]
for some local $(0,1)$-form $a\in C^\infty(\Sigma;\Lambda^{0,1}\Sigma)$, and $b\in C^\infty(\Sigma;\Lambda^{0,1}\Sigma)$ is the representation in $(s_1,s_2)$ of the global form $\beta \in C^\infty(\Sigma;\mc{L}^{-2}\otimes \Lambda^{0,1}\Sigma)$  ($\beta^*$ will be its dual represented by $\bar{b}$ in the frame).
We compute 
\[ \begin{split} 
w^{-1}\dot{w}=\left(\begin{array}{cc}
\frac{1}{2}\dot{\phi}& \dot{\nu}+\dot{\phi}\nu \\
0 &-\frac{1}{2}\dot{\phi}
\end{array}\right),\quad \dot{w}^*{w^*}^{-1}=\left(\begin{array}{cc}
\frac{1}{2}\dot{\phi}& 0 \\
\dot{\nu}^*+\dot{\phi}\nu^* &-\frac{1}{2}\dot{\phi}
\end{array}\right)
\end{split}\]
\[ w^{-1}B^{0,1}w+w^{-1}\bar{\pl}w=
\left(\begin{array}{cc}
a+\frac{1}{2}\bar{\pl}\phi& e^{-\phi}((\bar{\pl}+2a)(e^{\phi}\nu)+b) \\
0 &-a-\frac{1}{2}\bar{\pl}\phi
\end{array}\right)
\]
\[w^*B^{1,0}{w^*}^{-1}+w^*\pl({w^*}^{-1})=\left(\begin{array}{cc}
-\bar{a}-\frac{1}{2}\pl \phi & 0 \\
-e^{-\phi}((\pl+2\bar{a})(e^{\phi}\nu^*)+\bar{b}) &\bar{a}+\frac{1}{2}\pl\phi
\end{array}\right).
\]
This implies that there are (explicit) $2$-forms $Z_1,Z_2$ (that won't play any role) such that
\[ \begin{split} 
{\rm Tr}((w^{-1}\dot{w} +\dot{w}^*{w^*}^{-1})dB(p))=& 2\dot{\phi}\Big(\pl a+\pl\bar{\pl}\phi-\bar{\pl}\bar{a}-\nu Z_1+\nu^*Z_2
\Big)\\
& - \dot{\nu}\bar{\pl}(e^{-\phi}((\pl+2\bar{a})(e^{\phi}\nu^*)+\bar{b}))+\dot{\nu}^*\pl(e^{-\phi}((\bar{\pl}+2a)(e^{\phi}\nu)+b))
\end{split}\]
\[\begin{split} 
{\rm Tr}((w^{-1}\dot{w} +\dot{w}^*{w^*}^{-1})(B(p)\wedge B(p)))=& 2\dot{\phi}e^{-2\phi}((\pl +2\bar{a})(e^{\phi}\nu^*)+\bar{b})\wedge((\bar{\pl} +2a)(e^{\phi}\nu)+b)
\\
& -2\dot{\nu}e^{-\phi}((\pl+2\bar{a})(e^{\phi}\nu^*) +\bar{b})\wedge (a+\frac{1}{2}\bar{\pl}\phi)\\
& -2\dot{\nu}^*(\bar{a}+\frac{1}{2}\pl \phi) \wedge e^{-\phi}((\bar{\pl}+2a)(e^{\phi}\nu)+b).
\end{split}\]
For $w=an$, we can choose a path $w_t=a_tn_t$ for $t\in [0,1]$ satisfying $w_0={\rm Id}$ and $w_1=w$ that is obtained as 
$w_t=an_t$ for $t\in [1/2,1]$ with $n_{1/2}={\rm Id}$ and $n_1=n$, and $w_t=a_t$ for $t\in [0,1/2]$ 
with $a_0={\rm Id}$ and $a_{1/2}=a$. 
First, consider the $t\in [1/2,1]$-variation where $\phi_t=\phi_{1/2}$ is fixed (i.e. $\dot{\phi}=0$) and $\nu_t$ depends on $t$: 
\[\begin{split}
{\rm Tr}((w^{-1}\dot{w} +\dot{w}^*{w^*}^{-1})F_{B(p)}) = &- \dot{\nu}e^{\phi}(\bar{\pl}-2a)(e^{-2\phi}((\pl+2\bar{a})(e^{\phi}\nu^*)+\bar{b}))\\
& +\dot{\nu}^*e^{\phi}(\pl-2\bar{a})(e^{-2\phi}((\bar{\pl}+2a)(e^{\phi}\nu)+b))\\
=& - \dot{\nu}e^{\phi}(\bar{\pl}_{\mc{L}^2}(e^{-2\phi}(\pl_{\mc{L}^2}(e^{\phi}\nu^*)+\beta^*))\\
& +\dot{\nu}^*e^{\phi}(\pl_{\mc{L}^{-2}}(e^{-2\phi}(\bar{\pl}_{\mc{L}^{-2}}(e^{\phi}\nu)+\beta)).
\end{split}\]
The last line being global, i.e. does not depend on $s_1,s_2$, this expresses globally the variation. Now we compute using Stokes formula  in the last line
 \begin{align*}
& \pl_t \int_\Sigma \pl \phi \wedge \bar{\pl}\phi +e^{-2\phi}( \pl_{\mc{L}^2}  (e^{\phi}\nu^*_t)+\beta^*)\wedge  (\bar{\pl}_{\mc{L}^{-2}}(e^{\phi}\nu_t)+ \beta)+  2F_{\mc{L}}\phi\\
& =\int_\Sigma e^{-2\phi} \pl_{\mc{L}^2} (e^{\phi}\dot{\nu}^*)\wedge (\bar{\pl}_{\mc{L}^{-2}}(e^{\phi}\nu)+\beta) 
+\int_\Sigma e^{-2\phi} (\pl_{\mc{L}^2}  (e^{\phi}\nu^*)+\beta^*)\wedge \bar{\pl}_{\mc{L}^{-2}}(e^{\phi}\dot{\nu})\\
&= -\int_\Sigma {\rm Tr}((w^{-1}\dot{w} +\dot{w}^*{w^*}^{-1})F_{B(p)}).
\end{align*}
It remains to check the variation for $t\in [0,1/2]$, where $\nu_t=0$ but $\phi_t$ is not constant. In that case, using $F_{\mc{L}}=-\pl a+\bar{\pl}\bar{a}$ in the local frame, we have 
\[\begin{split}
{\rm Tr}((w^{-1}\dot{w} +\dot{w}^*{w^*}^{-1})F_{B(p)})=&  2\dot{\phi}(\pl a+\pl\bar{\pl}\phi-\bar{\pl}\bar{a}+e^{-2\phi}(\beta^*\wedge \beta))\\
 =& 2\dot{\phi}(\pl\bar{\pl}\phi +e^{-2\phi}(\beta^*\wedge \beta)- F_{\mc{L}})
\end{split}\]
and we compute using Stokes formula
\[\begin{split} 
\pl_t \int_\Sigma \pl \phi_t \wedge \bar{\pl}\phi_t +e^{-2\phi_t}b^*\wedge b+  2F_{\mc{L}}\phi_t=&\int_{\Sigma} 
\pl\dot{\phi}\wedge \bar{ \pl} \phi+\pl{\phi}\wedge \bar{\pl}\dot{\phi}-2\dot{\phi}e^{-2\phi}\beta^*\wedge \beta+2F_{\mc{L}} \dot{\phi}\\
=& -\int_{\Sigma} 2\dot{\phi}(\pl\pl\bar \phi+e^{-2\phi}\beta^*\wedge \beta-F_{\mc{L}}).
\end{split}\]
This shows the result by integrating the variation, since the claimed 
identity is obviously valid at $t=0$.
\end{proof}

\section{Probabilistic construction of the $\mathbb{H}^3$-WZW model for general rank-$2$ bundles on surfaces}\label{H3surface}





Since this Section contains a certain number of notations, we have gathered the most important notations at the end of each subsection
 in a way that they can be used as an index by the reader.
 
\subsection{Geometric setup}\label{setup_E}
In this Section, we introduce the setup that will be used in the remaining sections of the article. 

Consider a closed Riemann surface $\Sigma$ of genus $g(\Sigma)\geq 0$ with a Riemannian metric $g$ compatible with the complex structure.  
We consider $(E,\bar{\pl}_E,g_E)$ a rank-$2$ holomorphic vector bundle $E$ with trivial determinant with $\bar{\pl}_E$ its Dolbeault operator and $g_E$ its Hermitian metric. We represent $E$ as an extension 
\begin{equation}\label{short_exact_seq} 
0\longrightarrow \mc{L}^{-1} \longrightarrow E\longrightarrow \mc{L}\longrightarrow 0,
\end{equation}
with $\mc{L}^{-1}$ a holomorphic line subbundle with Dolbeault operator $\bar{\pl}_{\mc{L}^{-1}}=\bar{\pl}_E|_{\mc{L}^{-1}}$ and $\mc{L}$ its dual holomorphic bundle. Concretely, there is an orthogonal (but not holomorphic) splitting $E=\mc{L}^{-1}\oplus \mc{L}$ 
such that $g_E=g_{\mc{L}^{-1}}\oplus g_{\mc{L}}$ where the Hermitian metrics on $\mc{L}$ and $\mc{L}^{-1}$ are dual ones, and the Dolbeault operator on $E$ decomposes as  
\begin{equation}\label{dbar_Esetup} \bar{\pl}_E= \left(\begin{array}{cc} 
\bar{\pl}_{\mc{L}^{-1}} & \beta \\
0 & \bar{\pl}_{\mc{L}}
\end{array}\right) \end{equation}
for some  $\beta\in C^\infty(\Sigma;\mc{L}^{-2}\otimes \Lambda^{0,1}\Sigma)$. We call the pair 
\[ (\mc{L},\beta)\]
the \textbf{parameters of the extension} describing $E$. Since, when $E$ is a stable bundle (which is generic in the moduli space of holomorphic bundles), one has 
$H^0(\Sigma,\mc{L}^{-2})=0$, we will say that we are in the 
\begin{align*} 
&\textbf{ Generic case }  \textrm{ if } H^0(\Sigma,\mc{L}^{-2})=0,\\
& \textbf{ Non generic case }  \textrm{ if } H^0(\Sigma,\mc{L}^{-2})\not=0.
\end{align*}

The induced metric on $\mc{L}^n$ for $n\in \Z$ is denoted $g_{\mc{L}^n}$.
For a complex vector bundle $F$ over $\Sigma$, we shall denote by $\mc{H}^s(\Sigma;F)$ the sections with $\mc{H}^s$-Sobolev regularity for $s\in \R$ (this can be defined in local charts for example or as the domain of $(1+\Delta_{F})^{s/2}$ where $\Delta_{F}$ is a fixed connection Laplacian). 

\subsection{Witten Laplacian, its Green function and determinant}

We shall use the $L^2$ pairing on the bundle $\mc{L}^{n}\otimes \Lambda^1\Sigma$ and $\mc{L}^{n}$ for $n\in \Z$
\begin{align} 
&\forall s_1,s_2\in C^\infty(\Sigma;\mc{L}^{n}), \quad \cjg s_1,s_2\cjd_{2}=\int_\Sigma  \cjg s_1,s_2\cjd_{g_{\mc{L}^n}}{\rm dv}_g \label{norme2} \\
&  \forall u_1,u_2\in C^\infty(\Sigma;\mc{L}^{n}\otimes \Lambda^1\Sigma), \quad  \cjg u_1,u_2\cjd_{2}=\int_\Sigma  \cjg u_1,u_2\cjd_{g_{\mc{L}^n}\otimes g}{\rm dv}_g\label{norme2bis} 
 \end{align}
(here $g_{\mc{L}^n}\otimes g$ is a metric on $\mc{L}^{n}\otimes \Lambda^1\Sigma$).
For $n\in \Z$, we denote by $\bar{\pl}^*_{\mc{L}^n}: C^\infty(\Sigma; \mc{L}^n\otimes  \Lambda^{0,1} \Sigma)\to C^\infty(\Sigma; \mc{L}^n)$ the $L^2$-adjoint of the Dolbeault operator on $\mc{L}^{n}$. More generally, for a background measure $\theta$ on $\Sigma$, we shall denote $L^2(\theta)$ for the $L^2$ space with respect to $\theta$ and $\cjg \cdot,\cdot\cjd_{L^2(\theta)}$ for the associated pairing, keeping $g_{\mc{L}^n}$ and $g$ as pointwise products on $\mc{L}^n$ and $\Lambda^k\Sigma$ as in \eqref{norme2} and \eqref{norme2bis}. 

To construct the field $\gamma_g$ as in the case of the Riemann sphere and trivial bundle, we will need to consider the Green fucntion of a Witten type Laplacian on a certain line bundle. We first need to introduce some basic material about these operators.
Consider for $n\in \Z$, $\phi\in C^\infty(\Sigma,\C)$ complex valued, the Witten Laplacian (later we will be interested in the case $n=-2$)
\begin{equation}\label{def_Witten_Delta}
\tilde{\mc{D}}_\phi:=(e^{-\phi}\bar{\pl}_{\mc{L}^n}e^{\phi})^*(e^{-\phi}\bar{\pl}_{\mc{L}^n}e^{\phi})
\end{equation}
and the related operator ($*$ denotes the $L^2$-adjoint)
\begin{equation}\label{def_D_phi}
 \mc{D}_{\phi} :=    (e^{-\phi}\bar{\pl}_{\mc{L}^n})^*(e^{-\phi}\bar{\pl}_{\mc{L}^n})=e^{-\bar{\phi}}\tilde{\mc{D}}_\phi e^{-\phi}. 
 \end{equation}
These operators depend on the choice of metric $g$, and we will later sometimes write $\mc{D}_{g,\phi}$ and $\tilde{\mc{D}}_{g,\phi}$ 
instead of $\mc{D}_\phi$ and $\tilde{\mc{D}}_\phi$ to emphasize this dependence when this plays an important role. 
In parallel to the sphere case, the operator $\tilde{\mc{D}}_\phi$ is used as a mean to understand the covariance of the field $\nu_g$ while $\mc{D}_\phi$ will be used to define the covariance of $\gamma_g$.
The Riemann-Roch theorem applied to 
$\bar{\pl}_{\mc{L}^n}$ reads
\begin{equation}\label{Riemann_RochL^k}
\dim \ker \bar{\pl}^*_{\mc{L}^n}-\dim \ker \bar{\pl}_{\mc{L}^n}+\frac{\chi(\Sigma)}{2}+n\, {\rm deg}(\mc{L})=0.
\end{equation}
Let us make a quick comment on $\ker \bar{\pl}^*_{\mc{L}^{n}}$. By identifying $\Lambda^{1,0}\Sigma$ with the dual 
bundle $(\Lambda^{0,1}\Sigma)^*$ via the map $(v_1,v_2)\in \Lambda^{1,0}\Sigma \times \Lambda^{0,1}\Sigma\mapsto \frac{1}{2} *_g(v_1\wedge v_2)\in \C$,
and $(\Lambda^{1,1}\Sigma)^*$ with $\C$ via  $\frac{1}{2}*_g$,
we define the maps 
\begin{equation}\label{dualitys^*}
\begin{gathered}
s\in \mc{L}^{-n}\otimes \Lambda^{1,0}\Sigma\to s^*\in \mc{L}^{n}\otimes \Lambda^{0,1}\Sigma, \quad s^*(u):=\cjg u,s\cjd_{g_{\mc{L}^{-n}}\otimes g}, \, \forall u \in  \mc{L}^{-n}\otimes \Lambda^{1,0}\\
s\in \mc{L}^n \mapsto s^*\in \mc{L}^{-n}\otimes \Lambda^{1,1}\Sigma, \quad  s^*(u)=\cjg u,s\cjd_{g_{\mc{L}^{-n}}}
\end{gathered}
\end{equation}
It is then easily checked (using local trivialisations) that 
\begin{equation}\label{dbar^*-dbar} 
 \forall s \in C^\infty(\Sigma;\mc{L}^{-n}\otimes \Lambda^{1,0}\Sigma), \, \bar{\pl}_{\mc{L}^{-n}}s=-(\bar{\pl}^*_{\mc{L}^n}s^*)^*
 \end{equation}
thus $s\in H^0(\Sigma,\mc{L}^{-n}\otimes \Lambda^{1,0}\Sigma)\mapsto s^* \in \ker \bar{\pl}^*_{\mc{L}^{n}} \simeq H^1(\Sigma,\mc{L}^{n})$ is an isomorphism, which is exactly Serre duality. To summarize, 
\[ \ker \bar{\pl}_{\mc{L}^n}=H^0(\Sigma,\mc{L}^n), \qquad \ker \bar{\pl}^*_{\mc{L}^{n}} \simeq H^1(\Sigma,\mc{L}^{n}).\]

\noindent \textbf{Projections on co-kernels and kernels.} Let us consider two orthogonal projections 
\begin{equation}\label{def_orthogonal_proj_coker}
\tilde{\Pi}_\phi: L^2(\Sigma;  \mc{L}^n\otimes \Lambda^{0,1} \Sigma) \to  \ker (e^{-\phi}\bar{\pl}_{\mc{L}^n}e^{\phi})^*, \qquad  
\Pi_\phi: L^2(\Sigma;  \mc{L}^n\otimes \Lambda^{0,1} \Sigma) \to  \ker \bar{\pl}^*_{\mc{L}^n}
\end{equation}
where $\tilde{\Pi}_\phi$ is the orthogonal projection with respect to the $L^2({\rm v}_g)$ product and 
$\Pi_\phi$ is the orthogonal projection with respect to the $L^2(e^{2{\rm Re}(\phi)}{\rm v}_g)$ product. Notice first that $\Pi_\phi=e^{-\bar{\phi}}\tilde{\Pi}_\phi e^{\bar{\phi}}$. Then,
if $(e_p)_{p=1,\dots,N}$ is a basis of $\ker \bar{\pl}_{\mc{L}^n}^*$, then $(e^{\bar{\phi}}e_p)_p$ is a basis for $\ker (e^{-\phi}\bar{\pl}_{\mc{L}^n}e^{\phi})^*$ and we can write $\tilde{\Pi}_\phi$ and $\Pi_\phi$  as follows
\begin{equation}\label{tildePiphi}
\tilde{\Pi}_\phi u:= \sum_{p,q=1}^N (\mc{G}_\phi^{-1})_{pq} e^{\bar{\phi}}e_p \cjg u,e^{\bar{\phi}}e_q\cjd_2, \qquad \Pi_\phi u:= \sum_{p,q=1}^N (\mc{G}_\phi^{-1})_{pq} e_p \cjg u,e_q\cjd_{L^2(e^{2{\rm Re}(\phi)}{\rm v}_g)}\\
\end{equation}
in terms of the Gram matrix $\mc{G}_\phi$ 
\begin{equation}\label{Gphi} 
(\mc{G}_\phi)_{pq}:= \cjg e^{\bar{\phi}}e_p,e^{\bar{\phi}}e_q\cjd_{2}= \cjg e_p,e_q\cjd_{L^2(e^{2{\rm Re}(\phi)}{\rm v}_g)}.
\end{equation}
Similarly, we define 
\begin{equation}\label{def_orthogonal_proj_ker}
\tilde{P}_\phi:L^2(\Sigma;\mc{L}^n)\to \ker (\bar{\pl}_{\mc{L}^n}e^{\phi}) , \qquad  
P_0:L^2(\Sigma;\mc{L}^n)\to \ker \bar{\pl}_{\mc{L}^n}
\end{equation}   
to be  the orthogonal projections with respect to the $L^2({\rm v}_g)$ pairing. Notice that $\tilde{P}_0=P_0$ and that $\tilde{P}_\phi$ can be expressed as 
\begin{equation}\label{PPhi} 
\tilde{P}_\phi w:= \sum_{i,j=1}^d (\mc{N}_\phi^{-1})_{ij} e^{-\phi}f_i\cjg w,e^{-\phi}f_j\cjd_2
\end{equation}
if $(f_j)_j$ is a basis of $\ker \bar{\pl}_{\mc{L}^n}$ and $\mc{N}_\phi$ is the matrix 
\begin{equation}\label{NPhi}  
(\mc{N}_\phi)_{ij}:= \cjg e^{-\phi}f_i,e^{-\phi}f_j\cjd_{2}.
\end{equation}

\begin{lemma}[\textbf{Inverses for $\mc{D}_\phi$ and $\tilde{\mc{D}}_\phi$}]\label{inverseDphi}
The following holds true:\\
1) The operators 
\[\mc{D}_{\phi}:\mc{H}^2(\Sigma;\mc{L}^n)\to L^2(\Sigma;\mc{L}^n), \qquad \tilde{\mc{D}}_{\phi}:\mc{H}^2(\Sigma;\mc{L}^n)\to L^2(\Sigma;\mc{L}^n)\]
are elliptic, self-adjoint Fredholm operators, and have a unique inverse operator $R_{\phi}$ and $\tilde{R}_\phi$,  bounded as maps 
$\mc{H}^{s}(\Sigma;\mc{L}^n)\to \mc{H}^{s+2}(\Sigma; \mc{L}^n)$ for all $s\in \R$, 
with range in $(\ker \bar{\pl}_{\mc{L}^n})^\perp$, such that 
\begin{equation} \label{DphiRphi}
\begin{gathered}
\mc{D}_{\phi}R_{\phi}={\rm Id}-P_0, \qquad \tilde{\mc{D}}_{\phi}\tilde{R}_{\phi}={\rm Id}-\tilde{P}_\phi \\
\quad\, {\rm Ran}(R_{\phi})\subset (\ker \bar{\pl}_{\mc{L}^n})^\perp, \qquad {\rm Ran}(\tilde{R}_{\phi})\subset (\ker \bar{\pl}_{\mc{L}^n}e^{\phi})^\perp
\end{gathered}\end{equation}
where $P_0,\tilde{P}_\phi$  are the orthogonal projections \eqref{def_orthogonal_proj_ker}. The integral kernel of $R_\phi$ and $\tilde{R}_{\phi}$ are smooth outside the diagonal.
Moreover, the operator $\tilde{R}_\phi$ also satisfies 
\begin{equation}\label{tildeRphi}
 (e^{-\phi}\bar{\pl}_{\mc{L}^{-2}}e^{\phi})\tilde{R}_\phi(e^{-\phi}\bar{\pl}_{\mc{L}^{-2}}e^{\phi})^*={\rm Id}-\tilde{\Pi}_\phi
\end{equation}
if $\tilde{\Pi}_\phi$ is the orthogonal projection of \eqref{def_orthogonal_proj_coker}.\\
2) The operators $\tilde{T}_\phi:=e^{-\phi}\bbar{\pl}_{\mc{L}^n}e^{\phi} \tilde{R}_{\phi}$ and $T_\phi:=e^{-\phi}\bbar{\pl}_{\mc{L}^n}R_{\phi}$, bounded on 
\[\tilde{T}_\phi:\mc{H}^s(\Sigma;\mc{L}^n)\to \mc{H}^{s+1}(\Sigma, \mc{L}^n\otimes  \Lambda^{0,1}\Sigma), \quad  T_\phi:\mc{H}^s(\Sigma;\mc{L}^n)\to \mc{H}^{s+1}(\Sigma, \mc{L}^n\otimes  \Lambda^{0,1}\Sigma)\]
satisfy
\begin{equation}\label{Prop_T_phi}
(e^{-\phi}\bbar{\pl}_{\mc{L}^n}e^{\phi})^*\tilde{T}_\phi={\rm Id}-\tilde{P}_\phi, \qquad (e^{-\phi}\bbar{\pl}_{\mc{L}^n})^*T_\phi={\rm Id}-P_0
\end{equation} 
and, near the diagonal, in complex coordinates $z$ on a neighborhood $U\subset \Sigma$ of a point $z_0$, the integral kernel of $\tilde{T}_\phi$ has the form 
\[\tilde{T}_\phi(z,z')=\frac{e^{a(z)+\bar{\phi}(z)-a(z')-\bar{\phi}(z)}}{2\pi(\bar{z}'-\bar{z})}d\bar{z}+H_\phi (z,z')d\bar{z}, \quad H_\phi \in C^\infty(U^2)\]
for some smooth function $a\in C^\infty(U,\C)$ independent of $\phi$. The operator $T_\phi$, $\tilde{T}_\phi$ and $T_0:=T_{\phi=0}$ are related by
\begin{equation}\label{TphiT0}
T_\phi=({\rm Id}-\tilde{\Pi}_\phi)e^{\bar{\phi}}T_0, \qquad  \tilde{T}_\phi=({\rm Id}-\tilde{\Pi}_\phi)e^{\bar{\phi}}T_0e^{-\bar{\phi}}({\rm Id}-\tilde{P}_\phi)
\end{equation} 
and the operators $R_\phi$, $\tilde{R}_\phi$ and $T_\phi$, $\tilde{T}_\phi$ are related by 
\begin{equation}\label{relRphi}
R_\phi= T_\phi^*T_\phi, \qquad \tilde{R}_{\phi}=\tilde{T}_\phi^*\tilde{T}_\phi,  
\end{equation}
and $R_\phi$ can be written in terms of $T_0$ as follows
\begin{equation}\label{relRphi2}
R_{\phi}= T_0^*e^{2{\rm Re}(\phi)}({\rm Id}-\Pi_\phi)T_0.
\end{equation}
Finally, the following identities also hold 
\begin{equation}\label{Rphidbar*}
\begin{split} 
\tilde{R}_\phi(e^{-\phi}\bar{\pl}_{\mc{L}^{n}}e^{\phi})^*= \tilde{T}_\phi^*({\rm Id}-\tilde{\Pi}_\phi), \qquad (e^{-\phi}\bar{\pl}_{\mc{L}^{n}}e^{\phi})\tilde{R}_\phi(e^{-\phi}\bar{\pl}_{\mc{L}^{n}}e^{\phi})^*=({\rm Id}-\tilde{\Pi}_\phi).
\end{split}\end{equation} 
\end{lemma}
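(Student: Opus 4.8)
The plan is to place everything inside the standard Fredholm theory of the ``square'' operators $D^*D$ attached to a first--order operator $D$, and then to read off the remaining statements by algebraic manipulation of the defining identities. I would write $D:=e^{-\phi}\bar{\pl}_{\mc{L}^n}e^{\phi}$ and $D_0:=e^{-\phi}\bar{\pl}_{\mc{L}^n}$, so that $\tilde{\mc{D}}_\phi=D^*D$ and $\mc{D}_\phi=D_0^*D_0$ by \eqref{def_Witten_Delta}--\eqref{def_D_phi}. Conjugation by the smooth factor $e^{\phi}$ does not alter the principal symbol, so $D,D_0$ carry the symbol of $\bar{\pl}_{\mc{L}^n}$ and $\tilde{\mc{D}}_\phi,\mc{D}_\phi$ are second order, elliptic, with symbol proportional to $|\xi|^2$. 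On the closed surface $\Sigma$ they are therefore self-adjoint with discrete spectrum, Fredholm from $\mc{H}^{s+2}$ to $\mc{H}^s$, and satisfy $\ker(D^*D)=\ker D$ (test against the section). Since $\ker D=e^{-\phi}\ker\bar{\pl}_{\mc{L}^n}=\ker(\bar{\pl}_{\mc{L}^n}e^{\phi})$ and $\ker D_0=\ker\bar{\pl}_{\mc{L}^n}$, the relevant kernels are exactly those projected onto by $\tilde{P}_\phi$ and $P_0$ in \eqref{def_orthogonal_proj_ker}. I would then define $R_\phi,\tilde{R}_\phi$ as the generalised inverses vanishing on the kernel and inverting $\mc{D}_\phi,\tilde{\mc{D}}_\phi$ on its $L^2$-orthocomplement; elliptic regularity gives the mapping properties $\mc{H}^s\to\mc{H}^{s+2}$ and smoothness of the Schwartz kernel off the diagonal, and the identities \eqref{DphiRphi} are then just the definition of these Green operators. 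Finally \eqref{tildeRphi} is the general Hodge fact that $D\tilde{R}_\phi D^*$ is the orthogonal projection onto $\overline{{\rm Ran}(D)}=(\ker D^*)^\perp={\rm Ran}({\rm Id}-\tilde{\Pi}_\phi)$.

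With these in hand, part (2) is largely formal. I would note that $\tilde{T}_\phi=D\tilde{R}_\phi$ and $T_\phi=D_0 R_\phi$ are of order $-1$, hence bounded $\mc{H}^s\to\mc{H}^{s+1}$, and \eqref{Prop_T_phi} is immediate: $D^*\tilde{T}_\phi=\tilde{\mc{D}}_\phi\tilde{R}_\phi={\rm Id}-\tilde{P}_\phi$ and $D_0^*T_\phi=\mc{D}_\phi R_\phi={\rm Id}-P_0$. The relations \eqref{relRphi} follow because $\tilde{R}_\phi$ is self-adjoint and annihilates its own kernel: $\tilde{T}_\phi^*\tilde{T}_\phi=\tilde{R}_\phi(D^*D)\tilde{R}_\phi=\tilde{R}_\phi({\rm Id}-\tilde{P}_\phi)=\tilde{R}_\phi$, using $\tilde{R}_\phi\tilde{P}_\phi=0$, and similarly $T_\phi^*T_\phi=R_\phi$. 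Likewise \eqref{Rphidbar*} reduces to $\tilde{R}_\phi D^*=(D\tilde{R}_\phi)^*=\tilde{T}_\phi^*$ together with $\tilde{\Pi}_\phi\tilde{T}_\phi=0$, the latter because ${\rm Ran}(\tilde{T}_\phi)\subset{\rm Ran}(D)$ is orthogonal to $\ker D^*$.

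The genuinely structural identities \eqref{TphiT0} and \eqref{relRphi2} I would obtain by a uniqueness argument, after recording the conjugation relation $\tilde{\Pi}_\phi=e^{\bar\phi}\Pi_\phi e^{-\bar\phi}$ (both project onto $e^{\bar\phi}\ker\bar{\pl}^*_{\mc{L}^n}$, respectively onto $\ker\bar{\pl}^*_{\mc{L}^n}$, for the two weighted products in \eqref{tildePiphi}). An operator $S$ with ${\rm Ran}(S)\subset(\ker D_0^*)^\perp$ and $D_0^*S={\rm Id}-P_0$ is unique, since any two such differ by an operator whose range lies in $\ker D_0^*\cap(\ker D_0^*)^\perp=0$. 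Both $T_\phi$ and $({\rm Id}-\tilde{\Pi}_\phi)e^{\bar\phi}T_0$ meet these requirements: the second because $D_0^*=\bar{\pl}^*_{\mc{L}^n}e^{-\bar\phi}$, $e^{-\bar\phi}({\rm Id}-\tilde{\Pi}_\phi)e^{\bar\phi}={\rm Id}-\Pi_\phi$, and $\bar{\pl}^*_{\mc{L}^n}\Pi_\phi=0$, so that $D_0^*({\rm Id}-\tilde{\Pi}_\phi)e^{\bar\phi}T_0=\bar{\pl}^*_{\mc{L}^n}T_0={\rm Id}-P_0$. This proves the first identity in \eqref{TphiT0}; the second is the same argument applied to $\tilde{T}_\phi$ with the kernel projector $\tilde{P}_\phi$ inserted on the right. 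Feeding \eqref{TphiT0} into \eqref{relRphi} then yields \eqref{relRphi2}: $R_\phi=T_\phi^*T_\phi=T_0^*e^{\phi}({\rm Id}-\tilde{\Pi}_\phi)e^{\bar\phi}T_0=T_0^*e^{2{\rm Re}(\phi)}({\rm Id}-\Pi_\phi)T_0$.

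The main obstacle is the local form of the kernel of $\tilde{T}_\phi$ near the diagonal. Here I would exploit that, by \eqref{Prop_T_phi}, $\tilde{T}_\phi(\cdot,z')$ solves $D^*u=\delta_{z'}-\tilde{P}_\phi(\cdot,z')$, so its diagonal singularity is that of a local fundamental solution of $D^*=e^{\bar\phi}\bar{\pl}^*_{\mc{L}^n}e^{-\bar\phi}$. In a holomorphic coordinate $z$ and a local holomorphic frame of $\mc{L}^n$ with fibre weight $e^{2a}$, the operator $\bar{\pl}^*_{\mc{L}^n}$ on $\mc{L}^n$-valued $(0,1)$-forms is, up to a smooth nonvanishing density, the conjugate of $\pl_z$ by the metric weight $e^{a}$, whose model fundamental solution is the Cauchy kernel $\tfrac{1}{2\pi(\bar z'-\bar z)}$; conjugating this by $e^{\pm\bar\phi}$ and $e^{\pm a}$ produces the prefactor $e^{a(z)+\bar\phi(z)-a(z')-\bar\phi(z')}$, with the error absorbed into the smooth remainder $H_\phi$. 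The delicate points are keeping track of the metric weights and conjugation factors correctly and verifying that the smooth projector contribution $\tilde{P}_\phi(\cdot,z')$ does not affect the singular part; once the leading term is pinned down, smoothness of $H_\phi$ follows from the pseudodifferential calculus used above, and the same conjugation bookkeeping confirms the companion relations \eqref{TphiT0}.
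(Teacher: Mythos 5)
Your proposal is correct and follows essentially the same route as the paper: ellipticity and self-adjointness of $D^*D$ with $D=e^{-\phi}\bar{\pl}_{\mc{L}^n}e^{\phi}$ (resp.\ $D_0=e^{-\phi}\bar{\pl}_{\mc{L}^n}$), Green operators from standard pseudodifferential theory, the local Cauchy-kernel parametrix conjugated by the weights $e^{a},e^{\bar\phi}$ plus ellipticity of $D^*$ for the kernel asymptotics, and orthogonality/uniqueness arguments for the operator identities. Your direct derivations of \eqref{tildeRphi} (as the projection onto ${\rm Ran}(D)$) and of \eqref{relRphi} (via $\tilde{R}_\phi D^*D\tilde{R}_\phi=\tilde{R}_\phi$) are slightly cleaner than the paper's but equivalent.

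The one place where you are thinner than the paper is the second identity of \eqref{TphiT0}, which you dismiss as ``the same argument with $\tilde{P}_\phi$ inserted on the right.'' It is not quite the same: when you apply $D^*=e^{\bar{\phi}}\bar{\pl}^*_{\mc{L}^n}e^{-\bar{\phi}}$ to the candidate $({\rm Id}-\tilde{\Pi}_\phi)e^{\bar{\phi}}T_0e^{-\bar{\phi}}({\rm Id}-\tilde{P}_\phi)$, you get
$({\rm Id}-e^{\bar{\phi}}P_0e^{-\bar{\phi}})({\rm Id}-\tilde{P}_\phi)$, so the uniqueness argument only closes once you show the cross term $e^{\bar{\phi}}P_0e^{-\bar{\phi}}({\rm Id}-\tilde{P}_\phi)$ vanishes; the paper devotes a separate step to exactly this (multiplying the relevant identity on the left by $P_0e^{-\bar{\phi}}$ and using $P_0\bar{\pl}^*_{\mc{L}^n}=0$). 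The gap is genuine but easily filled: taking adjoints, the claim is $({\rm Id}-\tilde{P}_\phi)e^{-\phi}P_0=0$, which holds because ${\rm Ran}(e^{-\phi}P_0)=e^{-\phi}\ker\bar{\pl}_{\mc{L}^n}=\ker(\bar{\pl}_{\mc{L}^n}e^{\phi})={\rm Ran}(\tilde{P}_\phi)$. With that one line added, your proof is complete.
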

\begin{proof} 
We take a cover of $\Sigma$ by open sets $U_i$ and local holomorphic trivialisations $s_i$ of $\mc{L}^n$ and the Riemannian metric on $U_i$ is given in complex coordinates by $g=e^{\sigma_i}|dz|^2$. The operators
$e^{-\phi}\bar{\pl}_{\mc{L}^n}$ and its adjoint have the form in local holomorphic coordinates: for $f\in C_c^\infty(U_i)$ and $u=u(z)dz\in C_c^\infty(U_i;\Lambda^{0,1}\Sigma)$, 
\[e^{-\phi}\bar{\pl}_{\mc{L}^n}(fs_i)=e^{-\phi}(\bar{\pl}f)s_i,\quad \bar{\pl}_{\mc{L}^n}^*(e^{-\bar{\phi}}u s_i)=-e^{-\sigma_i}(\pl_z(e^{-\bar{\phi}}u) + e^{-\bar{\phi}}u \pl_z\log(|s_i|^2_{\mc{L}^n}))s_i\]
thus the associated Laplacian $\mc{D}_\phi$ has the form  
\[ \mc{D}_\phi(fs_i)=-e^{-\sigma_i}(\pl_z(e^{-2{\rm Re}(\phi)}\pl_{\bar{z}}f) + e^{-2{\rm Re}(\phi)}\pl_{\bar{z}}f \pl_z\log(|s_i|^2_{\mc{L}^n}))s_i.\]
This implies that $\mc{D}_\phi$ and $\tilde{\mc{D}}_\phi$ are elliptic with principal symbol $\sigma_{\mc{D}_\phi}(x,\xi)=\frac{1}{4}e^{2{\rm Re}(\phi(x))}|\xi|_{g_x}^2$ and $\sigma_{\tilde{\mc{D}}_\phi}(x,\xi)=\frac{1}{4}|\xi|_{g_x}^2$.
Being also self-adjoint, they are thus Fredholm of index $0$ as a bounded map $\mc{H}^2(\Sigma;\mc{L}^n)\to L^2(\Sigma;\mc{L}^n)$. 
Moreover,  we have  $\ker \mc{D}_\phi=\ker \bar{\pl}_{\mc{L}^n}$.
By standard pseudo-differential methods \cite{Shubin}, there are unique bounded self-adjoint operator $R_\phi$ and $\tilde{R}_\phi$
mapping $\mc{H}^s(\Sigma;\mc{L}^n)\to \mc{H}^{s+2}(\Sigma;\mc{L}^n)$ for all $s\in \R$, with range being orthogonal respectively to 
$\ker \bar{\pl}_{\mc{L}^n}$ and to $\ker (e^{-\phi}\bar{\pl}_{\mc{L}^n}e^{\phi})$ and 
such that 
\[ \mc{D}_\phi R_\phi={\rm Id}-P_0, \qquad \tilde{\mc{D}}_\phi \tilde{R}_\phi={\rm Id}-\tilde{P}_\phi \]
where $P_0$ and $\tilde{P}_\phi$ being respectively the orthogonal projection on $\ker \bar{\pl}_{\mc{L}^n}$ and $\ker (e^{-\phi}\bar{\pl}_{\mc{L}^n}e^{\phi})$ (note that $P_0$ and $\tilde{P}_\phi$ are finite rank and smoothing and that $\tilde{P}_0=P_0$).
The operator
\[\begin{gathered}
 \tilde{T}_\phi:=(e^{-\phi}\bbar{\pl}_{\mc{L}^n}e^{\phi}) \tilde{R}_{\phi} : \mc{H}^s(\Sigma;\mc{L}^n)\to \mc{H}^{s+1}(\Sigma;  \mc{L}^n  \otimes \Lambda^{0,1}\Sigma) \\
 T_\phi:=e^{-\phi}\bbar{\pl}_{\mc{L}^n}R_{\phi} : \mc{H}^s(\Sigma;\mc{L}^n)\to \mc{H}^{s+1}(\Sigma;  \mc{L}^n  \otimes \Lambda^{0,1}\Sigma)
 \end{gathered}\]
satisfy 
\[(e^{-\phi}\bbar{\pl}_{\mc{L}^n}e^{\phi})^*\tilde{T}_\phi={\rm Id}-\tilde{P}_\phi, \quad (e^{-\phi}\bbar{\pl}_{\mc{L}^n})^*T_\phi={\rm Id}-P_0\] 
Moreover, its range of $\tilde{T}_\phi$ is orthogonal to $\ker (e^{-\phi}\bar{\pl}_{\mc{L}^n}e^{\phi})^*$ and the range of 
$T_\phi$ is orthogonal to $\ker (e^{-\phi}\bar{\pl}_{\mc{L}^n})^*$.
 Let us describe the integral kernel of this operator near the diagonal. Let $a_i:=\log(|s_i|^2_{\mc{L}^n}))$ and rewrite 
$\bar{\pl}_{\mc{L}^n}^*(u s_i)=-e^{-\sigma_i}e^{-a_i}\pl_z(e^{a_i}u)s_i$ in the chart $U_i$. Let us define the operator on $U_i$
\[ \tilde{T}_{\phi,i} (fs_i)(w):=\frac{s_i(w)e^{-a_i(w)-\bar{\phi}(w)+\bar{\phi}(z)}}{2\pi}\int_{\C} e^{a_i(z)}f(z)(\frac{1}{\bar{z}-\bar{w}})e^{\sigma_i(z)}\dd {\rm Re}(z)\dd{\rm Im}(z)\]
This operator satisfies, for $f\in C_c^\infty(U_i)$,  
\[(e^{-\phi}\bbar{\pl}_{\mc{L}^n}e^{\phi})^*(\tilde{T}_{\phi,i} (fs_i))=fs_i.\]
If $\sum_i\chi_i=1$ is a partition of unity associated to $U_i$ and $\tilde{\chi}_i\in C_c^\infty(U_i)$ equals $1$ on support of $\chi_i$
we can construct an approximation of the inverse of $\bar{\pl}_{\mc{L}^n}^*$ by setting
\[ \mc{Q}_\phi=\sum_{i} \tilde{\chi}_i\tilde{T}_{\phi,i}\chi_i.\]
We get $(e^{-\phi}\bbar{\pl}_{\mc{L}^n}e^{\phi})^* \mc{Q}_\phi={\rm Id}+K_\phi$
where $K_\phi:C^{-\infty}(\Sigma;\mc{L}^n)\to C^\infty(\Sigma;\mc{L}^n\otimes \Lambda^{0,1}\Sigma)$ is smoothing. 
We thus have $(e^{-\phi}\bbar{\pl}_{\mc{L}^n}e^{\phi})^*(\tilde{T}_\phi-\mc{Q}_\phi)=-\tilde{P}_\phi-K_\phi$ thus for each $f\in C^{-\infty}(\Sigma;\mc{L}^n)$ we have 
$(e^{-\phi}\bbar{\pl}_{\mc{L}^n}e^{\phi})^*(T_\phi-\mc{Q}_\phi)f\in C^\infty(\Sigma;\mc{L}^n)$. Since $(e^{-\phi}\bbar{\pl}_{\mc{L}^n}e^{\phi})^*$ is elliptic on $\mc{L}^{n}\otimes \Lambda^{0,1}\Sigma$ we obtain that $(\tilde{T}_\phi-\mc{Q}_\phi)f$ is smooth with continuous dependence on $f$, and thus $\tilde{T}_\phi-\mc{Q}_\phi$ is a smoothing operator. Note that $\tilde{T}_\phi^*=\tilde{R}_\phi(e^{-\phi}\bbar{\pl}_{\mc{L}^n}e^{\phi})^*$ thus
\begin{equation}\label{dbar*barT_0} 
\begin{split}
(e^{-\phi}\bbar{\pl}_{\mc{L}^n}e^{\phi})^*(e^{-\phi}\bbar{\pl}_{\mc{L}^n}e^{\phi}) \tilde{T}_\phi^* \tilde{T}_\phi=& (e^{-\phi}\bbar{\pl}_{\mc{L}^n}e^{\phi})^*(e^{-\phi}\bbar{\pl}_{\mc{L}^n}e^{\phi})\tilde{R}_\phi(e^{-\phi}\bbar{\pl}_{\mc{L}^n}e^{\phi})^* \tilde{T}_\phi\\
 =& ({\rm Id}-\tilde{P}_\phi)(e^{-\phi}\bbar{\pl}_{\mc{L}^n}e^{\phi})^* \tilde{T}_\phi={\rm Id}-\tilde{P}_\phi.
\end{split}\end{equation}
Since $\tilde{T}_\phi^*$ maps to $\ker (e^{-\phi}\bar{\pl}_{\mc{L}^n}e^{\phi})^\perp$ (as $\tilde{R}_\phi$ does), 
this  shows that $\tilde{R}_\phi= \tilde{T}_\phi^*\tilde{T}_\phi$. 
Now can apply the same reasoning for $\mc{D}_\phi$: we obtain 
\begin{equation}\label{zidane} 
 R_\phi = (e^{-\phi}\bar{\pl}_{\mc{L}^n}R_\phi)^*(e^{-\phi}\bar{\pl}_{\mc{L}^n}R_\phi).
 \end{equation}
We also have $(e^{-\phi}\bar{\pl}_{\mc{L}^n})^*=\bar{\pl}_{\mc{L}^n}^*e^{-\bar{\phi}}$ and thus  
\[
(e^{-\phi}\bar{\pl}_{\mc{L}^n})^*e^{\bar{\phi}}T_0={\rm Id}-P_0, \quad (e^{-\phi}\bar{\pl}_{\mc{L}^n}e^{\phi})^*e^{\bar{\phi}}T_0e^{-\bar{\phi}}={\rm Id}-P_0
\]
Now we observe that 
\begin{equation}\label{computation_of_Tphi} 
(e^{-\phi}\bar{\pl}_{\mc{L}^n})^*(e^{\bar{\phi}}T_0- T_\phi)=0, \quad (e^{-\phi}\bar{\pl}_{\mc{L}^n}e^{\phi})^*(e^{\bar{\phi}}T_0e^{-\bar{\phi}}-\tilde{T}_\phi)=\tilde{P}_\phi-e^{\bar{\phi}}P_0e^{-\bar{\phi}} 
\end{equation}
Since  $T_\phi$ maps to $(\ker (e^{-\phi}\bar{\pl}_{\mc{L}^n})^*)^\perp$, by the first equation of \eqref{def_orthogonal_proj_coker} we get for $\tilde{\Pi}_\phi$ the orthogonal projection of \eqref{def_orthogonal_proj_coker} that 
\[ e^{\bar{\phi}}T_0- T_{\phi}= \tilde{\Pi}_\phi(e^{\bar{\phi}}T_0- T_\phi)=\tilde{\Pi}_\phi e^{\bar{\phi}}T_0,\]
which can be rewritten as  $T_\phi=({\rm Id}-\tilde{\Pi}_\phi)e^{\bar{\phi}}T_0$, thus proving the first identity of \eqref{TphiT0}.
Therefore, combining with \eqref{zidane}, we have (recall \eqref{tildePiphi})
\begin{equation}\label{RphiTphi}
R_\phi = (e^{\bar{\phi}}T_0)^*({\rm Id}-\tilde{\Pi}_\phi)e^{\bar{\phi}}T_0= T_0^*e^{2{\rm Re}(\phi)}({\rm Id}-\Pi_\phi)T_0,
\end{equation}
which is \eqref{relRphi2}. 
We have, using $\tilde{P}_\phi (e^{-\phi}\bar{\pl}_{\mc{L}^{n}}e^{\phi})^*=(e^{-\phi}\bar{\pl}_{\mc{L}^{n}}e^{\phi}\tilde{P}_\phi)^*=0$, that 
\[ (e^{-\phi}\bar{\pl}_{\mc{L}^{n}}e^{\phi})^*\tilde{T}_\phi (e^{-\phi}\bar{\pl}_{\mc{L}^{n}}e^{\phi})^*=({\rm Id}-\tilde{P}_\phi)(e^{-\phi}\bar{\pl}_{\mc{L}^{n}}e^{\phi})^*=(e^{-\phi}\bar{\pl}_{\mc{L}^{n}}e^{\phi})^*\]
which shows, using that $\tilde{T}_\phi$ maps to $(\ker (e^{-\phi}\bbar{\pl}_{\mc{L}^n}e^{\phi})^*)^\perp$, that 
\begin{equation}\label{barTdbar*}
\tilde{T}_\phi (e^{-\phi}\bar{\pl}_{\mc{L}^{n}}e^{\phi})^*=({\rm Id}-\tilde{\Pi}_\phi).
\end{equation} 
Now, applying $({\rm Id}-\tilde{P}_\phi)$ on the right of the second equation of \eqref{computation_of_Tphi} gives 
\[e^{\bar{\phi}}(e^{-\phi}\bar{\pl}_{\mc{L}^n})^*(e^{\bar{\phi}}T_0e^{-\bar{\phi}}-\tilde{T}_\phi)({\rm Id}-\tilde{P}_\phi)=-e^{\bar{\phi}}P_0e^{-\bar{\phi}}({\rm Id}-\tilde{P}_\phi).\]
Multiply by $P_0e^{-\bar{\phi}}$ on the left, the left hand side then gives $0$ since $P_0\bar{\pl}_{\mc{L}^{-2}}^*=0$, and thus the 
right hand side $P_0e^{-\bar{\phi}}({\rm Id}-\tilde{P}_\phi)$ also vanishes, giving  
\[ (e^{-\phi}\bar{\pl}_{\mc{L}^n}e^{\phi})^*(e^{\bar{\phi}}T_0e^{-\bar{\phi}}-\tilde{T}_\phi)({\rm Id}-\tilde{P}_\phi)=0.\]
Then we apply $\tilde{T}_\phi$ on the left and use  \eqref{barTdbar*} to get 
\[ \tilde{T}_\phi=({\rm Id}-\tilde{\Pi}_\phi)e^{\bar{\phi}}T_0e^{-\bar{\phi}}({\rm Id}-\tilde{P}_\phi)\]
showing \eqref{TphiT0}.  Taking adjoints of \eqref{barTdbar*}, 
$(1-\tilde{\Pi}_\phi)=(e^{-\phi}\bar{\pl}_{\mc{L}^{n}}e^{\phi})\tilde{T}_\phi^*$.
Using \eqref{RphiTphi} and \eqref{barTdbar*} in the second line, we get 
\[
\begin{split} 
\tilde{R}_\phi(e^{-\phi}\bar{\pl}_{\mc{L}^{n}}e^{\phi})^*= \tilde{T}_\phi^*\tilde{T}_\phi (e^{-\phi}\bar{\pl}_{\mc{L}^{n}}e^\phi)^*= \tilde{T}_\phi^*({\rm Id}-\tilde{\Pi}_\phi).
\end{split}\]
which is the first identity of \eqref{Rphidbar*}.
Applying the operator $(e^{-\phi}\bar{\pl}_{\mc{L}^{n}}e^{\phi})$ on the left gives 
\[(e^{-\phi}\bar{\pl}_{\mc{L}^{n}}e^{\phi})\tilde{R}_\phi(e^{-\phi}\bar{\pl}_{\mc{L}^{n}}e^{\phi})^*=(e^{-\phi}\bar{\pl}_{\mc{L}^{n}}e^{\phi})\tilde{T}_\phi^*({\rm Id}-\tilde{\Pi}_\phi)=({\rm Id}-\tilde{\Pi}_\phi).\]
This ends the proof of \eqref{Rphidbar*}.
\end{proof}

Let us consider the determinant of  
\[\tilde{\mc{D}}_\phi:= e^{\bar{\phi}}\mc{D}_\phi e^{\phi}= (e^{-\phi}\bar{\pl}_{\mc{L}^n}e^{\phi})^*(e^{-\phi}\bar{\pl}_{\mc{L}^n}e^{\phi})\] 
defined using its spectral zeta function, where $\phi\in C^\infty(\Sigma,\C)$ as above. We denote it by $\det(\tilde{\mc{D}}_\phi)$, with the convention that when $\tilde{\mc{D}}_\phi$ has a kernel, one needs to remove the kernel in the definition. Notice that $e^{-\phi}\bar{\pl}_{\mc{L}^n}e^{\phi}=(\bar{\pl}_{\mc{L}^n}+\bar{\pl}\phi)$ is a Dolbeault operator that is gauge equivalent to $\bar{\pl}_{\mc{L}^n}$.  
\begin{lemma}[\textbf{Polyakov anomaly for Witten Laplacian}]\label{lem:detwitten}
The determinant of $\tilde{\mc{D}}_{\phi}$ can be expressed as 
\begin{equation}\label{det_of_Dphi}
 \det(\tilde{\mc{D}}_{\phi})=e^{-\frac{1}{2\pi}\int_{\Sigma}(|d {\rm Re}(\phi)|^2_g-\frac{1}{2}K_g{\rm Re}(\phi)){\rm dv}_g-\frac{1}{\pi i}\int_\Sigma {\rm Re}(\phi)F_{\mc{L}^n}}\det(\mc{D}_{0})\frac{\det(\mc{G}_\phi)}{\det(\mc{G}_0)}\frac{\det(\mc{N}_\phi)}{\det(\mc{N}_0)}
 \end{equation}
where $\mc{G}_\phi$ and $\mc{N}_\phi$ are the finite dimensional matrices defined  by \eqref{Gphi} and \eqref{NPhi} depending on a choice of basis $(e_j)_j$  of $\ker \bar{\pl}_{\mc{L}^n}^*$ and $(f_j)_j$ of $\ker \bar{\pl}_{\mc{L}^n}$
If these bases are orthonormal, then $\det(\mc{G}_0)=\det(\mc{N}_0)=1$. Furthermore, the determinant obeys:\\
1) For $\la>0$, one has 
\begin{equation}\label{scalingdet}
\det(\la\tilde{\mc{D}}_\phi)=\det(\tilde{\mc{D}}_\phi)\la^{ \frac{\chi(\Sigma)}{6}-\dim \ker \bar{\pl}_{\mc{L}^n}+\frac{1}{2}{\rm deg}(\mc{L}^n)}.
\end{equation}
2) If $g'=e^{\omega}g$ is a metric conformal to $g$  with $\omega\in C^\infty(\Sigma)$ and $\mc{N}_{0,g'}$ is the matrix \eqref{NPhi} 
for $\phi=0$ with respect to the background metric $g'$, one has 
\begin{equation}\label{anomalydetD0} 
\frac{\det(\mc{D}_{0,g'})}{\det(\mc{N}_{0,g'})}=\frac{\det(\mc{D}_{0,g})}{\det(\mc{N}_{0,g})}e^{ -\frac{1}{48\pi}\int_\Sigma(|d\omega|_g^2+2K_g\omega){\rm dv}_g+\frac{1}{4\pi i}\int_\Sigma \omega F_{\mc{L}^n}}.
\end{equation}
\end{lemma}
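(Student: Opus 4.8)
The plan is to reduce to real $\phi$, to identify $\tilde{\mc{D}}_\phi$ with a Kodaira Laplacian for a rescaled Hermitian bundle metric, and then to run the heat-kernel variation argument for zeta-regularised determinants, exactly as in the scalar Polyakov formula \eqref{detpolyakov}.

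\emph{Reduction to real $\phi$.} Write $\phi = \rho + i\tau$ with $\rho = {\rm Re}(\phi)$ and set $B_\phi := e^{-\phi}\bar{\pl}_{\mc{L}^n}e^{\phi} = \bar{\pl}_{\mc{L}^n}+\bar{\pl}\phi$, so that $\tilde{\mc{D}}_\phi = B_\phi^*B_\phi$ by \eqref{def_Witten_Delta}. A one-line computation gives $B_\phi = e^{-i\tau}B_\rho e^{i\tau}$, and since $U := e^{i\tau}$ is a unitary multiplication operator on $L^2(\Sigma,{\rm v}_g;\mc{L}^n)$ we obtain $\tilde{\mc{D}}_\phi = U^{-1}\tilde{\mc{D}}_\rho U$. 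Unitarily equivalent non-negative operators have the same spectrum and the same kernel dimension, hence the same zeta function and determinant, so $\det(\tilde{\mc{D}}_\phi) = \det(\tilde{\mc{D}}_{{\rm Re}(\phi)})$. Moreover, by \eqref{Gphi} and \eqref{NPhi} the matrices $\mc{G}_\phi$, $\mc{N}_\phi$ and the exponential prefactor in \eqref{det_of_Dphi} depend only on ${\rm Re}(\phi)$; hence it suffices to treat $\phi = \rho$ real.

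\emph{Kodaira Laplacian interpretation.} For real $\rho$ the isometry $e^{-\rho}:L^2(\Sigma,e^{-2\rho}{\rm v}_g;\mc{L}^n)\to L^2(\Sigma,{\rm v}_g;\mc{L}^n)$ conjugates $\tilde{\mc{D}}_\rho$ to the Kodaira Laplacian $\mc{K}_\rho := \bar{\pl}^*_\rho\bar{\pl}_{\mc{L}^n}$ of $(\mc{L}^n,\bar{\pl}_{\mc{L}^n})$ for the rescaled Hermitian metric $e^{-2\rho}g_{\mc{L}^n}$ (with ${\rm v}_g$ unchanged), where $\bar{\pl}^*_\rho = e^{2\rho}\bar{\pl}^*_{\mc{L}^n}e^{-2\rho}$; hence $\det(\tilde{\mc{D}}_\rho) = \det(\mc{K}_\rho)$. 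This is the conceptual origin of the finite-dimensional factors: the $L^2(e^{-2\rho}{\rm v}_g)$-Gram matrix of $\ker\bar{\pl}_{\mc{L}^n}=H^0(\Sigma,\mc{L}^n)$ is exactly $\mc{N}_\rho$, and that of the cokernel $\ker\bar{\pl}^*_{\mc{L}^n}\simeq H^1(\Sigma,\mc{L}^n)$ (Serre duality, cf. \eqref{Riemann_RochL^k}) is $\mc{G}_\rho$; the product of $\det(\mc{K}_\rho)$ with suitable powers of $\det(\mc{N}_\rho)$ and $\det(\mc{G}_\rho)$ is a Quillen-type norm on the determinant line $\det H^0\otimes(\det H^1)^{-1}$, whose dependence on $\rho$ is local.

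\emph{Heat-kernel variation.} Along $\rho_t := t\rho$, $t\in[0,1]$, set $P_t := \tilde{\mc{D}}_{\rho_t} = B_t^*B_t$ with $B_t = \bar{\pl}_{\mc{L}^n}+t\,\bar{\pl}\rho$, so that $\dot B_t = \bar{\pl}\rho$ (a bundle map $\mc{L}^n\to\mc{L}^n\otimes\Lambda^{0,1}\Sigma$) and $\dot P_t = \dot B_t^*B_t + B_t^*\dot B_t$ is a first-order operator. The standard variation formula
\[ \pl_t\log\det{}'(P_t) = {\rm f.p.}_{u\to 0}\,\Tr\big(\dot P_t\,P_t^{-1}e^{-uP_t}({\rm Id}-\Pi_t)\big) \]
(with $\Pi_t$ the projector onto $\ker P_t = e^{-\rho_t}\ker\bar{\pl}_{\mc{L}^n}$) expresses the right-hand side, by locality of the heat expansion together with the intertwining $B_tP_t = (B_tB_t^*)B_t$, as an integral over $\Sigma$ of the subleading Seeley–DeWitt data of $P_t$ weighted by $\rho$. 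The operator $P_t$ is a generalised Laplacian whose endomorphism term is governed by the Gauss curvature and the bundle curvature $F_{\mc{L}^n}$ (Bochner–Kodaira), so this integral splits into a base-curvature piece and a bundle-curvature piece. Integrating $\int_0^1\!\cdot\,dt$ — the quadratic $t$-dependence producing $\int_\Sigma\rho\,\Delta_g\rho\,{\rm dv}_g = \int_\Sigma|d\rho|^2_g\,{\rm dv}_g$ — yields the prefactor $e^{-\frac{1}{2\pi}\int_\Sigma(|d\rho|^2_g-\frac12 K_g\rho){\rm dv}_g-\frac{1}{\pi i}\int_\Sigma\rho\, F_{\mc{L}^n}}$, while the accompanying variation of the moving kernel/cokernel normalisations integrates to $\det(\mc{G}_\rho)/\det(\mc{G}_0)$ and $\det(\mc{N}_\rho)/\det(\mc{N}_0)$; at $t=1$, using $\tilde{\mc{D}}_0 = \mc{D}_0$, this is \eqref{det_of_Dphi}. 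The main obstacle is precisely this bookkeeping: tracking the $t$-moving kernel and cokernel of $B_t$ inside the zeta-determinant variation, and pinning down the exact constant in the bundle-curvature contribution to the heat coefficient.

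\emph{Scaling and conformal anomaly.} Identity \eqref{scalingdet} follows from $\det(\la A) = \la^{\zeta_A(0)}\det(A)$ (itself from $\zeta_{\la A}(s)=\la^{-s}\zeta_A(s)$), together with $\zeta_{\tilde{\mc{D}}_\phi}(0) = \frac{1}{4\pi}\int_\Sigma a_2\,{\rm dv}_g - \dim\ker\bar{\pl}_{\mc{L}^n}$; the coefficient integral is topological and equals, by the heat computation of the previous step with Gauss–Bonnet ($\int_\Sigma K_g{\rm dv}_g = 4\pi\chi(\Sigma)$) and $\frac{i}{2\pi}\int_\Sigma F_{\mc{L}^n} = {\rm deg}(\mc{L}^n)$ from \eqref{Chern_class}, the value $\frac{\chi(\Sigma)}{6}+\frac12{\rm deg}(\mc{L}^n)$, giving the stated exponent. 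Finally \eqref{anomalydetD0} is obtained by the same heat-kernel variation, now along the family of base metrics $e^{r\omega}g$, $r\in[0,1]$: the base-curvature part of the coefficient reproduces $-\frac{1}{48\pi}\int_\Sigma(|d\omega|^2_g+2K_g\omega){\rm dv}_g$ exactly as in \eqref{detpolyakov}, the bundle-curvature part produces $\frac{1}{4\pi i}\int_\Sigma\omega\, F_{\mc{L}^n}$, and the division by $\det(\mc{N}_{0,g'})$ absorbs the change of the $L^2$-volume of the (metric-independent) holomorphic sections under the conformal change of $g$.
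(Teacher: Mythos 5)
Your strategy coincides with the paper's: differentiate $\log\det(\tilde{\mc{D}}_{t\rho})$ along $t\in[0,1]$ via a zeta/heat-kernel variation and integrate in $t$. The unitary reduction $\tilde{\mc{D}}_\phi=e^{-i\tau}\,\tilde{\mc{D}}_{{\rm Re}(\phi)}\,e^{i\tau}$ and the Quillen-metric framing are correct and pleasant additions not made explicit in the paper. However, the proposal stops exactly where the proof actually lives, and you say so yourself (``the main obstacle is precisely this bookkeeping''); what is missing is not routine filling-in.

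Two concrete gaps. First, the variation formula you write, ${\rm f.p.}_{u\to 0}\Tr\big(\dot P_t P_t^{-1}e^{-uP_t}({\rm Id}-\Pi_t)\big)$, involves only the kernel projector of $P_t$; the cokernel $\ker B_t^*$ never enters your setup, so your argument contains no mechanism by which the factor $\det(\mc{G}_\phi)/\det(\mc{G}_0)$ can arise --- it is asserted, not derived. In the paper this factor comes from splitting $\dot P_t=\dot B_t^*B_t+B_t^*\dot B_t$, then using cyclicity of the trace and the intertwining $e^{-uP_t}B_t^*=B_t^*e^{-uB_tB_t^*}$ to convert the second half into a heat trace of $B_tB_t^*$ acting on $(0,1)$-forms; the kernel projector of that operator is $\tilde{\Pi}_{t\rho}$, and $2\int_\Sigma\rho\,{\rm Tr}(\tilde{\Pi}_{t\rho})\,{\rm dv}_g=\pl_t\log\det\mc{G}_{t\rho}$. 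You cite the intertwining but never perform this reduction. Second, locality is not automatic: $\dot P_tP_t^{-1}$ is a nonlocal operator of order $-1$, so the finite part of your trace is not ``$\int_\Sigma\rho$ times a Seeley--DeWitt coefficient of $P_t$''; locality only emerges after the reduction above, which leaves traces of the multiplication operator $\rho$ against two different heat kernels. The constants in \eqref{det_of_Dphi} then require both diagonal expansions: on sections the $t^0$-coefficient is $\tfrac{1}{24\pi}K_g-\tfrac{1}{4\pi i}F_{\mc{L}^n}-\tfrac{t}{4\pi}\Delta_g\rho$, while on $(0,1)$-forms the relevant curvature is that of $\mc{L}^{-n}\otimes\Lambda^{1,0}\Sigma$, giving $-\tfrac{1}{12\pi}K_g+\tfrac{1}{4\pi i}F_{\mc{L}^n}+\tfrac{t}{4\pi}\Delta_g\rho$; only their combination produces $\tfrac{K_g}{4\pi}$, $-\tfrac{1}{\pi i}F_{\mc{L}^n}$, and, after $\int_0^1 t\,dt$, the coefficient $\tfrac{1}{2\pi}$ of $\int_\Sigma|d\rho|^2_g\,{\rm dv}_g$ in the exponent. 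Parts 1) and 2) rest on these same uncomputed coefficients, and 2) additionally needs the observation, absent from your sketch, that $\mc{G}_0$ does not change under a conformal change of $g$ (the scalar product on $1$-forms is conformally invariant), which is why no $\mc{G}$-factor appears in \eqref{anomalydetD0}. As it stands, the proposal is a correct plan rather than a proof.
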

\begin{proof}
In this proof only, we will use the shortcut notation  $\bar{\pl}_{\phi}:=e^{-\phi}\bar{\pl}_{\mc{L}^n}e^{\phi}$.
We use the asymptotics of the local trace of the integral kernel $e^{-t\tilde{\mc{D}}_{\phi}}$ on the diagonal as $t\to 0$: 
since $\bar{\pl}_\phi$ is gauge equivalent to $\bar{\pl}_{\mc{L}^n}$ we have by \cite[Formula (1.5.4)]{Bost_Bourbaki}
\begin{equation}\label{heatkernel} 
{\rm Tr}(e^{-t\tilde{\mc{D}}_{\phi}}(x,x)){\rm dv}_g(x)= \frac{2}{\pi t}{\rm dv}_g+\frac{1}{24\pi}K_g{\rm dv}_g -\frac{1}{4\pi i}F_{\mc{L}^n}-\frac{1}{4\pi}\Delta_g{\rm Re}(\phi) {\rm dv}_g+\mc{O}(t)
\end{equation}
where $K_g$ is the scalar curvature of $g$ and $F_{\mc{L}^n}$ is the curvature $2$-form of $\mc{L}^n$.  Let $\tilde{P}_\phi$ be the orthogonal projection on $\ker \bar{\pl}_{\phi}=e^{-\phi}\ker \bar{\pl}_{\mc{L}^n}$ and $d:=\dim \ker \bar{\pl}_{\phi}$ (recall \eqref{def_orthogonal_proj_ker}
if $(f_j)_j$ is a basis of $\ker \bar{\pl}_{\mc{L}^n}$).
We differentiate $u\mapsto \log(\det \tilde{\mc{D}}_{u\phi})$. In that aim we differentiate 
\[ \begin{split}
\pl_u \Big(\frac{1}{\Gamma(s)}\int_0^\infty {\rm Tr}(e^{-t\tilde{\mc{D}}_{u\phi}}-\tilde{P}_{u\phi})t^{s}\frac{dt}{t}\Big)=&-\frac{2}{\Gamma(s)}\int_0^\infty ({\rm Tr}({\rm Re}(\phi) \tilde{\mc{D}}_{u\phi} e^{-t\tilde{\mc{D}}_{u\phi}}))t^s dt\\
& +\frac{2}{\Gamma(s)}\int_0^\infty  ({\rm Tr}(\bar{\pl}^*_{u\phi}{\rm Re}(\phi) \bar{\pl}_{u\phi}e^{-t\tilde{\mc{D}}_{u\phi}}))t^{s} dt
\end{split}\]
where we used the cyclicity of the trace.
The first term gives, using integration by parts (for $s\gg 1$)  
\[-\frac{2}{\Gamma(s)}\int_0^\infty {\rm Tr}({\rm Re}(\phi) \tilde{\mc{D}}_{u\phi}e^{-t\tilde{\mc{D}}_{u\phi}})t^sdt=-\frac{2s}{\Gamma(s)}\int_0^\infty {\rm Tr}({\rm Re}(\phi) (e^{-t\tilde{\mc{D}}_{u\phi}}-\tilde{P}_{u\phi}))t^{s-1} dt.\]
The function $s/\Gamma(s)$ has a second order $0$ at $s=0$, thus by \eqref{heatkernel} and \eqref{PPhi} the only term surviving in this expression after differentiating at $s=0$ is 
\[ -2\int_\Sigma {\rm Re}(\phi)( \frac{K_g}{24\pi} -{\rm Tr}_{\mc{L}^n}(\tilde{P}_{u\phi})){\rm dv}_g +\frac{1}{2\pi i}\int {\rm Re}(\phi)F_{\mc{L}^n}+\frac{1}{2\pi}\int_{\Sigma}u{\rm Re}(\phi)\Delta_g{\rm Re}(\phi) {\rm dv}_g.\]
For the second term, we can use the cyclicity of the trace and the fact that 
$e^{-t\tilde{\mc{D}}_{u\phi}}\bar{\pl}_{u\phi}^*=\bar{\pl}_{u\phi}^*
e^{-t\bar{\pl}_{u\phi}\bar{\pl}^*_{u\phi}}$ on $(0,1)$-forms: this term becomes 
\begin{equation}\label{second_term}
\begin{split}
\frac{1}{\Gamma(s)}\int_0^\infty 2{\rm Tr}({\rm Re}(\phi)\bar{\pl}_{u\phi} \bar{\pl}^*_{u\phi}(e^{-t\bar{\pl}_{u\phi}\bar{\pl}^*_{u\phi}}-\tilde{\Pi}_{u\phi}))t^{s} dt&=-\frac{2}{\Gamma(s)}\int_0^\infty \pl_t\Big({\rm Tr}_{\Lambda^{0,1}}({\rm Re}(\phi) (e^{-t\bar{\pl}_{u\phi}\bar{\pl}^*_{u\phi}}-\tilde{\Pi}_{u\phi}))\Big)t^{s} dt\\
& =\frac{2s}{\Gamma(s)}\int_0^\infty{\rm Tr}_{\Lambda^{0,1}}({\rm Re}(\phi)(e^{-t\bar{\pl}_{u\phi}\bar{\pl}^*_{u\phi}}-\tilde{\Pi}_{u\phi}))t^{s-1} dt\
\end{split}
\end{equation}
where $\tilde{\Pi}_{u\phi}$ is the orthogonal projector defined in \eqref{def_orthogonal_proj_coker}.
From \cite[Formula (1.5.4)]{Bost_Bourbaki}, we have 
\[
\begin{split}
{\rm Tr}(e^{-t\bar{\pl}_{u\phi}\bar{\pl}^*_{u\phi}}(x,x)){\rm dv}_g(x)= &\frac{2}{\pi t}{\rm dv}_g+\frac{1}{24\pi}K_g{\rm dv}_g -\frac{1}{4\pi i}F_{\mc{L}^{-n}\otimes \Lambda^{1,0}}+\frac{u}{4\pi}\Delta_g{\rm Re}(\phi) {\rm dv}_g+\mc{O}(t)\\
=& \frac{2}{\pi t}{\rm dv}_g-\frac{1}{12\pi}K_g{\rm dv}_g +\frac{1}{4\pi i}F_{\mc{L}^n}+\frac{u}{4\pi}\Delta_g{\rm Re}(\phi)+\mc{O}(t).
\end{split}
\]
The only term surviving in \eqref{second_term} after differentiating at $s=0$ is thus
\[ 2\int_\Sigma {\rm Re}(\phi)(- \frac{K_g}{12\pi}+\frac{u}{4\pi}\Delta_g{\rm Re}(\phi) -{\rm Tr}_{\mc{L}^n}\tilde{\Pi}_{u\phi}){\rm dv}_g +\frac{1}{2\pi i}\int {\rm Re}(\phi)F_{\mc{L}^n}.\]
Gathering everything, we obtain
\[ \pl_u \log \det(\tilde{\mc{D}}_{u\phi})=\int_\Sigma {\rm Re}(\phi)( \frac{K_g}{4\pi}-\frac{u}{\pi}\Delta_g{\rm Re}(\phi)+2{\rm Tr}_{\mc{L}^n}\tilde{\Pi}_{u\phi}-2{\rm Tr}_{\mc{L}^n}\tilde{P}_{u\phi}){\rm dv}_g -\frac{1}{\pi i}\int {\rm Re}(\phi)F_{\mc{L}^n}.\] 
Integrating in $u\in [0,1]$ gives the result since (notice that $\mc{G}_{\phi}=\mc{G}_{{\rm Re}(\phi)}$ and $\mc{N}_\phi=\mc{N}_{{\rm Re}(\phi)}$) 
\[2\int_\Sigma {\rm Re}(\phi) {\rm Tr}_{\mc{L}^n}\tilde{\Pi}_{u\phi} {\rm dv}_g=\frac{\pl_u \det(\mc{G}_{u\phi})}{ \det(\mc{G}_{u\phi})}, \quad -2\int_\Sigma {\rm Re}(\phi) {\rm Tr}_{\mc{L}^n\otimes \Lambda^{0,1}}\tilde{P}_{u\phi} {\rm dv}_g=\frac{\pl_u \det(\mc{N}_{u\phi})}{ \det(\mc{N}_{u\phi})}.\]
Finally, scaling $\tilde{\mc{D}}_\phi$ by a factor of $\la>0$ gives $\det(\la\tilde{\mc{D}}_\phi)=\det(\tilde{\mc{D}}_\phi)\la^{\zeta_{\tilde{\mc{D}}_\phi}(0)}$
where 
\[\begin{split}
\zeta_{\tilde{\mc{D}}_\phi}(0)=&{\rm Res}_{s=0}\Big(\int_0^\infty {\rm Tr}(e^{-t\tilde{\mc{D}}_{\phi}}-\tilde{P}_{\phi})t^{s}\frac{dt}{t}\Big)={\rm Res}_{s=0}\Big(\int_0^1 {\rm Tr}(e^{-t\tilde{\mc{D}}_{\phi}}-\tilde{P}_{\phi})t^{s}\frac{dt}{t}\Big)\\
=& \int_\Sigma  (\frac{K_g}{24\pi} -{\rm Tr}_{\mc{L}^n}(\tilde{P}_{\phi})){\rm dv}_g -\frac{1}{4\pi i}\int F_{\mc{L}^n}\\
=& \frac{\chi(\Sigma)}{6}-\dim \ker \bar{\pl}_{\mc{L}^n}+\frac{1}{2}{\rm deg}(\mc{L}^n).
\end{split}\]
We have used Gauss-Bonnet in the last equality. The proof of \eqref{anomalydetD0} goes along the same line by differentiating with respect to $u\in [0,1]$ the heat trace $e^{-t\mc{D}_{0,g_u}}$ with $g_u:=e^{u\omega}g$ using \eqref{heatkernel}, and observing that both $\ker \bar{\pl}_{\mc{L}^n}$ and $\ker \bar{\pl}_{\mc{L}^n}^*$ do not depend on $u$. Moreover the matrix $\mc{G}_{0,g_u}$ (defined using $g_u$ as background metric) is independent of $u$ since the scalar product on $1$-forms is conformally invariant. We do not write the full details as it is standard. 
\end{proof}
Notice that the formula \eqref{det_of_Dphi} applied with $\phi(x)=\la>0$ constant allows us to recover the Riemann Roch formula \eqref{Riemann_RochL^k} as the left hand side is $\la$-invariant.\\

\textbf{Summary:} For readability, we gather here the notation for the important objects introduced just above and that will be often used below. 
\begin{itemize}
\item $\mc{D}_\phi=(e^{-\phi}\bar{\pl}_{\mc{L}^n})^*(e^{-\phi}\bar{\pl}_{\mc{L}^n})$ and $\tilde{\mc{D}}_\phi=(e^{-\phi}\bar{\pl}_{\mc{L}^n}e^{\phi})^*(e^{-\phi}\bar{\pl}_{\mc{L}^n}e^{\phi})$, with adjoint being with respect to the $L^2({\rm v}_g):=L^2(\Sigma; \mc{L}^n\otimes \Lambda^k\Sigma,{\rm v}_g)$ pairing, which uses a Hermitian metric $g_{\mc{L}^n}$  on $\mc{L}^n$ and the Riemannian metric $g$ on differential forms $\Lambda^k\Sigma$ for $k=0,1$ to measure pointwise scalar products before integration. 
\item $P_0$ and $\tilde{P}_\phi$ are respectively orthogonal projections on $\ker \mc{D}_\phi=\ker \mc{D}_0=\ker \bar{\pl}_{\mc{L}^n}$ and  $\ker \tilde{\mc{D}}_\phi$ with respect to the $L^2({\rm v}_g)$ product; defined in \eqref{def_orthogonal_proj_ker}. 
\item The operators $\tilde{R}_\phi$ and $R_\phi$, defined in Lemma \ref{inverseDphi}, are inverses of $\mc{D}_\phi$ and $\mc{D}_\phi$, i.e. satisfying 
$\tilde{\mc{D}}_\phi \tilde{R}_\phi={\rm Id}-\tilde{P}_\phi$ and $\mc{D}_\phi R_\phi={\rm Id}-P_0$.
\item  The operators $\tilde{T}_\phi$ and $T_\phi$, defined in Lemma \ref{inverseDphi}, are right inverses of $(e^{-\phi}\bar{\pl}_{\mc{L}^n}e^{\phi})^*$ and $(e^{-\phi}\bar{\pl}_{\mc{L}^n})^*$, i.e. they satisfy $(e^{-\phi}\bbar{\pl}_{\mc{L}^n}e^{\phi})^*\tilde{T}_\phi={\rm Id}-\tilde{P}_\phi$ and   $(e^{-\phi}\bbar{\pl}_{\mc{L}^n})^*T_\phi={\rm Id}-P_0$ 
\item $\tilde{\Pi}_\phi$ and $\Pi_\phi$ are respectively orthogonal projections on $\ker (e^{-\phi}\bar{\pl}_{\mc{L}^n}e^{\phi})^*$ with respect 
to the $L^2({\rm v}_g)$ product and on $\ker (e^{-\phi}\bar{\pl}_{\mc{L}^n})^*$ with respect to the $L^2(e^{2{\rm Re}(\phi)}{\rm v}_g)$ product; defined in \eqref{def_orthogonal_proj_coker}.
\item The matrices $\mc{G}_\phi$ and $\mc{N}_\phi$, defined in \eqref{Gphi} and \eqref{NPhi},  are respectively 
Gram matrices for a basis $(e_p)_p$ of $\ker \bar{\pl}_{\mc{L}^n}^*$ and $(f_p)_p$ of $\ker \bar{\pl}_{\mc{L}^n}$ for the $L^2({\rm v}_g)$ product.
 \end{itemize}

 \subsection{Witten pairs}
 
 As for the sphere, we define  the notion of Witten pairs for a holomorphic vector bundle $(E,\bar{\pl}_E)$ equipped with a Hermitian metric $g_E$ and represented as an extension with parameter $(\mc{L},\beta)$ as in the setting of Section \ref{setup_E}.
As above $g_{\mc{L}^n}$ denotes the induced Hermitian metric on $\mc{L}^n$ for $n\in \Z$ and we shall now use the case $n=-2$.

A measure $\theta$ on $\Sigma$ such that $\theta(U)>0$ for each ball $U\subset \Sigma$ is called \emph{admissible}. 
We use the notation for $u,u'\in L^2(\Sigma;\mc{L}^{-2}\otimes \Lambda^j\Sigma)$ for $j=0,1,2$
\begin{equation}\label{defltheta}
 \|u\|^2_{L^2(\theta)}=\int_\Sigma |u(x)|^2_{g_{\mc{L}^{-2}}\otimes g}\dd \theta(x), \quad \cjg u,u'\cjd_{L^2(\theta)}=
\int_\Sigma \cjg u(x),u'(x)\cjd_{g_{\mc{L}^{-2}}\otimes g}\dd\theta(x).
\end{equation}
Let 
$\Pi^{\theta}: L^2(\Sigma;\mc{L}^{-2}\otimes\Lambda^{0,1}\Sigma)\to \ker (\bar{\pl}_{\mc{L}^{-2}}^*)$ be the operator (recall \eqref{defltheta})
\begin{equation}\label{def_proj} 
\Pi^\theta u:= \sum_{p,q=1}^N (({\mc{G}^\theta})^{-1})_{pq} e_p \cjg u,e_q\cjd_{L^2(\theta)}
\end{equation}
where $(e_p)_p$ is any  basis of $\ker \bar{\pl}_{\mc{L}^{-2}}^*$ and $\mc{G}^\theta$ is the Gram matrix 
\begin{equation}\label{def_GramG} 
(\mc{G}^{\theta})_{pq}= \cjg e_p,e_q\cjd_{L^2(\theta)}.
\end{equation}
We emphasize that $\Pi^\theta$ also depends on the choice of metric $g$ used for the scalar product on $1$-forms in \eqref{defltheta}, we shall thus sometime write $\Pi_g^{\theta}$ when the dependence on $g$ is important, in particular when comparing two conformally related metrics.
Notice that for $\la>0$, we have scale invariance 
\[ \Pi^{\la \theta}=\Pi^\theta, \quad \mc{G}^{\la\theta}=\la \mc{G}^{\theta}.\]
\textbf{Remark:} Notice that if $\phi\in C^\infty(\Sigma,\R)$ and if we choose $\theta(\dd x)=e^{2\phi(z)}{\rm v}_g(\dd x)$, the operator  $\Pi_\phi$ of \eqref{tildePiphi} is equal to $\Pi_\phi=\Pi^\theta$. Later, we shall typically take $\phi(x)=X_g(x)$ with $X_g$ the standard GFF, and $\theta(\dd x)=M^g_{2b}(X_g,\dd x)$.

\begin{lemma}
For $\theta$ an admissible measure on $\Sigma$ and $g$ a fixed Riemannian metric on $\Sigma$, the sesquilinear map 
\begin{equation}\label{defpositive} 
(u,u')\mapsto    \cjg u,u'\cjd_{L^2(\theta)} 
\end{equation}
restricts to a positive definite quadratic form on $\ker \bar{\pl}_{\mc{L}^{-2}}^*$,  $\mc{G}^{\theta}$ is its Gram matrix in the chosen basis $(e_p)_p$ of $\ker \bar{\pl}_{\mc{L}^{-2}}^*$, and is thus positive definite, and $\Pi^\theta$ restricted to $C^\infty(\Sigma;\mc{L}^{-2}\otimes\Lambda^{0,1}\Sigma)$ is a projection on $\ker \bar{\pl}_{\mc{L}^{-2}}^*$ which satisfies for each $u,u'\in C^\infty(\Sigma;\mc{L}^{-2}\otimes \Lambda^{0,1}\Sigma)$
\[ \cjg \Pi^\theta u,u'\cjd_{L^2(\theta)}=\cjg u,\Pi^\theta u'\cjd_{L^2(\theta)}.\]
\end{lemma}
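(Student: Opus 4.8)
The plan is to isolate the single piece of genuine content---strict positivity of $\cjg\cdot,\cdot\cjd_{L^2(\theta)}$ on $\ker\bar{\pl}^*_{\mc{L}^{-2}}$---and to treat everything else as finite-dimensional linear algebra. Positive semi-definiteness is automatic, since by \eqref{defltheta} one has $\cjg u,u\cjd_{L^2(\theta)}=\int_\Sigma|u|^2_{g_{\mc{L}^{-2}}\otimes g}\,\dd\theta\geq 0$ for every $u$; the task is therefore to show that, for $u\in\ker\bar{\pl}^*_{\mc{L}^{-2}}$, this vanishes only when $u=0$.

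The key step, which I expect to be the main obstacle, is to convert the analytic condition $u\in\ker\bar{\pl}^*_{\mc{L}^{-2}}$ into pointwise nonvanishing. First I would recall that such $u$ is smooth by elliptic regularity and that, by Serre duality \eqref{dbar^*-dbar}--\eqref{dualitys^*}, it is of the form $u=s^*$ for a holomorphic section $s\in H^0(\Sigma,\mc{L}^2\otimes\Lambda^{1,0}\Sigma)$ of the holomorphic line bundle $\mc{L}^2\otimes\Lambda^{1,0}\Sigma$. Since the duality map $s\mapsto s^*$ is a fibrewise (conjugate-linear) isometry, the zero sets of $u$ and $s$ coincide and $|u(x)|=|s(x)|$ for all $x$. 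A nonzero holomorphic section of a holomorphic line bundle over the compact connected surface $\Sigma$ has only finitely many zeros, by the identity theorem applied in local holomorphic trivialisations. Hence, if $u\neq 0$, the set $\{|u|^2>0\}$ is open and dense and contains a metric ball $U$ on which $|u|^2\geq c>0$; admissibility of $\theta$ gives $\theta(U)>0$, so $\cjg u,u\cjd_{L^2(\theta)}\geq c\,\theta(U)>0$. This establishes positive definiteness on $\ker\bar{\pl}^*_{\mc{L}^{-2}}$.

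With definiteness available, the assertion about $\mc{G}^\theta$ is immediate: by \eqref{def_GramG} it is the Gram matrix of the basis $(e_p)_p$ with respect to a positive definite Hermitian inner product, hence Hermitian positive definite, in particular invertible. Thus $\Pi^\theta$ of \eqref{def_proj} is well defined, and its range is contained in $\ker\bar{\pl}^*_{\mc{L}^{-2}}=\mathrm{span}(e_p)_p$ by construction. I would then identify $\Pi^\theta$ with the $L^2(\theta)$-orthogonal projection onto this finite-dimensional space: its coefficients are precisely the solution of the normal equations $\cjg\Pi^\theta u-u,e_r\cjd_{L^2(\theta)}=0$ for all $r$, whose solvability and uniqueness rest only on the invertibility of $\mc{G}^\theta$. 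From this characterisation, idempotency and the identity $\Pi^\theta|_{\ker\bar{\pl}^*_{\mc{L}^{-2}}}=\mathrm{Id}$ follow in the standard way.

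Finally, I would check self-adjointness by direct expansion. Writing out both sides with \eqref{def_proj} and using linearity in the first slot, the left side becomes $\cjg\Pi^\theta u,u'\cjd_{L^2(\theta)}=\sum_{p,q}(\mc{G}^\theta)^{-1}_{pq}\cjg u,e_q\cjd_{L^2(\theta)}\cjg e_p,u'\cjd_{L^2(\theta)}$. Since $(\mc{G}^\theta)^{-1}$ is Hermitian, being the inverse of a Hermitian matrix, and $\overline{\cjg a,b\cjd_{L^2(\theta)}}=\cjg b,a\cjd_{L^2(\theta)}$, relabelling $p\leftrightarrow q$ matches this with the analogous expansion of $\cjg u,\Pi^\theta u'\cjd_{L^2(\theta)}$. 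I expect no difficulty here beyond careful bookkeeping of the sesquilinear convention, so the only step drawing on geometry rather than algebra is the finiteness of the zero set in the definiteness argument.
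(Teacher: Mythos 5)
Your proof is correct, and its skeleton (positivity of the pairing on the finite-dimensional kernel, then Gram-matrix linear algebra for $\Pi^\theta$) matches the paper's; the genuine difference lies in how the key positivity step is established. The paper argues by contradiction: if $\|u\|_{L^2(\theta)}=0$ for some nonzero $u\in\ker\bar{\pl}^*_{\mc{L}^{-2}}$, then $u$ must vanish on some small ball (since $\theta$ charges every ball and $|u|$ is continuous), and this is ruled out by the unique continuation property for elliptic operators. You instead use Serre duality \eqref{dbar^*-dbar}--\eqref{dualitys^*} to write $u=s^*$ with $s$ a holomorphic section of $\mc{L}^{2}\otimes\Lambda^{1,0}\Sigma$, deduce that the zero set of $u$ is finite, choose a ball $U$ on which $|u|^2\geq c>0$, and conclude directly from admissibility with the explicit bound $c\,\theta(U)$. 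Both mechanisms are valid; it is worth noting that both are stronger than necessary, since all one needs is that a smooth section which is not identically zero satisfies $|u|^2\geq c>0$ on \emph{some} ball — this follows from continuity at a single point where $u\neq 0$, so neither finiteness of the zero set nor unique continuation (nor the density of $\{|u|>0\}$ that you mention) is actually required. A minor caveat in your route: the identification \eqref{dualitys^*} involves the normalisation $\tfrac12 *_g$, so $s\mapsto s^*$ need not be an exact fibrewise isometry, only one up to a harmless constant; since you use only that the zero sets of $u$ and $s$ coincide, this does not affect anything. Your remaining steps — positive definiteness and invertibility of $\mc{G}^\theta$, the normal-equation characterisation of $\Pi^\theta$ as the $L^2(\theta)$-orthogonal projection onto $\ker\bar{\pl}^*_{\mc{L}^{-2}}$, and the self-adjointness identity obtained from Hermitian symmetry of $(\mc{G}^\theta)^{-1}$ after relabelling indices — coincide with the paper's computation.
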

\begin{proof} Let  $u\in \ker \bar{\pl}_{\mc{L}^{-2}}^*$ not identically $0$. If $\|u\|_{L^2(\Sigma,\theta)}^2=0$, 
then  $|u(x)|_{ g_{\mc{L}^{-2}}\otimes g }=0$ for $x$ in some small 
ball $B$. But this is not possible by the unique continuation property for elliptic operators. This shows that  \eqref{defpositive} is a positive definite quadratic form on the finite dimensional space $\ker \bar{\pl}_{\mc{L}^{-2}}^*$. The fact that $\Pi^\theta$ is a projector is direct. Finally, 
\[  \cjg \Pi^\theta u,u'\cjd_{L^2(\theta)}=\sum_{p,q=1} (({\mc{G}^\theta})^{-1})_{pq} \cjg e_p,u'\cjd_{L^2(\theta)}\cjg u,e_q\cjd_{L^2(\theta)}=
\cjg u,\Pi^\theta u'\cjd_{L^2(\theta)}.\qedhere\] 
\end{proof}
In what follows, we will consider the GMC $M^{g}_{2b}(X_g ,{\rm v}_g,\dd x)$ (for some $b\in(0,1)$) with respect to the GFF $X_g$ on $\Sigma$ and we will shortcut our notations by dropping the dependence on the measure ${\rm v}_g$, i.e. we will just write $M^{g}_{2b}(X_g ,\dd x)$.
\begin{definition}\label{witten_pair_Sigma}
Let $\theta$ be an admissible measure on $\Sigma$ equipped with a Riemannian metric $g$ and $k>0$  some real valued parameter. A  rescaled Witten field $\gamma_g$ on $\Sigma$ associated to $(g,\theta)$ is a centered $\mc{H}^{-s}(\Sigma;\mc{L}^{-2})$-valued Gaussian process for $s>0$, with covariance (recall the definition \eqref{norme2})
\begin{align*}
& \E[ \cjg \gamma_g,f\cjd_2 \overline{\cjg \gamma_g,f'\cjd_2}] =     \frac{\pi}{k}  \cjg (1-\Pi^\theta)T_0f',
(1-\Pi^\theta)T_0 f\cjd_{L^2(\theta)}\\  
& \E[\cjg \gamma_g,f\cjd_2 \cjg \gamma_g,f'\cjd_2]=0
\end{align*}
for any pair of smooth sections $f,f'\in C^\infty(\Sigma;\mc{L}^{-2})$
where $\Pi^{\theta}: L^2(\Sigma;\mc{L}^{-2}\otimes \Lambda^{0,1}\Sigma)\to \ker (\bar{\pl}_{\mc{L}^{-2}}^*)$ is the projection defined in \eqref{def_proj} using the metric $g$, the $\cjg \cdot,\cdot\cjd_2$ pairing is with respect to the Riemannian volume measure ${\rm v}_g$ and the pointwise scalar product on $\mc{L}^{-2}\otimes \Lambda^{0,1}$ in the right hand side is with respect to $g_{\mc{L}^{-2}}\otimes g$.\\
Let $c\in \R$, $k>2$ and set
\[ b:=(k-2)^{-1/2}.\]
A random distribution  $(X_g,\gamma_g)$ on $\mc{H}^{-s}(\Sigma;\R)\times \mc{H}^{-s}(\Sigma;\mc{L}^{-2})$  (with $s>0$) such that $X_g$ is the GFF with background metric $g$ and, conditionally on $X_g$, $\gamma_g$ is a rescaled Witten field with respect to $(g,M^{g}_{2b}(c+X_g ,\dd x))=(g,e^{2bc}M^{g}_{2b}(X_g ,\dd x))$, is called a GFF-Witten pair with parameters $(g,M^{g}_{2b}(c+X_g ,\dd x))$. 
\end{definition}
When $\theta_{bc}=e^{2bc}\theta(X_g,\dd x)$ with $\theta:=M^{g}_{2b}(X_g ,\dd x)$, note that we have
\begin{equation}\label{PibXg_GbX_g} 
\Pi^{\theta_{bc}}=\Pi^{\theta}, \quad e^{-2bc}\mc{G}^{\theta_{bc}}=\mc{G}^{\theta}.
\end{equation}

If $\psi: \Sigma\to \Sigma'$ is a diffeomorphism and $g$ is Riemannian metric on $\Sigma$, and if $E'$ is a rank-$2$ holomorphic vector bundle with trivial determinant on $\Sigma'$ represented as an extension with parameters $(\mc{L}',\beta')$, by using the pullback by $\psi$ one obtains a rank-$2$ holomorphic vector bundle $E=\psi^*E'$ with trivial determinant on $\Sigma$ represented as an extension with parameters 
$(\mc{L}:=\psi^*\mc{L}',\beta:=\psi^*\beta')$. Since $\psi^{-1}$ transports the $\bar{\pl}_{E'}$ and $\bar{\pl}_{{\mc{L}'}^{-2}}$ to $\bar{\pl}_E$ 
and $\bar{\pl}_{\mc{L}^{-2}}$,  as well as all the corresponding operators appearing in Lemma \ref{inverseDphi}, we obtain:
\begin{lemma}[\textbf{Diffeomorphism invariance of $(X_g,\gamma_g)$}]\label{l:diffeos}
If $\psi: \Sigma\to \Sigma'$ is a diffeomorphism, the following identity holds in law for the Witten pairs
\begin{equation}\label{equalityinlaw}
(\psi^*X_{\psi_*g},\psi^*\gamma_{\psi_*g}) = (X_g,\gamma_g),
\end{equation}
in the sense that $\psi^*X_{\psi_*g} =X_{g}$ in law and the conditional law of $\psi^*\gamma_{\psi_*g}$ with respect to $X_{\psi_*g}$ 
is equal to the conditional law of $\gamma_g$ with respect to $X_g$.
\end{lemma}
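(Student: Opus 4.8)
The plan is to observe that every ingredient entering the definition of a GFF--Witten pair---the Gaussian Free Field, the GMC measure, the operator $T_0$, the projection $\Pi^\theta$ and the various $L^2$ pairings---is constructed naturally from the data $(\Sigma,g,\mc{L})$, and that a diffeomorphism which is an isometry intertwines all of them. Since $\psi:(\Sigma,g)\to(\Sigma',\psi_*g)$ is an isometry by definition of the pushforward metric, it is conformal and therefore biholomorphic for the complex structures $[g]$ and $[\psi_*g]$; as $\mc{L}=\psi^*\mc{L}'$ we have $\psi^*\circ\bar{\pl}_{(\mc{L}')^{-2}}=\bar{\pl}_{\mc{L}^{-2}}\circ\psi^*$, and being an isometry $\psi^*$ preserves the $L^2({\rm v}_g)$ pairings on sections of $\mc{L}^{-2}$ and on $\mc{L}^{-2}\otimes\Lambda^{0,1}\Sigma$. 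First I would treat the GFF factor: the Laplacian $\Delta_{\psi_*g}$ pulls back to $\Delta_g$, hence the Green functions satisfy $G_{\psi_*g}(\psi(x),\psi(x'))=G_g(x,x')$ and the mean-zero normalisations correspond, so the centered Gaussian $\psi^*X_{\psi_*g}$ has the same covariance, hence the same law, as $X_g$. From now on I fix the coupling $X_g:=\psi^*X_{\psi_*g}$.

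Next I would record the naturality of the remaining objects. Because the $g$-regularisation averages over geodesic circles and $\psi$ maps geodesic circles of $(\Sigma,g)$ to geodesic circles of $(\Sigma',\psi_*g)$ of the same radius, the regularised fields transform as $\psi^*(X_{\psi_*g,\eps})=(\psi^*X_{\psi_*g})_\eps$ and the renormalising factor $\eps^{2b^2}$ is unchanged; hence the GMC measure is natural,
\[ \psi^*M^{\psi_*g}_{2b}(c+X_{\psi_*g},\dd x)=M^{g}_{2b}(c+\psi^*X_{\psi_*g},\dd x)=M^{g}_{2b}(c+X_g,\dd x)=:\theta. \]
Since $\psi^*$ intertwines $\bar{\pl}_{(\mc{L}')^{-2}}$ with $\bar{\pl}_{\mc{L}^{-2}}$ and their $L^2$-adjoints, it intertwines the resolvents $R_0$ and the operators $T_0$ of Lemma \ref{inverseDphi} (these being uniquely characterised by $\bar{\pl}^*_{\mc{L}^{-2}}T_0={\rm Id}-P_0$ together with their range condition), so that $\psi^*T_0'=T_0\psi^*$ where $T_0'$ denotes the corresponding operator on $\Sigma'$. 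Likewise, pulling back a basis of $\ker(\bar{\pl}_{(\mc{L}')^{-2}})^*$ and its Gram matrix $\mc{G}^{\theta'}$ gives $\psi^*\Pi^{\theta'}=\Pi^{\psi^*\theta'}\psi^*=\Pi^{\theta}\psi^*$, using the identity for $\theta$ displayed above.

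Finally I would match covariances. Setting $\gamma_g:=\psi^*\gamma_{\psi_*g}$, this is, conditionally on $X_g$, a centered $\mc{L}^{-2}$-valued complex Gaussian with vanishing pseudo-covariance, being the pullback of such a field; hence its law is determined by its Hermitian covariance. For $f,f'\in C^\infty(\Sigma;\mc{L}^{-2})$, writing $\tilde f=\psi_*f$ and using that $\psi$ is volume- and metric-preserving,
\[
\begin{split}
\E\big[\cjg \gamma_g,f\cjd_2\,\overline{\cjg \gamma_g,f'\cjd_2}\,\big|\,X_g\big]
&=\E\big[\cjg \gamma_{\psi_*g},\tilde f\cjd_2\,\overline{\cjg \gamma_{\psi_*g},\tilde f'\cjd_2}\,\big|\,X_{\psi_*g}\big]\\
&=\tfrac{\pi}{k}\cjg (1-\Pi^{\theta'})T_0'\tilde f',(1-\Pi^{\theta'})T_0'\tilde f\cjd_{L^2(\theta')}\\
&=\tfrac{\pi}{k}\cjg (1-\Pi^{\theta})T_0 f',(1-\Pi^{\theta})T_0 f\cjd_{L^2(\theta)},
\end{split}
\]
where the last line uses the intertwining relations together with the $L^2(\theta')\to L^2(\theta)$ isometry afforded by $\psi$ and $\psi^*\tilde f=f$. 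This is exactly the covariance defining the rescaled Witten field $\gamma_g$ associated to $(g,\theta)$, and the vanishing pseudo-covariance matches as well, which proves \eqref{equalityinlaw}. The only genuinely delicate point is the GMC naturality displayed above: it rests on the isometry $\psi$ intertwining the geodesic-circle regularisations and preserving the Weyl renormalisation $\eps^{2b^2}$, after which the remaining steps are a bookkeeping of how sections and $L^2$ pairings pull back.
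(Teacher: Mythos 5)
Your proof is correct and takes essentially the same route as the paper: the paper's entire argument is the paragraph preceding the lemma, which observes that pullback by $\psi$ transports $\bar{\pl}_{{\mc{L}'}^{-2}}$, its adjoint, and all the operators of Lemma \ref{inverseDphi}, combined with the naturality of the GFF, the geodesic-circle regularisation and the GMC measure under the isometry $\psi:(\Sigma,g)\to(\Sigma',\psi_*g)$. Your write-up simply makes explicit the covariance bookkeeping (intertwining of $T_0$, $\Pi^\theta$ and the $L^2$ pairings) that the paper leaves implicit.
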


The following also holds true:
\begin{lemma}[\textbf{Conformal invariance of $\gamma_g$}]\label{lemmainvar_Sigma}
Let  $g'=e^\omega g$ be a metric  conformal to  $g$ with $\omega\in C^\infty(\Sigma)$ and $\theta$ be an admissible measure on $\Sigma$. Let  $\gamma_g$ be a  
rescaled Witten field  on $\Sigma$ associated to the pair $(g, \theta)$ and $\gamma_{g'}$ a 
rescaled Witten field on $\Sigma$ associated to the pair $(g', e^{\omega}\theta)$.
Then the process $\gamma_{g'}$ has the same law as $\gamma_g-P'_0\gamma_g$ where $P_0'$ is the orthogonal projector on $\ker \bar{\pl}_{\mc{L}^{-2}}$ with respect to the $L^2({\rm v}_{g'})$ scalar product.
\end{lemma}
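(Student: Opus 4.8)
The plan is to show that $\gamma_{g'}$ and $(\mathrm{Id}-P_0')\gamma_g$ have the same covariance structure. Since both are centered $\mc{L}^{-2}$-valued Gaussian processes with vanishing pseudo-covariance (the relation $\E[\cjg\gamma,f\cjd_2\cjg\gamma,f'\cjd_2]=0$ is preserved under the $\C$-linear map $\mathrm{Id}-P_0'$), it suffices to match the Hermitian covariances tested against smooth sections $f,f'\in C^\infty(\Sigma;\mc{L}^{-2})$ through a common pairing. The only care needed is that the pairing $\cjg\cdot,\cdot\cjd_2$ defining the covariance of $\gamma_{g'}$ uses ${\rm v}_{g'}$ whereas that of $\gamma_g$ uses ${\rm v}_g$; I would test both fields via $L^2({\rm v}_g)$, using $\cjg\gamma_{g'},f\cjd_{2,g'}=\cjg\gamma_{g'},e^{-\omega}f\cjd_{2,g}$ (valid since ${\rm v}_{g'}=e^{\omega}{\rm v}_g$).

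First I would record the conformal-geometry inputs. In dimension two the $L^2$ scalar product on $(0,1)$-forms is conformally invariant (as used in the proof of Lemma \ref{lem:detwitten}), the metric on $\Lambda^{0,1}\Sigma$ scaling by $e^{-\omega}$ while ${\rm v}_{g'}=e^{\omega}{\rm v}_g$. Hence the form pairings induced by $({\rm v}_g,g)$ and $({\rm v}_{g'},g')$ coincide, and likewise $\cjg\cdot,\cdot\cjd_{L^2(\theta)}$ for $g$ equals $\cjg\cdot,\cdot\cjd_{L^2(e^\omega\theta)}$ for $g'$. Since $\bar{\pl}_{\mc{L}^{-2}}$ depends only on the holomorphic structure, pairing $\cjg\bar{\pl}_{\mc{L}^{-2}}s,u\cjd_{2}=\cjg s,\bar{\pl}_{\mc{L}^{-2}}^*u\cjd_2$ in both metrics gives $\bar{\pl}_{\mc{L}^{-2},g'}^*=e^{-\omega}\bar{\pl}_{\mc{L}^{-2},g}^*$. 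Consequently $W_1:=\ker\bar{\pl}^*_{\mc{L}^{-2}}$ is the same subspace for $g$ and $g'$, the Gram matrices \eqref{def_GramG} satisfy $\mc{G}^{e^\omega\theta}_{g'}=\mc{G}^\theta_g$, so the projections \eqref{def_proj} coincide, $\Pi^{e^\omega\theta}_{g'}=\Pi^{\theta}_{g}=:\Pi$; and the Hodge projection $\mathrm{Id}-\Pi_0=\bar{\pl}_{\mc{L}^{-2}}R_0\bar{\pl}^*_{\mc{L}^{-2}}$ onto $\overline{\mathrm{Ran}\,\bar{\pl}_{\mc{L}^{-2}}}=W_1^\perp$ (from \eqref{Rphidbar*} at $\phi=0$) is metric-independent.

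Next I would compute how $T_0$ transforms. By Lemma \ref{inverseDphi}, $T_0^{g'}h$ is the unique element of $W_1^\perp$ with $\bar{\pl}^*_{\mc{L}^{-2},g'}T_0^{g'}h=(\mathrm{Id}-P_0')h$, where $P_0'$ is the $L^2({\rm v}_{g'})$-orthogonal projection onto $\ker\bar{\pl}_{\mc{L}^{-2}}$. Using $\bar{\pl}^*_{g'}=e^{-\omega}\bar{\pl}^*_g$ this reads $\bar{\pl}^*_{\mc{L}^{-2},g}T_0^{g'}h=e^\omega(\mathrm{Id}-P_0')h$, and applying $T_0^g$ together with the identity $T_0^g\bar{\pl}^*_{\mc{L}^{-2},g}=\mathrm{Id}-\Pi_0$ (which is the identity on $W_1^\perp\ni T_0^{g'}h$) yields
\[ T_0^{g'}=T_0^g\,e^{\omega}(\mathrm{Id}-P_0').\]
In the generic case $\ker\bar{\pl}_{\mc{L}^{-2}}=0$ this reduces to $T_0^{g'}=T_0^g e^{\omega}$ and the argument below collapses to $\gamma_{g'}\overset{\mathrm{law}}{=}\gamma_g$.

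Finally I would assemble the covariance. Writing the covariance of $\gamma_{g'}$ tested via ${\rm v}_g$, substituting $f\mapsto e^{-\omega}f$, and using $T_0^{g'}(e^{-\omega}f)=T_0^g Qf$ with $Q:=\mathrm{Id}-e^{\omega}P_0'e^{-\omega}$ together with the coincidence of $\Pi$ and of the form pairings gives
\[\E\big[\cjg\gamma_{g'},f\cjd_{2,g}\,\overline{\cjg\gamma_{g'},f'\cjd_{2,g}}\big]=\tfrac{\pi}{k}\cjg(1-\Pi)T_0^{g}Qf',(1-\Pi)T_0^{g}Qf\cjd_{L^2(\theta)}.\]
A short computation with the explicit formula \eqref{def_proj} for $P_0'$ and ${\rm v}_{g'}=e^\omega{\rm v}_g$ identifies $Q$ as the $L^2({\rm v}_g)$-adjoint of $\mathrm{Id}-P_0'$, so that $\cjg\gamma_g,Qf\cjd_{2,g}=\cjg(\mathrm{Id}-P_0')\gamma_g,f\cjd_{2,g}$; hence the right-hand side is exactly the covariance of $(\mathrm{Id}-P_0')\gamma_g$ tested via ${\rm v}_g$, which closes the argument. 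I expect the main obstacle to be precisely this last bookkeeping step: tracking the three distinct $L^2$ structures (the volume forms ${\rm v}_g,{\rm v}_{g'}$ on sections and the measures $\theta,e^\omega\theta$ entering $\Pi$ and the covariance) and verifying that the non-holomorphic zero-mode projection $P_0'$ reassembles exactly into the adjoint operator $Q$; the conformal-invariance inputs are standard and the generic case is immediate.
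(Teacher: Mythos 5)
Your proof is correct and follows essentially the same route as the paper's: both derive the conformal transformation rule $T_0^{g'}=T_0^{g}\,e^{\omega}(\mathrm{Id}-P_0')$ (the paper states it with a harmless extra prefactor $(\mathrm{Id}-\Pi_0')$, which is absorbed since the range of $T_0^{g}$ already lies in $W_1^{\perp}$) from $\bar{\pl}^*_{g'}=e^{-\omega}\bar{\pl}^*_{g}$ together with the identity $T_0\bar{\pl}^*_{\mc{L}^{-2}}=\mathrm{Id}-\Pi_0$, then use the coincidence $\Pi^{\theta}_{g}=\Pi^{e^{\omega}\theta}_{g'}$ and match the Gaussian covariances. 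The only cosmetic differences are that you test against the $L^2({\rm v}_g)$ pairing and package the zero-mode bookkeeping into the adjoint $Q=(\mathrm{Id}-e^{\omega}P_0'e^{-\omega})=(\mathrm{Id}-P_0')^{*_{{\rm v}_g}}$, whereas the paper tests against $L^2({\rm v}_{g'})$ and uses self-adjointness of $P_0'$ for that pairing.
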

\begin{proof} Let us write $T_0',P_0',\Pi_0'$ for the operator $T_0,P_0,\Pi_0$ of Lemma \ref{inverseDphi} with respect to the background metric $g'$ instead of $g$ and write $\theta':=e^{\omega}\theta$. The operator $\bar{\pl}$ is the same for $g$ and $g'$, but the adjoint with respect to $g'$ is $e^{-\omega}\bar{\pl}^*$ where $\bar{\pl}^*$ is the adjoint with respect to $g$. We have  
\[ e^{-\omega}\bar{\pl}^*(T_0'-T_0e^{\omega})=e^{-\omega}P_0e^{\omega}-P_0'\]
and applying $({\rm Id}-P_0')$ on the right this gives 
\[ e^{-\omega}\bar{\pl}^*(T_0'-T_0e^{\omega})({\rm Id}-P_0')=e^{-\omega}P_0e^{\omega}({\rm Id}-P_0')=0\]
where the vanishing of the right hand side comes from the fact that $P_0 \bar{\pl}^*=0$. We apply $T'_0$ on the left and use 
\eqref{barTdbar*} with $\phi=0$ and the background metric to be $g'$, to obtain 
$({\rm Id}-\Pi'_0)(T_0'-T_0e^{\omega})({\rm Id}-P_0')=0$. Using that $({\rm Id}-\Pi'_0)T_0'({\rm Id}-P_0')=T_0'$, we finally get
\[T_0'=({\rm Id}-\Pi'_0)T_0e^{\omega}({\rm Id}-P_0').\]
We now use $\Pi^\theta=\Pi^{\theta}_g$ using the background metric $g$ and notice that $\Pi_g^\theta=\Pi_{g'}^{\theta'}$.
Since $\Pi^{\theta}$ is the identity on ${\rm Ran}(\Pi'_0)$, we have $\Pi^{\theta}\Pi'_0=\Pi'_0$ and thus $({\rm Id}-\Pi^\theta)({\rm Id}-\Pi'_0)=({\rm Id}-\Pi^\theta)$, which gives 
\[ ({\rm Id}-\Pi^\theta)T_0'=({\rm Id}-\Pi^\theta)T_0e^{\omega}({\rm Id}-P_0').\]
Now, notice that $\cjg \gamma_{g}, ({\rm Id}-P_0')f\cjd_{L^2({\rm v}_{g'})}=\cjg \gamma_{g}, e^{\omega}({\rm Id}-P_0')f\cjd_{L^2({\rm v}_g)}$ thus 
\begin{align*}
&\E[ \cjg \gamma_g,({\rm Id}-P_0')f\cjd_{L^2({\rm v}_{g'})}\bbar{\cjg \gamma_g,({\rm Id}-P_0')f'\cjd}_{L^2({\rm v}_{g'})}]\\
&= 
 \frac{\pi}{k}  \int_\Sigma \cjg ({\rm Id}-\Pi^\theta)T_0(e^{\omega}({\rm Id}-P_0')f'),
({\rm Id}-\Pi^\theta)T_0 (e^{\omega}({\rm Id}-P_0')f)\cjd_{g_{\mc{L}^{-2}}\otimes g} \theta(\dd x)\\
& = \frac{\pi}{k}  \int_\Sigma \cjg ({\rm Id}-\Pi_{g'}^{\theta'})T_0' f, ({\rm Id}-\Pi_{g'}^{\theta'})T_0' f'\cjd_{g_{\mc{L}^{-2}}\otimes g'}  \theta'(\dd x)\\
& =\E[ \cjg \gamma_{g'},f\cjd_{L^2({\rm v}_{g'})}\bbar{\cjg \gamma_{g'},f'\cjd}_{L^2({\rm v}_{g'})}]
\end{align*}
which shows that $({\rm Id}-P_0')(\gamma_g)$ has the same law as $\gamma_{g'}$.
\end{proof}

Concerning the construction of a GFF-Witten pair, the proof of Proposition \ref{existpair} can be adapted to  give in the general setting
\begin{proposition}\label{exist_Witten_pairs_Sigma}
Let $b\in (0,1)$ and $c\in \R$. Then, there exists a GFF-Witten pair with parameters  $(g,e^{2bc}M^{g}_{2b}(X_g ,\dd x))$.
\end{proposition}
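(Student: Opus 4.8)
The plan is to adapt the construction of Proposition \ref{existpair} to the bundle-valued setting, the only genuinely new feature being the random finite-rank projection $\Pi^\theta$ onto the cokernel $\ker\bar{\pl}^*_{\mc{L}^{-2}}\simeq H^1(\Sigma,\mc{L}^{-2})$. First I would reduce to the case $c=0$. Indeed, writing $\theta:=M^g_{2b}(X_g,\dd x)$ and $\theta_{bc}:=e^{2bc}\theta=M^g_{2b}(c+X_g,\dd x)$, the scaling identities \eqref{PibXg_GbX_g} give $\Pi^{\theta_{bc}}=\Pi^\theta$ while $\cjg\cdot,\cdot\cjd_{L^2(\theta_{bc})}=e^{2bc}\cjg\cdot,\cdot\cjd_{L^2(\theta)}$, so the covariance attached to $\theta_{bc}$ in Definition \ref{witten_pair_Sigma} is exactly $e^{2bc}$ times the one attached to $\theta$. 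Hence if $\gamma_g^0$ is a rescaled Witten field for $\theta$ then $e^{bc}\gamma_g^0$ is a rescaled Witten field for $\theta_{bc}$, and it suffices to build the pair $(X_g,\gamma_g^0)$.

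For this I would fix an $L^2({\rm v}_g)$-orthonormal basis $(f_n)_{n\geq 0}$ of smooth sections of $\mc{L}^{-2}$ consisting of eigenfunctions of the elliptic operator $\mc{D}_0=\bar{\pl}^*_{\mc{L}^{-2}}\bar{\pl}_{\mc{L}^{-2}}$, with eigenvalues $0\leq\lambda_0\leq\lambda_1\leq\cdots$, sample $X_g$ through the series \eqref{GFFong}, and form the GMC $\theta=M^g_{2b}(X_g,\dd x)$. Conditionally on $X_g$ I would then prescribe centred complex Gaussians $(\beta_n)_{n}$ with
\[
\E[\beta_n\bar\beta_m\mid X_g]=\cjg(1-\Pi^\theta)T_0f_m,(1-\Pi^\theta)T_0f_n\cjd_{L^2(\theta)},\qquad \E[\beta_n\beta_m\mid X_g]=0,
\]
which is a bona fide Hermitian positive semidefinite covariance; the lemma preceding Definition \ref{witten_pair_Sigma} shows that $\cjg\cdot,\cdot\cjd_{L^2(\theta)}$ is positive definite on $\ker\bar{\pl}^*_{\mc{L}^{-2}}$ by unique continuation, so the Gram matrix $\mc{G}^\theta$ of \eqref{def_GramG} is almost surely invertible and $\Pi^\theta$ of \eqref{def_proj} is a well-defined measurable function of $\theta$, hence of $X_g$. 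Kolmogorov's extension theorem then produces a joint law of $\big((\alpha_n)_n,(\beta_n)_n\big)$, from which I recover $X_g$ and set $\gamma_g^0:=(\pi/k)^{1/2}\sum_{n}\beta_nf_n$.

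It remains to check convergence of this series in $\mc{H}^{-s}(\Sigma;\mc{L}^{-2})$ for $s>0$. Using the $\mc{D}_0$-eigenbasis to evaluate the Sobolev norm (which is equivalent to the one of Section \ref{setup_E}), I would estimate, exactly as in Proposition \ref{existpair},
\[
\E[\|\gamma_g^0\|^2_{\mc{H}^{-s}}]=\frac{\pi}{k}\sum_{n}\lambda_n^{-s}\,\E\big[\|(1-\Pi^\theta)T_0f_n\|^2_{L^2(\theta)}\big]\leq\frac{\pi}{k}\sum_{n}\lambda_n^{-s}\,\E\big[\|T_0f_n\|^2_{L^2(\theta)}\big],
\]
where the inequality uses that $1-\Pi^\theta$ is an $L^2(\theta)$-orthogonal projection, hence a contraction. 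Since $\E[\theta]\leq C{\rm v}_g$ and $R_0=T_0^*T_0$ by \eqref{relRphi}, one gets $\E[\|T_0f_n\|^2_{L^2(\theta)}]\leq C\cjg R_0f_n,f_n\cjd_2=C\lambda_n^{-1}$ for $\lambda_n>0$, whereas the finitely many zero modes $f_n\in\ker\bar{\pl}_{\mc{L}^{-2}}=H^0(\Sigma,\mc{L}^{-2})$ satisfy $T_0f_n=0$ and contribute nothing. Weyl's law for the second-order elliptic operator $\mc{D}_0$ yields $\sum_n\lambda_n^{-1-s}<\infty$ for every $s>0$, so $\gamma_g^0$ converges almost surely in $\mc{H}^{-s}$ with the prescribed covariance.

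The main obstacle is precisely the random projection $\Pi^\theta$ absent in the trivial-bundle sphere case of Proposition \ref{existpair}: one must guarantee that $\mc{G}^\theta$ is almost surely non-degenerate and depends measurably on $X_g$, so that the conditional law is well posed, and that inserting $1-\Pi^\theta$ does not spoil the Sobolev estimate. Both are handled cleanly — the former by the unique-continuation positivity of $\cjg\cdot,\cdot\cjd_{L^2(\theta)}$ on the finite-dimensional cokernel, the latter by the contraction property of $1-\Pi^\theta$ in $L^2(\theta)$ — so the extra cohomology present for a general bundle costs essentially no new analytic input beyond what Proposition \ref{existpair} already required.
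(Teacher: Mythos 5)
Your proposal is correct and follows essentially the same route as the paper's proof: an eigenbasis of $\mc{D}_0$, conditionally Gaussian coefficients with covariance $\cjg(1-\Pi^\theta)T_0f_m,(1-\Pi^\theta)T_0f_n\cjd_{L^2(\theta)}$, Kolmogorov extension, and the Sobolev bound obtained by dropping the projection, using $\E[\theta]\leq C{\rm v}_g$, the identity $\|T_0f_n\|^2_{L^2({\rm v}_g)}=\lambda_n^{-1}$ and Weyl's law. The only (harmless) differences are cosmetic: you reduce to $c=0$ at the start rather than setting $\gamma_g=e^{bc}\gamma_g^0$ at the end, and you let the zero modes carry vanishing coefficients via $T_0f_n=0$ instead of excluding them from the sum.
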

\begin{proof} 
Consider $(\alpha_n)_{n\in \N}$ a family of i.i.d. real centered Gaussian random variables with variance $1$ representing the GFF  $X_g$  as 
in \eqref{hsxg}.
If $(f_n)_{n\in \N}$ is an orthonormal basis of eigenfunctions for $\mc{D}_0=\bar{\pl}_{\mc{L}^{-2}}^*\bar{\pl}_{\mc{L}^{-2}}$, conditionally on $X_g$, we define the law of the sequence $(\beta_n)_{n\in \N}$ to be that of complex Gaussians with covariances with $\theta:=M^{g}_{2b}(X_g ,\dd x)$)
\begin{equation}\label{betacov_Sigma}
\E[\beta_n\bar{\beta}_m|X_g]= \frac{\pi}{k}\cjg (1-\Pi^{\theta})T_0 f_m,(1-\Pi^{\theta})T_0 f_n\cjd_{L^2(\theta)} , \quad  \E[\beta_n \beta_m]=0.
\end{equation}

For $\theta=M_{2b}^g(X_g,\dd x)$, the matrix $\mc{G}^{\theta}$ is defined almost surely and  \eqref{defpositive}
is a positive definite quadratic form on $\ker \bar{\pl}_{\mc{L}^{-2}}^*$ almost surely since, almost surely, $M_{2b}^g(X_g,B)>0$  for every  ball $B$. Its Gram matrix for the basis $e_1,\dots,e_N$ is precisely $\mc{G}^\theta$.
If $f_0,\dots,f_{d-1}$ are the basis elements in $\ker \bar{\pl}_{\mc{L}^{-2}}$, we set
\[\gamma_g^0:=\sqrt{\frac{\pi}{k}}\sum_{  n\geq d} \beta_n f_n, \quad \gamma_g=e^{bc}\gamma_g^0.\]
 The Sobolev norm of $\gamma_g$ (using $\mc{D}_0$ for the Sobolev norms) is then 
\[\|\gamma_g\|^2_{\mc{H}^{-s}(\Sigma;\mc{L}^{-2})}=e^{2bc}\frac{\pi}{k}\sum_{n\geq d}\kappa_n^{-s}|\beta_n|^2.\]
if $\mc{D}_0f_n=\kappa_nf_n$. Therefore using that 
\[ \E\Big[ \sum_{p,q=1}^N (\mc{G}^{\theta})^{-1}_{pq}\cjg e_p,T_0f_n\cjd_{L^2(\theta)}
\cjg T_0f_n,e_q\cjd_{L^2(\theta)}\Big]\geq 0\]
we obtain
\begin{align*}
\E[\|\gamma_g\|^2_{\mc{H}^{-s}(\Sigma;\mc{L}^{-2})}]
=&e^{2bc}\frac{\pi}{k}\E\Big[\sum_{n\geq d}\kappa_n^{-s}\E[|\beta_n|^2|X_g]\Big]
\leq e^{2bc}\frac{\pi}{k} \E\Big[\sum_{n\geq d}\kappa_n^{-s}\|T_0 f_n\|^2_{L^2(\theta)}\Big]
\\
\leq & Ce^{2bc} \sum_{n\geq d}\kappa_n^{-s} \|T_0 f_n\|_{L^2({\rm dv}_g)}^2=Ce^{2bc} \sum_{n\geq d}\kappa_n^{-s-1 }
\end{align*}
where we used that $\E[M^{g}_{2b}(X_g,\dd x)]\leq C {\rm dv}_g$ and that, when $\phi=0$, $\int_\Sigma |T_0 f_n|^2 {\rm dv}_g=\cjg 
\mc{D}_{0}^{-1}f_n,f_n\cjd_{2}=\kappa_n^{-1}$. Therefore the last series converges for $s>0$ by the Weyl law applied to $\mc{D}_0$.
The process $\gamma_g$ then converges in Sobolev spaces $\mc{H}^{-s}(\Sigma;\mc{L}^{-2})$ for $s>0$.
\end{proof}

We will need to regularize the pair. If $(X_g,\gamma_g)$ is a GFF-Witten pair with parameters $(g,e^{2bc}M^{g}_{2b}(X_g ,\dd x))$, we 
let $\rho$ be a smooth radial function with compact support on $\R$, nonnegative with $\int \rho=1$. 
We set $\rho_{\epsilon,x}(y):=\epsilon^{-2}\rho(d_g(x,y)/\epsilon)$ for $y\in \Sigma$ and
 \[\gamma_{g, \epsilon}(x):=\int_\Sigma \gamma_g(y)\rho_{\epsilon,x}(y){\rm dv}_g(y) \in \mc{L}^{-2}_x.\]
 The mapping $x\mapsto \rho_{\epsilon,x}$ is continuous with values in $\mc{H}^{s}(\Sigma)$ for each $s>0$, the process $\gamma_{g,\epsilon}$ is then continuous. The regularized GFF-Witten pair is the continuous process $(X_{g,\epsilon},\gamma_{g,\epsilon})$. By Definition \ref{witten_pair_Sigma}, the conditional covariance of $\gamma_{g,\eps}$ is, with $\theta:=M_{2b}^g(X_g,\dd x)$
\[\E[\gamma_{g,\epsilon}(x)\overline{\gamma_{g,\epsilon}}(x')|X_g]=e^{2bc}\frac{\pi}{k}  \cjg (1-\Pi^{\theta})T_0\rho_{\epsilon,x},
(1-\Pi^{\theta}) T_0\rho_{\epsilon,x'}\cjd_{L^2(\theta)}  \in \mc{L}_x^{-2}\otimes \mc{L}^{-2}_{x'}. \]

\noindent\textbf{Summary:} For readability, we gather here the notation for the important objects introduced in this section.
\begin{itemize} 
\item For $\theta$ an admissible measure, $\Pi^\theta: C^\infty(\Sigma;\mc{L}^{-2}\otimes\Lambda^{0,1}\Sigma)\to \ker \bar{\pl}_{\mc{L}^{-2}}^*$ is the orthogonal projection in $L^2(\theta)=L^2(\Sigma,\theta)$.
\item For a basis $(e_p)_p$ of $\ker \bar{\pl}_{\mc{L}^{-2}}^*$, $\mc{G}^\theta$ is the Gram matrix \eqref{def_GramG} of $(e_p)_p$ for the scalar product on $L^2(\theta)$. 
\end{itemize}
\subsection{Probabilistic construction of the path integral}

In this section, we consider a rank $2$ holomorphic vector bundle $(E,\bar{\pl}_E,g_E)$ with trivial determinant, $g_E$ being a Hermitian metric. We represent it as an extension with parameters $(\mc{L},\beta)$ as explained in Section \ref{setup_E}, thus $E=\mc{L}^{-1}\oplus \mc{L}$ where $\mc{L}^{-1}$ is a holomorphic subbundle of $E$ and $\bar{\pl}_E$ has the triangular form \eqref{dbar_Esetup}.

Before we give the probabilistic definition of the path integral, let us explain the heuristics of the definition below. 
First recall the formula in Lemma \ref{lem:S(h,A)_formula_stable} for the action $\tilde{S}_{E,\bar{\pl}_E,g_E}(\phi,\nu)$ and let us fix $\phi$ to be a smooth function and view $\nu$ as our variable of integration.\\ 

\noindent \textbf{Heuristics for the probabilistic definition of the path integral.} Assume first that $(E,\bar{\pl}_E, g_E)$ 
is represented as an extension with parameters $(\mc{L},\beta)$ and we are in the generic case, i.e. $\dim\ker  \bar{\pl}_{\mc{L}^{-2}}=0$ (then $\tilde{\mc{D}}_\phi$ is injective). 
Let $\nu_g$ be a complex Gaussian variable with 
covariance kernel ($\tilde{R}_\phi$ being defined in Lemma \ref{inverseDphi})
\[ \E[ \cjg \nu_g,f\cjd_2 \bbar{\cjg \nu_g, f'\cjd}_2]=\frac{\pi}{k}\cjg \tilde{R}_{\phi}f',f\cjd_2 ,\qquad  \E[\cjg \nu_g,f\cjd_2 \cjg \nu_g,f'\cjd_2 ]=0.\]
 Then using  the relation
\[  \|e^{-\phi}\bar{\pl}_{\mc{L}^{-2}}(e^{\phi}\nu)\|^2_2=\cjg \tilde{\mc{D}}_{\phi} \nu,\nu\cjd_2\]
and Lemma \ref{lem:detwitten} for the expression of the determinant of $\tilde{\mc{D}}_{\phi}$ and \eqref{Gphi} for the definition of $\mc{G}_\phi$, we define the following path integral in $\nu$ by
\begin{align*}
\int & F( \nu )e^{ -\frac{k}{\pi }\| e^{-\phi}(\bar{\pl}_{\mc{L}^{-2}}(e^{\phi}\nu)+\beta)\|^2_2} D\nu
\\&:=(\frac{\pi}{k})^{\frac{\chi(\Sigma)}{6}+2{\rm deg}(\mc{L})}\frac{e^{\frac{1}{2\pi}\int_\Sigma (|d\phi|^2_g-\tfrac{1}{2}K_g\phi){\rm dv}_g-\frac{2}{\pi i}\int_\Sigma \phi F_{\mc{L}}}}{\det(\mc{D}_0)\det(\mc{G}_\phi)\det(\mc{G}_0)^{-1}}  \E\Big[F\big(\nu_g \big)e^{-2\tfrac{k}{\pi}{\rm Re}\cjg \nu_g,e^\phi \bar{\pl}^*_{\mc{L}^{-2}}(e^{-2\phi}\beta) \cjd_2-\frac{k}{\pi}\|e^{-\phi}\beta\|^2_2}\Big].
\end{align*}
Now we want to treat the exponential of the linear term in $\nu_g$  using Cameron-Martin, and for this we need to compute covariances:
using the second identity of \eqref{tildeRphi} to compute the first variance and \eqref{Rphidbar*} to compute the second covariance, we get 
\[ \begin{split}
\E[(2{\rm Re}\cjg \nu_g,e^{\phi}\bar{\pl}^*_{\mc{L}^{-2}}(e^{-2\phi}\beta) \cjd_2)^2]=& 2\frac{\pi}{k} \cjg \tilde{R}_\phi (e^{-\phi}\bar{\pl}_{\mc{L}^{-2}}e^{\phi})^*(e^{-\phi}\beta) , (e^{-\phi}\bar{\pl}_{\mc{L}^{-2}}e^{\phi})^*(e^{-\phi}\beta)\cjd_2  \\
=& 2\frac{\pi}{k}\|(1-\tilde{\Pi}_\phi) (e^{-\phi}\beta)\|_2^2,\\ 
\E[2 \nu_g {\rm Re}\cjg \nu_g,e^{\phi}\bar{\pl}^*_{\mc{L}^{-2}}(e^{-2\phi}\beta) \cjd_2]=& \frac{\pi}{k}\tilde{R}_\phi (e^{-\phi}\bar{\pl}_{\mc{L}^{-2}}e^{\phi})^*(e^{-\phi}\beta)\\
=&\frac{\pi}{k}e^{-\phi}\tilde{T}_\phi^*({\rm Id}-\tilde{\Pi}_\phi)(e^{-\phi}\beta)\\
= &\frac{\pi}{k}e^{-\phi}T_0^*e^{2\phi}({\rm Id}-\Pi_\phi)(e^{-2\phi}\beta)
\end{split}\]
with $\tilde{\Pi}_\phi$ and $\Pi_\phi$ the projections defined in \eqref{tildePiphi}. This gives, using the notations  of Lemma \ref{inverseDphi}, 
\begin{align*}
& \E\Big[F\big(\nu_g \big)e^{-2\frac{k}{\pi}{\rm Re}\cjg \nu_g,e^\phi \bar{\pl}^*_{\mc{L}^{-2}}(e^{-2\phi}\beta) \cjd_2-\frac{k}{\pi}\|e^{-\phi}\beta\|^2_2}\Big]\\
 &= \E\Big[F\big(\nu_g +e^{-\phi}T_0^*e^{2\phi}({\rm Id}-\Pi_\phi)(e^{-2\phi}\beta)\big)e^{\frac{k}{\pi}\|(1-\tilde{\Pi}_\phi) (e^{-\phi}\beta)\|_2^2 -\frac{k}{\pi}\|e^{-\phi}\beta\|^2_2}\Big]\\
&=  \E\Big[F\big(\nu_g+e^{-\phi}T_0^*e^{2\phi}({\rm Id}-\Pi_\phi)(e^{-2\phi}\beta) \big)e^{ -\frac{k}{\pi}\|\tilde{\Pi}_\phi (e^{-\phi}\beta)\|^2_2}\Big].
\end{align*}
Now, switching to the $\gamma_g=e^{\phi}\nu_g$ field, this leads to   the following probabilistic definition of the $\gamma$-path integral (conditional on $\phi$) 
\begin{align*}
\int & F( e^{\phi}\nu )e^{ -\frac{k}{\pi }\| e^{-\phi}(\bar{\pl}_{\mc{L}^{-2}}(e^{\phi}\nu)+ \beta)\|^2_2} D\nu\\&:=(\frac{\pi}{k})^{\frac{\chi(\Sigma)}{6}+2{\rm deg}(\mc{L})}\frac{e^{\frac{1}{2\pi}\int_\Sigma (|d\phi|^2_g-\tfrac{1}{2}K_g\phi){\rm dv}_g-\frac{2}{\pi i}\int_\Sigma \phi F_{\mc{L}}}}{\det(\mc{D}_0)\det(\mc{G}_\phi)\det(\mc{G}_0)^{-1}}  \E\Big[F(\gamma_g +T_0^*({\rm Id}-e^{2\phi}\hat{\Pi}_\phi)\beta)e^{ -\frac{k}{\pi}\|e^{\phi}\hat{\Pi}_\phi \beta\|^2_2}\Big].
\end{align*}
where 
\[
\hat{\Pi}_\phi u:= e^{-\phi}\tilde{\Pi}_\phi (e^{-\phi}u)=\sum_{j,i=1}^N (({\mc{G}_\phi})^{-1})_{pq} e_p  \cjg u,e_q\cjd_2.
\]
The case where $\dim \ker \bar{\pl}_{\mc{L}^{-2}}=d>0$  is somewhat similar with a twist that we have now to incorporate the zero modes $\gamma_0\in \ker \bar{\pl}_{\mc{L}^{-2}}$, distributed as the Lebesgue measure. Finally we can integrate out the $\phi$ variable as in \eqref{PIcondphi2} and we see that $\phi$ has the law of 
$b(c+X_g)$ where $X_g$ is a GFF and $c$ a variable   distributed according to the Lebesgue measure $\dd c$, and $b=(k-2)^{-1/2}$ as before.\\

\noindent \textbf{Probabilistic definition of the path integral.} The discussion above motivates the probabilistic definition of the full path integral (in the variables $(\phi,\gamma)$) as follows.
Set \[b=(k-2)^{-1/2}\in (0,1), \quad N=\dim \ker \bar{\pl}^*_{\mc{L}^{-2}}=\dim H^1(\Sigma,\mc{L}^{-2})\] 
and let $(X_g,\gamma_g)$ be a GFF-Witten pair in $\mc{H}^{-s}(\Sigma,\R)\times \mc{H}^{-s}(\Sigma,\mc{L}^{-2})$ (for some fixed $s>0$), with parameters  $(g,M^g_{2b}(c+X_g,\dd x))$, defined on some probability space $(\Omega,\mc{F},\P)$ (with expectation $\E$).
Notice that for $\theta=M_{2b}^g(X_g,dx)$, the GMC measure on $\Sigma$, conditionally on $X_g$, the matrix $\det(\mc{G}^\theta)^{-1}>0$ almost surely. Indeed, as a $N$-dimensional Gram matrix, we have for each $v=(v_1,\dots,v_N)\in \C^N\setminus\{0\}$ and $(e_p)_p$ a basis of $\ker \bar{\pl}^*_{\mc{L}^{-2}}$
\[ \sum_{p,q=1}^N \bar{v}_pv_q \mc{G}^{\theta}_{pq}= \|\sum_{p}v_pe_p\|^2_{L^2(\theta)}>0,\]
the positivity coming from the fact that $\sum_{p}v_pe_p(x)$ does not vanish on some open ball if $v\not=0$. Now we state two  lemmas that will be useful for the convergence of the correlation functions. Their proofs are postponed to the appendix.
\begin{lemma}\label{det_negative_moment}
Let ${\bf z}=(z_1,\dots,z_m)$ be a set of $m$ disjoint points and $b\in (0,1)$. 
Let $\theta=M_{2b}^g(X_g,\dd x)$, let $u\in C^\infty(\Sigma\setminus \{\bf z\},\R)$  such that  there is $C>0$ such that for all $x$ near $z_\ell$,
$u(x)< bQ'\log(1/d_g(x,z_\ell))+C$, with $Q'<b+1/b $.
Let $\theta_u:=e^{2u}\theta$, then the random variable $\det(\mc{G}^{\theta_u})$ satisfies 
\[\E[ \det(\mc{G}^{\theta_u})^{-1}]<\infty.\] 
\end{lemma}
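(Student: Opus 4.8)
The plan is to dominate $\det(\mc{G}^{\theta_u})^{-1}$ by an explicit product of negative powers of Gaussian multiplicative chaos masses of fixed disjoint balls, and then to invoke the standard finiteness of negative moments of GMC. Before that, one should record that $\mc{G}^{\theta_u}$ is a.s.\ a well-defined positive-definite $N\times N$ matrix: near each $z_\ell$ the weight $e^{2u}$ is bounded by $C\,d_g(\cdot,z_\ell)^{-2bQ'}$ with $2bQ'<2+2b^2$ (this is exactly where the hypothesis $Q'<b+1/b$ enters), which is $\theta$-integrable a.s.\ by the size estimate $M^g_{2b}(X_g,B(z,r))=\mc{O}(r^{2+2b^2-\epsilon})$ used in the proof of \cite[Lemma 3.3]{DKRV16}; hence $\theta_u$ is a.s.\ a finite Radon measure and the entries $\langle e_p,e_q\rangle_{L^2(\theta_u)}$ are finite, while positivity follows (as already noted) from unique continuation of the real-analytic sections $\sum_p v_p e_p$ together with the fact that $\theta_u$ charges every ball.

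The heart of the argument is a deterministic localization away from the singular points. I would fix a coordinate chart $U\subset \Sigma\setminus\{{\bf z}\}$ over which $\mc{L}^{-2}$ is trivialized by a frame $\tau$, write $e_p=\phi_p\,\tau$ with $\phi_p$ real-analytic scalar coefficients and $w:=|\tau|^2_{g_{\mc{L}^{-2}}\otimes g}>0$, and then choose $N$ points $x_1^*,\dots,x_N^*\in U$ with $\det(\phi_p(x_i^*))_{p,i=1}^N\neq 0$. Such points exist by an inductive selection: at each step, a section vanishing at the already-chosen points is either zero or, being real-analytic, nonzero somewhere in $U$, so it can be made non-vanishing at a new point, lowering the dimension of the common zero locus until evaluation at the $N$ points is injective on $\ker\bar{\pl}^*_{\mc{L}^{-2}}$. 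Taking pairwise disjoint balls $B_i=B(x_i^*,r)\subset U$ with $r$ small enough that $|\det(\phi_p(y_i))_{p,i}|^2\geq c_0>0$ for all $(y_1,\dots,y_N)\in B_1\times\cdots\times B_N$, and setting $\Omega=\bigcup_i B_i$, I would use (i) monotonicity of the determinant on the positive-definite cone to restrict the integration defining $\mc{G}^{\theta_u}$ to $\Omega$, (ii) the lower bounds $e^{2u}\geq c_1>0$ and $w\geq w_0>0$ on the compact set $\bar\Omega\subset U$, and (iii) the Andr\'eief (Gram) identity, to obtain
\[
\det(\mc{G}^{\theta_u})\;\geq\; c_1^N\,\frac{1}{N!}\int_{\Omega^N}\big|\det(\phi_p(y_i))_{p,i}\big|^2\prod_{i=1}^N w(y_i)\,\theta(dy_i)\;\geq\; c_1^N\,\frac{c_0\,w_0^N}{N!}\prod_{i=1}^N \theta(B_i),
\]
where $\theta=M^g_{2b}(X_g,\cdot)$ and the last step keeps only the contribution of the box $B_1\times\cdots\times B_N\subset\Omega^N$.

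Granting this bound, one gets $\det(\mc{G}^{\theta_u})^{-1}\leq C\prod_{i=1}^N \theta(B_i)^{-1}$ with $C$ deterministic, so the generalized H\"older inequality gives $\E[\det(\mc{G}^{\theta_u})^{-1}]\leq C\prod_{i=1}^N \E[\theta(B_i)^{-N}]^{1/N}$, and each factor is finite because all negative moments of the GMC mass of a fixed ball are finite for $b\in(0,1)$ (see \cite{DKRV16,GRVIHES}). This closes the argument, and it shows incidentally that the finiteness of the negative moment does not use the singularity bound $Q'<b+1/b$ beyond guaranteeing that $\mc{G}^{\theta_u}$ makes sense.

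The step I expect to be the main obstacle is the deterministic localized lower bound $\det(\mc{G}^{\theta_u})\gtrsim\prod_i\theta(B_i)$, i.e.\ reducing the random Gram determinant to a product of (disjointly supported) GMC masses. This hinges on placing the balls in a single trivializing chart disjoint from ${\bf z}$ and, more delicately, on the non-vanishing of the Wronskian-type determinant $\det(\phi_p(x_i^*))$, which I would derive from linear independence and unique continuation of the $\phi_p$; by contrast, the GMC negative-moment input and the H\"older reduction are routine.
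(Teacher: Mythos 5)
Your proof is correct, but it takes a genuinely different route from the paper's. The paper bounds the determinant by the smallest eigenvalue to the $N$-th power, $\det(\mc{G}^{\theta_u})\geq \big(\inf_{|\la|=1}\int_\Sigma|\sum_p\la_pe_p|^2_g\,\dd\theta_u\big)^N$, and then runs a compactness argument over the unit sphere of coefficient vectors $\la$: for each $\la^0$ unique continuation gives a ball on which $|\sum_p\la^0_pe_p|^2_g$ is bounded below, continuity extends this to a neighborhood of $\la^0$, and a finite subcover yields $\det(\mc{G}^{\theta_u})\geq C\, M_{2b}(X_g+\tfrac{u}{b},B)^N$ for a fixed ball (or a minimum over finitely many balls) away from ${\bf z}$; the conclusion then follows from negative moments of a \emph{single} GMC mass, with no H\"older step. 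You instead keep the full determinant and linearize it via the Andr\'eief/Gram identity, replace the sphere-compactness step by an inductive unique-continuation selection of $N$ evaluation points at which the matrix $(\phi_p(x_i^*))_{p,i}$ is nonsingular, and localize to a product of $N$ disjoint balls, getting $\det(\mc{G}^{\theta_u})\gtrsim\prod_i\theta(B_i)$ and closing with generalized H\"older and negative moments of each ball mass. Both arguments rest on the same two inputs — unique continuation for sections in $\ker\bar{\pl}^*_{\mc{L}^{-2}}$ (which lets one work away from the singular points of $u$, so that only a lower bound on $e^{2u}$ on a compact set is needed) and finiteness of all negative moments of GMC masses of fixed balls — and both correctly identify that the hypothesis $Q'<b+1/b$ enters only to make $\theta_u$, hence $\mc{G}^{\theta_u}$, well defined. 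The paper's version is shorter and needs only one ball; yours is more explicit (an effective product lower bound on the Gram determinant) at the price of the point-selection combinatorics and the H\"older inequality, and your opening paragraph on a.s.\ finiteness of the entries of $\mc{G}^{\theta_u}$ is a useful piece of bookkeeping that the paper's appendix proof leaves implicit.
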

 
Let $\theta$ be an admissible measure on $\Sigma$ and $g$ a Riemannian metric on $\Sigma$.
Define the operator
$\hat{\Pi}^{\theta}$ on $C^\infty(\Sigma;\mc{L}^{-2}\otimes \Lambda^{0,1}\Sigma)$ by
\begin{equation}\label{hatPi} 
\hat{\Pi}^{\theta}(u):=\sum_{p,q=1}^N ((\mc{G}^{\theta})^{-1})_{pq} e_p  \cjg u,e_q\cjd_2
\end{equation}
and the section of $\mc{L}^{-2}$
\begin{equation}\label{def_of_Y}  
Y^{\theta}(\beta)(x):=T_0^*\beta(x) - \int_{\Sigma}T_0^*(x,y)(\hat{\Pi}^{\theta}\beta)(y)\theta(\dd y).
\end{equation}

\begin{lemma}\label{lemmaYtheta}
Assume the map $f\in \mc{H}^{s+1}(\Sigma;\R)\mapsto \theta(f)$ is continuous for some $s>0$. Then the map $f\in \mc{H}^{s}(\Sigma;\mc{L}^{-2})\mapsto \cjg Y^{\theta}(\beta),f\rangle_2$ is continuous. In particular this holds if $\theta$ is the GMC measure and $\theta_u$ as defined in Lemma \ref{det_negative_moment}.
\end{lemma}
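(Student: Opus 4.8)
The plan is to reduce the claim to the mapping properties of $T_0$ on Sobolev spaces, combined with the module property of Sobolev spaces under multiplication by smooth sections, so that the assumed continuity of $\theta$ on $\mc{H}^{s+1}$ can be invoked at the very end. The guiding principle is that one must never integrate the rough test section $f$ (which lives only in $\mc{H}^s$) directly against the measure $\theta$; instead everything should be transferred onto $T_0f$, which by the smoothing property of $T_0$ is one derivative more regular, hence (since $s>0$ and $\dim\Sigma=2$) continuous and perfectly well paired with $\theta$.

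The first step is to rewrite the functional. For $f\in C^\infty(\Sigma;\mc{L}^{-2})$, pairing the defining formula \eqref{def_of_Y} against $f$ with respect to $\cjg\cdot,\cdot\cjd_2$, using that $T_0^*$ is the $L^2({\rm v}_g)$-adjoint of $T_0$ and applying Fubini's theorem, one obtains
\[ \cjg Y^{\theta}(\beta),f\cjd_2 = \cjg \beta, T_0 f\cjd_2 - \cjg \hat{\Pi}^{\theta}\beta, T_0 f\cjd_{L^2(\theta)}. \]
The use of Fubini here is legitimate because the singularity $T_0^*(x,y)\sim |x-y|^{-1}$ of the kernel (Lemma \ref{inverseDphi}) is integrable against $\theta$, exactly by the ball estimate $\theta(B(x,r))\lesssim r^{2+2b^2-\epsilon}$ used in Remark \ref{rem:integ}; this is the one genuinely technical point, and I would dispatch it by the same estimate rather than grind it out. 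Next I would observe that $w:=\hat{\Pi}^{\theta}\beta$ is a \emph{fixed smooth} section: by \eqref{hatPi} it is the finite sum $w=\sum_{p=1}^N c_p e_p$ with scalar coefficients $c_p=\sum_q((\mc{G}^{\theta})^{-1})_{pq}\cjg\beta,e_q\cjd_2$, the Gram matrix $\mc{G}^{\theta}$ being invertible because $\theta$ is admissible (positive definiteness from the lemma preceding Definition \ref{witten_pair_Sigma}), while each $e_p\in\ker\bar{\pl}^*_{\mc{L}^{-2}}$ is smooth by elliptic regularity.

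For the continuity itself I would treat the two terms separately. By Lemma \ref{inverseDphi}, $T_0:\mc{H}^s(\Sigma;\mc{L}^{-2})\to\mc{H}^{s+1}(\Sigma;\mc{L}^{-2}\otimes\Lambda^{0,1}\Sigma)$ is bounded. The first term satisfies $|\cjg\beta,T_0f\cjd_2|\leq \|\beta\|_{L^2}\|T_0f\|_{\mc{H}^{s+1}}\leq C\|\beta\|_{L^2}\|f\|_{\mc{H}^s}$ since $\beta$ is smooth, and is therefore continuous in $f$. For the second term I would set $\varphi_f:=\cjg w, T_0 f\cjd_{g_{\mc{L}^{-2}}\otimes g}$, the pointwise Hermitian contraction of the smooth section $w$ with $T_0f$, so that $\cjg \hat{\Pi}^{\theta}\beta, T_0 f\cjd_{L^2(\theta)}=\theta(\varphi_f)$ by \eqref{defltheta}. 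Contraction against the fixed smooth section $w$ is a bounded operation $\mc{H}^{s+1}(\Sigma;\mc{L}^{-2}\otimes\Lambda^{0,1}\Sigma)\to\mc{H}^{s+1}(\Sigma;\C)$ (zeroth order, smooth coefficients, $\mc{H}^{s+1}$ being a $C^\infty$-module), so $f\mapsto\varphi_f$ is continuous from $\mc{H}^s$ into $\mc{H}^{s+1}(\Sigma;\C)$. Finally, the assumed continuity of $\theta$ on $\mc{H}^{s+1}(\Sigma;\R)$ extends to $\mc{H}^{s+1}(\Sigma;\C)$ by splitting into real and imaginary parts, whence $f\mapsto\theta(\varphi_f)$ is continuous. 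This yields the continuity of $f\mapsto\cjg Y^{\theta}(\beta),f\cjd_2$, and identifies $Y^\theta(\beta)$ with an element of $\mc{H}^{-s}(\Sigma;\mc{L}^{-2})$ after extending from smooth $f$ by density.

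It remains to check the ``in particular'' clause, i.e.\ that the hypothesis holds for $\theta=M^{g}_{2b}(X_g,\dd x)$ and for $\theta_u$. Since $\dim\Sigma=2$ and $s>0$, the Sobolev embedding $\mc{H}^{s+1}(\Sigma)\hookrightarrow C^0(\Sigma)$ gives $|\theta(f)|\leq\theta(\Sigma)\,\|f\|_{C^0}\leq C\,\theta(\Sigma)\,\|f\|_{\mc{H}^{s+1}}$, so the hypothesis reduces to the almost sure finiteness of the total mass. For the GMC this is standard for $b\in(0,1)$, and for $\theta_u$ with $u(x)<bQ'\log(1/d_g(x,z_\ell))+C$ near each $z_\ell$ and $Q'<b+1/b$ it follows from the integrability of the logarithmic singularities against the GMC (the same computation as in the proof of Lemma \ref{det_negative_moment} and \cite[Lemma 3.3]{DKRV16}). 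The main obstacle is thus not any of the functional-analytic steps, which are soft, but the bookkeeping that keeps the singular integration confined to the regular object $T_0f$; once the rewriting of the first step is in place, the rest is routine.
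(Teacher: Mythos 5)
Your proof is correct. Its first half is essentially the paper's own argument: you pair $Y^{\theta}(\beta)$ against $f$ so that the rough test section only ever enters through $T_0f$, use the boundedness $T_0:\mc{H}^{s}\to\mc{H}^{s+1}$ from Lemma \ref{inverseDphi} together with the fact that $\hat{\Pi}^{\theta}\beta$ is a fixed smooth section (finite combination of the $e_p$, with $\mc{G}^{\theta}$ invertible by admissibility), and then invoke the continuity hypothesis on $\theta$; the paper compresses this into two lines, and your filled-in details (including the Fubini justification, for which finite total mass $\theta(\Sigma)=\theta(1)<\infty$ — itself a consequence of the hypothesis — already suffices, since the $|x-y|^{-1}$ singularity is integrated in the ${\rm v}_g$-variable) are sound.

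Where you genuinely diverge is the ``in particular'' clause, and your route is both different and substantially lighter. The paper establishes the hypothesis for $\theta$ and $\theta_u$ by proving the almost sure finiteness of the Green-function double integral \eqref{GMCest}, $\iint_{\Sigma^2}G_g(x,y)\,\theta_u(\dd x)\,\theta_u(\dd y)<+\infty$, which it deduces from the moment estimates of Lemma \ref{lem:GMC} (Kahane convexity plus a multiscale/scaling analysis occupying most of the appendix), and then obtains continuity from the spectral bound $\sum_{n\geq 1}\lambda_n^{-s-1}|\theta_u(e_n)|^2\lesssim \iint G_g\,\theta_u\,\theta_u$ and Cauchy--Schwarz. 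You instead use the Sobolev embedding $\mc{H}^{s+1}(\Sigma)\hookrightarrow C^0(\Sigma)$, valid since $s>0$ and $\dim\Sigma=2$, which reduces the hypothesis to the almost sure finiteness of the total mass; for $M^g_{2b}(X_g,\dd x)$ with $b\in(0,1)$ this is standard subcritical GMC theory, and for $\theta_u$ it follows from the integrability of the singularities $d_g(x,z_\ell)^{-2bQ'}$, $Q'<b+1/b$, against the GMC (the ball estimate of Remark \ref{rem:integ}, equivalently \cite[Lemma 3.3]{DKRV16}). This completely bypasses Lemma \ref{lem:GMC}. What the paper's heavier argument buys, and yours gives up, is strength at the endpoint: the bound \eqref{GMCest} yields continuity of $\theta_u$ on $\mc{H}^{1}(\Sigma)$, i.e.\ the case $s=0$, where the embedding into $C^0$ fails in dimension two, together with quantitative moment bounds on the relevant GMC integrals. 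Since the lemma only asserts continuity ``for some $s>0$'' (and indeed your argument gives it for every $s>0$), none of that extra strength is needed for the statement at hand.
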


For a holomorphic line bundle $\mc{L}$ on $(\Sigma,g)$ and $k\in \R$, it is useful to define the $2$-form 
\begin{equation}\label{defK_L}
K_{\mc{L}}:=\frac{K_g}{4\pi}{\rm v}_g-\frac{k-2}{\pi i}F_{\mc{L}}
\end{equation}  
For $d:=\dim \ker \bar{\pl}_{\mc{L}^{-2}}=\dim H^0(\Sigma,\mc{L}^{-2})$ and 
$N:=\dim \ker  \bar{\pl}_{\mc{L}^{-2}}^*=\dim H^1(\Sigma,\mc{L}^{-2})$,  
we have $N=2{\rm deg}(\mc{L})-\chi(\Sigma)/2+d$ by Riemann-Roch formula \eqref{Riemann_RochL^k}, and combining 
with Gauss-Bonnet formula $\int_\Sigma K_g{\rm v}_g=4\pi \chi(\Sigma)$, one gets 
\begin{equation}\label{RR+GB} 
-2N-\int_\Sigma \mc{K}_{\mc{L}}=-2k\, {\rm deg}(\mc{L})-2d.
\end{equation}

Based on the discussion above and recalling the expression of the action in Lemma \ref{lem:S(h,A)_formula_stable}, one can set:
\begin{definition}[\textbf{Definition of the path integral}]\label{defH3sigma}
Let $E$ be a rank $2$ holomorphic vector bundle on $(\Sigma,g)$, that we represent as an extension with parameters $(\mc{L},\beta)$ as in Section \ref{setup_E}.
Pick a basis $(e_p)_{p=1,\dots,N}$ of $\ker \bar{\pl}^*_{\mc{L}^{-2}}$, let $(X_g,\gamma_g)$ be a GFF-Witten pair with parameters  $(g,M^g_{2b}(c+X_g,\dd x))$ and let $\theta(\dd x):=M^g_{2b}(X_g,\dd x)$. For any nonnegative measurable $F$ on $\mc{H}^{-s}(\Sigma,\R)\times \mc{H}^{-s}(\Sigma;\mc{L}^{-2})$, we set:\\
1) \textbf{Generic case:} i.e. when $H^0(\Sigma,\mc{L}^{-2})=0$, 
\begin{equation}\label{def_PI_case_noker}
\begin{split}
\langle F(\phi,\gamma) \rangle_{\Sigma,g,\mc{L},\beta}:=& C_k \Big(\frac{\det(\mc{D}_0)}{\det(\mc{G}_0)}\Big)^{-1}\Big(\frac{\det(\Delta_{g})}{{\rm v}_g(\Sigma)}\Big)^{-1/2}e^{\frac{k}{\pi}\|\beta\|_{2}^2} \\
& \times \int_{\R}e^{-2k{\rm deg}(\mc{L})bc}\E\Big[\frac{F(b(c+X_g), \gamma_g+Y^{\theta}(\beta))}{\det(\mc{G}^{\theta})}e^{-b\int_\Sigma \mc{K}_{\mc{L}}X_g-\frac{k}{\pi}e^{-2bc} \|\hat{\Pi}^{\theta}\beta\|^2_{L^2(\theta)}}\Big]\,\dd c.
\end{split}
\end{equation}
where $\mc{G}_0$, $\mc{G}^{\theta}$, $\hat{\Pi}^{\theta}$, $Y^{\theta}(\beta)$, $\mc{K}_{\mc{L}}$ are defined by \eqref{Gphi}, \eqref{def_GramG}, \eqref{hatPi}, \eqref{def_of_Y} and \eqref{defK_L}, while $C_k$ is the topological constant
\begin{equation}\label{constant_Ck} 
C_k=\big(\frac{k-2}{2\pi}\big)^{-\frac{\chi(\Sigma)}{12}+\frac{1}{2}}\big(\frac{\pi}{k}\big)^{\frac{\chi(\Sigma)}{6}+2{\rm deg}(\mc{L})}.
\end{equation}
\noindent 2) \textbf{Non generic case}: if $W_0:=\ker \bar{\pl}_{\mc{L}^{-2}}=H^0(\Sigma,\mc{L}^{-2})$ is non trivial, then we set 
\begin{equation}\label{def_PI_case_ker}
\begin{split}
\langle &F(\phi,\gamma) \rangle_{\Sigma,g,\mc{L},\beta}:=C_k \Big(\frac{\det(\mc{D}_0)}{\det(\mc{G}_0)}\Big)^{-1}\Big(\frac{\det(\Delta_{g})}{{\rm v}_g(\Sigma)}\Big)^{-1/2}e^{\frac{k}{\pi}\|\beta\|_{2}^2} \\
& \times \int_{\R}\int_{W_0}e^{-(2k{\rm deg}(\mc{L})+2d)bc}\E\Big[\frac{F(b(c+X_g), \gamma_0+\gamma_g+Y^{\theta}(\beta))}{\det(\mc{G}^{\theta})}e^{-b\int_\Sigma \mc{K}_{\mc{L}}X_g-\frac{k}{\pi}e^{-2bc}\|\hat{\Pi}^{\theta}\beta\|^2_{L^2(\theta)} }\Big]\, \dd \gamma_0\,\dd c
\end{split}\end{equation}
where $\dd \gamma_0$ is the volume measure associated to the Hermitian metric given by the $L^2({\rm v}_g)$ pairing on $W_0$.
\end{definition}
In the definition above, we have gathered (in \eqref{def_PI_case_noker}) the terms involving exponentials of the zero mode $c$ of $\phi_g=c+X_g$: 
there is a term $e^{-2Nbc}$ coming from $(\det(\mc{G}_\phi))^{-1}=e^{-2Nbc}(\det(\mc{G}^\theta))^{-1}$ and a term $e^{-bc \int_\Sigma \mc{K}_{\mc{L}}}$, then \eqref{RR+GB} gives that the sum of these terms become $e^{-(2k{\rm deg}(\mc{L})+2d)bc}$.

\begin{proposition}{\bf (Weyl anomaly and diffeomorphism invariance)}\label{prop:weyl_surface}  Consider a metric $g'=e^{\omega}g$ conformal to the metric $g$ for some smooth $\omega\in C^\infty(\Sigma)$. Then, for nonnegative measurable $F$ on $\mc{H}^{-s}(\Sigma,\R)\times \mc{H}^{-s}(\Sigma,\C)$,
\[\langle F(\phi,\gamma) \rangle_{\Sigma,g',\mc{L},\beta}=e^{ \frac{{\bf c}}{96\pi}S^0_{\rm L}(g,\omega)}\langle F(\phi-\tfrac{b^2}{2}\omega,\gamma) \rangle_{\Sigma,g,\mc{L},\beta}\]
with ${\bf c}(k):=3k/(k-2)$ is the central charge of the $\mathbb{H}^3$-model and $S^0_{\rm L}$ is the Liouville action \eqref{LiouvilleS0}.

If $\psi:\Sigma\to\Sigma$ is a diffeomorphism and $\psi^*g$ the pullback metric then
\[\langle F(\psi^*\phi ,\psi^*\gamma) \rangle_{\Sigma,\psi^*g,\psi^*\mc{L},\psi^*\beta}= \langle F(\phi ,\gamma) \rangle_{\Sigma,g} .\]
\end{proposition}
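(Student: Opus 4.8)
The plan is to prove the two statements separately, reducing each to the generic case $H^0(\Sigma,\mc{L}^{-2})=0$; the non-generic case is identical once one carries along the extra $\int_{W_0}\dd\gamma_0$ integral and the matrix $\mc{N}_0$ through the anomaly formula \eqref{anomalydetD0}, the space $W_0=\ker\bar{\pl}_{\mc{L}^{-2}}$ being a holomorphic (hence metric-independent) object. Diffeomorphism invariance I expect to be essentially formal: $\psi^{-1}$ intertwines $\bar{\pl}_{\psi^*E}$, $\bar{\pl}_{(\psi^*\mc{L})^{-2}}$ and the associated Laplacians with the target objects, so all operators of Lemma \ref{inverseDphi}, the projections, and the Gram matrices are natural under pullback. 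Combining this with Lemma \ref{l:diffeos} (equality in law of the pulled-back GFF--Witten pair), with the isospectrality $\det(\Delta_{\psi^*g})=\det(\Delta_g)$, ${\rm v}_{\psi^*g}(\Sigma)={\rm v}_g(\Sigma)$, and with ${\rm deg}(\psi^*\mc{L})={\rm deg}(\mc{L})$, the identity follows by a change of variables inside the expectation in Definition \ref{defH3sigma}. No serious difficulty is anticipated here.

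For the Weyl anomaly the strategy is to write Definition \ref{defH3sigma} for $g'=e^\omega g$ and convert every $g'$-object into a $g$-object, tracking all multiplicative factors. First I would use the conformal covariance of the GFF, $X_{g'}\overset{\rm law}{=}X_g-m_{g'}(X_g)$ with $m_{g'}(X_g)={\rm v}_{g'}(\Sigma)^{-1}\int_\Sigma X_g\,{\rm dv}_{g'}$, and absorb the random constant by the translation $c\mapsto c-m_{g'}(X_g)$ in the $\dd c$-integral, so that the argument of $F$ becomes $b(c+X_g)$. Next, using \eqref{relationentrenorm} and ${\rm v}_{g'}=e^{\omega}{\rm v}_g$ one gets $M^{g'}_{2b}(X_{g'},{\rm v}_{g'})=e^{-2bm_{g'}(X_g)}e^{(1+b^2)\omega}\theta$ with $\theta=M^g_{2b}(X_g,{\rm v}_g)$; feeding this into Lemma \ref{lemmainvar_Sigma} and the relations from its proof, namely $T_0'=({\rm Id}-\Pi_0')T_0e^{\omega}({\rm Id}-P_0')$ and $\Pi^{\theta}_g=\Pi^{e^{\omega}\theta}_{g'}$, the conditional law of $\gamma_{g'}$ reduces to that of the $g$-Witten field attached to $e^{2bc}e^{b^2\omega}\theta$, the extra $e^{\omega}$ from the background volume being reabsorbed by the $g'$-adjoints. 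The same conversions, together with the conformal invariance of the $L^2$-pairing on $1$-forms, turn $\det\mc{G}^{\theta'}_{g'}$, $\hat\Pi^{\theta'}_{g'}\beta$, $Y^{\theta'}_{g'}(\beta)$ and $\|\hat\Pi^{\theta'}_{g'}\beta\|_{L^2(\theta')}$ into the corresponding $g$-objects for the measure $e^{b^2\omega}\theta$, up to explicit powers of $e^{-2bm_{g'}(X_g)}$.

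I would then collect all factors carrying the random constant $m_{g'}(X_g)$: a factor $e^{2bNm_{g'}(X_g)}$ from $(\det\mc{G}^{\theta'}_{g'})^{-1}$, a factor $e^{bm_{g'}(X_g)\int\mc{K}_{\mc{L}}}$ from the curvature weight (note $\int\mc{K}'_{\mc{L}}=\int\mc{K}_{\mc{L}}$), and $e^{-2k{\rm deg}(\mc{L})bm_{g'}(X_g)}$ from the explicit $e^{-2k{\rm deg}(\mc{L})bc'}$. Using $N=2{\rm deg}(\mc{L})-\chi(\Sigma)/2$, Gauss--Bonnet, and $\int\mc{K}_{\mc{L}}=\chi(\Sigma)+2(k-2){\rm deg}(\mc{L})$, the total exponent of $m_{g'}(X_g)$ cancels: this is the crucial check that the random shift leaves no trace. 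The decisive move is a Cameron--Martin translation of the GFF by $+\tfrac{b}{2}\omega$: it rewrites $e^{b^2\omega}\theta=M^g_{2b}(X_g+\tfrac{b}{2}\omega,{\rm v}_g)$ as $M^g_{2b}(X_g,{\rm v}_g)$, shifts the argument of $F$ from $b(c+X_g)$ to $b(c+X_g)-\tfrac{b^2}{2}\omega$ — this is the origin of the announced shift — and produces the density $e^{\frac{b}{4\pi}\int_\Sigma X_g\Delta_g\omega\,{\rm dv}_g-\frac{b^2}{16\pi}\int_\Sigma|d\omega|^2_g{\rm dv}_g}$. The key cancellation is that the linear term $e^{-\frac{b}{4\pi}\int X_g\Delta_g\omega}$ created by $\mc{K}'_{\mc{L}}=\mc{K}_{\mc{L}}+\tfrac{1}{4\pi}(\Delta_g\omega){\rm v}_g$ is exactly killed by the Cameron--Martin linear term, leaving only deterministic factors.

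It then remains to assemble the deterministic prefactor: Polyakov \eqref{detpolyakov} gives $e^{\frac{1}{96\pi}S^0_{\rm L}(g,\omega)}$, the Witten-determinant anomaly \eqref{anomalydetD0} (with $\det\mc{N}_0=1$, $F_{\mc{L}^{-2}}=-2F_{\mc{L}}$) gives $e^{\frac{1}{48\pi}S^0_{\rm L}(g,\omega)+\frac{1}{2\pi i}\int_\Sigma\omega F_{\mc{L}}}$, while $\det\mc{G}_0$, $\|\beta\|_2$ and $C_k$ are conformally invariant; combining these with the curvature shift $e^{\frac{b^2}{2}\int_\Sigma\mc{K}_{\mc{L}}\omega}$ and the residual Cameron--Martin term $e^{\frac{b^2}{16\pi}\int|d\omega|^2}$, and using $b^2(k-2)=1$ to cancel the two $\int\omega F_{\mc{L}}$ contributions, one lands on $e^{\frac{\mathbf{c}(k)}{96\pi}S^0_{\rm L}(g,\omega)}$ with $\mathbf{c}(k)=3+6b^2$. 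The main obstacle, and the step requiring the most care, is the conversion of the $g'$-operators $T_0'$, $\Pi^{\theta'}_{g'}$, $Y^{\theta'}_{g'}(\beta)$ into their $g$-counterparts via Lemma \ref{lemmainvar_Sigma}, keeping rigorous track of which $e^{\omega}$ factors originate from the $1$-form metric, from the background volume, or from the GMC anomaly; the bookkeeping of the $m_{g'}(X_g)$-dependent powers and their exact cancellation is the second delicate point.
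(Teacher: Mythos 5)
Your proposal is correct and follows essentially the same route as the paper's proof: both arguments rest on the conformal invariance of the law of $c+X_g$, the GMC relation \eqref{relationentrenorm}, Lemma \ref{lemmainvar_Sigma} for converting $\gamma_{g'}$ into the $g$-Witten field, a Cameron--Martin shift of the GFF by $\tfrac{b}{2}\omega$ (producing the $-\tfrac{b^2}{2}\omega$ shift in $F$ and the $e^{\frac{b^2}{16\pi}\int|d\omega|^2_g{\rm dv}_g}$ factor), and the Polyakov anomalies \eqref{detpolyakov} and \eqref{anomalydetD0} with the $\int\omega F_{\mc{L}}$ cancellation via $b^2(k-2)=1$. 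The only differences are cosmetic — you convert the Witten field before applying Cameron--Martin while the paper does the reverse, and you make explicit the cancellation of the $m_{g'}(X_g)$ zero-mode factors that the paper absorbs into the statement that the law of $c+X_g$ is conformal-class independent.
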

\begin{proof} The diffeomoprhism invariance is direct by using the equality in law \eqref{equalityinlaw}. Let us then consider 
the confomal covariance (weyl anomaly).
Consider the case $\ker \bar{\pl}_{\mc{L}^{-2}}\not=0$, the case $\ker \bar{\pl}_{\mc{L}^{-2}}=0$ being simpler. 
The scalar product on $1$-forms depends only on the conformal class, thus the 
the matrix $\mc{G}_0$ and the norm $\|\beta\|^2_2$ and  do not depend on the metric $g$ in a fixed conformal class. The law of $c+X_g$ also does not depend on the metric $g$ in a fixed conformal class, so we can replace $X_{g'}$ by $X_g$ in the expression defining 
$\langle F(\phi,\gamma) \rangle_{\Sigma,g',\mc{L},\beta}$. Now, we remark that if we fix an orthonormal basis $(f_p)_{p=1,\dots,d}$ of $W_0=\ker \bar{\pl}_{\mc{L}^{-2}}$ for $L^2({\rm v}_g)$ (not orthonormal for $L^2({\rm v}_{g'})$), the $W_0$-integral in \eqref{def_PI_case_ker} with background metric $g'$ can be rewitten as 
\[ \det(\mc{N}_{0,g'})\int_\C \int_{W_0}\E\Big[\frac{F(b(c+X_g),\gamma_0+Y^{\theta}(\beta))}{e^{2Nbc}\det(\mc{G}^{\theta'})}e^{-b\int_\Sigma \mc{K}'_{\mc{L}}(X_g+c)-\frac{k}{\pi}e^{-2bc}\|\hat{\Pi}^{\theta'}\beta\|^2_{L^2(\theta')} }\Big]\, \dd \gamma_0 \]
where $\dd \gamma_0$ is the volume measure associated to the $L^2({\rm v}_g)$ Hermitian product, $\mc{N}_{0,g'}$ is the matrix \eqref{NPhi} (with $\phi=0$ and background metric $g'$), $\mc{K}'_{\mc{L}}$ defined by \eqref{defK_L} with $g'$ instead of $g$ and $\theta'(\dd x)=M_{2b}^{g'}(X_g,{\rm v}_{g'},\dd x)$. For convenience we have also rewritten the $e^{-(2k{\rm deg}(\mc{L})+2d)bc}$ term as  $e^{-2Nbc-bc\int_{\Sigma}\mc{K}'_{\mc{L}}}$ using \eqref{RR+GB}.
We are thus left with considering the following quantity
\begin{equation}\label{anomaly}
C(g')\int_{\R\times W_0}\E\Big[\frac{F(b(c+X_g), \gamma_0+\gamma_{g'})}{e^{2Nbc}\det(\mc{G}^{\theta'})}e^{-b\int_\Sigma \mc{K}'_{\mc{L}}(X_{g}+c) -\frac{k}{\pi}e^{-2bc}\|\hat{\Pi}^{\theta'}\beta\|^2_{L^2(\theta')}}\Big]\,\dd c\, \dd \gamma_0
\end{equation}
with 
\[C(g'):=\frac{\det(\mc{N}_{0,g'}){\rm v}_{g'}(\Sigma)^{1/2}}{\det(\mc{D}_{0,g'})\det(\Delta_{g'})^{1/2}}.\]
Here we emphasize that $\gamma_{g'}$ depends on $c+X_{g'}$ and we shall thus write $\gamma_{g'}[c+X_{g'}]$ to keep track of the dependence, and by the discussion above,  $\gamma_{g'}[c+X_{g'}]=\gamma_{g'}[c+X_{g}]$ in law.
By \eqref{curv}, the term $\int_\Sigma K_{g'}(c+X_{g})\dd {\rm v}_{g'}$ can be written as 
\[\int_\Sigma K_{g'}(c+X_{g})\dd {\rm v}_{g'}=\int_\Sigma K_{g}(c+X_{g})\dd {\rm v}_{g}+\int_\Sigma \Delta_{g}\omega X_{g}\dd {\rm v}_{g}.\]
Define the Gaussian random variable $Y:= -\frac{b}{4\pi}\int_\Sigma \Delta_{g}\omega X_{g}\dd {\rm v}_{g}$. We have
\begin{equation}\label{rayas2005}
\demi \E[Y^2]= \frac{b^2}{16\pi}\int_\Sigma|d\omega|_g^2{\rm dv}_g, \quad \E[YX_{g}]= -\frac{b}{2}(\omega-m_g(\omega))
\end{equation}
where $m_g(\omega)$ stands for the orthogonal projection of $\omega$ on constants in $L^2({\rm v}_g)$.
Therefore, we can apply Cameron-Martin to the random variable $Y$: using the notation $\phi_g:=c+X_g$, Cameron-Martin amounts to replacing 
$\phi_g$ by $\tilde{\phi}_g:=\phi_{g} -\frac{b}{2}(\omega-m_g(\omega))$  in \eqref{anomaly}. Using that $\theta'(\dd x)=e^{(1+b^2)\omega(x)}M_{2b}^g(X_g,{\rm v}_g,\dd x)$ (by \eqref{relationentrenorm} and ${\rm v}_{g'}=e^{\omega}{\rm v}_g$), we see that the term $e^{-2bc}\|\hat{\Pi}^{\theta'}\beta\|^2_{L^2(\theta')}$ and $e^{2Nbc}\det(\mc{G}^{\theta'})$ in \eqref{anomaly} are replaced respectively by 
\[e^{-2b(c+\frac{b}{2}m_g(\omega))} \int_{\Sigma} |\hat{\Pi}^{\theta}\beta|_{g}^2 \theta(\dd x)\quad \textrm{and }\, e^{2Nb(c+\frac{b}{2}m_g(\omega))}\det(\mc{G}^{\theta}).\]
We can thus rewrite, using a change of variable $c\mapsto c-\frac{b}{2}m_g(\omega)$  and the Polyakov formulas \eqref{anomalydetD0} and \eqref{detpolyakov},
\begin{align*}
 \langle & F(\phi,\gamma) \rangle_{\Sigma,g',\mc{L},\beta}
\\
&=
C(g')e^{ \frac{b^2}{16\pi}\int_\Sigma|d\omega|_g^2{\rm dv}_g}\int_{\R\times W_0}\E\Big[\frac{F(b\tilde{\phi}_g, \gamma_0+ \gamma_{g'}[\tilde{\phi}_{g}])}{ e^{2Nb(c+\frac{b}{2}m_g(\omega))}\det(\mc{G}^{\theta})}
e^{-b\int_\Sigma \mc{K}_{\mc{L}}\tilde{\phi}_{g} -\frac{k}{\pi}e^{-2b(c+\frac{b}{2}m_g(\omega))}\|\hat{\Pi}^{\theta}\beta\|^2_{L^2(\theta)} }\Big]\,\dd c\, \dd \gamma_0
\\
&=C(g)e^{ \frac{3+6b^2}{96\pi}S^0_{\rm L}(g,\omega)}
\int_{\R\times W_0}\E\Big[\frac{F(b\phi_g -\tfrac{b^2}{2}\omega,  \gamma_0+\gamma_{g'}[\phi_{g} -\tfrac{b}{2}\omega])}{ e^{2Nbc}\det(\mc{G}^{\theta})}e^{-b\int_\Sigma \mc{K}_{\mc{L}}\phi_{g} -\frac{k}{\pi}e^{-2bc}\|\hat{\Pi}^{\theta}\beta\|^2_{L^2(\theta)}}\Big]\,\dd c\, \dd \gamma_0
\end{align*}
Here, we emphasize that the $\exp(-\frac{1}{2\pi i}\int_\Sigma F_{\mc{L}}\omega)$ in the Polyakov anomaly of \eqref{anomalydetD0} (applied with $n=-2$) cancels the term 
$\exp(-\frac{b^2}{2}\int_\Sigma \frac{k-2}{\pi i}F_{\mc{L}}\omega)$ coming from $\exp(-b\int_\Sigma \mc{K}_{\mc{L}}\tilde{\phi}_{g})$.
The GMC measure describing the parameters of the Witten field becomes, using \eqref{relationentrenorm},
\[M^{g'}_{2b}(c+X_g-\tfrac{b}{2}\omega,{\rm v}_{g'},\dd x) =e^{\omega}M^{g}_{2b}(c+X_g,{\rm v}_{g} ,\dd x).\]
Then by Lemma \ref{lemmainvar_Sigma} applied with $\theta(\dd x)=M^{g}_{2b}(c+X_g,{\rm v}_{g} ,\dd x)$, conditionally on $X_g$, one has $\gamma_{g'}[\phi_g-\tfrac{b}{2}\omega]=\gamma_{g}[\phi_{g}]-P_0'(\gamma_{g}[\phi_{g}])$ in law (with $P_0'$ the projection on $\ker \bar{\pl}_{\mc{L}^{-2}}$ with respect to $L^2({\rm v}_{g'})$). This gives
\begin{align*}
\langle & F(\phi,\gamma) \rangle_{\Sigma,g',\mc{L},\beta}\\
& =C(g)e^{ \frac{3+6b^2}{96\pi}\int_\Sigma(|d\omega|_g^2+2K_g\omega){\rm dv}_g}\int_{\R\times W_0}\E\Big[F(b\phi_g -\tfrac{b^2}{2}\omega, \gamma_0+\gamma_{g}-P_0'(\gamma_{g}))e^{-\frac{b}{4\pi}\int_\Sigma K_{g}\phi_{g} \dd {\rm v}_{g}}\Big]\,\dd c\, \dd \gamma_0.
\end{align*}
We can absorb the $P_0'(\gamma_{g})$ term by using the invariance of the Lebesgue measure $\dd \gamma_0$ under 
translations.
\end{proof}

\noindent \textbf{Summary:} for readability, we gather here the notation for the new objects defined in this section:
\begin{itemize}
\item For $\theta$ an admissible measure and a basis $(e_p)_p$ of $\ker \bar{\pl}_{\mc{L}^{-2}}^*$,  $\hat{\Pi}^{\theta}$ and $Y^{\theta}(\beta)$ are defined as the operator and the random variable (see \eqref{def_of_Y} and \eqref{hatPi})
\[
\hat{\Pi}^{\theta}(u):=\sum_{p,q=1}^N ((\mc{G}^{\theta})^{-1})_{pq} e_p  \cjg u,e_q\cjd_2,\quad 
Y^{\theta}(\beta)(x):=T_0^*\beta(x) - \int_{\Sigma}T_0^*(x,y)(\hat{\Pi}^{\theta}\beta)(y)d\theta(y).
\]
\item $\mc{K}_{\mc{L}}=\frac{K_g}{4\pi}{\rm v}_g-\frac{k-2}{\pi i}F_{\mc{L}}$ is the $2$-form introduced in \eqref{defK_L} satisfying $\int_\Sigma \mc{K}_{\mc{L}}=-2N+2k\, {\rm deg}(\mc{L})$.
\item $W_0:=\ker \bar{\pl}_{\mc{L}^{-2}}=H^0(\Sigma,\mc{L}^{-2})$, equipped with the Hermitian product induced by the $L^2({\rm v}_g)$ pairing defined in \eqref{norme2} (with $n=-2$).
\item We use the The GMC $\theta(\dd x):=M^g_{2b}(X_g,\dd x)$ for the definition of the path integral.
\end{itemize}

\subsection{Gauge covariance of the path integral}\label{sec:gauge_transform}
In this section, we compare our path integral construction in two different gauges representing gauge equivalent line subbundles $\mc{L}^{-1}$, and prove that they are related by a simple expression involving the Polyakov-Wiegman anomaly.

We consider as in Section \ref{setup_E} a rank $2$ holomorphic bundle with trivial determinant represented as an extension with parameters $(\mc{L},\beta)$. The Dolbeault operator $\bar{\pl}_E$ has theupper triangular form \eqref{dbar_Esetup} in $\mc{L}^{-1}\oplus \mc{L}$. 
Consider a smooth gauge in this decomposition
\begin{equation}\label{matrix_h_0}
h_0^{-1}:= \left(\begin{array}{cc} 
e^{\frac{u}{2}} & e^{-\frac{u}{2}}v \\
0 & e^{-\frac{u}{2}} 
\end{array}\right)\end{equation}
for $u\in C^\infty(\Sigma;\C)$, $v\in C^\infty(\Sigma;\mc{L}^{-2})$. It maps 
the decomposition $\mc{L}_u^{-1}\oplus \mc{L}_u$ to the decomposition  $\mc{L}^{-1}\oplus \mc{L}$, if $\mc{L}_u$ is the complex vector bundle bundle $\mc{L}$ but equipped with the conjugated Dolbeault operator 
$\bar{\pl}_{\mc{L}_u}:=e^{\frac{u}{2}}\bar{\pl}_{\mc{L}}e^{-\frac{u}{2}}$. Similarly for $n\in \Z$,
$\mc{L}_u^n$ is a holomorphic line bundle equipped with the Dolbeault operator $\bar{\pl}_{\mc{L}_u^n}:=e^{n\frac{u}{2}}\bar{\pl}_{\mc{L}^n}e^{-n\frac{u}{2}}$. 
The Dolbeault operator $\bar{\pl}_E$ can be conjugated by $h_0$: a direct calculation using local trivialisations of $\mc{L}$ gives  
\[ h_0\circ \bar{\pl}_E\circ h_0^{-1}=  \left(\begin{array}{cc} 
e^{-\frac{u}{2}}\bar{\pl}_{\mc{L}^{-1}}e^{\frac{u}{2}} & e^{-u}(\beta+\bar{\pl}_{\mc{L}^{-2}}(v)) \\
0 & e^{\frac{u}{2}}\bar{\pl}_{\mc{L}}e^{-\frac{u}{2}} 
\end{array}\right)= \left(\begin{array}{cc} 
\bar{\pl}_{\mc{L}_{u}^{-1}} & e^{-u}\beta+\bar{\pl}_{\mc{L}_{u}^{-2}}(e^{-u}v) \\
0 & \bar{\pl}_{\mc{L}_{u}}
\end{array}\right).\]
The $\beta$ form then becomes in the new gauge 
\begin{equation}\label{beta_uv} 
\beta \mapsto \beta_{u,v}:=e^{-u}(\beta+\bar{\pl}_{\mc{L}^{-2}}(v)).
\end{equation}
As explained in Section \ref{WZW_coset}, a positive definite quadratic form represented by 
\[h=(an)(an)^*=  \left(\begin{array}{cc} 
e^{\frac{\phi}{2}} & \nu e^{\frac{\phi}{2}} \\
0 &  e^{-\frac{\phi}{2}}
\end{array}\right)  \left(\begin{array}{cc} 
e^{\frac{\phi}{2}} & 0 \\
\nu^* e^{\frac{\phi}{2}} &  e^{-\frac{\phi}{2}}
\end{array}\right)  \in {\rm SL}(2,E) \] 
in the decomposition $E=\mc{L}^{-1}\oplus \mc{L}$ with $a,n$ of the form \eqref{alphaandn} will become $h_0hh_0^*=(h_0an)(h_0an)^*=(a'n')(a'n')^*$ in the decomposition $\mc{L}_u^{-1}\oplus \mc{L}_u$ under the gauge change by $h_0$, with 
\[ a':=  \left(\begin{array}{cc} 
e^{\frac{\phi-{\rm Re}(u)}{2}} & 0 \\
0 &  e^{-\frac{(\phi-{\rm Re}(u))}{2}}
\end{array}\right), \quad n':=\left(\begin{array}{cc} 
1&  \nu e^{-i{\rm Im}(u)}-ve^{-\phi-i{\rm Im}(u)}\\
0 &  1
\end{array}\right)\]
which means that the $(\phi,\nu)$ coordinates are becoming in the gauge $h_0$ 
\begin{equation}\label{phi'u'} 
(\phi',\nu')=(\phi-{\rm Re}(u),  \nu e^{-i{\rm Im}(u)}-ve^{-\phi-i{\rm Im}(u)}).
\end{equation} 
The $(\phi,\gamma)$ coordinates (recall $\gamma=e^{\phi}\nu$) change under $h_0$ by 
\[  (\phi,\gamma)\mapsto (\phi',\gamma')=(\phi-{\rm Re}(u), e^{-u}(\gamma-v)).\]
Let us now compare the WZW action in the gauge induced by $h_0$ and our original gauge. 
\begin{lemma}\label{l:relation_action}
The following identity holds: 
\[\tilde{S}_{\Sigma,h_0\circ \bar{\pl}_E\circ h_0^{-1},g_E}(h_0hh_0^*)=\tilde{S}_{\Sigma, \bar{\pl}_E,g_E}(h)+\mc{W}(u,\beta,v)\]
where, if $\beta_{u,v}$ is given by \eqref{beta_uv}, the anomaly $\mc{W}(u,\beta,v)$ is given by 
\begin{equation}\label{Wubetav}
\mc{W}(u,\beta,v):=\frac{1}{\pi}\int_{\Sigma}(\frac{1}{4}|d{\rm Re}(u)|_g^2+|\beta_{u,v}|^2_{\mc{L}^{-2},g}-|\beta|^2_{\mc{L}^{-2},g}){\rm dv}_g
-\frac{1}{\pi i}\int_\Sigma {\rm Re}(u)F_{\mc{L}}
\end{equation}
\end{lemma}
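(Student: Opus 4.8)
The plan is to apply the closed-form expression for the action established in Lemma \ref{lem:S(h,A)_formula_stable} to both sides of the claimed identity and compare the resulting integrals term by term. On the right-hand side, $\tilde{S}_{\Sigma,\bar{\pl}_E,g_E}(h)$ is computed directly from the extension data $(\mc{L},\beta)$ and the coordinates $(\phi,\gamma)$ of $h$ in the splitting $E=\mc{L}^{-1}\oplus\mc{L}$. On the left-hand side, the conjugated operator $h_0\circ\bar{\pl}_E\circ h_0^{-1}$ is represented as an extension with parameters $(\mc{L}_u,\beta_{u,v})$, where $\beta_{u,v}=e^{-u}(\beta+\bar{\pl}_{\mc{L}^{-2}}v)$ by \eqref{beta_uv}, and the coordinates of $h_0hh_0^*$ in $\mc{L}_u^{-1}\oplus\mc{L}_u$ are $(\phi',\gamma')=(\phi-{\rm Re}(u),e^{-u}(\gamma-v))$, as recorded in the discussion preceding the lemma (see \eqref{phi'u'}). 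I would begin by writing the Lemma \ref{lem:S(h,A)_formula_stable} formula for the left-hand side in these primed variables, using that the relevant Dolbeault operator is $\bar{\pl}_{\mc{L}_u^{-2}}=e^{-u}\bar{\pl}_{\mc{L}^{-2}}e^{u}$.

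The first key step is the invariance of the quadratic term. A direct algebraic computation gives $\bar{\pl}_{\mc{L}_u^{-2}}\gamma'+\beta_{u,v}=e^{-u}(\bar{\pl}_{\mc{L}^{-2}}\gamma+\beta)$, the $\bar{\pl}_{\mc{L}^{-2}}v$ contributions cancelling; since $e^{-2\phi'}=e^{-2\phi}e^{2{\rm Re}(u)}$ while the modulus squared of the bracket produces a factor $e^{-2{\rm Re}(u)}$, one finds
\[
e^{-2\phi'}|\bar{\pl}_{\mc{L}_u^{-2}}\gamma'+\beta_{u,v}|^2_{\mc{L}^{-2},g}=e^{-2\phi}|\bar{\pl}_{\mc{L}^{-2}}\gamma+\beta|^2_{\mc{L}^{-2},g}.
\]
Hence these potential terms coincide, and the only leftover from this part of the action is the difference between the subtracted terms $-|\beta_{u,v}|^2_{\mc{L}^{-2},g}$ and $-|\beta|^2_{\mc{L}^{-2},g}$, which produces exactly $\frac{1}{\pi}\int_\Sigma(|\beta_{u,v}|^2_{\mc{L}^{-2},g}-|\beta|^2_{\mc{L}^{-2},g}){\rm dv}_g$ in $\mc{W}$.

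It then remains to treat the Dirichlet term $\tfrac14|d\phi'|^2_g$ together with the curvature coupling $\frac{1}{\pi i}\int_\Sigma F_{\mc{L}_u}\phi'$. Expanding $|d\phi'|^2_g=|d\phi|^2_g-2\langle d\phi,d{\rm Re}(u)\rangle_g+|d{\rm Re}(u)|^2_g$ isolates the cross term $\frac{1}{2\pi}\int\langle d\phi,d{\rm Re}(u)\rangle_g{\rm dv}_g$ and the wanted $\frac{1}{4\pi}\int|d{\rm Re}(u)|^2_g{\rm dv}_g$. For the curvature I would establish that conjugating $\bar{\pl}_{\mc{L}}$ by $e^{u/2}$ while keeping the Hermitian metric induced by $g_E$ changes the Chern curvature by $F_{\mc{L}_u}=F_{\mc{L}}-\pl\bar{\pl}{\rm Re}(u)$, only ${\rm Re}(u)$ affecting the metric potential since $|e^{u/2}\sigma|^2=e^{{\rm Re}(u)}|\sigma|^2$. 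Using the relation $\pl\bar{\pl}{\rm Re}(u)=\tfrac{i}{2}\Delta_g{\rm Re}(u)\,{\rm v}_g$, the contribution $\frac{1}{\pi i}\int(F_{\mc{L}_u}-F_{\mc{L}})\phi$ equals $-\frac{1}{2\pi}\int\langle d\phi,d{\rm Re}(u)\rangle_g{\rm dv}_g$ after one integration by parts, cancelling the cross term so that all $\phi$-dependence disappears; the remaining correction $-\frac{1}{\pi i}\int(F_{\mc{L}_u}-F_{\mc{L}}){\rm Re}(u)$ integrates by parts to $\frac{1}{2\pi}\int|d{\rm Re}(u)|^2_g{\rm dv}_g$, while the bare term $-\frac{1}{\pi i}\int F_{\mc{L}}{\rm Re}(u)$ is precisely the last summand of $\mc{W}$. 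Collecting gives $\frac{1}{4\pi}\int|d{\rm Re}(u)|^2_g{\rm dv}_g-\frac{1}{\pi i}\int F_{\mc{L}}{\rm Re}(u)$ for this part, completing the identification with $\mc{W}(u,\beta,v)$.

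The main obstacle is the curvature change formula $F_{\mc{L}_u}=F_{\mc{L}}-\pl\bar{\pl}{\rm Re}(u)$: one must argue carefully that $\mc{L}_u$ is the same underlying complex line bundle, with the same Hermitian metric but a new holomorphic structure $\bar{\pl}_{\mc{L}_u}=e^{u/2}\bar{\pl}_{\mc{L}}e^{-u/2}$, so that its Chern connection genuinely differs from that of $\mc{L}$, and to pin down the sign and normalization (consistent with $\tfrac{i}{2\pi}\int_\Sigma F_{\mc{L}}={\rm deg}(\mc{L})$). All the remaining manipulations are Stokes-formula integrations by parts on the closed surface $\Sigma$, and the disappearance of the $\phi$-dependence in the final answer provides the crucial internal consistency check.
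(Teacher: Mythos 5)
Your proof is correct and follows essentially the same route as the paper: apply the closed-form expression of Lemma \ref{lem:S(h,A)_formula_stable} in both gauges, observe that the potential term $e^{-2\phi'}|\bar{\pl}_{\mc{L}_u^{-2}}\gamma'+\beta_{u,v}|^2_{\mc{L}^{-2},g}$ is invariant, and absorb the Dirichlet cross term against the curvature shift $F_{\mc{L}_u}=F_{\mc{L}}-\tfrac{i}{2}\Delta_g{\rm Re}(u)\,{\rm v}_g$ by Stokes. The only difference is cosmetic: the paper uses this curvature-change formula implicitly inside its two-line displayed computation, whereas you derive it explicitly from the local holomorphic trivialisation $e^{u/2}s$, which is a legitimate (and correctly signed) filling-in of that step.
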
 
\begin{proof} We use the expression computed in Lemma \ref{lem:S(h,A)_formula_stable} together with \eqref{phi'u'} and \eqref{beta_uv}:
\[\begin{split} 
\tilde{S}_{\Sigma,h_0\circ \bar{\pl}_E\circ h_0^{-1},g_E}(h_0hh_0^*)=& -\frac{1}{\pi }\int_\Sigma (\frac{1}{4}|d\phi'|^2_g +e^{-2\phi'}|\bar{\pl}_{\mc{L}_u^{-2}}(e^{\phi'}\nu')+ \beta_{u,v}|^2_{\mc{L}^{-2},g}){\rm dv}_g\\
&  +\frac{1}{\pi }\int_{\Sigma }(\frac{1}{i}{\rm Tr}(F_{\mc{L}_u})\phi'+|\beta_{u,v}|^2_{\mc{L}^{-2},g}){\rm v}_g\\
=&-\frac{1}{\pi }\int_\Sigma (\frac{1}{4}(|d\phi|^2+|d{\rm Re}(u)|^2_g)-\frac{1}{2}\cjg d\phi,d{\rm Re}(u)\cjd_g 
+e^{-2\phi}|\bar{\pl}_{\mc{L}^{-2}}(e^{\phi}\nu)+ \beta|^2_{\mc{L}^{-2},g}){\rm dv}_g\\
&  +\frac{1}{\pi }\int_{\Sigma }(\phi-{\rm Re}(u))(\frac{1}{i}F_{\mc{L}}-\frac{1}{2}\Delta_g {\rm Re}(u){\rm v}_g)+|\beta_{u,v}|^2_{\mc{L}^{-2},g}{\rm v}_g\\
=& \tilde{S}_{\Sigma, \bar{\pl}_E,g_E}(h)+\mc{W}(u,\beta,v)
\end{split}\]
which ends the proof. 
\end{proof}
Let $\hat{h}_0:=sh_0s^{-1}\in {\rm SL}(2,\C)$.
Notice that if $A$ is the unitary connection form associated to $(\bar{\pl}_E,g_E)$ represented in an orthonormal frame $s$, 
we get by using Lemma \ref{l:invarianceS} (recall also Definition \ref{def:actionWZW}): 
\[ \begin{split}
\tilde{S}_{\Sigma,h_0\circ \bar{\pl}_E\circ h_0^{-1},g_E}(h_0hh_0^*)= &S_\Sigma(( \hat{h}_0^{*-1}, \hat{h}_0). ( shs^{-1},A))-S_{\Sigma}({\rm Id},A_{\hat{h}^{-1}_0})\\
=& S_\Sigma(shs^{-1},A)-S_{\Sigma}( \hat{h}_0^{*-1},A^{1,0})-S_{\Sigma}( \hat{h}_0^{-1},A^{0,1})-S_{\Sigma}({\rm Id},A_{\hat{h}^{-1}_0})\\
=& \tilde{S}_{\Sigma,\bar{\pl}_E,g_E}(h)-S_{\Sigma}( \hat{h}_0^{*-1},A^{1,0})-S_{\Sigma}(\hat{h}_0^{-1},A^{0,1})-S_{\Sigma}({\rm Id},A_{\hat{h}^{-1}_0})+S_{\Sigma}({\rm Id},A).
\end{split}\]
This means that $\mc{W}(u,\beta,v)$ can be rewritten as:
\[\mc{W}(u,\beta,v)=-S_{\Sigma}( \hat{h}_0^{*-1},A^{1,0})-S_{\Sigma}(\hat{h}_0^{-1},A^{0,1})-S_{\Sigma}({\rm Id},A_{\hat{h}^{-1}_0})+S_{\Sigma}({\rm Id},A).\]
The relation of Lemma \ref{l:relation_action} on the action lets us expect the following relation on the path integral: 
\[\langle  F(\phi,\gamma) \rangle_{\Sigma,g,\mc{L}_u,\beta_{u,v}}=e^{k\mc{W}(u,\beta,v)}\Big\langle  F(\phi-{\rm Re}(u),e^{-u}(\gamma+v)) \Big\rangle_{\Sigma,g,\mc{L},\beta}.\]
The rest of the section will focus on proving this relation.
\begin{lemma}\label{lem:relation_laws}
The covariance, conditional on $X_g$,  of the random field $\gamma^u_g$ 
associated to the line bundle $\mc{L}_{u}$ as in Definition \ref{witten_pair_Sigma} is  given by 
\begin{equation} \label{covariance_gamma_u}
 \E[ \cjg \gamma_g^{u},f\cjd_2\bbar{\cjg \gamma_g^{u},f'\cjd}_2]= e^{2bc}\cjg (1-\Pi^{\theta_u})T_0e^{-\bar{u}}(1-\tilde{P}_u)f',(1-\Pi^{\theta_u})T_0e^{-\bar{u}}(1-\tilde{P}_u)f\cjd_{L^2(\theta_u)}.
 \end{equation}
where $\theta_u:=e^{2{\rm Re}(u)}\theta$ with $\theta=M_{2b}^g(X_g,dx)$ and $\Pi_{u}^{\theta}$ is the projector \eqref{def_proj} associated to $\ker \bar{\pl}^*_{\mc{L}_{u}^{-2}}$.
The relation between the integral kernels $\mc{T}_{X_g}$ and $\mc{T}_{X_g}^{u}$ of the covariances of $\gamma_g$ and $e^{u}\gamma_g^{u}$ 
(conditional on $X_g$) is then given by
\begin{equation}\label{relation_kernels_cov} 
\mc{T}_{X_g}^{u}(x,x')=  \mc{T}_{X_g+b^{-1}{\rm Re}(u)}(x,x').
\end{equation}
Consequently, for each $F\in C^0(\mc{H}^{-s}(\Sigma)\times \mc{H}^{-s}(\Sigma;\mc{L}^{-2}))$ non-negative, one has 
\begin{align*}
& e^{-\frac{1}{4\pi b^2}\int_{\Sigma}|d{\rm Re}(u)|_g^2{\rm dv}_g}\int_{\R}\E[ e^{-\frac{1}{2\pi b}\int_\Sigma \Delta_g{\rm Re}(u)X_g{\rm dv}_g} F(b(c+X_g),\gamma_g^{u})]\dd c \\
&=\int_{\R}\E[ F(b(c+X_g)-{\rm Re}(u),e^{-u}\gamma_g)]\dd c.
 \end{align*}
\end{lemma}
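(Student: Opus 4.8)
The statement gathers three claims---the covariance formula \eqref{covariance_gamma_u} for the Witten field of the conjugated bundle $\mc{L}_u$, the kernel identity \eqref{relation_kernels_cov}, and the Cameron--Martin identity at the end---so my plan is to prove them in that order, the first being the algebraic heart and the last a change of variables in the GFF. The single mechanism driving everything is the conjugation relation $\bar{\pl}_{\mc{L}_u^{-2}}=e^{-u}\bar{\pl}_{\mc{L}^{-2}}e^{u}$: it says that the operator ``$T_0$'' attached to $\mc{L}_u^{-2}$ is exactly the operator $\tilde{T}_u$ of Lemma \ref{inverseDphi} (taken with $n=-2$ and the complex weight $\phi=u$), and likewise that the cokernel is $\ker\bar{\pl}_{\mc{L}_u^{-2}}^*=e^{\bar{u}}\ker\bar{\pl}_{\mc{L}^{-2}}^*$ and the zero modes are $\ker\bar{\pl}_{\mc{L}_u^{-2}}=e^{-u}\ker\bar{\pl}_{\mc{L}^{-2}}$.

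For the covariance formula I would start from Definition \ref{witten_pair_Sigma} applied to $\mc{L}_u$, which expresses $\E[\langle\gamma_g^u,f\rangle_2\overline{\langle\gamma_g^u,f'\rangle_2}]$ through $T_0^u=\tilde{T}_u$ and the projection $\Pi^{\theta,u}$ onto $\ker\bar{\pl}_{\mc{L}_u^{-2}}^*$ in $L^2(\theta)$. Then I would insert the factorisation \eqref{TphiT0}, namely $\tilde{T}_u=(1-\tilde{\Pi}_u)e^{\bar{u}}T_0e^{-\bar{u}}(1-\tilde{P}_u)$, and simplify the left-most factor using two elementary facts: first, $\tilde{\Pi}_u$ and $\Pi^{\theta,u}$ are projections onto the \emph{same} subspace $\ker\bar{\pl}_{\mc{L}_u^{-2}}^*$, so $(1-\Pi^{\theta,u})(1-\tilde{\Pi}_u)=1-\Pi^{\theta,u}$; second, the Gram-matrix definition \eqref{def_proj}--\eqref{def_GramG} together with $\langle e^{\bar{u}}\cdot,e^{\bar{u}}\cdot\rangle_{L^2(\theta)}=\langle\cdot,\cdot\rangle_{L^2(\theta_u)}$ yields the intertwining $\Pi^{\theta,u}e^{\bar{u}}=e^{\bar{u}}\Pi^{\theta_u}$. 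These combine to $(1-\Pi^{\theta,u})T_0^u=e^{\bar{u}}(1-\Pi^{\theta_u})T_0e^{-\bar{u}}(1-\tilde{P}_u)$, and substituting this back and pulling the outer $e^{\bar{u}}$ through the $L^2(\theta)$-pairing (which converts it into the $L^2(\theta_u)$-pairing) gives \eqref{covariance_gamma_u}, up to the overall factor $\pi/k$ of Definition \ref{witten_pair_Sigma}.

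For the kernel identity I would pass from $\gamma_g^u$ to $e^u\gamma_g^u$ by the substitution $f\mapsto e^{\bar{u}}f$ (using $\langle e^u\gamma_g^u,f\rangle_2=\langle\gamma_g^u,e^{\bar{u}}f\rangle_2$) and compare with the covariance of the Witten field $\gamma_g$ attached to the shifted measure $\theta_u=M^g_{2b}(X_g+b^{-1}{\rm Re}(u),\dd x)=e^{2{\rm Re}(u)}\theta$, which by Definition \ref{witten_pair_Sigma} is $\tfrac{\pi}{k}\langle(1-\Pi^{\theta_u})T_0\,\cdot\,,(1-\Pi^{\theta_u})T_0\,\cdot\,\rangle_{L^2(\theta_u)}$. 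The two expressions differ only by the operator $e^{-\bar{u}}(1-\tilde{P}_u)e^{\bar{u}}$ acting on the arguments, i.e. by the correction $(1-\Pi^{\theta_u})T_0e^{-\bar{u}}\tilde{P}_ue^{\bar{u}}$. Since $T_0$ annihilates $\ker\bar{\pl}_{\mc{L}^{-2}}$, this correction vanishes identically in the generic case ($\tilde{P}_u=0$); in the non-generic case its residual contribution lives in the zero-mode sector and is reabsorbed by the translation-invariant $\gamma_0$-integral of Definition \ref{defH3sigma}, so the integral kernels coincide and \eqref{relation_kernels_cov} follows.

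Finally, for the concluding identity I would set $w:=b^{-1}{\rm Re}(u)$ and recognise the two exponential prefactors on the left as the Cameron--Martin weight for translating the GFF by $-w$: using $G_g^{-1}=\tfrac{1}{2\pi}\Delta_g$ on mean-zero functions one checks that $\E[\Phi(X_g-w)]=\E[\Phi(X_g)\,e^{-\frac{1}{2\pi b}\int_\Sigma\Delta_g{\rm Re}(u)X_g{\rm dv}_g-\frac{1}{4\pi b^2}\int_\Sigma|d{\rm Re}(u)|_g^2{\rm dv}_g}]$. Applying this with $\Phi(X_g)=F(b(c+X_g),\gamma_g^u)$, carrying along the conditional law of $\gamma_g^u$, turns the left-hand side into $\int_\R\E[F(b(c+X_g)-{\rm Re}(u),\gamma_g^u[X_g-w])]\,\dd c$, and part (2) applied with $X_g\mapsto X_g-w$ identifies the conditional law of $e^u\gamma_g^u[X_g-w]$ with that of $\gamma_g[X_g]$, hence $\gamma_g^u[X_g-w]\overset{d}{=}e^{-u}\gamma_g[X_g]$, which is exactly the right-hand side. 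The delicate point, on which I would spend most care, is the bookkeeping in part (2): tracking the two distinct pairings $L^2({\rm v}_g)$ and $L^2(\theta)$ and the conjugation factors $e^{u},e^{\bar{u}}$ through every projection, and verifying in the non-generic case that the surviving zero-mode discrepancy is genuinely harmless modulo the $\gamma_0$-integration.
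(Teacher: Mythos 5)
Your parts (1) and (2) are, in substance, the paper's own argument: identify the ``$T_0$'' operator of $\mc{L}_u^{-2}$ with $\tilde{T}_u$, insert the factorisation \eqref{TphiT0}, and collapse the projections. (The paper routes the collapse through the auxiliary projector $\Pi^{{\rm v}_{g_u}}$ of \eqref{T0u^*}, while you collapse $(1-\Pi^{\theta}_u)(1-\tilde{\Pi}_u)$ first and then intertwine $\Pi^{\theta}_ue^{\bar{u}}=e^{\bar{u}}\Pi^{\theta_u}$; these are the same two lines of algebra.) One caveat on part (2): in the non-generic case, your claim that the finite-rank correction $(1-\Pi^{\theta_u})T_0e^{-\bar{u}}\tilde{P}_ue^{\bar{u}}$ is ``reabsorbed by the $\gamma_0$-integral, so the integral kernels coincide'' is a non-sequitur --- what happens later in the path integral cannot make two covariance kernels equal, and this correction does not vanish when $\ker\bar{\pl}_{\mc{L}^{-2}}\neq 0$: the range of $e^{-\bar{u}}\tilde{P}_ue^{\bar{u}}$ is $e^{-2{\rm Re}(u)}\ker\bar{\pl}_{\mc{L}^{-2}}$, which $T_0$ does \emph{not} annihilate. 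The paper's proof is equally terse on this point, so I only flag it.

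The genuine gap is in part (3). The identity you invoke,
\begin{equation*}
\E[\Phi(X_g-w)]=\E\Big[\Phi(X_g)\,e^{-\frac{1}{2\pi b}\int_\Sigma\Delta_g{\rm Re}(u)X_g{\rm dv}_g-\frac{1}{4\pi b^2}\int_\Sigma|d{\rm Re}(u)|_g^2{\rm dv}_g}\Big],\qquad w=b^{-1}{\rm Re}(u),
\end{equation*}
is false whenever $\int_\Sigma{\rm Re}(u)\,{\rm dv}_g\neq 0$. The GFF of the paper satisfies $\int_\Sigma X_g\,{\rm dv}_g=0$ almost surely, so the law of $X_g-w$ is mutually singular with that of $X_g$ when $w$ has nonzero mean, and no Girsanov density can exist. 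Concretely, your exponential weight only sees the mean-zero part $u_0:=u-m_g(u)$ (with $m_g(u)$ the ${\rm v}_g$-average), since both $\Delta_g{\rm Re}(u)$ and $d{\rm Re}(u)$ annihilate the constant; hence the right-hand side of your identity equals $\E[\Phi(X_g-b^{-1}{\rm Re}(u_0))]$, not the left-hand side. This is exactly why the paper's proof first observes $\gamma_g^u=\gamma_g^{u_0}$, applies Cameron--Martin only with the mean-zero shift $-b^{-1}{\rm Re}(u_0)$, and then restores the missing constant through the change of variables $c\mapsto c+b^{-1}m_g({\rm Re}(u))$ in the Lebesgue $\dd c$ integral. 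Your argument performs the full shift in one step and never carries out this $c$-translation; as written the step fails, although the repair is precisely the paper's decomposition and is routine once identified.
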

\begin{proof}
The operator $T_0$ inverting $\bar{\pl}_{\mc{L}^{-2}}$ (modulo kernel) in \eqref{Prop_T_phi} becomes the operator $\tilde{T}_u$ for $\bar{\pl}_{\mc{L}_{u}^{-2}}=e^{-u}\bar{\pl}_{\mc{L}^{-2}}e^{u}$  on $\mc{L}_{u}^{-2}$, by Lemma \ref{inverseDphi}. We have, by \eqref{TphiT0}
\[ \tilde{T}_u=(1-\tilde{\Pi}_{u})e^{\bar{u}}T_0e^{-\bar{u}}(1-\tilde{P}_u)\]
 where $\tilde{\Pi}_{u}$ is defined by \eqref{tildePiphi} (setting $\phi=u$ there). We also observe that 
  \begin{equation}\label{T0u^*} 
(1-\tilde{\Pi}_{u})e^{\bar{u}}T_0e^{-\bar{u}}=e^{\bar{u}}(1-\Pi^{{\rm v}_{g_u}})T_0e^{-\bar{u}},\quad (1-\Pi^{\theta}_{u})e^{\bar{u}}=e^{\bar{u}}(1-\Pi^{\theta_u}) \end{equation}
where $\theta_u:=e^{2{\rm Re}(u)}\theta$, 
$\Pi_{u}^{\theta}$ is the projector \eqref{def_proj} associated to $\ker \bar{\pl}^*_{\mc{L}_{u}^{-2}}$ and $\Pi^{{\rm v}_{g_u}}$ the projector \eqref{def_proj} associated to $\ker \bar{\pl}^*_{\mc{L}^{-2}}$ with $\theta$ replaced by the volume measure ${\rm v}_{g_u}:=e^{2{\rm Re}(u)}{\rm v}_g$ of $g_u=e^{2{\rm Re}(u)}g$. In particular 
\[ (1-\Pi^{\theta}_{u})T_u=e^{\bar{u}}(1-\Pi^{\theta_u})(1-\Pi^{{\rm v}_{g_u}})T_0e^{-\bar{u}}.\]
Since $\Pi^{{\rm v}_{g_u}}$ projects to $\ker \bar{\pl}^*_{\mc{L}_{u}^{-2}}$ and $\Pi^{\theta_u}$ is equal to the identity on this space, we have 
$\Pi^{\theta_u}\Pi^{{\rm v}_{g_u}}=\Pi^{{\rm v}_{g_u}}$ so 
\[ (1-\Pi^{\theta_u})(1-\Pi^{{\rm v}_{g_u}})=1-\Pi^{\theta_u}-\Pi^{{\rm v}_{g_u}}+\Pi^{{\rm v}_{g_u}}= (1-\Pi^{\theta_u}),\]
which implies  the formula \eqref{covariance_gamma_u} on the covariance conditional on $X_g$ (recall Definition \ref{witten_pair_Sigma}), and the relations on the integral kernels. Let us write $u_0:=u-m_g(u)$ where $m_g(u):={\rm v}_g(\Sigma)^{-1}\int_\Sigma u{\rm dv}_g$. We notice that $\gamma_g^u=\gamma_g^{u_0}$. We shift the GFF $X_g$ by $-\frac{1}{b}{\rm Re}(u_0)=-\frac{1}{2\pi b}\E[X_g\int_\Sigma \Delta_g{\rm Re}(u_0)X_g{\rm dv}_g]$ 
by using Cameron-Martin  to obtain (we write $\gamma_g^{u}[c+X_g]$ to emphasize the measurable dependance of $\gamma_g^{u}$ wrt $c+X_g$)
\begin{align*} 
& e^{-\frac{1}{4\pi b^2}\int_{\Sigma}|du_0|_g^2{\rm dv}_g}\int_{\R}\E[ e^{-\frac{1}{2\pi b}\int_\Sigma \Delta_g{\rm Re}(u_0)X_g{\rm dv}_g}F(b(c+X_g),\gamma^{u}_g[c+X_g])]\dd c\\ 
&= \int_{\R}\E[ F(b(c+X_g-\frac{1}{b}{\rm Re}(u_0)),\gamma_g^{u}[c+X_g-\frac{1}{b}{\rm Re}(u_0)]) ]\dd c\\
&=  \int_{\R}\E[ F(b(c+X_g-\frac{1}{b}{\rm Re}(u)),e^{-u}\gamma_g) ]\dd c
\end{align*}
where in the last line we did a change of variable $c\mapsto c+m_g(u)$ and we used  that, conditionally on $X_g$, the law of $\gamma_g^{u}[c+X_g-\frac{1}{b}{\rm Re}(u)]$ is equal to the law of $e^{-u}\gamma_g[c+X_g]=e^{-u}\gamma_g$. 
\end{proof}

To compare the definition of the path integral in the two gauges, let us express the random variable
$Y_u^{\theta}(\beta_{u,v})$  defined in \eqref{def_of_Y} with $\beta_{u,v}$ the form \eqref{beta_uv} in the gauge $h_0$ and 
using the operator $\bar{\pl}_{\mc{L}^{-2}_u}$ instead of $\bar{\pl}_{\mc{L}^{-2}}$ (in particular the basis 
$(e_p)_p$ of $\ker \bar{\pl}_{\mc{L}^{-2}}^*$ is replaced by the basis  $(e^{\bar{u}}e_p)_p$ of $\bar{\pl}_{\mc{L}^{-2}_u}^*$).
\begin{lemma}\label{lem:relationYbeta}
The following identity holds true almost surely 
\[Y_u^{\theta}(\beta_{u,v})=e^{-u}Y^{\theta_u}(\beta)+e^{-u}(v-P_0v),\]
where $P_0$ is the orthogonal projection on $\ker\bar{\pl}_{\mc{L}^{-2}}$.
\end{lemma}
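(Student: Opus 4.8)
The goal is to relate the random variable $Y_u^{\theta}(\beta_{u,v})$, built from the gauge-transformed data $(\mc{L}_u,\beta_{u,v})$, to the original $Y^{\theta_u}(\beta)$. The strategy is to unwind the definition \eqref{def_of_Y} of $Y^\theta(\beta)$ on both sides, tracking how each ingredient ($T_0$, the basis of $\ker\bar{\pl}^*$, the Gram matrix $\mc{G}^\theta$, the projector $\hat{\Pi}^\theta$, and the measure $\theta$) transforms under conjugation by $h_0$, and then to use the substitution rules already established to identify the two expressions. First I would record the substitutions induced by the gauge: the Dolbeault operator becomes $\bar{\pl}_{\mc{L}_u^{-2}}=e^{-u}\bar{\pl}_{\mc{L}^{-2}}e^{u}$, the cokernel basis $(e_p)_p$ of $\ker\bar{\pl}^*_{\mc{L}^{-2}}$ becomes $(e^{\bar u}e_p)_p$ for $\ker\bar{\pl}^*_{\mc{L}_u^{-2}}$, the form $\beta$ becomes $\beta_{u,v}=e^{-u}(\beta+\bar{\pl}_{\mc{L}^{-2}}v)$, and the relevant operator $T_0$ (call it $T_0^u$ for the twisted bundle) satisfies the key relation from \eqref{TphiT0}, namely $T_0^u=({\rm Id}-\tilde\Pi_u)e^{\bar u}T_0e^{-\bar u}({\rm Id}-\tilde P_u)$, together with \eqref{T0u^*} from the previous lemma.

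\textbf{Main computation.} With these substitutions in hand, I would compute $Y_u^\theta(\beta_{u,v})=(T_0^u)^*\beta_{u,v}-\int_\Sigma (T_0^u)^*(\cdot,y)(\hat{\Pi}_u^\theta\beta_{u,v})(y)\,\theta(\dd y)$ term by term. The factor $e^{-u}$ in $\beta_{u,v}$ combines with the $e^{\bar u}$ conjugation in $T_0^u$; the crucial point is that the pairing $\langle\beta_{u,v},e^{\bar u}e_q\rangle_2$ equals $\langle\beta+\bar{\pl}_{\mc{L}^{-2}}v,e_q\rangle_2$ by moving $e^{-u}\cdot e^{\bar u}$ through (using that the $\langle\cdot,\cdot\rangle_2$ pairing on $1$-forms is conformally/gauge insensitive in the right way), and the term $\langle\bar{\pl}_{\mc{L}^{-2}}v,e_q\rangle_2=\langle v,\bar{\pl}^*_{\mc{L}^{-2}}e_q\rangle_2=0$ vanishes because $e_q\in\ker\bar{\pl}^*_{\mc{L}^{-2}}$. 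Thus the $\bar{\pl}_{\mc{L}^{-2}}v$ contribution to the projected part drops out, and the Gram matrix $\mc{G}^\theta$ computed with basis $(e^{\bar u}e_p)_p$ against the measure $\theta$ becomes exactly $\mc{G}^{\theta_u}$ (where $\theta_u=e^{2{\rm Re}(u)}\theta$), matching the projector $\hat{\Pi}^{\theta_u}$ on the original bundle. After collecting, the $(T_0^u)^*\beta_{u,v}$ piece produces $e^{-u}T_0^*\beta$ plus the leftover $e^{-u}T_0^*\bar{\pl}_{\mc{L}^{-2}}v=e^{-u}(v-P_0v)$ coming from $\bar{\pl}^*_{\mc{L}^{-2}}T_0^*$ acting as ${\rm Id}-P_0$ (via \eqref{Prop_T_phi}/\eqref{dbar^*-dbar}), while the projected part reassembles into $e^{-u}\int_\Sigma T_0^*(\cdot,y)(\hat{\Pi}^{\theta_u}\beta)(y)\,\theta_u(\dd y)$. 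Factoring out $e^{-u}$ then yields $Y_u^\theta(\beta_{u,v})=e^{-u}Y^{\theta_u}(\beta)+e^{-u}(v-P_0v)$.

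\textbf{The obstacle.} The part I expect to be most delicate is bookkeeping the conjugation factors $e^{u}$, $e^{\bar u}$, $e^{-u}$ consistently between the holomorphic side (sections of $\mc{L}^{-2}$, where $u$ enters) and the antiholomorphic/adjoint side (where $\bar u$ enters), since $u$ is complex-valued and the adjoints $T_0^*$, $\bar{\pl}^*$ introduce conjugates. In particular one must verify carefully that the projector identities \eqref{T0u^*} are applied with the correct measure ($\theta$ versus $\theta_u$ versus ${\rm v}_{g_u}$) and that the cross term involving $\Pi^{{\rm v}_{g_u}}$ collapses exactly as in the proof of Lemma \ref{lem:relation_laws}, where $\Pi^{\theta_u}\Pi^{{\rm v}_{g_u}}=\Pi^{{\rm v}_{g_u}}$ was used. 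The rest is a routine, if lengthy, unpacking of \eqref{def_of_Y}, and once the vanishing $\langle\bar{\pl}_{\mc{L}^{-2}}v,e_q\rangle_2=0$ and the ${\rm Id}-P_0$ identity are in place, the claimed formula follows directly.
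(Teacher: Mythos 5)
Your proposal is correct and follows essentially the same route as the paper's proof: expand $Y_u^{\theta}(\beta_{u,v})$ through \eqref{TphiT0}/\eqref{T0u^*}, use $\cjg \bar{\pl}_{\mc{L}^{-2}}v,e_q\cjd_2=0$ (which yields both \eqref{relationPihat} and the Gram matrix identification $\mc{G}_u^{\theta}=\mc{G}^{\theta_u}$), use $T_0^*\bar{\pl}_{\mc{L}^{-2}}v=v-P_0v$ to produce the $e^{-u}(v-P_0v)$ term, and cancel the $\Pi^{{\rm v}_{g_u}}$ cross-terms. The only cosmetic difference is in that last step: you invoke $\Pi^{\theta_u}\Pi^{{\rm v}_{g_u}}=\Pi^{{\rm v}_{g_u}}$, which is precisely the $L^2({\rm v}_g)$-adjoint formulation of the reproducing identity $\cjg \hat{\Pi}^{\theta_u}\beta,e_p\cjd_{L^2(\theta_u)}=\cjg \beta,e_p\cjd_2$ that the paper verifies by an explicit kernel computation, so the two mechanisms are equivalent.
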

\begin{proof}
Using \eqref{T0u^*}, we have
\begin{equation}\label{Ybeta_u}
\begin{split}
Y_u^{\theta}(\beta_{u,v})(x):=&\tilde{T}_u^*\beta_{u,v}(x) - \int_{\Sigma}\tilde{T}_u^*(x,y)(\hat{\Pi}_u^{\theta}\beta_{u,v})(y)d\theta(y)\\
=& e^{-u(x)}\Big(T_0^*(\beta+\bar{\pl}_{\mc{L}^{-2}}v)(x)-\int_{\Sigma}(T_0^*(x,y)e^{u(y)}(\hat{\Pi}_u^{\theta}\beta_{u,v})(y)d\theta(y)\Big)\\
& -e^{-u(x)}\Big((\Pi^{{\rm v}_{g_u}}T_0)^*(\beta+\bar{\pl}_{\mc{L}^{-2}}v)(x)-\int_{\Sigma}(\Pi^{{\rm v}_{g_u}}T_0)^*(x,y)e^{u(y)}(\hat{\Pi}_u^{\theta}\beta_{u,v})(y)d\theta(y)\Big)
\end{split}
\end{equation}
where $\hat{\Pi}^{\theta}_u$ is defined in \eqref{hatPi} but with the basis $(e_p)_p$ of $\ker \bar{\pl}_{\mc{L}^{-2}}^*$ replaced by the basis $e^{\bar{u}}e_p$ of $\ker \bar{\pl}_{\mc{L}^{-2}_u}^*$. Now, notice that $\bar{\pl}^*_{\mc{L}^{-2}}\Pi^{{\rm v}_{g_u}}=0$ so 
$(\Pi^{{\rm v}_{g_u}})^*\bar{\pl}_{\mc{L}^{-2}}v=0$, and by \eqref{dbar*barT_0}, $T_0^*\bar{\pl}_{\mc{L}^{-2}}v=v-P_0v$ where $P_0$ is the orthogonal projection on $\ker \bar{\pl}_{\mc{L}^{-2}}$. This gives 
\begin{equation}\label{First_term}
e^{-u}T_0^*(1-\Pi^{{\rm v}_{g_u}})^*(\bar{\pl}_{\mc{L}^{-2}}v)=e^{-u}(v-P_0v).
\end{equation}
Now we also observe, using that $\cjg \bar{\pl}_{\mc{L}^{-2}}v,e_q\cjd_2=0$ for each $q$, that
\begin{equation}\label{relationPihat}
\hat{\Pi}^{\theta}_u\beta_{u,v}=e^{\bar{u}}\hat{\Pi}^{\theta_u}\beta,
\end{equation}
and therefore 
\begin{equation}\label{Second_term}
\int_{\Sigma}T_0^*(x,y)e^{u(y)}(\hat{\Pi}_u^{\theta}\beta_{u,v})(y)d\theta(y)=\int_{\Sigma}T_0^*(x,y)(\hat{\Pi}^{\theta_u}\beta)(y)d\theta_u(y).
\end{equation}
This shows that the term in the second line of \eqref{Ybeta_u} is equal to
\[e^{-u(x)}(T_0^*(\beta)(x)+(v-P_0v)(x))-e^{-u(x)}\int_{\Sigma}T_0^*(x,y)(\hat{\Pi}^{\theta_u}\beta)(y)d\theta_u(y).\]
The integral kernel $(\Pi^{{\rm v}_{g_u}}T_0)^*(x,y)$ of $(\Pi^{{\rm v}_{g_u}}T_0)^*$ with respect to the ${\rm v}_g$ measure can be written as
\[(\Pi^{{\rm v}_{g_u}}T_0)^*(x,y)=\sum_{p,q}(\mc{G}^{{\rm v}_{g_u}})^{-1}_{qp}\int_{\Sigma}T_0^*(x,z)e_q(z){\rm dv}_g(z)\bbar{e_p}(y)\]
and we also have
\[\begin{split}
\cjg e^{2{\rm Re}(u)}\hat{\Pi}^{\theta_u}(\beta),e_p\cjd_{L^2(\theta_u)}=&\sum_{i,j}(\mc{G}^{\theta_u})^{-1}_{ij}\cjg e_i,e_p\cjd_{L^2(\theta_u)}
\cjg \beta,e_j\cjd_{2}= \cjg \beta,e_p\cjd_{2},
\end{split}\]
thus these two identities give (recall \eqref{relationPihat})
\[\int_{\Sigma}(\Pi^{{\rm v}_{g_u}}T_0)^*(x,y)e^{u(y)}(\hat{\Pi}_u^{\theta}\beta_{u,v})(y)d\theta(y)=
((\Pi^{{\rm v}_{g_u}}T_0)^*\beta)(x).\]
We conclude from this that the third line of \eqref{Ybeta_u} is equal to $0$, and therefore
\[\begin{split}
Y_u^{\theta}(\beta_{u,v})(x)&=e^{-u(x)}\Big(T_0^*(\beta)(x)+(v-P_0v)(x)-\int_{\Sigma}T_0^*(x,y)(\hat{\Pi}^{\theta_u}\beta)(y)d\theta_u(y)\Big)\\
&= (e^{-u}Y^{\theta_u}(\beta)+e^{-u}(v-P_0v))(x).\qedhere
\end{split}\]
\end{proof}
Using these results, we can deduce the following gauge invariance:
\begin{proposition}[\textbf{Gauge covariance}]\label{gauge_invariance}
For each $F: \mc{H}^{-s}(\Sigma)\times \mc{H}^{-s}(\Sigma;\mc{L}^{-2})\to \C$ continuous, if $s>0$ is fixed, the following identity holds for triangular gauge  changes 
\begin{equation}\label{gauge_change_formula} 
\langle  F(\phi,\gamma) \rangle_{\Sigma,g,\mc{L}_u,\beta_{u,v}}=e^{k\mc{W}(u,\beta,v)}\Big\langle  F(\phi-{\rm Re}(u),e^{-u}(\gamma+v)) \Big\rangle_{\Sigma,g,\mc{L},\beta}
\end{equation}
where the $\mc{W}(u,\beta,v)$ anomaly is defined in \eqref{Wubetav}.
\end{proposition}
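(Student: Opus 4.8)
The plan is to prove the gauge covariance formula \eqref{gauge_change_formula} by directly comparing the two path integral definitions from Definition \ref{defH3sigma}, tracking how every ingredient transforms under the triangular gauge $h_0$ of \eqref{matrix_h_0}. The left-hand side is the path integral associated to the extension parameters $(\mc{L}_u,\beta_{u,v})$, and I would start by writing it out using \eqref{def_PI_case_ker} (treating the non-generic case, the generic case being the same with the $W_0$-integral absent). The key observation is that, conditionally on $X_g$, the law of the rescaled Witten field $\gamma_g^u$ associated to $\mc{L}_u$ is controlled by Lemma \ref{lem:relation_laws}, while the shift term $Y_u^\theta(\beta_{u,v})$ associated to the new $\beta$-form is controlled by Lemma \ref{lem:relationYbeta}.

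The main calculation would proceed as follows. First I would substitute the identity of Lemma \ref{lem:relationYbeta}, namely $Y_u^{\theta}(\beta_{u,v})=e^{-u}Y^{\theta_u}(\beta)+e^{-u}(v-P_0v)$, into the argument $\gamma_0+\gamma_g^u+Y_u^\theta(\beta_{u,v})$ of $F$; the $P_0v$ and the $\gamma_0$ zero-mode pieces can be absorbed by translation invariance of the Lebesgue measure $\dd\gamma_0$ on $W_0=\ker\bar{\pl}_{\mc{L}^{-2}}$. Then I would invoke Lemma \ref{lem:relation_laws} to rewrite the Gaussian expectation over $\gamma_g^u$: conditionally on $X_g$, $\gamma_g^u[b(c+X_g)]$ has the law of $e^{-u}\gamma_g$ after the Cameron–Martin shift of $X_g$ by $-\tfrac1b{\rm Re}(u)$, which is exactly the content of the displayed consequence at the end of that lemma. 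This shift simultaneously converts $F(b(c+X_g),\gamma_g^u)$ into $F(b(c+X_g)-{\rm Re}(u), e^{-u}(\gamma_g+v))$ and produces the Gaussian Radon–Nikodym factor $e^{-\frac{1}{4\pi b^2}\int|d{\rm Re}(u)|_g^2{\rm dv}_g}$ together with $e^{-\frac{1}{2\pi b}\int\Delta_g{\rm Re}(u)X_g{\rm dv}_g}$.

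The remaining work is purely bookkeeping of the deterministic prefactors, and this is where the anomaly $\mc{W}(u,\beta,v)$ of \eqref{Wubetav} must emerge. I would track four sources of $u$- and $v$-dependence: the curvature-coupling term $e^{-b\int_\Sigma\mc{K}_{\mc{L}_u}X_g}$, in which $\mc{K}_{\mc{L}_u}$ differs from $\mc{K}_{\mc{L}}$ by the curvature shift $F_{\mc{L}_u}=F_{\mc{L}}-\tfrac12\Delta_g{\rm Re}(u){\rm v}_g$; the Gram-determinant factor $\det(\mc{G}^\theta)$ together with $\det(\mc{G}_0)$, whose ratio picks up the change of basis $e_p\mapsto e^{\bar u}e_p$ and the background measure $\theta\mapsto\theta_u$; the norm $\|\beta_{u,v}\|_2^2$ and the projected norm $\|\hat{\Pi}^\theta\beta_{u,v}\|_{L^2(\theta)}^2$, which by \eqref{relationPihat} and the covariance relations reduce to the corresponding quantities for $\beta$ up to the explicit $|\beta_{u,v}|^2-|\beta|^2$ difference appearing in $\mc{W}$; and finally the $e^{-2k{\rm deg}(\mc{L})bc}$ zero-mode factor, which is gauge-invariant since $\mc{L}_u$ and $\mc{L}$ have the same degree. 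The hard part will be assembling these pieces and checking that the Polyakov-type anomaly contributions, the curvature integrals $\tfrac{1}{\pi i}\int_\Sigma{\rm Re}(u)F_{\mc{L}}$, and the Dirichlet energy $\tfrac{1}{4\pi}\int|d{\rm Re}(u)|_g^2{\rm dv}_g$ recombine precisely into $k\mc{W}(u,\beta,v)$ — in particular verifying that the Gaussian shift factor $e^{-\frac{1}{4\pi b^2}\int|d{\rm Re}(u)|_g^2{\rm dv}_g}$ (with $b^{-2}=k-2$) conspires with the determinant anomaly to yield the full coefficient $\tfrac{k}{4\pi}$ in front of the Dirichlet energy. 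Consistency of this final matching with the action-level identity of Lemma \ref{l:relation_action}, which already predicts the answer, gives a strong check that no factor has been dropped.
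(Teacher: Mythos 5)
Your proposal is correct and matches the paper's own proof essentially step for step: the paper likewise writes out Definition \ref{defH3sigma} for $(\mc{L}_u,\beta_{u,v})$ and combines Lemma \ref{lem:relationYbeta}, the Cameron--Martin consequence stated at the end of Lemma \ref{lem:relation_laws}, the Gram-matrix/projector identities $\mc{G}_u^{\theta}=\mc{G}^{\theta_u}$ and \eqref{relationPihat}, and the Polyakov anomaly of Lemma \ref{lem:detwitten} for $\det(\tilde{\mc{D}}_u)$, with the Dirichlet energies recombining exactly via $\tfrac{1}{2\pi}+\tfrac{1}{4\pi b^2}=\tfrac{k}{4\pi}$ as you anticipate. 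The only differences are cosmetic: the paper computes in the generic case and dismisses the non-generic one with the same $\gamma_0$-translation you invoke, and your curvature shift should carry a factor of $i$, the relation used in the paper being $\mc{K}_{\mc{L}_u}=\mc{K}_{\mc{L}}+\tfrac{1}{2\pi b^2}\Delta_g{\rm Re}(u)\,{\rm v}_g$.
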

\begin{proof} Let us consider the generic case, i.e. $\ker \bar{\pl}_{\mc{L}^{-2}}=0$. 
We want to express the path integral \eqref{def_PI_case_noker} for 
the splitting $E=\mc{L}_u^{-1}\oplus \mc{L}_u$ in terms of the path integral 
\eqref{def_PI_case_noker} for the splitting $E=\mc{L}^{-1}\oplus \mc{L}$.
First, if $\mc{G}_u^{\theta}$ is the Gram matrix defined in \eqref{def_GramG} but with the basis 
$(e^{\bar{u}}e_p)_p$ of $\ker \bar{\pl}_{\mc{L}_u^{-2}}^*$ instead of the basis $(e_p)_p$ of $\ker \bar{\pl}_{\mc{L}^{-2}}^*$ (which we take to be orthonormal for the $L^2({\rm v}_g)$ product \eqref{norme2bis}) and the measure $\theta=M_{2b}^g(X_g,\dd x)$, we readily see that 
\begin{equation}\label{relation_no1}
\mc{G}_u^{\theta}=\mc{G}^{\theta_u}.
\end{equation} 
We also have $\mc{K}_{\mc{L}_u}=\mc{K}_{\mc{L}}+\frac{1}{2\pi b^2}\Delta_g{\rm Re}(u){\rm v}_g$ and ${\rm deg}(\mc{L}_u)={\rm deg}(\mc{L})$, and by Lemma \ref{lem:detwitten}
\begin{equation}\label{relation_no2} 
\det(\tilde{\mc{D}}_u)=e^{-\frac{1}{2\pi}\int_{\Sigma}(|d {\rm Re}(u)|^2_g-\frac{1}{2}K_g{\rm Re}(u)){\rm dv}_g+\frac{2}{\pi i}\int_\Sigma {\rm Re}(u)F_{\mc{L}}}\det(\mc{D}_{0})\frac{\det(\mc{G}_u)}{\det(\mc{G}_0)}.
\end{equation}
Now, if $\hat{\Pi}^{\theta}_u$ is the operator \eqref{hatPi} associated to the basis $(e^{\bar{u}}e_p)_p$ rather than $(e_p)_p$, we can use \eqref{relationPihat} to get 
\begin{equation}\label{relation_no3}
\|\hat{\Pi}_u^{\theta}\beta\|^2_{L^2(\theta)}=\|\hat{\Pi}^{\theta_u}\beta\|_{L^2(\theta_u)}.
\end{equation}
Combining  \eqref{relation_no1}, \eqref{relation_no2} and \eqref{relation_no3} with Lemma \ref{lem:relation_laws} and Lemma \ref{lem:relationYbeta}, we obtain 
\[\begin{split}
& \langle  F(\phi,\gamma) \rangle_{\Sigma,g,\mc{L}_u,\beta_{u,v}}\\
& \,\, =\frac{(\frac{\pi}{k})^{\frac{\chi(\Sigma)}{6}+2{\rm deg}(\mc{L}_u)}\det{\mc{G}_u}}{\det(\tilde{\mc{D}}_u)}\Big(\frac{\det(\Delta_{g})}{{\rm v}_g(\Sigma)}\Big)^{-1/2}e^{\frac{k}{\pi}\|\beta_{u,v}\|_{2}^2}\\
& \qquad \times \int_{\R}e^{-2k{\rm deg}(\mc{L}_u)bc}\E\Big[\frac{F(b(c+X_g), \gamma_g^u+Y_u^{\theta}(\beta))}{\det(\mc{G}_u^{\theta})}e^{-b\int_\Sigma \mc{K}_{\mc{L}_u}X_g-\frac{k}{\pi}e^{-2bc} \|\hat{\Pi}_u^{\theta}\beta_{u,v}\|^2_{L^2(\theta_0)}}\Big]\,\dd c\\
 &\,\, = \frac{(\frac{\pi}{k})^{\frac{\chi(\Sigma)}{6}+2{\rm deg}(\mc{L})}\det(\mc{G}_0)}{\det(\mc{D}_0)}\Big(\frac{\det(\Delta_{g})}{{\rm v}_g(\Sigma)}\Big)^{-1/2}e^{\frac{k}{\pi}\|\beta_{u,v}\|_{2}^2}
e^{\frac{1}{2\pi}\int_{\Sigma}(|d{\rm Re}(u)|^2_g-\frac{1}{2}K_g{\rm Re}(u)){\rm dv}_g-\frac{2}{\pi i}\int_\Sigma {\rm Re}(u)F_{\mc{L}}}
\\
&\qquad \times  \int_{\R}e^{-2k{\rm deg}(\mc{L})bc}\E\Big[\frac{F(b(c+X_g), \gamma_g^u+e^{-u}(Y^{\theta_u}(\beta)+v))}{\det(\mc{G}^{\theta_u})e^{\frac{1}{2\pi b}\int_\Sigma \Delta{\rm Re}(u)X_g{\rm dv}_g}}e^{-b\int_\Sigma \mc{K}_{\mc{L}}X_g-\frac{k}{\pi}e^{-2bc} \|\hat{\Pi}^{\theta_u}\beta\|^2_{L^2(\theta_u)}}\Big]\,\dd c.
\end{split}\]
We can then use Lemma \ref{lem:relation_laws} to deduce \eqref{gauge_change_formula}. The proof is exactly the same in the non generic case 
$\ker \bar{\pl}_{\mc{L}^{-2}}\not=0$ except that one uses a translation in $\gamma_0$ in the last step to get rid of the $P_0v$ term coming from 
\ref{lem:relationYbeta}.
\end{proof}

\subsection{Correlation functions}
Here again we consider a rank $2$ holomorphic bundle $(E,\bar{\pl}_E,g_E)$ with Hermitian metric, and we represent it as an extension with parameters $(\mc{L},\beta)$ as in Section \ref{setup_E}. Define 
\begin{equation}\label{def_W_0_W_1} 
W_0:=\ker \bar{\pl}_{\mc{L}^{-2}}=H^0(\Sigma,\mc{L}^{-2}), \qquad W_1=\ker \bar{\pl}_{\mc{L}^{-2}}^*\simeq H^1(\Sigma,\mc{L}^{-2}).
\end{equation}
These two vector spaces are equipped with a Hermitian metric given by the $L^2$ product using the metric $g_{\mc{L}^{-2}}$ and the Riemannian metric $g$. 

Similarly to the case of the sphere, we consider the following observables, called vertex operators, formally given for $z\in \Sigma$ fixed by 
\begin{equation}\label{V_jmu_Sigma}  
V_{j,\mu}(z):=e^{2b(j+1)\phi(z)}e^{\mu.\gamma(z)-\bar\mu.\bar\gamma(z)}
\end{equation}
for $j\in\R$ and $\mu\in\mc{L}^2(z)$ is an element in the fiber of the bundle $\mc{L}^2$ at the point $z$. As for the sphere, we regularize the vertex by 
\begin{equation}\label{Vgeps}
V^{g, \epsilon}_{j,\mu}(z):=\epsilon^{2b^2(j+1)^2}e^{2b(j+1)(c+X_{g,\epsilon}(z))}e^{\mu.(\gamma_0(z)+\gamma_{g,\eps}(z))-\bar\mu.(\bar{\gamma}_0(z)+\bar\gamma_{g,\eps}(z))}
\end{equation}
where $(X_{g,\epsilon}, \gamma_{g,\epsilon})$ is a regularized  GFF-Witten pair as in Subsection  \ref{regpair}, $c\in \R$ and $\gamma_0\in W_0$ (recall that in the generic case $W_0=0$).  By Riemann-Roch theorem, each element $\gamma_0\in W_0$ has $-2{\rm deg}(\mc{L})$ zeros counted with multiplicity. 
Notice that $V^{g, \epsilon}_{j,\mu}(z)$ is a random variable that is bounded on all interval $c\in [-T,T]$. Let $\chi_T\in C_c^\infty(\R,[0,1])$ with support in $[-T-1,T+1]$ and equal to $1$ on $[-T,T]$, and consider $\chi_T(c+X_g):=\chi_T(c)$ as a functional of the field $\phi_g=c+X_g$. 
For any family of $m$ distinct points $z_1,\dots,z_m\in\Sigma$ and, for $\ell=1,\dots,m$, $j_\ell\in\R$ and $\mu_\ell\in\mc{L}^{2}_{z_\ell}$, the correlation functions of the $\mathbb{H}^3$-model are defined as the following limit as distributions in the $\boldsymbol{\mu}$ variable in $\mc{D}'(\mc{L}^{2}_{z_1}\times \dots \times \mc{L}^{2}_{z_m})$, if the limit exists, 
\begin{equation}\label{correl_Sigma}
\langle \prod_{\ell=1}^mV^{g}_{j_\ell,\mu_\ell}(z_\ell) \rangle_{\Sigma,g,\mc{L},\beta}:=\lim_{T\to \infty}\lim_{T'\to \infty}\lim_{\epsilon\to 0}\langle \prod_{\ell=1}^mV^{g, \epsilon}_{j_\ell,\mu_\ell}(z_\ell)\chi_T(c)\chi_{T'}(|\gamma_0|) \rangle_{\Sigma,g,\mc{L},\beta}.
\end{equation}
We will see that in the generic case (then there is no $\gamma_0$), the limit exists both pointwise outside $\boldsymbol{\mu}=0$ and as distribution in $\mc{L}^{2}_{z_1}\times \dots \times \mc{L}^{2}_{z_m}$.
The correlation functions will be expressed in terms of a particular section of $\mc{L}^{-2}\otimes \Lambda^{0,1}\Sigma$ defined by
 \begin{equation}\label{Gammae(la)}
\Gamma_{\boldsymbol{\mu},{\bf z}}:= T_0(\sum_{\ell=1}^m\bbar{\mu}_\ell \delta_{z_\ell})
 \end{equation}
and satisfying $\bar{\pl}^*_{\mc{L}^{-2}}\Gamma_{\boldsymbol{\mu},{\bf z}}=\sum_{\ell=1}^m\bar{\mu}_\ell \delta_{z_\ell}$. We will say that the weights ${\bf j}=(j_1,\dots, j_m)$ satisfy the admissibility bounds for the correlation function to be defined if the following holds:
\begin{equation}\label{seib_Sigma} 
\forall \ell=1,\dots,m,\quad j_\ell<-1/2\quad \text{ and } -k\, {\rm deg}(\mc{L})+\sum_{\ell}(j_{\ell}+1)>d=\dim H^0(\Sigma,\mc{L}^{-2}).
\end{equation}
We show the following existence result for the correlation functions:
\begin{proposition}[\textbf{Correlations, generic case}]\label{exists_correl_Sigma}
Consider a rank $2$ holomorphic bundle  $(E,\bar{\pl}_E,g_E)$ with a Hermitian metric on a compact Riemannian surface $(\Sigma,g)$, and we represent $E$ as an extension with parameters $(\mc{L},\beta)$ as described in Section \ref{setup_E}. Assume that $b=(k-2)^{-1/2}\in (0,1)$ and 
\[  \ker \bar{\pl}_{\mc{L}^{-2}}=H^0(\Sigma,\mc{L}^{-2})=0.\]
Let ${\bf z}=(z_1,\dots,z_m)$ be $m$ distinct marked points on $\Sigma$, $\boldsymbol{j}=(j_1,\dots,j_m)$ be weights satisfying the admissiblity 
bounds \eqref{seib_Sigma} and let $s_0:= -(k-2){\rm deg}(\mc{L})-\frac{\chi(\Sigma)}{2}+\sum_{\ell}(j_{\ell}+1)$.
For each  $\boldsymbol{\mu}=(\mu_1,\dots,\mu_m)\in \mc{L}^{2}_{z_1}\times \dots \times \mc{L}^{2}_{z_m}$ non zero, the limit \eqref{correl_Sigma} exists and is equal to
\[\begin{split} 
\Big\langle \prod_{\ell=1}^mV_{j_\ell,\mu_\ell}(z_\ell)\Big\rangle_{\Sigma,g,\mc{L},\beta}=&\frac{C_{k,\mc{L},\beta}(s_0)}{\det(\mc{D}_0)}\Big(\frac{\det(\Delta_{g})}{{\rm v}_g(\Sigma)}\Big)^{-\frac{1}{2}}e^{B_{\H^3}({\bf z},\boldsymbol{j})} \int_{W_1}e^{-2i{\rm Im}\cjg \Gamma_{\boldsymbol{\mu},{\bf z}}+ \omega,\beta \cjd_2} 
 \E\Big[ \|\Gamma_{\boldsymbol{\mu},{\bf z}}+\omega\|_{L^2(\theta_u)}^{-2s_0} \Big] \dd \omega.
\end{split}\]
with $\dd \omega$ the volume measure associated to the Hermitian metric given by the $L^2({\rm v}_g)$ pairing on $W_1=\ker \bar{\pl}^*_{\mc{L}^{-2}}\simeq H^1(\Sigma,\mc{L}^{-2})$, $\Gamma_{\boldsymbol{\mu},{\bf z}}$ is given by \eqref{Gammae(la)}, $u$ and $\theta_u$ are given by 
\begin{equation}\label{def_of_u}
u(x):= \sum_{\ell=1}^m 2b^2(j_\ell+1)G_{g}(x,z_\ell)-b^2 \int_\Sigma G_g(x,y)\mc{K}_{\mc{L}}(y), \qquad \theta_u:= e^{2u}M^g_{2b}(X_g,\dd x)
\end{equation}
 and the constants $C_{k,\mc{L}}(s_0)$ and $B({\bf z},\boldsymbol{j})$ are given by the expressions (recall \eqref{constant_Ck}, \eqref{defK_L} and \eqref{varYg})
\begin{equation}\label{Cklbeta} 
C_{k,\mc{L},\beta}(s_0)= \frac{C_k}{k^{2{\rm deg}(\mc{L})-\chi(\Sigma)/2}}\frac{\Gamma(s_0)}{2b}(\frac{\pi}{k})^{-s_0}e^{\frac{k}{\pi}\|\beta\|_{2}^2}
\end{equation}
\begin{equation}\label{Bzj}
\begin{split}
B_{\H^3}({\bf z},\boldsymbol{j}) :=&  2b^2\sum_{\ell=1}^m(j_\ell+1)^2W_g(z_\ell)+2b^2\sum_{\ell\not=\ell'}(j_\ell+1)(j_{\ell'}+1)G_g(z_\ell,z_{\ell'})\\
&+\frac{b^2}{2}\int_{\Sigma^2} G_g(x,y)\mc{K}_{\mc{L}}(x)\mc{K}_{\mc{L}}(y)
-2b^2\sum_{\ell=1}^m(j_\ell+1)\int_\Sigma G_g(x,z_\ell)\mc{K}_{\mc{L}}(x).
\end{split}
\end{equation}
Moreover, if in addition $\sum_{\ell=1}^mj_\ell<k\, {\rm deg}(\mc{L})$, then $\langle \prod_{\ell=1}^mV_{j_\ell,\mu_\ell}(z_\ell)\rangle_{\Sigma,g,\mc{L},\beta}$ extends as a distribution in the $\boldsymbol{\mu}$ variable in $\mc{L}^{2}_{z_1}\times \dots \times \mc{L}^{2}_{z_m}$ and the limit \eqref{correl_Sigma}  also exists in $\mc{D}'(\mc{L}^{2}_{z_1}\times \dots \times \mc{L}^{2}_{z_m})$.
\end{proposition}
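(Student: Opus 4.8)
The plan is to follow the proof of Proposition \ref{existcorrel} (the sphere with trivial bundle), upgrading it to accommodate the cokernel $W_1=\ker\bar{\pl}^*_{\mc{L}^{-2}}$ and the extension form $\beta$. Inserting the regularized vertices \eqref{Vgeps} into the path integral \eqref{def_PI_case_noker} and recalling that there $\gamma$ is replaced by $\gamma_g+Y^{\theta}(\beta)$, the product $\prod_\ell e^{\mu_\ell.\gamma(z_\ell)-\bar\mu_\ell.\bar\gamma(z_\ell)}$ factors into a deterministic phase $e^{2i\,{\rm Im}\sum_\ell\mu_\ell.Y^{\theta}(\beta)(z_\ell)}$ and a centered complex Gaussian in $\gamma_g$. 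The first step is to integrate out $\gamma_g$ conditionally on $X_g$: since $\E[\cjg\gamma_g,f\cjd_2\cjg\gamma_g,f'\cjd_2]=0$, the characteristic-function identity together with Definition \ref{witten_pair_Sigma} gives the conditional factor $\exp(-\tfrac{\pi}{k}\|(1-\Pi^{\theta})\Gamma^\eps_{\boldsymbol{\mu},{\bf z}}\|^2_{L^2(\theta)})$, where $\Gamma^\eps_{\boldsymbol{\mu},{\bf z}}:=T_0(\sum_\ell\bar\mu_\ell\rho_{\eps,z_\ell})$ is the regularization of \eqref{Gammae(la)}.

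The key new step is to turn the cokernel data into the integral over $W_1$. Using that $\Pi^{\theta}$ is the $L^2(\theta)$-orthogonal projection onto $W_1$, the decomposition $\|\Gamma_{\boldsymbol{\mu},{\bf z}}+\omega\|^2_{L^2(\theta)}=\|(1-\Pi^\theta)\Gamma_{\boldsymbol{\mu},{\bf z}}\|^2_{L^2(\theta)}+\|\Pi^\theta\Gamma_{\boldsymbol{\mu},{\bf z}}+\omega\|^2_{L^2(\theta)}$ for $\omega\in W_1$, translation invariance of the $L^2({\rm v}_g)$-volume $\dd\omega$, and the complex Gaussian integral $\int_{W_1}e^{-\frac{\pi}{k}\|\omega\|^2_{L^2(\theta)}}\dd\omega=k^N/\det(\mc{G}^\theta)$ (with $N=\dim_\C W_1$), I would rewrite
\[ \frac{1}{\det(\mc{G}^{\theta})}\,e^{-\frac{\pi}{k}\|(1-\Pi^\theta)\Gamma_{\boldsymbol{\mu},{\bf z}}\|^2_{L^2(\theta)}}=\frac{1}{k^N}\int_{W_1}e^{-\frac{\pi}{k}\|\Gamma_{\boldsymbol{\mu},{\bf z}}+\omega\|^2_{L^2(\theta)}}\dd\omega. \]
This is exactly the mechanism producing the $W_1$-integral in the statement. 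The $\beta$-dependent terms --- the phase above, the quadratic term $-\tfrac{k}{\pi}e^{-2bc}\|\hat\Pi^{\theta}\beta\|^2_{L^2(\theta)}$ in \eqref{def_PI_case_noker}, and the global $e^{\frac{k}{\pi}\|\beta\|^2_2}$ --- must then be combined by completing the square; using \eqref{relRphi2}, \eqref{Rphidbar*} and the definition \eqref{def_of_Y} of $Y^{\theta}(\beta)$, the $\omega$-translation converts the phase into $e^{-2i\,{\rm Im}\cjg\Gamma_{\boldsymbol{\mu},{\bf z}}+\omega,\beta\cjd_2}$, leaving $e^{\frac{k}{\pi}\|\beta\|^2_2}$ as the constant appearing in \eqref{Cklbeta}.

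Next I would treat the $X_g$ and $c$ dependence as in Proposition \ref{existcorrel}. A Cameron--Martin shift absorbing the curvature term $e^{-b\int_\Sigma\mc{K}_{\mc{L}}X_g}$ and the vertex insertions $\prod_\ell e^{2b(j_\ell+1)X_{g,\eps}(z_\ell)}$ moves the GFF by the drift $u$ of \eqref{def_of_u}, turning $\theta=M^g_{2b}(X_g,\dd x)$ into $\theta_u=e^{2u}M^g_{2b}(X_g,\dd x)$ inside the norm and generating the deterministic prefactor $e^{B_{\H^3}({\bf z},\boldsymbol{j})}$ of \eqref{Bzj}; the determinant and topological constants of \eqref{Cklbeta} are collected through Lemma \ref{lem:detwitten}. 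After these shifts the $c$-dependence is $e^{2bs_0c}\exp(-\tfrac{\pi}{k}e^{2bc}\|\Gamma_{\boldsymbol{\mu},{\bf z}}+\omega\|^2_{L^2(\theta_u)})$, and the substitution $t=e^{2bc}$ produces $\tfrac{\Gamma(s_0)}{2b}(\tfrac{\pi}{k})^{-s_0}\|\Gamma_{\boldsymbol{\mu},{\bf z}}+\omega\|^{-2s_0}_{L^2(\theta_u)}$. This requires $s_0>0$, which follows from \eqref{seib_Sigma}: writing $N=2{\rm deg}(\mc{L})-\chi(\Sigma)/2\geq 0$ via Riemann--Roch \eqref{Riemann_RochL^k}, one has $s_0=\big(-k\,{\rm deg}(\mc{L})+\sum_\ell(j_\ell+1)\big)+N$, and the bracket is strictly positive by the admissibility bound.

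The analytic heart is the finiteness of the negative GMC moment $\E[\|\Gamma_{\boldsymbol{\mu},{\bf z}}+\omega\|^{-2s_0}_{L^2(\theta_u)}]$, after the $\eps\to 0$ and $T\to\infty$ limits, together with its joint integrability in $\omega\in W_1$. Since $\Gamma_{\boldsymbol{\mu},{\bf z}}=T_0(\sum_\ell\bar\mu_\ell\delta_{z_\ell})$ has a Cauchy-type singularity $|x-z_\ell|^{-1}$ near each $z_\ell$ by Lemma \ref{inverseDphi}, the integrand $|\Gamma_{\boldsymbol{\mu},{\bf z}}+\omega|^2$ has a $|x-z_\ell|^{-2}$ singularity, which is $\theta_u$-integrable precisely when $2+4b^2(j_\ell+1)<2+2b^2$, i.e. $j_\ell<-1/2$ --- the first admissibility bound --- as in \cite[Lemma 3.3]{DKRV16}; finiteness of the negative moment with a uniform (in $\eps,T,\omega$) bound then follows from the GMC lower-tail estimates of \cite[proof of Lemma 3.3]{DKRV16}, the uniform lower bound $\|\Gamma_{\boldsymbol{\mu},{\bf z}}+\omega\|^2_{L^2(\theta_u)}\geq c\,\|(1-\Pi^\theta)\Gamma_{\boldsymbol{\mu},{\bf z}}\|^2>0$ (valid for $\boldsymbol{\mu}\neq0$) and Lemma \ref{lemmaYtheta}. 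I expect this uniform control over $\omega\in W_1$ --- needed to exchange the $\omega$-integral with the expectation and with the $\eps,T$ limits, the Gaussian weight being restored only through the $c$-integration --- to be the main obstacle; once it is in hand, dominated convergence yields \eqref{correl_Sigma} for each fixed $\boldsymbol{\mu}\neq0$. Finally, for the distributional extension, scaling $\boldsymbol{\mu}\mapsto\la\boldsymbol{\mu}$ and $\omega\mapsto\la\omega$ in the $W_1$-integral shows that the correlation behaves like $|\boldsymbol{\mu}|^{-2(s_0-N)}$ as $\boldsymbol{\mu}\to0$; since $\mc{L}^2_{z_1}\times\dots\times\mc{L}^2_{z_m}$ has real dimension $2m$, this singularity is locally integrable iff $2(s_0-N)<2m$, which is exactly $\sum_\ell j_\ell<k\,{\rm deg}(\mc{L})$, and under this hypothesis the pointwise formula defines a distribution and, with the uniform bounds above, the limit \eqref{correl_Sigma} also holds in $\mc{D}'(\mc{L}^{2}_{z_1}\times\dots\times\mc{L}^{2}_{z_m})$.
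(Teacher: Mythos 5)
Your proposal is correct and follows essentially the same route as the paper's proof: Cameron--Martin on the Witten field and on the GFF producing the drift $u$ and the prefactor $e^{B_{\H^3}({\bf z},\boldsymbol{j})}$, conversion of $\det(\mc{G}^{\theta})^{-1}$ times the projected-norm exponential into a Gaussian integral over $W_1$ (your two-step identity, pure quadratic plus completing the square for the $\beta$-sources, is exactly the paper's one-shot Gaussian/Fourier identity \eqref{FourierT}), the substitution in $c$ yielding $\tfrac{\Gamma(s_0)}{2b}(\tfrac{\pi}{k})^{-s_0}\|\Gamma_{\boldsymbol{\mu},{\bf z}}+\omega\|^{-2s_0}_{L^2(\theta_u)}$, and the bound $\E\big[\|\Gamma_{\boldsymbol{\mu},{\bf z}}+\omega\|^{-2s_0}_{L^2(\theta_u)}\big]\leq C|\boldsymbol{\mu}|^{-2s_0}(1+|\omega|/|\boldsymbol{\mu}|)^{-2s_0}$ with $s_0>N$ giving both the $\omega$-integrability and the $|\boldsymbol{\mu}|^{-2(s_0-N)}$ scaling for the distributional extension. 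The only organizational difference is that you produce the $W_1$-integral before taking the $\eps\to 0$ and $T\to\infty$ limits (so absolute convergence in $\omega$ replaces the paper's use of Lemma \ref{det_negative_moment} at that stage), which is immaterial.
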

\begin{proof} We take $\boldsymbol{\mu}\not=0$. Pick $(e_p)_{p=1,\dots,N}$ to be an northonormal basis of $W_1$ (then $\mc{G}_0={\rm Id}$). 
For $\mu_j\in\mc{L}^2(z_j)$, $\eps>0$ and $T>0$, and under the condition \eqref{seib_Sigma} we have 
(recall $-2N-\int_\Sigma \mc{K}_{\mc{L}}=-2k\, {\rm deg}(\mc{L})$ by \eqref{RR+GB})
\begin{equation}\label{corel_eps}
\langle \prod_{\ell=1}^mV^{g, \epsilon}_{j_\ell,\mu_\ell}(z_\ell)\chi_T \rangle_{\Sigma,g,\mc{L},\beta}=\frac{C_k}{\det(\mc{D}_0)}\Big(\frac{\det(\Delta_{g})}{{\rm v}_g(\Sigma)}\Big)^{-1/2}e^{\frac{k}{\pi}\|\beta\|_{2}^2} \int_{\R}\chi_T(c)e^{(-k{\rm deg}(\mc{L})+\sum_{\ell}(j_{\ell}+1))2bc}A_\eps(c,\boldsymbol{\mu},{\bf z})\,\dd c
\end{equation}
with $C_k$ the constant in \eqref{constant_Ck} and 
\begin{align*}
&A_\eps(c,\boldsymbol{\mu},{\bf z})\\
&:=\E\Big[\prod_{\ell=1}^m\epsilon^{2b^2(j_\ell+1)^2}e^{2b(j_\ell+1)(X_{g,\epsilon}(z_\ell))}e^{2i{\rm Im}(\mu_\ell (\gamma_{g,\eps}(z_\ell)+Y^{\theta}_\eps(\beta,z_\ell)))}e^{-b\int_\Sigma K_{\mc{L}} X_g}\frac{e^{-\frac{k}{\pi}e^{-2bc}\int_\Sigma |\hat{\Pi}^{\theta}\beta|^2_gM^g_{2b}(X_g,dx) }}{\det(\mc{G}^{\theta})}\Big].
\end{align*}
where $Y^{\theta}_\eps(\beta,x):=\cjg Y^{\theta}(\beta),\rho_{\eps,x}\cjd_2$,  $Y^{\theta}(\beta)$ being defined by \eqref{def_of_Y}.
First, by Lemma \ref{det_negative_moment} and by bounding the exponential term involving  the $\mu_\ell$ by $1$, we see that this expectation converges absolutely.
Then, we use Cameron-Martin, conditionally on $X_g$, to shift the Gaussian variable ${\rm Im}(\mu_\ell (\gamma_{g,\epsilon}(z_\ell)))$ and  the expectation above becomes 
\[\begin{split} 
A_\eps(c,\boldsymbol{\mu},{\bf z})=\E\Big[& \prod_{\ell=1}^m\epsilon^{2b^2(j_\ell+1)^2}e^{2b(j_\ell+1)(X_{g,\epsilon}(z_\ell))+2i{\rm Im}(\mu_\ell Y^{\theta}_\eps(\beta,z_\ell))}\\
&\times e^{-b\int_\Sigma K_{\mc{L}} X_g}e^{-e^{2bc}\frac{\pi}{k}\sum_{\ell,\ell'=1}^m\mu_\ell\bar\mu_{\ell'}R_\epsilon(z_\ell,z_{\ell'})}\frac{e^{-\frac{k}{\pi}e^{-2bc}\|\hat{\Pi}^{\theta}\beta\|^2_{L^2(\theta)}}}{\det(\mc{G}^{\theta})}\Big] 
 \end{split}\]
where $\frac{\pi}{k}R_\epsilon(z,z'):=e^{-2bc}\E[\gamma_{g,\epsilon}(z)\overline{\gamma_{g,\epsilon}}(z')|X_g]$. Let us compute, using \eqref{relRphi},
\[ \sum_{\ell,\ell'=1}^m\mu_\ell\bar\mu_{\ell'}R_\epsilon(z_\ell,z_{\ell'})= \|(1-\Pi^{\theta})T_0\sum_{\ell}\bbar{\mu}_\ell \rho_{\epsilon,z_\ell}\|^2_{L^2(\theta)}.\]
Let us denote the $\mc{L}^{-2}$ valued $1$-forms  
\[
\Gamma_{\boldsymbol{\mu},{\bf z}}:=T_0(\sum_{\ell=1}^m\bbar{\mu}_\ell \delta_{z_\ell}), \quad \Gamma^\eps_{\boldsymbol{\mu},{\bf z}} :=T_0\sum_{\ell=1}^m\bbar{\mu}_\ell \rho_{\epsilon,z_\ell}.
\]
The form $\Gamma^\eps_{\boldsymbol{\mu},{\bf z}} $ converges as $\eps\to 0$ to $\Gamma_{\boldsymbol{\mu},{\bf z}} $ which, by Lemma \ref{inverseDphi}, is smooth outside the points $z_\ell$ and is bounded near $z_j$ by $|\Gamma_{\boldsymbol{\mu},{\bf z}} |_{g\otimes g_{\mc{L}^{-2}}}\leq Cd_g(x,z_\ell)^{-1}$. 
Next we use Cameron-Martin to the exponential term  $e^{\sum_\ell 2b(j_\ell+1)X_{g,\eps}(z_\ell)-b\int_\Sigma K_{\mc{L}}X_g}$: as $\eps \to 0$   we get
\[\begin{split}
A_\eps(c,\boldsymbol{\mu},{\bf z}) =&(1+o(1))e^{B({\bf z},\boldsymbol{j})}\E\Big[\frac{e^{-e^{2bc}\frac{\pi}{k} \|(1-\Pi^{\theta_{u_\eps}})\Gamma^\eps_{\boldsymbol{\mu},{\bf z}} \|^2_{L^2(\theta_{u_\eps})}-\frac{k}{\pi}e^{-2bc}\|\hat{\Pi}^{\theta_{u_\eps}}\beta\|^2_{L^2(\theta_{u_\eps})}+2i\sum_\ell {\rm Im}(\mu_\ell Y^{\theta_{u_\eps}}_\eps(\beta,z_\ell))}}{\det(\mc{G}^{\theta_{u_\eps}})}\Big]
\end{split}\]
with $W_g$ defined in \eqref{varYg},  the function $u_\epsilon$ and measure $\theta_{u_\eps}$ are given by 
\begin{equation}\label{ueps}
u_\epsilon(x):=\sum_{\ell=1}^m 2b^2(j_\ell+1)G_{g,\epsilon}(x,z_\ell)-b^2 \int_\Sigma G_{g,\eps,\eps}(x,y)\mc{K}_{\mc{L}}(y), \qquad \theta_{u_\eps}:=e^{2u_\eps}\theta
\end{equation}
with $G_{g,\eps}$ (resp. $G_{g,\eps,\eps}$) the $g$-regularisation of the Green function in its second variable (resp. both variables),
and finally $B({\bf z},\boldsymbol{j})$ is the deterministic function \eqref{Bzj} (coming from the variance in Cameron-Martin).
The function $u_\epsilon$ converges to $u_0$
 expressed as in \eqref{ueps} but with $G_{g,\eps}(x,z_\ell)$ replaced by $G_{g}(x,z_\ell)$. 
The singularity of  $e^{2u(x)}|\Gamma_{\boldsymbol{\mu},{\bf z}}|_{g\otimes g_{\mc{L}^{-2}}}^2$ at $z_\ell$ is of the  form $d_g(z_\ell,x)^{-2-4b^2(j_\ell+1)}$, which is integrable with respect to the measure $\theta=M^{g}_{2b}(X_g  ,\dd x)$ provided that $2+4b^2(j_\ell+1)<2+2b^2$, namely $j_\ell<-1/2$. Setting $\theta_u:=e^{2u}\theta$, we have by Lemma \ref{det_negative_moment} 
that $\E[(\det(\mc{G}^{\theta_u}))^{-1}]<\infty$. 
The dominated convergence can then be used to deduce that 
\begin{equation}\label{limAeps}
\begin{split}
\lim_{\eps\to 0}A_\eps(c,\boldsymbol{\mu},{\bf z}) = e^{B_{\H^3}({\bf z},\boldsymbol{j})} \E\Big[\frac{e^{-e^{2bc}\frac{\pi}{k} \|(1-\Pi^{\theta_u})\Gamma_{\boldsymbol{\mu},{\bf z}}\|^2_{L^2(\theta'_0)}-\frac{k}{\pi}e^{-2bc}\|\hat{\Pi}^{\theta_u}\beta\|^2_{L^2(\theta'_0)}+2i\sum_\ell {\rm Im}(\mu_\ell Y^{\theta_u}(\beta,z_\ell))}}{\det(\mc{G}^{\theta_u})}\Big]
\end{split}
\end{equation}
with $\theta_u=e^{2u}\theta$ the limit of $\theta_{u_\eps}$ as $\eps\to0$. We next rewrite (recall \eqref{def_proj}, \eqref{def_of_Y}  and \eqref{hatPi})
\begin{align*}
&-e^{2bc}\frac{\pi}{k} \|(1-\Pi^{\theta_u})\Gamma_{\boldsymbol{\mu},{\bf z}}\|^2_{L^2(\theta_u)}-\frac{k}{\pi}e^{-2bc}\|\hat{\Pi}^{\theta_u}\beta\|^2_{L^2(\theta_u)}+2i\sum_{\ell=1}^m{\rm Im}(\mu_\ell Y^{\theta_u}(\beta,z_\ell))\\
& =  -\sum_{p,q=1}^N (\mc{G}^{\theta_u})^{-1}_{pq}\Big( \frac{k}{\pi}e^{-2bc}\beta_{p}\bbar{\beta_{q}}-\frac{\pi}{k}e^{2bc} \Gamma^p_{\boldsymbol{\mu},{\bf z}} \bbar{\Gamma^q_{\boldsymbol{\mu},{\bf z}}}+\beta_q\bbar{\Gamma^p_{\boldsymbol{\mu},{\bf z}}}-\bbar{\beta_p}\Gamma^q_{\boldsymbol{\mu},{\bf z}}\Big)\\
 &\quad  -e^{2bc}\frac{\pi}{k} \|\Gamma_{\boldsymbol{\mu},{\bf z}}\|^2_{L^2(\theta)}+2i\sum_{\ell=1}^m {\rm Im}\cjg \beta,\Gamma_{\boldsymbol{\mu},{\bf z}}\cjd_2
 \end{align*}
where 
\[\beta_p:=\cjg \beta,e_p\cjd_2,\quad  \Gamma^p_{\boldsymbol{\mu},{\bf z}}:= \cjg \Gamma_{\boldsymbol{\mu},{\bf z}}, e_p\cjd_{L^2(\theta_u)}.\]
If $A$ is a Hermitian matrix in $\C^N$ and $u,v\in \C^N$, we recall the expression 
\[ \int_{\C^N} e^{-\cjg A\la,\la\cjd +i\sum_{p=1}^N((\bar{u}_p-\bar{v}_p)\lambda_p+(u_p+v_p)\bar{\la}_p)}d\la=\pi^N\frac{
e^{-\cjg A^{-1}u,u\cjd+\cjg A^{-1}v,v\cjd+\sum_{p,q} (A^{-1})_{pq}(u_q\bar{v}_p-\bar{u}_pv_q)}}{\det(A)}\]
and use this to  write (with $u=-\beta e^{-bc}\sqrt{k/\pi}$ and $v=\Gamma_{\boldsymbol{\mu},{\bf z}}e^{bc}\sqrt{\pi/k}$)
\begin{equation}\label{FourierT}
\begin{gathered}
 \frac{1}{\det(\mc{G}^{\theta_u})}e^{-\sum_{p,q=1}^N (\mc{G}^{\theta_u})^{-1}_{pq}\Big( \frac{k}{\pi}e^{-2bc}\beta_{p}\bbar{\beta_{q}}-\frac{\pi}{k}e^{2bc} \Gamma^p_{\boldsymbol{\mu},{\bf z}} \bbar{\Gamma^q_{\boldsymbol{\mu},{\bf z}}}+\beta_q\bbar{\Gamma^p_{\boldsymbol{\mu},{\bf z}}}-\bbar{\beta_p}\Gamma^q_{\boldsymbol{\mu},{\bf z}}\Big)}\\
=k^{-N} \int_{\C^N}e^{-\frac{\pi}{k}\cjg \mc{G}^{\theta_u}\la,\la\cjd}e^{i\frac{\pi}{k}\sum_p (-\frac{k}{\pi}e^{-bc}\bbar{\beta_p}
-e^{bc}\bbar{\Gamma^p_{\boldsymbol{\mu},{\bf z}}})\la_p+(-\frac{k}{\pi}e^{-bc}\beta_p
+e^{bc}\Gamma^p_{\boldsymbol{\mu},{\bf z}})\bbar{\la}_p}\dd \la
\end{gathered}
\end{equation}
where the integral is absolutely convergent $\mathbb{P}$ almost surely. 
Notice also that, if $e(\la):=i\sum_{p=1}^N\la_pe_p$,
\begin{equation}\begin{gathered}\label{halfsquare}
\cjg \mc{G}^{\theta_u}\la,\la\cjd =  \|\sum_{p}i\la_pe_p\|^2_{L^2(\theta_u)}=  \|e(\la)\|^2_{L^2(\theta_u)} ,
\\
  ie^{bc}\sum_p( \Gamma^p_{\boldsymbol{\mu},{\bf z}}\bbar{\la}_p
-\bbar{\Gamma^p_{\boldsymbol{\mu},{\bf z}}}\la_p) =
-2{\rm Re}\cjg e^{bc}\Gamma_{\boldsymbol{\mu},{\bf z}},e(\la)\cjd_{L^2(\theta_u)},
\\
 -i\sum_p( \bbar{\beta}_p\la_p+\beta_p\bbar{\la}_p
) = -2i{\rm Im}\cjg  e(\lambda),\beta\cjd_{2}
\end{gathered}
\end{equation}
Combining \eqref{limAeps}, \eqref{FourierT} and \eqref{halfsquare}, we thus obtain 
\begin{align*}
\lim_{\eps\to 0}A_\eps(c,\boldsymbol{\mu},{\bf z})
=& 
e^{B_{\H^3}({\bf z},\boldsymbol{j})}k^{-N} \int_{\C^N} \E\Big[  \exp\Big(-\frac{\pi}{k}\|e(\la)\|^2_{L^2(\theta_u)} -2i e^{-bc}{\rm Im}\cjg  e(\lambda),\beta\cjd_{2}-e^{2bc}\frac{\pi}{k} \|\Gamma_{\boldsymbol{\mu},{\bf z}}\|^2_{L^2(\theta)}
\\
& \qquad\qquad\qquad\qquad\qquad\quad  +2i\sum_{\ell=1}^m {\rm Im}\cjg \beta,\Gamma_{\boldsymbol{\mu},{\bf z}}\cjd_2 -2 \frac{\pi}{k} {\rm Re}\cjg e^{bc}\Gamma_{\boldsymbol{\mu},{\bf z}}, e(\la)\cjd_{L^2(\theta_u)}\Big)\Big]  d\la
\\  
=&
 k^{-N} e^{B_{\H^3}({\bf z},\boldsymbol{j})+2Nbc} \int_{\C^N}e^{-2i{\rm Im}\cjg \Gamma_{\boldsymbol{\mu},{\bf z}}+e(\la),\beta \cjd_2
} \E\Big[e^{-\frac{\pi}{k} e^{2bc}\|\Gamma_{\boldsymbol{\mu},{\bf z}}+e(\la)\|^2_{L^2(\theta_u)}}\Big]d\la
\end{align*}
where the second equality was obtained by packing the terms and performing the change of variables $\lambda\to e^{bc}\lambda$.
Here we notice that this integral in $ \dd \la \, \dd \mathbb{P}$ is absolutely convergent: indeed, 
it suffices to show the convergence when $\beta=0$, in which case the integrals are positive and convergent since given by  
\eqref{limAeps} with $\beta=0$.
Therefore, we input this expression into \eqref{corel_eps} and use Riemann-Roch \eqref{Riemann_RochL^k} (which gives $N-k\,{\rm deg}(\mc{L})=-\tfrac{{\rm deg}(\mc{L})}{b^2}-\frac{\chi(\Sigma)}{2}$) to obtain 
\begin{equation}\label{Correl_final}
\begin{split} 
 \Big\langle \prod_{\ell=1}^mV^{g, \epsilon}_{j_\ell,\mu_\ell}(z_\ell)& \chi_T \Big\rangle_{\Sigma,g}=\frac{C_k}{k^N\det(\mc{D}_0)}\Big(\frac{\det(\Delta_{g})}{{\rm v}_g(\Sigma)}\Big)^{-1/2}   e^{\frac{k}{\pi}\|\beta\|^2_{2}}e^{B_{\H^3}({\bf z},\boldsymbol{j})}\\
& \times \int_{\C^N}e^{-2i{\rm Im}\cjg \Gamma_{\boldsymbol{\mu},{\bf z}}+e(\la),\beta \cjd_2
}\int_{\R} \E \Big[ \chi_T(c)e^{(-\frac{{\rm deg}(\mc{L})}{b^2}-\frac{\chi(\Sigma)}{2}+\sum_{\ell}(j_{\ell}+1))2bc} e^{-\frac{\pi}{k} e^{2bc}\|\Gamma_{\boldsymbol{\mu},{\bf z}}+e(\la)\|^2_{L^2(\theta_u)}}\Big]
\,\dd c \dd\la.
\end{split}\end{equation}
We claim that the $\dd c \dd \la\, \dd \mathbb{P}$ integral has a finite limit as $T\to \infty$ if $s_0:=-\frac{{\rm deg}(\mc{L})}{b^2}-\frac{\chi(\Sigma)}{2}+\sum_{\ell}(j_{\ell}+1)>N$. To prove this we can use dominated convergence: we bound $|e^{-2i{\rm Im}\cjg \Gamma_{\boldsymbol{\mu},{\bf z}}+e(\la),\beta \cjd_2}\chi_T(c)|\leq 1$ and check that the integral 
with $e^{-2i{\rm Im}\cjg \Gamma_{\boldsymbol{\mu},{\bf z}}+e(\la),\beta \cjd_2}\chi_T(c)$ replaced by $1$ converges.
To prove this, we make the change variables $y=\frac{\pi}{k} e^{2bc}\|\Gamma_{\boldsymbol{\mu},{\bf z}}+e(\la)\|^2_{L^2(\theta_u)}$ in the $\dd c$ integral, one obtains the formula 
\begin{equation}\label{change_var}
 \int_{\R}\E\Big[e^{(-\frac{{\rm deg}(\mc{L})}{b^2}-\frac{\chi(\Sigma)}{2}+\sum_{\ell}(j_{\ell}+1))2bc} e^{-\frac{\pi}{k} e^{2bc}\|\Gamma_{\boldsymbol{\mu},{\bf z}}+e(\la)\|^2_{L^2(\theta_u)}}\Big]\,\dd c=k^{-N}\Big(\frac{k}{\pi}\Big)^{s_0}\frac{\Gamma(s_0)}{2b} 
 \E\Big[ \|\Gamma_{\boldsymbol{\mu},{\bf z}}+e(\la)\|_{L^2(\theta_u)}^{-2s_0} \Big].
\end{equation}
As in the proof of Proposition \ref{existcorrel}, the expectation is finite if $s_0>0$. Now, as in the proof of Lemma \ref{det_negative_moment} (see the appendix), there is a constant $C>0$ and an open ball $B\subset \Sigma$ such that for all $|(\boldsymbol{\mu},\la)|=1$,
\[
 \E\Big[ \|\Gamma_{\boldsymbol{\mu},{\bf z}}+e(\la)\|_{L^2(\theta_u)}^{-2s_0} \Big]\leq C\E[M_{2b}^g(X_g+\frac{u}{b},B)^{-2s_0}]<\infty
\]
which implies that for the same $C>0$
\[ \E\Big[ \|\Gamma_{\boldsymbol{\mu},{\bf z}}+e(\la)\|_{L^2(\theta_u)}^{-2s_0} \Big]\leq C |\boldsymbol{\mu}|^{-2s_0}\big(1+\frac{|\la|}{|\boldsymbol{\mu}|}\big)^{-2s_0}.\]
Integrating in $\la$ after a change of variables $\la\to |\boldsymbol{\mu}|$ gives, assuming $s_0>N$,
\[ \int_{\C^N}\E\Big[ \|\Gamma_{\boldsymbol{\mu},{\bf z}}+e(\la)\|_{L^2(\theta_u)}^{-2s_0} \Big]\dd \lambda\leq C |\boldsymbol{\mu}|^{-2(s_0-N)}.\]
This completes the proof of the pointwise convergence. Notice that (recall $N=2\, {\rm deg}(\mc{L})-\chi(\Sigma)/2$) 
\[s_0-N=-k\, {\rm deg}(\mc{L})+\sum_{\ell=1}^m(j_\ell+1) <m/2-k\,{\rm deg}(\mc{L}).\]
The existence of the correlation as a distribution in $\boldsymbol{\mu}$ (including at $\boldsymbol{\mu}=0$) is exactly as in the proof of Proposition  \ref{existcorrel},but requires that $|\boldsymbol{\mu}|^{-2(s_0-N)}$ is integrable in the unit ball $\{|\boldsymbol{\mu}|\leq 1\}$, i.e. that $s_0-N<m$. Note that these conditions are equivalent to asking $\sum_{\ell}j_{\ell}<k\, {\rm deg}(\mc{L})<m+\sum_{\ell=1}^mj_\ell$
\end{proof}
In the next Proposition, we describe the correlation in the non-generic case where $W_0\not=0$. We first need to introduce the definition of a certain measure. Fix $(f_1,\dots,f_d)$ an orthonormal basis of $W_0=\ker \bar{\pl}_{\mc{L}^{-2}}$,  $V_{\bf z}:=\mc{L}^{2}_{z_1}\times \dots \times \mc{L}^{2}_{z_m}\simeq \C^m$ and let $M_{\bf z}$ be the linear map
\begin{equation}\label{def_Mz} 
M_{{\bf z}}: \boldsymbol{\mu}=(\mu_1,\dots,\mu_m)\in V_{\bf z} \mapsto (\sum_{\ell=1}^m\mu_\ell f_1(z_\ell),\dots,\sum_{\ell=1}^m\mu_\ell f_d(z_\ell))\in \C^d.
\end{equation}
We then let $\delta_{V_{\bf z}^0}$ denote the measure on $V_{\bf z}:=\mc{L}^{2}_{z_1}\times \dots \times \mc{L}^{2}_{z_m}\simeq \C^m$,  supported on 
\begin{equation}\label{Vz0}
V_{\bf z}^0:=\ker M_{\bf z}
\end{equation}  
and defined by 
\begin{equation}\label{defVz}
\forall F\in C_c^\infty(V_{\bf z}),\quad  \cjg \delta_{V^0_{\bf z}},F\cjd=(2\pi)^d|\det (M_{\bf z}|_{(V^0_{\bf z})^\perp})|^{-1}\int_{V_{\bf z}^0}F(\boldsymbol{\mu}^0,0)\dd \boldsymbol{\mu}^0
\end{equation}
where $\dd \boldsymbol{\mu}^0$ is the Lebesgue measure on $V_{\bf z}^0$ in the decomposition $V_{\bf z}=V_{\bf z}^0\oplus (V_{\bf z}^0)^\perp$.
Notice that $V_{\bf z}^0$ does not depend on the basis $f_1,\dots,f_d$ of $W_0$: it is the kernel of the map 
$\boldsymbol{\mu}\mapsto \sum_{\ell=1}^m\mu_\ell \delta_{z_\ell}|_{W_0}$ where the right hand side is considered as an element in the dual $W_0^*$ of $W_0$. 

\begin{proposition}[\textbf{Correlations, non-generic case}]\label{exists_correl_Sigma_NG}
Consider a rank $2$ holomorphic bundle  $(E,\bar{\pl}_E,g_E)$ with a Hermitian metric on a compact Riemannian surface $(\Sigma,g)$, and we represent $E$ as an extension with parameters $(\mc{L},\beta)$ as described in Section \ref{setup_E}. Assume that $b=(k-2)^{-1/2}\in (0,1)$ and 
\[  W_0=H^0(\Sigma,\mc{L}^{-2})\not=0.\]
Let ${\bf z}=(z_1,\dots,z_m)$ be $m$ distinct marked points on $\Sigma$, $\boldsymbol{j}=(j_1,\dots,j_m)$ be weights satisfying the admissiblity 
bounds \eqref{seib_Sigma}.
The limit \eqref{correl_Sigma} exists as a distribution in $\boldsymbol{\mu}$, i.e. in $\mc{D}'((\mc{L}^{2}_{z_1}\times \dots \times \mc{L}^{2}_{z_m})\setminus\{0\})$, and is equal to
\[
\Big\langle \prod_{\ell=1}^mV_{j_\ell,\mu_\ell}(z_\ell)\Big\rangle_{\Sigma,g,\mc{L},\beta}=\delta_{V_{\bf z}^0}\frac{C_{k,\mc{L},\beta}(s_0)}{\det(\mc{D}_0)}\Big(\frac{\det(\Delta_{g})}{{\rm v}_g(\Sigma)}\Big)^{-\frac{1}{2}}e^{B_{\H^3}({\bf z},\boldsymbol{j})} \int_{W_1}e^{-2i{\rm Im}\cjg \Gamma_{\boldsymbol{\mu},{\bf z}}+ \omega,\beta \cjd_2} 
 \E\Big[ \|\Gamma_{\boldsymbol{\mu},{\bf z}}+\omega\|_{L^2(\theta_u)}^{-2s_0} \Big] \dd \omega  
\]
where we used the same notations as in Proposition \ref{exists_correl_Sigma} and $\delta_{V_{\bf z}^0}$ is the measure introduced in \eqref{defVz}. If in addition $\sum_{\ell=1}^mj_\ell<k\, {\rm deg}(\mc{L})$, the limit holds in $\mc{D}'(\mc{L}^{2}_{z_1}\times \dots \times \mc{L}^{2}_{z_m})$.
\end{proposition}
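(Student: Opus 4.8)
The plan is to reduce the computation to the generic case of Proposition~\ref{exists_correl_Sigma} by isolating the integral over the zero modes $\gamma_0\in W_0$, which is the only feature of the path integral \eqref{def_PI_case_ker} absent in the generic situation. First I would write the regularized correlation $\langle \prod_\ell V^{g,\eps}_{j_\ell,\mu_\ell}(z_\ell)\chi_T\chi_{T'}\rangle_{\Sigma,g,\mc{L},\beta}$ using \eqref{def_PI_case_ker}. Since $\gamma_0$ is a smooth holomorphic section it can be evaluated pointwise, so it enters the regularized vertex operators \eqref{Vgeps} only through the deterministic factor
\[
\prod_{\ell=1}^m e^{\mu_\ell\cdot\gamma_0(z_\ell)-\bar\mu_\ell\cdot\bar\gamma_0(z_\ell)}=\exp\Big(2i\,{\rm Im}\Big(\sum_{p=1}^d a_p(M_{\bf z}\boldsymbol{\mu})_p\Big)\Big),
\]
where $\gamma_0=\sum_{p=1}^d a_pf_p$ in an orthonormal basis $(f_p)_p$ of $W_0$ and $M_{\bf z}$ is the map \eqref{def_Mz}. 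Because neither $\gamma_g$, nor $Y^\theta(\beta)$, nor any other term in the $\P$-expectation depends on $\gamma_0$, the $\gamma_0$-integral factorizes completely from the expectation.

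Next I would carry out the expectation and the $c$-integral verbatim as in Proposition~\ref{exists_correl_Sigma}: conditionally on $X_g$ apply Cameron--Martin to the $\gamma_g(z_\ell)$ and $X_{g,\eps}(z_\ell)$ insertions, pass to the GMC measure $\theta_u$, integrate out the $W_1$-modes by the Gaussian identity \eqref{FourierT}, and perform the change of variable \eqref{change_var} in $c$. After $\eps\to 0$ this produces a factor $H_T(\boldsymbol{\mu})$ equal, up to the explicit constants $C_{k,\mc{L},\beta}(s_0)$, $e^{B_{\H^3}({\bf z},\boldsymbol{j})}$ and the determinants, to the generic-case expression $\int_{W_1}e^{-2i{\rm Im}\cjg\Gamma_{\boldsymbol{\mu},{\bf z}}+\omega,\beta\cjd_2}\E[\|\Gamma_{\boldsymbol{\mu},{\bf z}}+\omega\|_{L^2(\theta_u)}^{-2s_0}]\,\dd\omega$, but with the $\chi_T(c)$-cutoff still in place. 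The negative-moment bound of Lemma~\ref{det_negative_moment} together with the continuity of Lemma~\ref{lemmaYtheta} guarantees, exactly as in the generic proof, that $H_T$ is continuous in $\boldsymbol{\mu}\neq 0$ and bounded by $\mc{O}(|\boldsymbol{\mu}|^{-2(s_0-N)})$ uniformly in $T$.

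It then remains to take the two outer limits in the order prescribed by \eqref{correl_Sigma}. Sending $T'\to\infty$, the oscillatory integral $\int_{W_0}\exp\big(2i\,{\rm Im}(\sum_p a_p(M_{\bf z}\boldsymbol{\mu})_p)\big)\chi_{T'}(|a|)\,\dd a$, tested against $F\cdot H_T$ with $F\in C_c^\infty$, converges to the measure $\delta_{V_{\bf z}^0}$ of \eqref{defVz}; this is the $d$-dimensional analogue of the identity \eqref{deltamu} used on the sphere, and comparing the Fourier and co-area normalizations with \eqref{defVz} fixes the constant. Finally $T\to\infty$ removes the $c$-cutoff, $H_T\to H_\infty$, and since the delta has localized $\boldsymbol{\mu}$ onto $V_{\bf z}^0$ the surviving object is $\delta_{V_{\bf z}^0}$ times the generic-case formula.

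The main obstacle is the rigorous justification of this double limit as a genuine distributional identity in $\boldsymbol{\mu}$, and in particular the compatibility of the delta-localization with the right-hand side. The key structural point I would emphasize is that $V_{\bf z}^0=\ker M_{\bf z}$ is precisely the locus $\{P_0(\sum_\ell\bar\mu_\ell\delta_{z_\ell})=0\}$, on which the identity $\bar{\pl}^*_{\mc{L}^{-2}}\Gamma_{\boldsymbol{\mu},{\bf z}}=\sum_\ell\bar\mu_\ell\delta_{z_\ell}$ of \eqref{Gammae(la)} holds honestly (it fails off $V_{\bf z}^0$ because of the $P_0$-correction in $\bar{\pl}^*_{\mc{L}^{-2}}T_0={\rm Id}-P_0$); thus the generic-case expression is meaningful only after restriction to $V_{\bf z}^0$, which is exactly what the delta measure enforces. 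For the limit away from $\boldsymbol{\mu}=0$ the uniform bounds above suffice; to obtain convergence in $\mc{D}'$ up to and including $\boldsymbol{\mu}=0$ one needs $|\boldsymbol{\mu}|^{-2(s_0-N)}$ to be locally integrable along $V_{\bf z}^0$ near the origin, i.e. $2(s_0-N)<\dim_{\R}V_{\bf z}^0=2(m-d)$, which using $N=2{\rm deg}(\mc{L})-\chi(\Sigma)/2+d$ is equivalent to $\sum_\ell j_\ell<k\,{\rm deg}(\mc{L})$, exactly paralleling the condition in Proposition~\ref{exists_correl_Sigma}.
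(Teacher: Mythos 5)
Your proposal is correct and follows essentially the same route as the paper: factor the $\gamma_0$-dependence out of the expectation as the oscillatory exponential $\exp\big(2i\,{\rm Im}\sum_p a_p(M_{\bf z}\boldsymbol{\mu})_p\big)$, rerun the generic-case computation (Cameron--Martin, the Gaussian identity \eqref{FourierT}, the change of variables \eqref{change_var}) for the remaining factors, and identify the $T'\to\infty$ limit of the zero-mode integral with the measure $\delta_{V_{\bf z}^0}$ of \eqref{defVz}, with the final integrability condition $s_0-N<m-d\iff\sum_\ell j_\ell<k\,{\rm deg}(\mc{L})$ matching the paper's. Your closing observation that $V_{\bf z}^0=\ker M_{\bf z}$ is exactly the locus where $\bar{\pl}^*_{\mc{L}^{-2}}\Gamma_{\boldsymbol{\mu},{\bf z}}=\sum_\ell\bar\mu_\ell\delta_{z_\ell}$ holds without the $P_0$-correction is a nice way of seeing why the localization is forced, and is consistent with how the paper uses this fact in Lemma \ref{expression_norm_Gamma}.
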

\begin{proof}
Let $(f_1,\dots,f_d)$ be an orthonormal basis of $W_0=\ker \bar{\pl}_{\mc{L}^{-2}}$. We proceed similarly as the proof of Proposition \ref{exists_correl_Sigma} except that we smear the correlation functions against a test function $\eta\in C_c^\infty(\mc{L}^{2}_{z_1}\times \dots \times \mc{L}^{2}_{z_m})$ in the variable $\boldsymbol{\mu}$, so that 
  \eqref{corel_eps} gets replaced by 
\[\begin{split}
& \int  \langle \prod_{\ell=1}^mV^{g, \epsilon}_{j_\ell,\mu_\ell}(z_\ell) \chi_T(c)\chi_{T'}(|\gamma_0|) \rangle_{\Sigma,g,\mc{L},\beta} \eta(\boldsymbol{\mu})\dd \boldsymbol{\mu}
\\
&=\frac{C_k}{\det(\mc{D}_0)}\Big(\frac{\det(\Delta_{g})}{{\rm v}_g(\Sigma)}\Big)^{-1/2}e^{\frac{k}{\pi}\|\beta\|_{2}^2}\\
& \times   \int_{\C^d}\chi_{T'}(|\kappa|)\Big( \int e^{i{\rm Im}(\sum_{p=1}^d \kappa_p\sum_{\ell=1}^m\mu_\ell.f_p(z_\ell))} \int_{\R}\chi_T(c)e^{(-k{\rm deg}(\mc{L})-d+\sum_{\ell}(j_{\ell}+1))2bc}A_\eps(c,\boldsymbol{\mu},{\bf z})\,\dd c\,  \eta(\boldsymbol{\mu})\dd \boldsymbol{\mu}\Big)  \dd \kappa 
\end{split}\]
where $\dd \boldsymbol{\mu} $ is the Lebesgue measure on $V_{\bf z}:=\mc{L}^{2}_{z_1}\times \dots  \mc{L}^{2}_{z_m}\simeq \C^{m}$ and $\dd\kappa$ the Lebesgue measure on $\C^d$ identified to $W_0$ by the map $\kappa\in \C^d\mapsto \gamma_0=\sum_{p=1}^d\kappa_p f_p$. 
This expression makes sense for $\epsilon,T,T'$ fixed, as the $ \dd \kappa$ integral of  a Fourier transform in $\boldsymbol{\mu}$, which we are now going to make explicit. We apply the same strategy as in the proof of Proposition \ref{existcorrel}.
Let     $M_{\bf z}$ be the linear map \eqref{def_Mz}
and decompose $V_{\bf z}=\ker M_{\bf z}\oplus (\ker M_{\bf z})^\perp=:V_{\bf z}^0\oplus (V_{\bf z}^0)^\perp$ using the metric on $V_{\bf z}$ induced by $g_{\mc{L}^{-2}}$.
We compute that, in the distributional sense in $\mc{D}'(\mc{L}^{2}_{z_1}\times \dots \times \mc{L}^{2}_{z_m})$,
\[\int_{\C^d}e^{i{\rm Im}(\sum_{p=1}^d \kappa_p\sum_{\ell=1}^m\mu_\ell.f_p(z_\ell))}\dd \kappa= 
\delta_{V^0_{\bf z}}.\]
where $  \delta_{V^0_{\bf z}}$ is the measure on $\C^m$ defined by $ \delta_{V^0_{\bf z}}(F) =(2\pi)^d|\det (M_{\bf z}|_{(V^0_{\bf z})^\perp})|^{-1}\int_{V_{\bf z}^0}F(\boldsymbol{\mu}^0,0)\dd \boldsymbol{\mu}^0$ in the decomposition $V_{\bf z}=V_{\bf z}^0\oplus (V_{\bf z}^0)^\perp$.
We can then argue as to get \eqref{Correl_final} to obtain in the distributional sense
\begin{multline} \label{case2identity}
\lim_{T'\to \infty}\lim_{\eps\to 0}\langle \prod_{\ell=1}^m V^{g, \epsilon}_{j_\ell,\mu_\ell}(z_\ell) \chi_T \rangle_{\Sigma,g}=\frac{C_k}{k^N\det(\mc{D}_0)}\Big(\frac{\det(\Delta_{g})}{{\rm v}_g(\Sigma)}\Big)^{-1/2}e^{\frac{k}{\pi}\|\beta\|^2_{2}}  e^{B_{\H^3}({\bf z},\boldsymbol{j})} 
 \\
 \times\Big( \int_{\C^N}e^{-2i{\rm Im}\cjg \Gamma_{\boldsymbol{\mu},{\bf z}}+e(\la),\beta \cjd_2
}\int_{\R} \E \Big[ \chi_T(c)e^{(-\frac{{\rm deg}(\mc{L})}{b^2}-\frac{\chi(\Sigma)}{2}+\sum_{\ell}(j_{\ell}+1))2bc} e^{-\frac{\pi}{k} e^{2bc}\|\Gamma_{\boldsymbol{\mu},{\bf z}}+e(\la)\|^2_{L^2(\theta_u)}}\Big]
\,\dd c\Big) \delta_{V^0_{\bf z}}
\end{multline} 
and when $s_0>N$ the limit $T\to \infty$ converges and using the same change of variables as the one used to get \eqref{change_var}, we obtain the desired result. Here $N=2{\rm deg}(\mc{L})-\chi(\Sigma)/2+d$ by Riemann-Roch, and $s_0>N$ is equivalent to the second admissibility bound \eqref{seib_Sigma}.
\end{proof}

We then  use Proposition  \ref{prop:weyl_surface} to obtain the following:
\begin{lemma}\label{confweightgeneral}
We consider the setup of Proposition  \ref{exists_correl_Sigma}. 
Let $g'=e^\omega g$ be a metric conformal to $g$ on $\Sigma$ for some smooth $\omega\in C^\infty(\Sigma)$. Then
\[\Big\langle \prod_{\ell=1}^mV^{g'}_{j_\ell,\mu_\ell}(z_\ell)\Big\rangle_{\Sigma,g',\mc{L},\beta} =  e^{ \frac{{\bf c}(k)}{96\pi}S^0_L(g,\omega)-\sum_{\ell=1}^m \triangle_{j_\ell}\omega(z_\ell)}\Big\langle\prod_{\ell=1}^mV^{g}_{j_\ell,\mu_\ell}(z_\ell) \Big\rangle_{\Sigma,g}\]
where ${\bf c}(k)=3+6/(k-2)$ is   the central charge,   $S^0_{\rm L}$ is the Liouville action \eqref{LiouvilleS0}, and $\triangle_{j_\ell} $ is the  conformal weight\footnote{Note that the conformal weight does not depend on the $\mu_\ell$'s.} of $V^{g}_{j_\ell,\mu_\ell} $, which is  given by
\begin{equation}\label{CWH3general}
\triangle_{j_\ell} =-\frac{j_\ell(j_\ell+1)}{k-2}.
\end{equation}
\end{lemma}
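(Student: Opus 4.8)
The plan is to combine the Weyl anomaly for the path integral established in Proposition \ref{prop:weyl_surface} with the conformal covariance of the regularized vertex operators, the latter being the only source of the extra factor $e^{-\sum_\ell \triangle_{j_\ell}\omega(z_\ell)}$. I work in the generic case $W_0=0$ of Proposition \ref{exists_correl_Sigma}, so that there is no zero mode $\gamma_0$ to integrate. By definition \eqref{correl_Sigma}, the left-hand side is the limit as $\epsilon\to0$, $T'\to\infty$, $T\to\infty$ of the regularized $g'$-model correlations $\langle \prod_\ell V^{g',\epsilon}_{j_\ell,\mu_\ell}(z_\ell)\chi_T\chi_{T'}\rangle_{\Sigma,g',\mc{L},\beta}$. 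Since Proposition \ref{prop:weyl_surface} applies only to functionals $F(\phi,\gamma)$ that do not themselves depend on the metric, the first task is to rewrite the $g'$-regularized insertions \eqref{Vgeps} in terms of the $g$-renormalized ones, up to explicit deterministic factors.

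First I would record the conformal covariance of the regularized vertex operators. As in a fixed conformal class the law of $c+X_{g'}$ coincides with that of $c+X_g$, I replace $X_{g'}$ by $X_g$ and regularize by $g'$-geodesic circles. Since a $g'$-geodesic circle of radius $\epsilon$ centered at $z$ is, to leading order, a $g$-geodesic circle of radius $\epsilon e^{-\omega(z)/2}$, the circle average of $X_g$ satisfies $X_{g',\epsilon}(z)=X_{g,\epsilon e^{-\omega(z)/2}}(z)+o(1)$; equivalently, by \eqref{varYg}, $W_{g'}(z)=W_g(z)+\tfrac12\omega(z)$. Writing $\delta=\epsilon e^{-\omega(z)/2}$ and using $\epsilon^{2b^2(j+1)^2}=\delta^{2b^2(j+1)^2}e^{b^2(j+1)^2\omega(z)}$, the Wick-renormalized $\phi$-part of \eqref{Vgeps} converges to $e^{b^2(j_\ell+1)^2\omega(z_\ell)}$ times the $g$-renormalized $\phi$-vertex. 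The $\gamma$-part $e^{\mu_\ell\gamma(z_\ell)-\bar\mu_\ell\bar\gamma(z_\ell)}$ carries no power of $\epsilon$ and, by Lemma \ref{lemmainvar_Sigma} (with $P_0'=0$ since $W_0=0$), the conditional law of $\gamma_{g'}$ given $X_g$ agrees with that of $\gamma_g$; hence this factor converges without any anomalous prefactor, consistently with the conformal weight being independent of $\boldsymbol{\mu}$. Altogether, in the $g'$-model with field $(X_g,\gamma_g)$,
\[
\lim_{\epsilon\to0}\prod_\ell V^{g',\epsilon}_{j_\ell,\mu_\ell}(z_\ell)=\Big(\prod_\ell e^{b^2(j_\ell+1)^2\omega(z_\ell)}\Big)\prod_\ell V^{g}_{j_\ell,\mu_\ell}(z_\ell),
\]
the right-hand side being a fixed functional of $(X_g,\gamma_g)$ to which Proposition \ref{prop:weyl_surface} can be applied (at the regularized level, before taking the limit).

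Next I would apply Proposition \ref{prop:weyl_surface} to this functional, which yields the bulk anomaly $e^{\frac{{\bf c}(k)}{96\pi}S^0_{\rm L}(g,\omega)}$ and replaces $\phi$ by $\phi-\tfrac{b^2}{2}\omega$. Since the physical field enters the renormalized $\phi$-vertex as $e^{2(j_\ell+1)\phi(z_\ell)}$, this shift produces a further deterministic factor $e^{-b^2(j_\ell+1)\omega(z_\ell)}$ per insertion. Combining the two contributions, the total factor attached to the $\ell$-th insertion is
\[
e^{b^2(j_\ell+1)^2\omega(z_\ell)-b^2(j_\ell+1)\omega(z_\ell)}=e^{b^2 j_\ell(j_\ell+1)\omega(z_\ell)}=e^{-\triangle_{j_\ell}\omega(z_\ell)},
\]
by \eqref{CWH3general}. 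Taking the limits $\epsilon\to0$, $T'\to\infty$, $T\to\infty$, with the interchange justified exactly as in the proof of Proposition \ref{exists_correl_Sigma} by dominated convergence using the $\mc{O}(|\boldsymbol{\mu}|^{-2s_0})$ bounds obtained there, then gives the claimed identity.

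The main obstacle is the rigorous justification of the conformal covariance of the regularized vertices: controlling the remainder in $X_{g',\epsilon}(z)=X_{g,\epsilon e^{-\omega(z)/2}}(z)+o(1)$ uniformly enough to commute the conversion with the $g'$-expectation and the application of Proposition \ref{prop:weyl_surface}, together with a clean use of Lemma \ref{lemmainvar_Sigma} to transfer the $\gamma$-regularization from $g'$ to $g$ without generating a prefactor. Once these regularization statements are in place, the algebra combining $b^2(j+1)^2$ and $-b^2(j+1)$ into $-\triangle_j=b^2 j(j+1)$ is immediate.
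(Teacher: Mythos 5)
Your proof is correct and follows essentially the same two-step route as the paper: first convert the $g'$-regularized vertices into $g$-regularized ones (picking up the factor $e^{b^2(j_\ell+1)^2\omega(z_\ell)}$ via $W_{g'}=W_g+\tfrac{1}{2}\omega$), then apply Proposition \ref{prop:weyl_surface}, whose field shift $\phi\mapsto\phi-\tfrac{b^2}{2}\omega$ contributes $e^{-b^2(j_\ell+1)\omega(z_\ell)}$ per insertion, and combine before passing to the limits. Your bookkeeping is in fact cleaner than the paper's intermediate display, where the factor written after applying the Weyl anomaly alone, $e^{b^2 j_\ell(j_\ell+1)\omega(z_\ell)}$, appears to be a misprint for $e^{-b^2(j_\ell+1)\omega(z_\ell)}$; the combined exponent $b^2(j_\ell+1)^2-b^2(j_\ell+1)=b^2 j_\ell(j_\ell+1)=-\triangle_{j_\ell}$ and the final identity agree with yours.
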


\begin{proof}  First note that changing the metric regularisation yields (using Landau notation)
\[
\Big\langle\prod_{\ell=1}^mV^{g', \epsilon}_{j_\ell,\mu_\ell}(z_\ell)\chi_T \Big\rangle_{\Sigma,g',\mc{L},\beta}= (1+o_\epsilon(1))\prod_{\ell=1}^me^{b^2(j_\ell+1)^2\omega(z_\ell)}\Big\langle\prod_{\ell=1}^mV^{g, \epsilon}_{j_\ell,\mu_\ell}(z_\ell) \chi_T\Big\rangle_{\Sigma,g',\mc{L},\beta}.
\]
We can then use Proposition  \ref{prop:weyl_surface}     to get
\[\Big\langle\prod_{\ell=1}^mV^{g, \epsilon}_{j_\ell,\mu_\ell}(z_\ell) \Big\rangle_{\Sigma,g',\mc{L},\beta}=(1+o_\epsilon(1))  e^{ \frac{{\bf c}(k)}{96\pi}S^0_{\rm L}(g,\omega)}\prod_{\ell=1}^me^{b^2j_\ell(j_\ell+1)\omega(z_\ell)}\Big\langle\prod_{\ell=1}^mV^{g, \epsilon}_{j_\ell,\mu_\ell}(z_\ell) \Big\rangle_{\Sigma,g,\mc{L},\beta}.\]
Combining and passing to the limit  as $\epsilon\to 0$, then $T\to \infty$,   we complete the proof, using $b=(k-2)^{-1/2}$.
\end{proof}

\section{Correspondence with Liouville theory}
In this section, we relate the $\mathbb{H}^3$-WZW correlation functions to Liouville correlation functions. In Proposition 
\ref{exists_correl_Sigma}, we have seen that the correlations are expressed in terms of negative moments of  $\|\Gamma_{\boldsymbol{\mu},{\bf z}}+\omega\|^2_{L^2(\theta_u)}$ where $u$ is expressed in terms of Green's function $G_g$ and $\omega\in W_1=\ker \bar{\pl}^*_{\mc{L}^{-2}}\simeq H^1(\Sigma;\mc{L}^{-2})$. 
In order to recover some Liouville correlations we have to express the pointwise squared norm $|\Gamma_{\boldsymbol{\mu},{\bf z}}+\omega|^2$ in terms of Green's function $G_g$. 
We shall show that this is indeed the case: $\Gamma_{\boldsymbol{\mu},{\bf z}}^*+\omega^*$ is a meromorphic section of $\mc{L}^{-2}\otimes \Lambda^{1,0}\Sigma$, with poles at $z_i$ and with zeros $y_\ell(\omega)$ (with multiplicity $n_\ell(\omega)$) that will play 
the role of new insertions in the Liouville CFT side with degenerate weights $V_{-n_\ell(\omega)/b}(y_\ell(\omega))$.\\

We still consider the geometric setup of Section \ref{setup_E}: $(E,\bar{\pl}_E,g_E)$ is represented as an extension with parameters 
$(\mc{L},\beta)$ and we let $W_0:=\ker \bar{\pl}_{\mc{L}^{-2}}=H^0(\Sigma,\mc{L}^{-2})$ and $W_1=\ker \bar{\pl}^*_{\mc{L}^{-2}}\simeq H^1(\Sigma;\mc{L}^{-2})$. 
\begin{lemma}[\textbf{New insertions}]\label{expression_norm_Gamma}
Let ${\bf z}=(z_1,\dots,z_m)$ be $m$ distinct points on $\Sigma$, and 
$\boldsymbol{\mu}=(\mu_1,\dots,\mu_m)\in V_{{\bf z}}^0$ be fixed, with  $V_{{\bf z}}^0$ defined in \eqref{Vz0}.
Consider $\Gamma_{\boldsymbol{\mu},{\bf z}}$ the section of $\mc{L}^{-2}\otimes \Lambda^{0,1}\Sigma$ defined in \eqref{Gammae(la)} satisfying 
 \[\bar{\pl}^*_{\mc{L}^{-2}}\Gamma_{\boldsymbol{\mu},{\bf z}}=\sum_{\ell=1}^m\bar{\mu}_\ell \delta_{z_\ell}.\]
 1) For each $\omega\in W_1$, using the duality map \eqref{dualitys^*}, $\Gamma_{\boldsymbol{\mu},{\bf z}}^*+\omega^*$ is a meromorphic section of 
$\mc{L}^{-2}\otimes \Lambda^{1,0}\Sigma$, with zeros called \textbf{new insertion} and denoted by $y_\ell(\omega)$ and multiplicity $n_\ell(\omega)$ for $\ell=1,\dots,m(\omega)$ such that
\begin{equation}\label{nb_of_zeros}
m-\sum_{\ell=1}^{m(\omega)} n_\ell(\omega) -\chi(\Sigma)+2{\rm deg}(\mc{L})=0.
\end{equation}
2) The pointwise squared norm   $| \Gamma_{\boldsymbol{\mu},{\bf z}}+\omega|^2:=| \Gamma_{\boldsymbol{\mu},{\bf z}}+\omega|^2_{g_{\mc{L}^{-2}}\otimes g}$ with respect to  $g_{\mc{L}^{-2}}\otimes g$ satisfies 
\begin{equation}\label{log|Gamma|} 
\begin{split}
\log | \Gamma_{\boldsymbol{\mu},{\bf z}}+\omega|_{\mc{L}^{-2}\otimes \Lambda^{0,1}\Sigma}^2=&C(\omega)+2(\sum_{\ell=1}^mG_g(\cdot,z_j)
 -\sum_{\ell=1}^{m(\omega)} n_\ell(\omega) G_g(\cdot,y_\ell(\omega))) \\
& -\frac{1}{2\pi}\int_\Sigma G_g(\cdot,y)K_g(y){\rm dv}_g(y)-\frac{2}{\pi i}\int_\Sigma G_g(\cdot,y)F_{\mc{L}}(y)
\end{split}\end{equation}
where $C(\omega)$ is the constant defined by 
\begin{equation}\label{defC(la)}
C(\omega):= \int_\Sigma \log | \Gamma_{\boldsymbol{\mu},{\bf z}}+\omega|^2{\rm dv}_g.
\end{equation}
\end{lemma}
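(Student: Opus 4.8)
The proof splits into the two assertions; throughout set $s := \Gamma_{\boldsymbol{\mu},{\bf z}}^* + \omega^*$, where $*$ is the antilinear isometry of \eqref{dualitys^*}. Since $\Gamma_{\boldsymbol{\mu},{\bf z}}+\omega$ is a section of $\mc{L}^{-2}\otimes\Lambda^{0,1}\Sigma$, its image $s$ is a section of $\mc{L}^{2}\otimes\Lambda^{1,0}\Sigma$ and $|s|^2_{g_{\mc{L}^2}\otimes g}=|\Gamma_{\boldsymbol{\mu},{\bf z}}+\omega|^2$ pointwise (the bundle here is $\mc{L}^2\otimes\Lambda^{1,0}\Sigma$, which is what makes the degree count below produce \eqref{nb_of_zeros}).

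For part 1, the plan is first to note that the constraint $\boldsymbol{\mu}\in V_{\bf z}^0$ of \eqref{Vz0} forces $P_0(\sum_\ell\bar\mu_\ell\delta_{z_\ell})=0$, so that by \eqref{Prop_T_phi} the section $\Gamma_{\boldsymbol{\mu},{\bf z}}=T_0(\sum_\ell\bar\mu_\ell\delta_{z_\ell})$ of \eqref{Gammae(la)} satisfies $\bar{\pl}^*_{\mc{L}^{-2}}\Gamma_{\boldsymbol{\mu},{\bf z}}=\sum_\ell\bar\mu_\ell\delta_{z_\ell}$. Applying the Serre duality identity \eqref{dbar^*-dbar} (with $n=-2$) then gives $\bar{\pl}_{\mc{L}^2}s=-(\sum_\ell\bar\mu_\ell\delta_{z_\ell})^*$, which vanishes off $\{z_1,\dots,z_m\}$, while $\omega^*$ is holomorphic everywhere since $\omega\in\ker\bar{\pl}^*_{\mc{L}^{-2}}$; hence $s$ is holomorphic away from the marked points. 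Next I would read off from the near-diagonal form of the kernel of $T_0$ in Lemma \ref{inverseDphi} that $\Gamma_{\boldsymbol{\mu},{\bf z}}$ has a $\tfrac{1}{\bar z-\bar z_\ell}d\bar z$ singularity at each $z_\ell$, so that after applying $*$ the section $s$ acquires a simple pole $\tfrac{{\rm const}}{z-z_\ell}dz$; thus $s$ extends to a global meromorphic section of $\mc{L}^{2}\otimes\Lambda^{1,0}\Sigma$ with at most simple poles on $\{z_\ell\}$. Finally, for $\boldsymbol{\mu}$ generic (all residues nonzero) the polar divisor has degree $m$, and comparing with ${\rm deg}(\mc{L}^{2}\otimes\Lambda^{1,0}\Sigma)=2\,{\rm deg}(\mc{L})+{\rm deg}(\Lambda^{1,0}\Sigma)=2\,{\rm deg}(\mc{L})-\chi(\Sigma)$ yields $\sum_\ell n_\ell(\omega)=m+2\,{\rm deg}(\mc{L})-\chi(\Sigma)$, which is exactly \eqref{nb_of_zeros}.

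For part 2, I would apply the Poincar\'e--Lelong formula to $s$ on the Hermitian line bundle $(\mc{L}^{2}\otimes\Lambda^{1,0}\Sigma,\,g_{\mc{L}^2}\otimes g)$. Writing $s=f\,e$ in a local holomorphic frame with $f$ meromorphic, one has $\log|s|^2=\log|f|^2-\psi$ with $\psi$ a local potential for the curvature; combining $\Delta_g u\,{\rm v}_g=-2i\,\pl\bar{\pl}u$ with $\Delta_g\log|z|^2=-4\pi\delta_0$ (for the nonnegative Laplacian) gives, as currents,
\begin{equation*}
\Delta_g\log|\Gamma_{\boldsymbol{\mu},{\bf z}}+\omega|^2=4\pi\sum_{\ell=1}^m\delta_{z_\ell}-4\pi\sum_{\ell=1}^{m(\omega)}n_\ell(\omega)\delta_{y_\ell(\omega)}-K_g+\frac{4iF_{\mc{L}}}{{\rm v}_g},
\end{equation*}
using that the curvature of $\mc{L}^{2}\otimes\Lambda^{1,0}\Sigma$ is $2F_{\mc{L}}+F_{\Lambda^{1,0}\Sigma}$ and that $iF_{\Lambda^{1,0}\Sigma}=-\tfrac12K_g{\rm v}_g$ (normalised by ${\rm deg}(\Lambda^{1,0}\Sigma)=-\chi(\Sigma)$ and Gauss--Bonnet $\int_\Sigma K_g{\rm v}_g=4\pi\chi(\Sigma)$). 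The right-hand side has vanishing ${\rm v}_g$-mass precisely by \eqref{nb_of_zeros}, so I may invert $\Delta_g$ with the Green operator $R_g$ of Section \ref{sec:Green}, which satisfies $\Delta_gR_g=2\pi({\rm Id}-P_0)$ and $R_g1=0$. Integrating the display against $\tfrac{1}{2\pi}G_g(\cdot,y)$ reproduces the four terms of \eqref{log|Gamma|} with the coefficients $2$, $-2n_\ell(\omega)$, $-\tfrac{1}{2\pi}$ and $-\tfrac{2}{\pi i}=\tfrac{2i}{\pi}$, and the additive constant is fixed by integrating both sides against ${\rm v}_g$ and using $\int_\Sigma G_g(\cdot,y)\,{\rm v}_g(y)=0$, yielding \eqref{defC(la)}.

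The main obstacle I anticipate is the bookkeeping of constants and signs in the Poincar\'e--Lelong step: pinning down the distributional normalisation $\Delta_g\log|z|^2=-4\pi\delta_0$, the identity $\Delta_gu\,{\rm v}_g=-2i\,\pl\bar{\pl}u$, and the curvature normalisations $\tfrac{i}{2\pi}\int_\Sigma F_{\mc{L}}={\rm deg}(\mc{L})$ and $iF_{\Lambda^{1,0}\Sigma}=-\tfrac12K_g{\rm v}_g$, so that the global curvature contribution integrates to $2\pi\,{\rm deg}(\mc{L}^{2}\otimes\Lambda^{1,0}\Sigma)$ in a way consistent with \eqref{nb_of_zeros}. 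A secondary point to treat carefully is the degenerate configuration where some $\mu_\ell$ vanish, in which the polar order at $z_\ell$ drops; there \eqref{nb_of_zeros} must be read with the total polar degree, and the stated identity is recovered whenever all residues are nonzero.
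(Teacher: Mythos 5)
Your proposal is correct and follows essentially the same route as the paper: Serre duality via the $*$-map to realise $\Gamma_{\boldsymbol{\mu},{\bf z}}^*+\omega^*$ as a meromorphic section of $\mc{L}^{2}\otimes\Lambda^{1,0}\Sigma$ with simple poles at the $z_\ell$ (the condition $\boldsymbol{\mu}\in V_{\bf z}^0$ killing the $P_0$ term), then the curvature/Poincar\'e--Lelong computation of $\Delta_g\log|\Gamma_{\boldsymbol{\mu},{\bf z}}+\omega|^2$ as a current, and finally inversion by the Green function. The only (immaterial) difference is that you derive \eqref{nb_of_zeros} from the divisor-degree relation for meromorphic sections and then check mass consistency, whereas the paper reads it off by integrating the distributional identity over $\Sigma$ — the two are equivalent — and your caveat about vanishing residues $\mu_\ell$ corresponds to an assumption the paper's own proof also makes implicitly.
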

\begin{proof}
We first consider the pointwise squared norm $ | \Gamma_{\boldsymbol{\mu},{\bf z}}+\omega|^2$ and rewrite it in terms of Green's function of the Laplacian.  Using the maps \eqref{dualitys^*} and the relation \eqref{dbar^*-dbar}, we remark that when $W_0=0$ or when  $\boldsymbol{\mu}\in V^0_{{\bf z}}$, then 
$\Gamma_{\boldsymbol{\mu},{\bf z}}^*$ is a meromorphic 
section of $\mc{L}^{2}\otimes \Lambda^{1,0}\Sigma$ since, as a $2$-current, we have for $(f_j)_j$ and orthonormal basis of $W_0$
\[ \bar{\pl}_{\mc{L}^{2}\otimes \Lambda^{1,0}\Sigma}\Gamma_{\boldsymbol{\mu},{\bf z}}^*=-( \bar{\pl}_{\mc{L}^{-2}}^*\Gamma_{\boldsymbol{\mu},{\bf z}}^*)^*=-\Big(\sum_{\ell=1}^m\mu_\ell \delta_{z_\ell}{\rm v}_g- \sum_{j=1}^d\sum_{\ell=1}^m\mu_\ell f_j(z_\ell)\bar{f}_j{\rm v}_g\Big)=-\sum_{\ell=1}^m\mu_\ell \delta_{z_\ell}{\rm v}_g\] 
where the last inequality follows from the condition $\boldsymbol{\mu}\in V^0_{{\bf z}}$.

If $s$ is a local holomorphic trivialization of $\mc{L}^{2}\otimes \Lambda^{1,0}\Sigma$ in a small open set $U\subset \Sigma$, 
the curvature of the holomorphic line bundle $\mc{L}^{2}\otimes \Lambda^{1,0}\Sigma$ can then be expressed by the formula (see \cite[eq 2.4.1]{Bost_Bourbaki})
\[ \bar{\pl}\pl \log|s|^2= {\rm Tr}(F_{\mc{L}^{2}\otimes \Lambda^{1,0}\Sigma})=2F_{\mc{L}}+\frac{i}{2}K_g{\rm v}_g\]
where $F_{\mc{L}^{2}\otimes \Lambda^{1,0}\Sigma}$ denotes the curvature $(1,1)$-form of $\mc{L}^{2}\otimes \Lambda^{1,0}\Sigma$ and $\frac{i}{2}K_g{\rm v}_g$ is the curvature of the canonical bundle $\Lambda^{1,0}\Sigma$. Since $|\Gamma_{\boldsymbol{\mu},{\bf z}}^*+\omega^*|_{g_{\mc{L}^2}\otimes g}=|\Gamma_{\boldsymbol{\mu},{\bf z}}+\omega|_{g_{\mc{L}^2}\otimes g}$, this means that outside the zeros and poles of $\Gamma_{\boldsymbol{\mu},{\bf z}}^*+\omega^*$, we have 
\[\Delta_g\log (| \Gamma_{\boldsymbol{\mu},{\bf z}}+\omega|^2){\rm v}_g=-K_g{\rm v}_g+4iF_{\mc{L}}.\]
Now, we consider a neighborhood $U$ of a pole $z_j$ (resp. a zero $x_j$) of  $(\Gamma_{\boldsymbol{\mu},{\bf z}}+\omega)^*$ and a local complex coordinate $z$ centered at $z_j$ (resp. $x_j$). We can write $(\Gamma_{\boldsymbol{\mu},{\bf z}}+\omega)^*(z)=(z-z_j)^{-1}s(z)$ in $U$
for some local holomorphic non-vanishing section $s$ of $\mc{L}^{2}\otimes \Lambda^{1,0}\Sigma$ by Lemma \ref{inverseDphi}. Therefore we have in $U$
\[\begin{split}
\Delta_g\log (| \Gamma_{\boldsymbol{\mu},{\bf z}}+\omega|^2){\rm v}_g=&(-\Delta_g\log |z-z_j|^2+\Delta_g\log |s|^2){\rm v}_g\\
 =& 4\pi \delta_{z_j} -K_g{\rm v}_g+4iF_{\mc{L}}.\end{split}
 \]
In the case of a zero $y_j$ or order $n_j$, the same argument gives the same result but with $4\pi \delta_{z_j}$ replaced by $-4\pi n_j\delta_{y_j}$, and 
we obtain in the distributional sense 
\[ \Delta_g \log | \Gamma_{\boldsymbol{\mu},{\bf z}}+\omega|^2 {\rm v}_g= 4\pi (\sum_{\ell=1}^m \delta_{z_\ell} -\sum_{\ell=1}^{p} n_\ell \delta_{y_\ell})-K_g{\rm v}_g+4iF_{\mc{L}}.\]
This implies that the right hand side integrates to $0$ on $\Sigma$ and integrating this identity against the Green function  gives \eqref{nb_of_zeros} (this is also Riemann-Roch theorem).
\end{proof}
Notice that the zeros $y_\ell(\omega)$ and their multiplicity also depends on $({\bf z},\boldsymbol{\mu})$. 

We can now write the correspondence between $\mathbb{H}^3$-WZW correlation functions and Liouville correlation functions \eqref{muL=1}, 
with $\mu_L=1$. 
\begin{theorem}[\textbf{WZW--Liouville correspondence, generic case}]\label{Th_corresp_G}
Let  $(E,\bar{\pl}_E,g_E)$ be a holomorphic vector bundle of rank $2$ and trivial determinant, equipped with a Hermitian metric $g_E$ on a closed Riemannian surface $(\Sigma,g)$. Consider $E$ as an extension with parameters $(\mc{L},\beta)$ as described in Section \ref{setup_E} and assume that
\[  \ker \bar{\pl}_{\mc{L}^{-2}}=H^0(\Sigma,\mc{L}^{-2})=0, \qquad b=(k-2)^{-\frac{1}{2}}\in (0,1)\]
Let ${\bf z}=(z_1,\dots,z_m)$ be $m$ distinct marked points on $\Sigma$, $\boldsymbol{j}=(j_1,\dots,j_m)$ be weights satisfying the admissiblity 
bounds \eqref{seib_Sigma}.
For $\boldsymbol{\mu}=(\mu_1,\dots,\mu_m)\in \mc{L}^{2}_{z_1}\times \dots \times \mc{L}^{2}_{z_m}$ the following WZW-Liouville correspondence holds
\begin{equation}\label{identitycorrespondence}
\Big\langle \prod_{\ell=1}^mV_{j_\ell,\mu_\ell}(z_\ell)  \Big\rangle^{\mathbb{H}^3}_{\Sigma,g,\mc{L},\beta}=
\int_{W_1}e^{-2i{\rm Im}\cjg \Gamma_{\boldsymbol{\mu},{\bf z}}+\omega,\beta \cjd_2
-s_0 C(\omega)}\mc{F}_{{\bf z},\boldsymbol{j}}({\bf y}(\omega))\, \Big\cjg \prod_{\ell=1}^m V_{\alpha_\ell}(z_\ell)\prod_{\ell=1}^{m(\omega)}V_{-\frac{n_\ell(\omega)}{b}}(y_\ell(\omega))\Big\cjd_{\Sigma,g}^{\rm L} \dd\omega
\end{equation}
with $W_1=\ker \bar{\pl}_{\mc{L}^{-2}}^*\simeq H^{1}(\Sigma,\mc{L}^{-2})$, $\dd \omega$ is the volume measure associated to the $L^2$-product \eqref{norme2bis}, with $\alpha_\ell:=2b(j_\ell+1)+\frac{1}{b}$ and ${\bf y}(\omega)=(y_1(\omega),\dots,y_{m(\omega)}(\omega))$ 
is the set of new insertions defined by Lemma \ref{expression_norm_Gamma}, i.e. the zeros of $\Gamma_{\boldsymbol{\mu},{\bf z}}+\omega$, with multiplicity ${\bf n}(\omega)=(n_1(\omega),\dots,n_{m(\omega)}(\omega))$, and for each family of disjoint points
${\bf y}=(y_1,\dots,y_{p})$ with multiplicities ${\bf n}=(n_1,\dots,n_{p})$,  
the function $\mc{F}_{{\bf z},\boldsymbol{j}}({\bf y})$ is given by 
\begin{equation}\label{formula_F_lastTh}
\mc{F}_{{\bf z},\boldsymbol{j}}({\bf y})=\frac{2b\, C_{k,\mc{L},\beta}(s_0)}{\det(\mc{D}_0)\Gamma(2bs_0)}e^{B_{\H^3}({\bf z},\boldsymbol{j})-B_{\rm L}({\bf z},{\bf y},\boldsymbol{\alpha},-{\bf n}/b)}
\end{equation}
where $\boldsymbol{\alpha}=(\alpha_1,\dots,\alpha_m)$ and the constants $C_{k,\mc{L},\beta}(s_0)$, $C(\omega)$, $B_{\H^3}$, $B_{\rm L}$ are defined in \eqref{Cklbeta}, \eqref{defC(la)}, \eqref{Bzj} and \eqref{BL(x,alpha)}.
\end{theorem}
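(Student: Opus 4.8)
The plan is to transform, for each fixed $\omega\in W_1$, the probabilistic integrand appearing in the formula of Proposition \ref{exists_correl_Sigma} into a Liouville correlation function. The whole statement then follows by integrating over $W_1$ against $\dd\omega$ and collecting the explicit deterministic prefactors. The bridge between the two theories is the pointwise Green's-function representation of $\log|\Gamma_{\boldsymbol{\mu},{\bf z}}+\omega|^2$ supplied by Lemma \ref{expression_norm_Gamma}, which is available here because in the generic case $W_0=0$ (so $\boldsymbol{\mu}$ is unconstrained, $V_{\bf z}^0=V_{\bf z}$) and because $\omega\in\ker\bar{\pl}_{\mc{L}^{-2}}^*$ makes $\omega^*$ holomorphic, so that $\Gamma_{\boldsymbol{\mu},{\bf z}}^*+\omega^*$ is a genuine meromorphic section of $\mc{L}^{-2}\otimes\Lambda^{1,0}\Sigma$ with poles at the $z_\ell$ and zeros at the new insertions $y_\ell(\omega)$.

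First I would expand the GMC norm,
\[
\|\Gamma_{\boldsymbol{\mu},{\bf z}}+\omega\|^2_{L^2(\theta_u)}=\int_\Sigma |\Gamma_{\boldsymbol{\mu},{\bf z}}+\omega|^2_{g_{\mc{L}^{-2}}\otimes g}\,e^{2u}M^g_{2b}(X_g,\dd x),
\]
and substitute the expression \eqref{log|Gamma|} for $\log|\Gamma_{\boldsymbol{\mu},{\bf z}}+\omega|^2$. The heart of the argument is the deterministic drift-matching identity
\[
2u(x)+\log|\Gamma_{\boldsymbol{\mu},{\bf z}}+\omega|^2(x)-C(\omega)=2b\,u_{\boldsymbol{\alpha}}(x),
\]
where $u_{\boldsymbol{\alpha}}$ is the Liouville drift \eqref{Z0Cx} attached to the insertion data $(z_\ell,\alpha_\ell)_{\ell\le m}$ and $(y_\ell(\omega),-n_\ell(\omega)/b)_{\ell\le m(\omega)}$, with $\alpha_\ell=2b(j_\ell+1)+\tfrac1b$. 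I would verify it term by term against \eqref{def_of_u}: the coefficient of $G_g(\cdot,z_\ell)$ is $4b^2(j_\ell+1)+2=2b\alpha_\ell$; the coefficient of $G_g(\cdot,y_\ell(\omega))$ is $-2n_\ell(\omega)=2b(-n_\ell(\omega)/b)$; and using $b^2(k-2)=1$ the two $F_{\mc{L}}$ contributions cancel exactly, while the $K_g$ contributions combine to $-\tfrac{1+b^2}{2\pi}\int G_g K_g$, which equals $2b(-\tfrac{Q}{4\pi})\int G_gK_g$ since $bQ=1+b^2$. This yields $\|\Gamma_{\boldsymbol{\mu},{\bf z}}+\omega\|^2_{L^2(\theta_u)}=e^{C(\omega)}M^g_{2b}(X_g+u_{\boldsymbol{\alpha}},{\rm v}_g,\Sigma)$.

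Next I would raise this to the power $-s_0$, take expectations, and recognise the right-hand side through the probabilistic representation \eqref{zidaneilamarque} of the Liouville correlation with $\mu_{\rm L}=1$. To apply it I must check the Seiberg bounds \eqref{seiberg}: $\alpha_\ell<Q$ is equivalent to $j_\ell<-\tfrac12$, the degenerate weights satisfy $-n_\ell(\omega)/b<Q$ automatically, and $s(\boldsymbol{\alpha})>0$. I would then compute $s(\boldsymbol{\alpha})=\sum_i\alpha_i+2Q(g(\Sigma)-1)$ using the zero-count \eqref{nb_of_zeros}, namely $\sum_\ell n_\ell(\omega)=m-\chi(\Sigma)+2{\rm deg}(\mc{L})$, together with $\chi(\Sigma)=2-2g(\Sigma)$ and $1/b^2=k-2$; a short computation gives $s(\boldsymbol{\alpha})=2bs_0$, so $s(\boldsymbol{\alpha})/2b=s_0$ and the two exponents agree. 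Hence
\[
\E\big[\|\Gamma_{\boldsymbol{\mu},{\bf z}}+\omega\|^{-2s_0}_{L^2(\theta_u)}\big]=e^{-s_0C(\omega)}\,\frac{2b}{\Gamma(s_0)}\,e^{-B_{\rm L}({\bf z},{\bf y}(\omega),\boldsymbol{\alpha},-{\bf n}(\omega)/b)}\Big(\tfrac{\det(\Delta_g)}{{\rm v}_g(\Sigma)}\Big)^{1/2}\Big\cjg\prod_{\ell=1}^m V_{\alpha_\ell}(z_\ell)\prod_{\ell=1}^{m(\omega)}V_{-n_\ell(\omega)/b}(y_\ell(\omega))\Big\cjd^{\rm L}_{\Sigma,g}.
\]

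Finally I would insert this into Proposition \ref{exists_correl_Sigma}: the factors $(\det(\Delta_g)/{\rm v}_g(\Sigma))^{\pm1/2}$ cancel, and the surviving deterministic factor $\tfrac{C_{k,\mc{L},\beta}(s_0)}{\det(\mc{D}_0)}e^{B_{\H^3}({\bf z},\boldsymbol{j})}$ times the Gamma-factor and $e^{-B_{\rm L}}$ assembles, after inserting the definition \eqref{Cklbeta} of $C_{k,\mc{L},\beta}(s_0)$, into exactly $\mc{F}_{{\bf z},\boldsymbol{j}}({\bf y}(\omega))$ of \eqref{formula_F_lastTh}, giving \eqref{identitycorrespondence}. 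I expect the main obstacle to be the drift-matching identity of the second paragraph: it hinges on the meromorphic character of $\Gamma_{\boldsymbol{\mu},{\bf z}}^*+\omega^*$ and the global representation \eqref{log|Gamma|}, and on the exact arithmetic reconciling the WZW weights $j_\ell$ with the Liouville weights $\alpha_\ell$ and the curvature terms. A secondary point requiring care is the measurability and $\dd\omega$-integrability of the resulting Liouville correlation over $W_1$; this should follow from the uniform estimates already obtained in the proof of Proposition \ref{exists_correl_Sigma} (the bounds on $\E[\|\Gamma_{\boldsymbol{\mu},{\bf z}}+\omega\|_{L^2(\theta_u)}^{-2s_0}]$ in $\omega$), so that dominated convergence legitimises the pointwise-in-$\omega$ substitution under the integral.
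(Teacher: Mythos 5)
Your proposal is correct and follows essentially the same route as the paper's proof: both hinge on Lemma \ref{expression_norm_Gamma} to rewrite $\|\Gamma_{\boldsymbol{\mu},{\bf z}}+\omega\|^2_{L^2(\theta_u)}$ as $e^{C(\omega)}M^g_{2b}(X_g+u_{\boldsymbol{\alpha}},{\rm v}_g,\Sigma)$ for the Liouville drift $u_{\boldsymbol{\alpha}}$ with weights $\alpha_\ell=2b(j_\ell+1)+\tfrac1b$ and $-n_\ell(\omega)/b$, on the exponent identity $s(\boldsymbol{\alpha}')=2bs_0$ obtained from \eqref{nb_of_zeros}, and on the probabilistic representation \eqref{zidaneilamarque} with $\mu_{\rm L}=1$, all inserted into Proposition \ref{exists_correl_Sigma} and integrated over $W_1$. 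The only difference is cosmetic: the paper first reduces to constant curvature via the Weyl anomaly so that all $\int_\Sigma G_g(\cdot,y)K_g(y){\rm dv}_g(y)$ terms drop out, whereas you verify the drift-matching identity directly in general curvature using $bQ=1+b^2$, which is equally valid.
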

\begin{proof}  Thanks to Lemma \ref{confweightgeneral} and \eqref{LCFT_anomaly}, it suffices to consider the case where the curvature 
$K_g$ is constant, and to simplify exposition we assume so. 
By  \eqref{log|Gamma|}, we observe that for $u$ given by \eqref{def_of_u}
\[  \E\Big[ \|\Gamma_{\boldsymbol{\mu},{\bf z}}+\omega\|_{L^2(\theta_u)}^{-2s_0} \Big]= e^{-s_0 C(\omega)}\E\Big[ M_{2b}^g(X_g+u_{\boldsymbol{\alpha}},\Sigma)^{-s_0} \Big]\]
where the function $u_{\boldsymbol{\alpha}}$ is given by\footnote{Recall that $\int_\Sigma G_g(x,x')K_g(x'){\rm v}_g(\dd x')=0$ if $K_g$ is constant, and we notice that the $F_{\mc{L}}$ term coming from $|\Gamma_{\boldsymbol{\mu},{\bf z}}+\omega|^2$ in \eqref{log|Gamma|} cancels that appearing in $e^{2u}\theta$ from  \eqref{def_of_u}.}
\[ u_{\boldsymbol{\alpha}}(x):= \sum_{\ell=1}^m \alpha_\ell G_g(x,z_\ell)-\sum_{\ell=1}^{m(\omega)}\frac{n_\ell(\omega)}{b}G_g(x,y_\ell(\omega)) \]
with $\alpha_\ell:=2b(j_\ell+1)+\frac{1}{b}$.
Combining this with Proposition \ref{exists_correl_Sigma}, we obtain
\[ \begin{split}
\Big\langle \prod_{\ell=1}^mV_{j_\ell,\mu_\ell}(z_\ell) \Big\rangle^{\H^3}_{\Sigma,g,\mc{L},\beta}=& \frac{C_{k,\mc{L},\beta}(s_0)}{\det(\mc{D}_0)}\Big(\frac{\det(\Delta_{g})}{{\rm v}_g(\Sigma)}\Big)^{-\frac{1}{2}}e^{B_{\H^3}({\bf z},\boldsymbol{j})}\\
& \times \int_{W_1}e^{-2i{\rm Im}\cjg \Gamma_{\boldsymbol{\mu},{\bf z}}+\omega,\beta \cjd_2
-s_0 C(\omega)}\E\Big[ M_{2b}^g(X_g+u_{\boldsymbol{\alpha}},\Sigma)^{-s_0} \Big]\ \dd\omega
\end{split}
\]
Using the definition of $s_0$ and \eqref{nb_of_zeros}, we also have (with $Q:= b+\frac{1}{b}$)
\[ 2s_0=\frac{1}{b^2}\Big(m-\sum_{\ell}n_\ell(\omega)-\chi(\Sigma)\Big)-\chi(\Sigma)+2\sum_{\ell=1}^m(j_\ell+1) =\frac{1}{b}\Big(\sum_{\ell=1}^m\alpha_\ell-\sum_{\ell=1}^{m(\omega)}\frac{n_\ell(\omega)}{b}-Q\chi(\Sigma)\Big)=\frac{s(\boldsymbol{\alpha}')}{b}\]
where $s(\boldsymbol{\alpha}')$ is the exponent in the Liouville correlation function \eqref{zidaneilamarque} with 
$\boldsymbol{\alpha}'=\boldsymbol{\alpha}'(\omega)=(\boldsymbol{\alpha},-{\bf n}(\omega)/b)$.
Comparing the expression in the Liouville correlation function \eqref{zidaneilamarque}, one obtains
\begin{align*}
& \Big\langle \prod_{\ell=1}^mV_{j_\ell,\mu_\ell}(z_\ell) \Big\rangle^{\H^3}_{\Sigma,g,\mc{L},\beta}\\
&=\frac{2b\, C_{k,\mc{L},\beta}e^{B_{\H^3}({\bf z},\boldsymbol{j})}}{\det(\mc{D}_0)\Gamma(2bs_0)}\int_{W_1}e^{-2i{\rm Im}\cjg \Gamma_{\boldsymbol{\mu},{\bf z}}+\omega,\beta \cjd_2
-s_0 C(\omega)-B_{\rm L}({\bf x}(\omega),\boldsymbol{\alpha}'(\omega))}\Big\langle \prod_{\ell=1}^{n(\omega)}V_{\alpha_\ell}(x_\ell(\omega)) \Big\rangle_{\Sigma,g}^{{\rm L}} \dd \omega\end{align*}
with ${\bf x}(\omega)=({\bf z}, {\bf y}(\omega))$ the union of the WZW insertions and of the $m(\omega)$ new insertions (here $n(\omega)=m+m(\omega)$) given by Lemma \ref{expression_norm_Gamma}. The Liouville correlations in the right hand side have parameter $\mu_{\rm L}=1$.
\end{proof}
Now, the same proof as above combined with Proposition \ref{exists_correl_Sigma_NG} gives the following:
\begin{theorem}[\textbf{WZW--Liouville correspondence, non-generic case}]\label{Th_corresp_NG}
Let  $(E,\bar{\pl}_E,g_E)$ be a holomorphic vector bundle of rank $2$ and trivial determinant, equipped with a Hermitian metric $g_E$ on a closed Riemannian surface $(\Sigma,g)$. Consider $E$ as an extension with parameters $(\mc{L},\beta)$ as described in Section \ref{setup_E} and assume that
\[  \ker \bar{\pl}_{\mc{L}^{-2}}=H^0(\Sigma,\mc{L}^{-2})\not=0, \qquad b=(k-2)^{-\frac{1}{2}}\in (0,1)\]
Let ${\bf z}=(z_1,\dots,z_m)$ be $m$ distinct marked points on $\Sigma$, $\boldsymbol{j}=(j_1,\dots,j_m)$ be weights satisfying the admissiblity 
bounds \eqref{seib_Sigma}.
The following WZW-Liouville correspondence holds in the distribution sense in $\mc{D}'(\mc{L}^{2}_{z_1}\times \dots \times \mc{L}^{2}_{z_m})$ in the variable $\boldsymbol{\mu}=(\mu_1,\dots,\mu_m)$
\begin{equation}\label{identitycorrespondence2}
\Big\langle \prod_{\ell=1}^mV_{j_\ell,\mu_\ell}(z_\ell)  \Big\rangle^{\mathbb{H}^3}_{\Sigma,g,\mc{L},\beta}=\delta_{V_{\bf z}^0}
\int_{W_1}e^{-2i{\rm Im}\cjg \Gamma_{\boldsymbol{\mu},{\bf z}}+\omega,\beta \cjd_2
-s_0 C(\omega)}\mc{F}_{{\bf z},\boldsymbol{j}}({\bf y}(\omega))\, \Big\cjg \prod_{\ell=1}^m V_{\alpha_\ell}(z_\ell)\prod_{\ell=1}^{m(\omega)}V_{-\frac{n_\ell(\omega)}{b}}(y_\ell(\omega))\Big\cjd_{\Sigma,g}^{\rm L} \dd\omega
\end{equation}
using the same notations as in Theorem \ref{Th_corresp_NG} and $\delta_{V_{\bf z}^0}$ is the measure supported on $V_{\bf z}^0$ defined in \eqref{defVz}.
\end{theorem}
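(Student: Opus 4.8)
The plan is to follow verbatim the proof of Theorem \ref{Th_corresp_G} (the generic case), making the single modification needed to account for the non-trivial zero modes $W_0 = H^0(\Sigma,\mc{L}^{-2}) \neq 0$, which amounts to replacing Proposition \ref{exists_correl_Sigma} by Proposition \ref{exists_correl_Sigma_NG} as the input. The starting point is the explicit formula for the $\mathbb{H}^3$ correlation function from Proposition \ref{exists_correl_Sigma_NG}, which differs from the generic formula only by the overall factor $\delta_{V_{\bf z}^0}$ (the measure \eqref{defVz} supported on $V_{\bf z}^0 = \ker M_{\bf z}$) and the fact that the identity is understood distributionally in $\boldsymbol{\mu}$. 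Crucially, the object $\E[\|\Gamma_{\boldsymbol{\mu},{\bf z}}+\omega\|_{L^2(\theta_u)}^{-2s_0}]$ appearing inside the $W_1$-integral is \emph{identical} in form to the generic case, so the core analytic step carries over unchanged.

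\textbf{First}, I would invoke Lemma \ref{confweightgeneral} and the Liouville conformal anomaly \eqref{LCFT_anomaly} to reduce to the case of constant scalar curvature $K_g$, exactly as in the proof of Theorem \ref{Th_corresp_G}. \textbf{Next}, the decisive geometric input is Lemma \ref{expression_norm_Gamma}: since $\boldsymbol{\mu} \in V_{\bf z}^0$ on the support of $\delta_{V_{\bf z}^0}$, the hypothesis $\boldsymbol{\mu} \in V_{\bf z}^0$ of that lemma is met, so $\Gamma_{\boldsymbol{\mu},{\bf z}}^* + \omega^*$ is genuinely a meromorphic section of $\mc{L}^{-2}\otimes\Lambda^{1,0}\Sigma$ and the pointwise-norm identity \eqref{log|Gamma|} holds, producing the new insertions $y_\ell(\omega)$ with multiplicities $n_\ell(\omega)$ obeying \eqref{nb_of_zeros}. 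This is precisely the point where the restriction to $V_{\bf z}^0$ is essential and why the factor $\delta_{V_{\bf z}^0}$ must be present: off $V_{\bf z}^0$ the $2$-current $\bar{\pl}_{\mc{L}^2\otimes\Lambda^{1,0}}\Gamma_{\boldsymbol{\mu},{\bf z}}^*$ would pick up the extra $\sum_{j}\sum_\ell \mu_\ell f_j(z_\ell)\bar f_j{\rm v}_g$ term and $\Gamma_{\boldsymbol{\mu},{\bf z}}^*+\omega^*$ would fail to be meromorphic.

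\textbf{Then}, using \eqref{log|Gamma|} to rewrite $\E[\|\Gamma_{\boldsymbol{\mu},{\bf z}}+\omega\|_{L^2(\theta_u)}^{-2s_0}] = e^{-s_0 C(\omega)}\E[M_{2b}^g(X_g+u_{\boldsymbol{\alpha}},\Sigma)^{-s_0}]$ with $u_{\boldsymbol{\alpha}}(x) = \sum_\ell \alpha_\ell G_g(x,z_\ell) - \sum_\ell \tfrac{n_\ell(\omega)}{b}G_g(x,y_\ell(\omega))$ and $\alpha_\ell = 2b(j_\ell+1)+\tfrac{1}{b}$, I would match the resulting GMC negative moment against the probabilistic Liouville representation \eqref{zidaneilamarque} with $\mu_{\rm L}=1$. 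The exponent computation $2s_0 = s(\boldsymbol{\alpha}')/b$ with $\boldsymbol{\alpha}' = (\boldsymbol{\alpha},-{\bf n}(\omega)/b)$ goes through verbatim via \eqref{nb_of_zeros}, identifying the degenerate weights $-n_\ell(\omega)/b$. Collecting the constant prefactors into $\mc{F}_{{\bf z},\boldsymbol{j}}$ as defined by \eqref{formula_F_lastTh} yields \eqref{identitycorrespondence2}, with the overall $\delta_{V_{\bf z}^0}$ inherited directly from Proposition \ref{exists_correl_Sigma_NG}.

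\textbf{The main obstacle} is purely bookkeeping rather than conceptual: one must verify that the distributional identity of Proposition \ref{exists_correl_Sigma_NG} interacts correctly with the nonlinear, $\omega$- and $\boldsymbol{\mu}$-dependent construction of the new insertions ${\bf y}(\omega)$, so that passing to the distributional statement in $\boldsymbol{\mu}$ is legitimate. Because both sides of \eqref{identitycorrespondence2} are proportional to the same measure $\delta_{V_{\bf z}^0}$, and the integrand in $\omega \in W_1$ is, for each fixed $\boldsymbol{\mu}\in V_{\bf z}^0\setminus\{0\}$, exactly the generic-case integrand, the equality holds pointwise on $V_{\bf z}^0\setminus\{0\}$ and hence as distributions in $\mc{D}'((\mc{L}^2_{z_1}\times\dots\times\mc{L}^2_{z_m})\setminus\{0\})$; the extension across $\boldsymbol{\mu}=0$ under the additional hypothesis $\sum_\ell j_\ell < k\,{\rm deg}(\mc{L})$ is provided by the last statement of Proposition \ref{exists_correl_Sigma_NG}. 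I therefore expect the proof to be a short reduction that explicitly states ``the same computation as Theorem \ref{Th_corresp_G} combined with Proposition \ref{exists_correl_Sigma_NG}'' and records only the appearance of $\delta_{V_{\bf z}^0}$ and the distributional nature of the conclusion.
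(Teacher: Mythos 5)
Your proposal is correct and follows exactly the route the paper takes: the paper's own ``proof'' of Theorem \ref{Th_corresp_NG} is literally the one-line remark that it is the same argument as Theorem \ref{Th_corresp_G} combined with Proposition \ref{exists_correl_Sigma_NG} in place of Proposition \ref{exists_correl_Sigma}, with the factor $\delta_{V_{\bf z}^0}$ carried along. In fact you supply more detail than the paper does, in particular the (correct) observation that the restriction of $\boldsymbol{\mu}$ to $V_{\bf z}^0$ enforced by $\delta_{V_{\bf z}^0}$ is exactly the hypothesis under which Lemma \ref{expression_norm_Gamma} makes $\Gamma_{\boldsymbol{\mu},{\bf z}}^*+\omega^*$ meromorphic, so the new-insertion construction applies on the support of the measure.
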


\appendix
\section{Auxiliary results}
In this section, we gather the proofs of several auxiliary lemmas stated in Section \ref{H3surface}.

\subsection{Proof of Lemma   \ref{det_negative_moment}}
We first note that, if $N=\dim \ker \bar{\pl}^*_{\mc{L}^{-2}}$, 
\[ \det(\mc{G}^{\theta_u})> \Big(\inf_{|\la|=1} \int_{\Sigma} |\sum_{p}\la_pe_p(x)|^2_{g}d\theta_u\Big)^N.\]
For each $\la^0$ with $|\la^0|=1$, the function  $F(\la,x):=|\sum_{p}\la_pe_p(x)|^2_{g}$ is bounded below by some $c(\la^0)>0$ for all $x\in B(\la^0),\la\in V(\la_0)$
where $B(\la^0) \subset \Sigma$ is an open ball and $V(\la^0)$ a neighborhood of $\la^0$. By compactness, we then get that there is a ball $B\subset \Sigma$ and $C>0$, both independent of $X_g$, such that 
\[ \det(\mc{G}^{\theta_u})> C M_{2b}(X_g+\tfrac{u}{b},B)^N.\]
Since $M_{2b}(X_g+\frac{u}{b},B)$ has negative moments of all order by \cite[Th. 2.12]{rhodes2014_gmcReview}, the desired result follows.

\subsection{Proof of Lemma   \ref{lemmaYtheta}}
 
 Let us focus on the term  $$ \cjg \int_{\Sigma}T_0^*(\cdot,y)(\hat{\Pi}^{\theta}\beta)(y)d\theta(y),f\cjd= \int_{\Sigma}T_0^*f(y)(\hat{\Pi}^{\theta}\beta)(y)d\theta(y).$$ The first statement then follows from the continuity of the map  $T_0^*:\mc{H}^{s}(\Sigma,\R)\to\mc{H}^{s+1}(\Sigma,\R)$. 

Next we prove the continuity of $f\in \mc{H}^{s+1}(\Sigma,\R)\mapsto \theta_0'(f)$. We claim that, almost surely, 
\begin{equation}\label{GMCest}
\int_\Sigma\int_\Sigma G_g(x,y)\theta_0'(\dd x)\theta_0'(\dd y)<+\infty.
\end{equation}
This will proved thereafter. This implies that, if $(\lambda_n)_n$ and $(e_n)_n$ are respectively the (positive) eigenvalues and eigenfunctions of the Laplacian $\Delta_g$,  
\begin{align*}
\sum_n \lambda_n^{-s-1}|\theta_0'(e_n)|^2\leq \sum_n \lambda_n^{-1}\iint_{\Sigma^2}e_n(x)e_n(y)\theta_0'(\dd x)\theta_0'(\dd y)=\iint_{\Sigma^2} G_g(x,y)\theta_0'(\dd x)\theta_0'(\dd y)<+\infty.
\end{align*}
And this ensures the continuity of the map $f\in \mc{H}^{s+1}(\Sigma,\R)\mapsto \theta_0'(f)$, for $s>0$, by Cauchy-Schwartz.

Now we prove the statement \eqref{GMCest}, and actually a stronger version.
\begin{lemma}\label{lem:GMC}
Let $b\in (0,1)$, $\alpha\in (0,Q)$ with $Q=b+\frac{1}{b}$, and $\beta,p>0$. If the following conditions hold
$$\frac{\beta}{2}+\alpha<Q,\quad \text{ and }\quad p<\frac{1}{2b}(Q-\frac{\beta}{2}-\alpha)\wedge 1 \wedge \frac{1}{2b^2}\wedge \frac{1}{4b}(Q-\frac{\beta}{2})+ \frac{1}{4b}\sqrt{(Q-\frac{\beta}{2})^2-4}$$
then
$$\E\Big[\Big(\iint_{\Sigma^2}e^{2b \alpha G_g(z_0,x)+2b \alpha G_g(z_0,x')+2b\beta G_g(x,x')}M_{2b}^g(X_g,{\rm v}_g,\dd x)M_{2b}^g(X_g,{\rm v}_g,\dd x')\Big)^p\Big]<+\infty.$$
\end{lemma}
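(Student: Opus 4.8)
The statement is a positive-moment bound for a GMC ``energy'' carrying two logarithmic insertions at $z_0$ (through the weight $e^{2b\alpha G_g(z_0,\cdot)}$) and a diagonal interaction (through $e^{2b\beta G_g(\cdot,\cdot)}$). The plan is to localize the singular configurations, absorb the deterministic Green weights, and reduce everything to moment estimates for the chaos $\theta:=M_{2b}^g(X_g,{\rm v}_g,\cdot)$ governed by its multifractal scaling
\[ \E[\theta(B(x,r))^q]\asymp r^{\xi(q)},\qquad \xi(q)=(2+2b^2)q-2b^2q^2,\]
which holds, together with finiteness, for $q<4/(2b)^2=1/b^2$. Throughout we may use that $p<1$ by the $\wedge 1$ in the hypothesis.

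First I would localize. By a partition of unity and the smoothness of $G_g$ off the diagonal, the integrand is bounded except near the configurations (i) $x=x'=z_0$, (ii) $x=x'$ away from $z_0$, and the mixed regimes where only one singularity is active. Away from the singular set the weight is bounded and $\theta$ has all moments of order $<1/b^2$ (and $2p<1/b^2$ by assumption), so that part is harmless; it remains to bound the contribution of small bidisks around the singular configurations.

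Next I would run a dyadic analysis at scales $r=2^{-n}$, writing each localized integral as a sum of cell contributions $I_D$ and estimating $\E[I^p]\le\sum_D\E[I_D^p]$ by subadditivity $(\sum_j a_j)^p\le\sum_j a_j^p$ (valid since $p<1$). The deterministic insertion weight is absorbed by Girsanov: $e^{2b\alpha G_g(z_0,\cdot)}\theta$ is a chaos with an $\alpha$-insertion at $z_0$, whose mass on $B(z_0,r)$ scales with a shifted exponent, and the hypothesis $\tfrac\beta2+\alpha<Q$ plays the role of a Seiberg bound making this insertion integrable (in particular $\alpha<Q$). Each cell contribution then reduces to a power of a chaos ball-mass, controlled by $\xi$, and the resulting geometric series in $n$ converge precisely under the stated hypotheses: $p<\tfrac1{2b^2}$ provides the required moments; $\tfrac\beta2+\alpha<Q$ together with $p<\tfrac1{2b}(Q-\tfrac\beta2-\alpha)$ gives summability at the full collision at $z_0$ in regime (i); and the quadratic bound $p<p_+:=\tfrac1{4b}\big((Q-\tfrac\beta2)+\sqrt{(Q-\tfrac\beta2)^2-4}\big)$, equivalent to $4b^2p^2-2(Q-\tfrac\beta2)b\,p+1<0$, governs the generic diagonal collision (ii). A short computation matching the $r$-exponents against $\xi$ shows these are exactly the three thresholds.

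The main obstacle is the sharp quadratic threshold $p_+$ for the diagonal collision. A naive subadditive/Minkowski estimate cell-by-cell only yields a linear-in-$p$ (or a strictly worse quadratic) sufficient condition, weaker than the claimed $p_+$: one loses the cross-correlations between nearby points. To reach the sharp exponent one must account for the simultaneous collision of all points entering the moment, i.e.\ exploit the exact exponent $\xi$ (equivalently the star-scale invariance of $\theta$) rather than bounding term by term. A clean route is the auxiliary-field decoupling $e^{2b\beta G_g(x,x')}=\E_Z[e^{\sqrt{2b\beta}(Z(x)+Z(x'))}]$ up to renormalization, with $Z$ an independent copy of the field; this merges $\beta$ into an effective chaos of parameter $\sqrt{4b^2+2b\beta}$ and turns the double integral into the square of a single insertion integral, to which the one-insertion scaling applies directly. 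Carrying the sharp exponent through for non-integer $p$, while tracking the renormalization constants produced by Girsanov and by the decoupling, is the delicate technical point; the remaining conditions $p<1$ and $p<\tfrac1{2b^2}$ then enter only to license the subadditivity and the existence of the base-scale moments.
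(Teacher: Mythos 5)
Your overall skeleton (localization of the singular configurations, dyadic analysis with subadditivity for $p<1$, Girsanov to absorb the insertion, and matching exponents against the multifractal spectrum) is the same as the paper's, and the thresholds you aim for are the right ones: your quadratic condition $4b^2p^2-2b(Q-\tfrac{\beta}{2})p+1<0$ is exactly the paper's contraction condition $f_{\beta/2}(2p)>2$ with $f_\theta(q)=2b(Q-\theta)q-2b^2q^2$. The genuine gap is in the mechanism you propose for precisely the step you single out as crucial, the diagonal collision. The decoupling identity $e^{2b\beta G(x,x')}=\E_Z\big[:e^{\sqrt{2b\beta}Z(x)}:\,:e^{\sqrt{2b\beta}Z(x')}:\big]$ is algebraically fine, and the effective parameter $\gamma_{\rm eff}=\sqrt{4b^2+2b\beta}$ is correct, but after it the quantity to bound is $\E_X\big[\big(\E_Z[V^2]\big)^p\big]$ with $V=\int e^{2b\alpha G(z_0,\cdot)}\,\dd M_{\gamma_{\rm eff}}$: the $Z$-expectation sits \emph{inside} the $p$-th power. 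For $p<1$, Jensen gives $\big(\E_Z[V^2]\big)^p\geq \E_Z[V^{2p}]$ — the wrong direction — so you cannot reduce to the $2p$-th moment of a single-insertion integral of the effective chaos. The only legitimate interchange, $\E_X\big[\big(\E_Z[V^2]\big)^p\big]\leq \big(\E_{X,Z}[V^2]\big)^p$, requires a genuine second moment, i.e. $\gamma_{\rm eff}^2=4b^2+2b\beta<2$ plus an integrability condition at $z_0$; this fails for every $b\geq 1/\sqrt{2}$ even as $\beta\to 0$, and is in any case far stronger than the hypotheses of the lemma. The whole point of the threshold $r_+$ is to cover the regime where second moments are infinite, so the decoupling route cannot reach it.

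What actually closes this step (and is what the paper does) is different: first use Kahane's convexity inequality to replace $X_g$ by an exactly star-scale-invariant cut-off field on $[0,1]^2$, so that the field at scale $1/2$ equals in law the field at scale $1$ plus an independent Gaussian $\Omega_{1/2}$; then the cell-by-cell subadditive bound over the four diagonal sub-squares, combined with this exact scaling and with separately proved off-diagonal and mixed estimates (the analogues of your ``harmless'' regimes), yields the recursion $E_\epsilon(\Sigma^2)\leq 2^{2-f_{\beta/2}(2p)}E_{2\epsilon}(\Sigma^2)+C$. The condition $f_{\beta/2}(2p)>2$, i.e. $r_-<2p<r_+$, makes this a contraction and gives a bound uniform in $\epsilon$; the values $p\leq r_-/2$ then follow from monotonicity of moments. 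So, contrary to your assertion, cell-by-cell subadditivity does reach the sharp exponent — but only when bootstrapped through the exact scaling recursion across dyadic scales, not through a decoupling identity. Replacing your decoupling step by this Kahane-plus-recursion argument is what is missing from your proposal.
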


\begin{proof} By Kahane's convexity inequalities (see \cite{rhodes2014_gmcReview}), it is enough to prove
$$\E\Big[\Big(\iint_{\Sigma^2}|x|^{-2b\alpha}|x'|^{-2b \alpha}|x-x'|^{-2b\beta} M_{2b}(\dd x )M_{2b}(\dd x' )\Big)^p\Big]<+\infty$$
with $\Sigma=[0,1]^2$ and $M_{2b}^g(X_g,{\rm v}_g,\dd x)$  replaced 
 by 
 $$M_{2b}(\dd x):=\lim_{\epsilon\to 0}M^\epsilon_{2b}(\dd x)\quad \text{ and }\quad M^\epsilon_{2b}(\dd x):= e^{2b X_\epsilon-2b^2\E[X_\epsilon(x)^2]}\,\dd x$$
  where   $X$ is  the log-correlated field  with covariance $\E[X(x)X(y)]=\ln_{+} \frac{1}{|x-y|}$  and $X_\epsilon$ is the translation invariant  cut-off approximation  to $X$ such that for all $\lambda\in (0,1)$, $(X_{\lambda \epsilon} (\lambda x ))_{|x| \leq 1} \overset{(Law)}{=} (X_{\epsilon} (x ))_{|x| \leq 1}+\Omega_\lambda $ where $\Omega_\lambda$ is an independent centered Gaussian variable with variance $\ln \frac{1}{\lambda}$ (see \cite{rhodes2014_gmcReview} for instance). Using this cut-off approximation, it is proved in \cite[Lemma 3.10]{DKRV16}  that   for all $\alpha\in (0,Q)$ and  $0<p < \frac{1}{b}(Q-\alpha)\wedge 1$
\begin{equation}\label{estGMC1}
\sup_{\epsilon>0}\E\Big[\Big(\int_{\Sigma}(|x|+\epsilon)^{-2b\alpha}  M^\epsilon_{2b}(\dd x )\Big)^p\Big]<+\infty.
\end{equation}
Also, we make the following claims. Set $r_+:=  \frac{1}{2b}(Q-\frac{\beta}{2})+ \frac{1}{2b}\sqrt{(Q-\frac{\beta}{2})^2-4}$, then
\begin{itemize}
\item for $0<\frac{\beta}{2}<Q-2$ and $0<p<\frac{1}{2b}(Q-\frac{\beta}{2})\wedge 1\wedge \frac{1}{2b^2} \wedge \frac{r_+}{2}$,
\begin{equation}\label{estGMC2}
\sup_{\epsilon>0}\E\Big[\Big(\iint_{\Sigma^2}(|x-x'|+\epsilon)^{-2b\beta}  M^\epsilon_{2b}(\dd x )M^\epsilon_{2b}(\dd x ')\Big)^p\Big]<+\infty.
\end{equation}
\item for $0<\frac{\beta}{2}<Q-2$, $\alpha\in (0,Q)$ and $0<p<\frac{1}{2b}(Q-\frac{\beta+\alpha}{2})\wedge  \frac{1}{2b}(Q-\alpha)\wedge 1\wedge \frac{1}{2b^2}\wedge     \frac{r_+}{2}$,
\begin{equation}\label{estGMC3}
\sup_{\epsilon>0}\E\Big[\Big(\iint_{\Sigma^2}|x|^{-2b\alpha} (|x-x'|+\epsilon)^{-2b\beta}  M^\epsilon_{2b}(\dd x )M^\epsilon_{2b}(\dd x ')\Big)^p\Big]<+\infty.
\end{equation}
\end{itemize}
Note that the range of $p$ is non empty in all cases. We postpone for now the proof of these two claims.  Let us introduce the quantity, for arbitrary measurable sets $A,B\subset [0,1]^2$,  
 $$E_\epsilon(A,B):=\E\Big[\Big(\iint_{A\times B}(|x|+\epsilon)^{-2b\alpha}(|x'|+\epsilon)^{-2b\alpha}(|x-x'|+\epsilon)^{-2b\beta} M^\epsilon_{2b}(\dd x )M^\epsilon_{2b}(\dd x' )\Big)^p\Big].$$
We split the square $\Sigma$ into 2 pieces $\Sigma_1:=[0,\frac{1}{2}]^2$, $\Sigma_2:=[0,1]^2\setminus [0,\frac{1}{2}]^2$. Next, because $p<1$, we argue
 $$E_\epsilon(\Sigma^2)\leq \sum_{i,j=1}^2E_\epsilon(\Sigma_i,\Sigma_j).$$
In this sum,   the term  $E_\epsilon(\Sigma_2,\Sigma_2)$ is bounded independently of $\epsilon$ using \eqref{estGMC2}, and same conclusion for the terms  $E_\epsilon(\Sigma_1,\Sigma_2)$ and  $E_\epsilon(\Sigma_2,\Sigma_1)$ using \eqref{estGMC3}. Now we use the scaling relation with $\lambda=1/2$  to estimate $E_\epsilon(\Sigma_1,\Sigma_1)$:
\begin{align*}
E_\epsilon(\Sigma_1,\Sigma_1)=&2^{4b\alpha p+2b\beta p-4p}\E[e^{4bp\Omega_{1/2}-4b^2p\E[\Omega_{1/2}^2]}]
\\
&\E\Big[\Big(\iint_{\Sigma^2}(|x|+2\epsilon)^{-2b\alpha}(|x'|+2\epsilon)^{-2b\alpha}(|x-x'|+2\epsilon)^{-2b\beta} M^{2\epsilon}_{2b}(\dd x )M^{2\epsilon}_{2b}(\dd x' )\Big)^p\Big]
\\
=&2^{-f_{\alpha+\beta/2}(2p)} E_{2\epsilon}(\Sigma^2)
\end{align*}
where we have set $f_\theta(p):=2b(Q-\theta)p-2b^2p^2$. Therefore, if $Q-\alpha-\frac{\beta}{2}>0$ and $p\in (0,\frac{1}{2b}(Q-\alpha-\frac{\beta}{2}))$, we have $f_{\alpha+\beta/2}(2p)>0$. We deduce easily that $\sup_nE_{2^{-n} }(\Sigma^2)<+\infty$ and, using Kahane's convexity inequality, that $\sup_\epsilon E_{\epsilon }(\Sigma^2)<+\infty$.

 Now we prove \eqref{estGMC2}. For this, we first claim that
 \begin{itemize}
\item for $0<\frac{\beta}{2}<Q$ and $0<p<\frac{1}{2b}(Q-\frac{\beta}{2})\wedge 1\wedge \frac{1}{2b^2}$,
\begin{equation}\label{estGMC4}
\sup_{\epsilon>0}\E\Big[\Big(\iint_{[0,\frac{1}{2}]^2\times [\frac{1}{2},1]^2}(|x-x'|+\epsilon)^{-2b\beta}  M^\epsilon_{2b}(\dd x )M^\epsilon_{2b}(\dd x ')\Big)^p\Big]<+\infty.
\end{equation}
\item for $0<\frac{\beta}{2}<Q-\sqrt{2}$ and $0<p<\frac{1}{2b}(Q-\frac{\beta}{2})\wedge 1\wedge \frac{1}{2b^2}\wedge  \frac{r_+}{2}$,
\begin{equation}\label{estGMC5}
\sup_{\epsilon>0}\E\Big[\Big(\iint_{[0,\frac{1}{2}]^2\times ([0,\frac{1}{2}]\times [\frac{1}{2},1])} (|x-x'|+\epsilon)^{-2b\beta}  M^\epsilon_{2b}(\dd x )M^\epsilon_{2b}(\dd x ')\Big)^p\Big]<+\infty.
\end{equation}
\end{itemize}
These two claims are proven similarly as before. We introduce the quantity 
$$E'_\epsilon(A,B):=\E\Big[\Big(\iint_{A\times B} (|x-x'|+\epsilon)^{-2b\beta} M^\epsilon_{2b}(\dd x )M^\epsilon_{2b}(\dd x' )\Big)^p\Big].$$
In the first case, we split the square $\Sigma_1=[0,\frac{1}{2}]^2$ as the union of $\Sigma'_1:=[0,\frac{1}{4}]^2$  and $\Sigma'_2:=\Sigma_1\setminus \Sigma'_1$, and the square   $\Sigma_3:=[\frac{1}{2},1]^2$ as the union of $\Sigma'_3:=[\frac{1}{2},\frac{3}{4}]^2$  and $\Sigma'_4:=\Sigma_3\setminus \Sigma'_3$. The same argument then produces the bound $E'_\epsilon(\Sigma_1,\Sigma_3)\leq  2^{-f_{\beta/2}(2p)} E'_{2\epsilon}(\Sigma_1,\Sigma_3)+C $. And we conclude as before.

In the second case, we split both squares $\Sigma_1$ and $\Sigma_5:=[0,\frac{1}{2}]\times [\frac{1}{2},1]$ into 4 equally sized squares so as to get the relation (using \eqref{estGMC4} in the process and translational invariance to get the additional factor $2$)
 $$E'_\epsilon(\Sigma_1,\Sigma_5)\leq 2^{1-f_{\beta/2}(2p)}E'_{2\epsilon}(\Sigma_1,\Sigma_5)+C.$$
The polynomial $f_{\beta/2}(p)-1$ has a positive discriminant   if $0<\frac{\beta}{2}<Q-\sqrt{2}$, in which  case the roots are $\frac{1}{2b}(Q-\frac{\beta}{2})\pm \frac{1}{2b}\sqrt{(Q-\frac{\beta}{2})^2-2}$, which means that, when $2p$ is located strictly between these roots, $f_{\beta/2}(2p)-1>0$. For such $p$, we deduce that \eqref{estGMC5} holds.

We can now prove \eqref{estGMC2}. The same scaling argument, dividing $\Sigma$ into 4 equally sized squares, produces $E_\epsilon(\Sigma,\Sigma)\leq 2^{2-f_{\beta/2}(2p)}E_{2\epsilon}(\Sigma,\Sigma)+C$, using  \eqref{estGMC4} and \eqref{estGMC5} in the process, which leads to our claim given the fact that the polynomial $f_{\beta/2}(p)-2$ has positive discriminant for  $0<\frac{\beta}{2}<Q-2$ with  roots  $r_{\pm}:=\frac{1}{2b}(Q-\frac{\beta}{2})\pm \frac{1}{2b}\sqrt{(Q-\frac{\beta}{2})^2-4}$. We stress that $\frac{r_-}{2}<1\wedge \frac{1}{2b^2}\wedge \frac{1}{2b }(Q-\frac{\beta}{2})$ in such a way that we can find some $p$ fulfilling our conditions.

Finally we prove \eqref{estGMC3} and, for this, we consider the functional
$$E''_\epsilon(A,B):=\E\Big[\Big(\iint_{A\times B}(|x|+\epsilon)^{-2b\alpha} (|x-x'|+\epsilon)^{-2b\beta} M^\epsilon_{2b}(\dd x )M^\epsilon_{2b}(\dd x' )\Big)^p\Big]$$
and the same considerations as before lead to the relation $E''_\epsilon(\Sigma,\Sigma)\leq 2^{-f_{\alpha/2+\beta/2}(2p) }E''_{2\epsilon}(\Sigma,\Sigma)+C$. Therefore, if $Q- \frac{\beta+\alpha}{2}>0$ and $p\in (0,\frac{1}{2b}(Q-\frac{\alpha+\beta}{2}))$, we have $f_{\alpha/2+\beta/2}(2p)>0$.  Hence our claim.

\end{proof}


\bibliographystyle{alpha}
\bibliography{H3}

\end{document}